\def\dOi{10(2:6)2014}
\mathchardef\mhyphen="2D
\renewcommand\theta\vartheta
\renewcommand\epsilon\varepsilon
\renewcommand\phi\varphi
\newcommand\seq[1]{\langle #1 \rangle}
\newcommand\emptyseq{\seq{}}
\newcommand\oh\widehat
\newcommand\ul\underline
\newcommand\ol\overline
\newcommand\ot\widetilde
\newcommand\wsuc\oh
\newcommand\lab{\textsc{l}}
\newcommand\glbbot{\sqcap}
\newcommand\Lubbot{\mathop{\bigsqcup}}
\newcommand\Glbbot{\mathop{\bigsqcap}}
\newcommand\lebot{\le_\bot}
\newcommand\gebot{\ge_\bot}
\newcommand\lbot{<_\bot}
\newcommand\glb{\sqcap}
\newcommand\Lub{\bigsqcup}
\newcommand\Glb{\bigsqcap}
\newcommand\nat{{\mathbb N}}
\newcommand\real{{\mathbb R}}
\newcommand\realp{\real^+}
\newcommand\realnn{\real^+_0}
\newcommand\calA{\mathcal{A}}
\newcommand\calB{\mathcal{B}}
\newcommand\calC{\mathcal{C}}
\newcommand\calF{\mathcal{F}}
\newcommand\calH{\mathcal{H}}
\newcommand\calP{\mathcal{P}}
\newcommand\calR{\mathcal{R}}
\newcommand\calT{\mathcal{T}}
\newcommand\calU{\mathcal{U}}
\newcommand\calV{\mathcal{V}}
\newcommand\dd{\mathbf{d}}
\newcommand\subseq[2]{#1/#2}
\newcommand\len[1]{\left\vert #1 \right\vert}
\newcommand{\concat}{\cdot}
\newcommand{\Concat}{\prod}
\newcommand\mult{\cdot}
\newcommand{\llbrace}{\{\!\{}
\newcommand{\rrbrace}{\}\!\}}
\newcommand{\Cxt}[3][C]{
  \ifthenelse {\equal{#2}{b}}
  {#1[#3]}
  { \ifthenelse {\equal{#2}{bb}}
    {#1\llbracket #3 \rrbracket}
    { \ifthenelse {\equal{#2}{a}}
      {#1\langle #3 \rangle}
      { \ifthenelse {\equal{#2}{aa}}
        {#1\llangle #3 \rrangle}
        { \ifthenelse {\equal{#2}{c}}
          {#1\{ #3 \}}
          { \ifthenelse {\equal{#2}{cc}}
            {#1\llbrace #3 \rrbrace}
            {
              \latex@error{Wrong argument}\@ehc
            }
          }
        }
      }
    }
  }
}
\newcommand\nin{\not\in}
\newcommand\limto{\rightarrow}
\newcommand{\setcom}[2]{\set{#1\left|\vphantom{#1}\,#2\right.}}
\newcommand{\set}[1]{\left\{#1\right\}}
\newcommand\fcolon{\colon\,}
\newcommand\funto{\rightarrow}
\newcommand\symb[1]{\mathsf{#1}}
\newcommand\dom[1]{\symb{dom}(#1)}
\newcommand\srank[1]{\symb{ar}(#1)}
\newcommand\prpty[1]{\mathsf{#1}}
\newcommand\eprpty[1]{\ensuremath{\prpty{#1}}}
\newcommand\UN{\eprpty{UN}}
\newcommand\iprpty[1]{\ensuremath{#1^\infty}}
\newcommand\iUN{\iprpty{\UN}}
\newcommand\similar[2]{\symb{sim}(#1,#2)}
\newcommand\trunc[2]{#1|#2}
\newcommand\sdepth[2]{{#2}\mhyphen\symb{depth}(#1)}
\newcommand\atPos[2]{#1|_{#2}}
\newcommand\substAtPos[3]{#1[#3]_{#2}}
\newcommand\calPos{\calP}
\newcommand\pos[1]{\calPos(#1)}
\newcommand\posNonBot[1]{\calPos_{\not\bot}(#1)}
\newcommand\posBot[1]{\calPos_{\bot}(#1)}
\newcommand\posFun[1]{\calPos_{\Sigma}(#1)}
\newcommand\unlab[1]{\left\lVert #1 \right\rVert}
\newcommand\hCol[1][\nothing]
\newcommand\rAct[1][\nothing]
\newcommand\prs{p}
\newcommand\mrs{m}
\newcommand\segm[3]{#1|_{[#2,#3)}}
\newcommand\prefix[2]{#1|_{#2}}
\newcommand\traces[3]{\calT r(#1,#2,#3)}
\newcommand\postraces[3]{\calP\calT r(#1,#2,#3)}
\newcommand\paths[3]{\calP(#1,#2,#3)}
\newcommand\trace[2]{\symb{tr}_{#1}(#2)}
\newcommand\symtrace[2]{\symb{sym}_{#1}(#2)}
\newcommand\postrace[1]{\symb{pos}(#1)}
\newcommand\devTerm[3]{\calF(#1,#2,#3)}
\newcommand\emptylab{\emptyset}
\newcommand\edge[1][\nothing]{%
\ifthenelse{\equal{#1}{\nothing}}%
{\rightarrow}%
{\stackrel{#1}{\rightarrow}}%
}
\newcommand\node[1]{%
\def\nodeN{#1}%
\nodeI
}
\newcommand\nodeI[1][\nothing]{%
  \ifthenelse{\equal{#1}{\nothing}}%
  {\nodeN}%
  {{\nodeN}^{#1}}%
}
\newcommand\ipterms[1][\Sigma]{%
\def\itermsSig{#1_\bot}%
\itermsI%
}
\newcommand\iterms[1][\Sigma]{%
\def\itermsSig{#1}%
\itermsI%
}
\newcommand\itermsI[1][\calV]{%
\def\itermsVar{#1}%
\calT^\infty(\itermsSig,\itermsVar)%
}
\newcommand\terms[1][\Sigma]{%
\def\termsSig{#1}%
\termsI%
}
\newcommand\termsI[1][\calV]{%
\def\termsVar{#1}%
\calT(\termsSig,\termsVar)%
}
\newcommand\dEsc[2]{#1/\!/#2}
\newcommand\proj[2]{#1/#2}
\def\nothing{}
\let\oldTo\to
\newcommand\finleftright{\leftrightarrow}
\newcommand\finright{\oldTo}
\newcommand\finleft{\leftarrow}
\newcommand\weakright{\hookrightarrow}
\newcommand\weakleft{\hookleftarrow}
\newcommand\strongright{\twoheadrightarrow}
\newcommand\strongleft{\twoheadleftarrow}
\newcommand\mrsright{\mathrel{\twoheadrightarrow^{\hspace{-11pt}m}\hspace{3pt}}}
\newcommand\mrsleft{\mathrel{\twoheadleftarrow^{\hspace{-7pt}m}\hspace{0pt}}}
\newcommand\prsright{\mathrel{\twoheadrightarrow^{\hspace{-9pt}p}\hspace{4pt}}}
\newcommand\prsleft{\mathrel{\twoheadleftarrow^{\hspace{-5pt}p}\hspace{0pt}}}
\newcommand\mrswright{\mathrel{\hookrightarrow^{\hspace{-10pt}m}\hspace{2pt}}}
\newcommand\mrswleft{\mathrel{\hookleftarrow^{\hspace{-7pt}m}\hspace{0pt}}}
\newcommand\prswright{\mathrel{\hookrightarrow^{\hspace{-9pt}p}\hspace{4pt}}}
\newcommand\prswleft{\mathrel{\hookleftarrow^{\hspace{-5pt}p}\hspace{0pt}}}
\newcommand{\RewArr}[2] {
  \RewStmt{#1}{\nothing}{#2}
}
\newcommand{\RewStmt}[3] {
  \def\RewArrArr{#1}
  \def\RewArrRhs{#2}
  \def\RewArrIter{#3}
  \RewArrI
}
\makeatletter \newcommand{\RewArrI}[1][\nothing] { \def\RewArrCxt{#1}
  \ifthenelse{\equal{\RewArrArr}{\oldTo} \OR
    \equal{\RewArrArr}{\finright} \OR
    \equal{\RewArrArr}{\finleftright} \OR
    \equal{\RewArrArr}{\prsright} \OR
    \equal{\RewArrArr}{\mrsright} \OR
    \equal{\RewArrArr}{\prswright} \OR
    \equal{\RewArrArr}{\mrswright} \OR
    \equal{\RewArrArr}{\strongright} \OR
    \equal{\RewArrArr}{\weakright}} {
    \RewArrArr\ifthenelse{\equal{\RewArrIter}{\nothing}}{}{^{\RewArrIter}}\ifthenelse{\equal{\RewArrCxt}{\nothing}}{}{_{\RewArrCxt}}
  } { \ifthenelse{\equal{\RewArrArr}{\finleft} \OR
      \equal{\RewArrArr}{\prsleft} \OR
      \equal{\RewArrArr}{\mrsleft} \OR
      \equal{\RewArrArr}{\prswleft} \OR
      \equal{\RewArrArr}{\mrswleft} \OR
      \equal{\RewArrArr}{\strongleft} \OR
      \equal{\RewArrArr}{\weakleft}}{
      \RewArrArr\ifthenelse{\equal{\RewArrIter}{\nothing}}{}{^{\RewArrIter}}\ifthenelse{\equal{\RewArrCxt}{\nothing}}{}{_{\RewArrCxt}}
    } { \latex@error{Rewrite arrow not defined}\@ehc } } \RewArrRhs }
\renewcommand{\to}{\RewArr{\finright}{\nothing}}
\newcommand{\fto}{\RewArr{\finright}}
\newcommand{\pto}{\RewArr{\prsright}}
\newcommand{\pato}{\RewArr{\prsright}{\nothing}}
\newcommand{\pafrom}{\RewArr{\prsleft}{\nothing}}
\newcommand{\pwto}{\RewArr{\prswright}}
\newcommand{\pwato}{\RewArr{\prswright}{\nothing}}
\newcommand{\pacont}{\RewStmt{\prsright}{\dots}{\nothing}}
\newcommand{\macont}{\RewStmt{\mrsright}{\dots}{\nothing}}
\newcommand{\pwacont}{\RewStmt{\prswright}{\dots}{\nothing}}
\newcommand{\mwacont}{\RewStmt{\mrswright}{\dots}{\nothing}}
\newcommand{\mto}{\RewArr{\mrsright}}
\newcommand{\mato}{\RewArr{\mrsright}{\nothing}}
\newcommand{\mafrom}{\RewArr{\mrsleft}{\nothing}}
\newcommand{\mwto}{\RewArr{\mrswright}}
\newcommand{\mwato}{\RewArr{\mrswright}{\nothing}}
\newcommand{\mwafrom}{\RewArr{\mrswleft}{\nothing}}
\newcommand{\sato}{\RewArr{\strongright}{\nothing}}
\newcommand{\wato}{\RewArr{\weakright}{\nothing}}
\theoremstyle{definition}
\theoremstyle{plain}
\theoremstyle{plain}
\theoremstyle{plain}
\theoremstyle{plain}
\theoremstyle{definition}
\theoremstyle{definition}
\colorlet{termback}{black!10}
\colorlet{termfringe}{black!70}
\colorlet{termconst}{black!30}
\colorlet{termconstfringe}{black!30}
\colorlet{termstable}{black!50}
\colorlet{termstablefringe}{black!50}
\colorlet{redsite}{black!80}
\tikzset{%
  shorten/.style={shorten >=#1, shorten <=#1},%
  level distance=1cm,%
  edge from parent/.style={->,draw},%
  force edge/.style={edge from parent/.style={->,draw}},%
  no edge/.style={edge from parent/.style={}},%
  >=latex',%
  label distance=-3pt,%
  site/.style={%
    shape=circle,%
    minimum size=4mm%
  },%
  red site/.style={%
    site,%
    draw=redsite%
  },%
  term back/.style={%
    draw=termfringe,%
    fill=termback%
  },%
    term const/.style={%
    draw=termconstfringe,%
    fill=termconst%
  },%
  term stable/.style={%
    draw=termstablefringe,%
    fill=termstable%
  },%
  term fringe/.style={%
    draw=termfringe,%
  },%
  fun/.style={>=to,->},%
  single step/.style={>=to,->},%
  conv/.style={>=to,<->},%
  strongly/.style={>=to,->>},%<<
  weakly/.style={>=to,right hook->},%
  weakly left/.style={>=to,left hook->}%
}%
\begin{document}

\title[Partial Order Infinitary Term Rewriting]
      {Partial Order Infinitary Term Rewriting\rsuper*}

\author[Patrick Bahr]{Patrick Bahr}
\address{
  Department of Computer Science, University of Copenhagen
  \newline
  Universitetsparken 5, 2100 Copenhagen, Denmark}
\urladdr{http://www.diku.dk/\~{}paba}
\email{paba@di.ku.dk}

\keywords{infinitary term rewriting, Böhm trees, partial order,
  confluence, normalisation}
\subjclass{F.4.2}
\titlecomment{{\lsuper*}Parts of this paper have appeared in the
  proceedings of RTA 2010 \cite{bahr10rta2}.}
%%%%%%%%%%%%%%%%%%%%%%%%%%%%%%%%%%%%%%%%%%%%%%%%%%%%%%%%%%%%%%%%%%%%%%%%%%%

%% the abstract has to PRECEED the command \maketitle:
%% be sure not to issue the \maketitle command twice!

\begin{abstract}
  We study an alternative model of infinitary term rewriting. Instead
  of a metric on terms, a partial order on partial terms is employed
  to formalise convergence of reductions. We consider both a weak and
  a strong notion of convergence and show that the metric model of
  convergence coincides with the partial order model restricted to
  total terms. Hence, partial order convergence constitutes a
  conservative extension of metric convergence, which additionally
  offers a fine-grained distinction between different levels of
  divergence.

  In the second part, we focus our investigation on strong convergence
  of orthogonal systems. The main result is that the gap between the
  metric model and the partial order model can be bridged by extending
  the term rewriting system by additional rules. These extensions are
  the well-known Böhm extensions. Based on this result, we are able to
  establish that -- contrary to the metric setting -- orthogonal
  systems are both infinitarily confluent and infinitarily normalising
  in the partial order setting. The unique infinitary normal forms
  that the partial order model admits are Böhm trees.
\end{abstract}

\maketitle

\section*{Introduction}
\label{sec:introduction}

Infinitary term rewriting \cite{kennaway03book} extends the theory of
term rewriting by giving a meaning to transfinite rewriting
sequences. Its formalisation \cite{dershowitz91tcs} is chiefly based
on the metric space of terms as studied by Arnold and
Nivat~\cite{arnold80fi}. Other models for transfinite reductions,
using for example general topological spaces \cite{rodenburg98jsyml}
or partial orders \cite{corradini93tapsoft,blom04rta}, were mainly
considered to pursue quite specific purposes and have not seen nearly
as much attention as the metric model. In this paper we introduce a
novel foundation of infinitary term rewriting based on the partially
ordered set of partial terms \cite{goguen77jacm}. We show that this
model of infinitary term rewriting is superior to the metric
model. This assessment is supported by two findings: First, the
partial order model of infinitary term rewriting conservatively
extends the metric model. That is, anything that can be done in the
metric model can be achieved in the partial order model as well by
simply restricting it to the set of total terms. Secondly, unlike the
metric model, the partial order model provides a fine-grained
distinction between different levels of divergence and exhibits nice
properties like infinitary confluence and normalisation of orthogonal
systems.

The defining core of a theory of infinitary term rewriting is its
notion of convergence for transfinite reductions: which transfinite
reductions are ``admissible'' and what is their final outcome. In this
paper we study both variants of convergence that are usually
considered in the established theory of metric infinitary term
rewriting: weak convergence \cite{dershowitz91tcs} and strong
convergence \cite{kennaway95ic}. For both variants we introduce a
corresponding notion of convergence based on the partially ordered set
of partial terms.

The first part of this paper is concerned with comparing the metric
model and the partial order model both in their respective weak and
strong variants. In both cases, the partial order approach constitutes
a conservative extension of the metric approach: a reduction in the
metric model is converging iff it is converging in the partial order
model and only contains total terms.

In the second part we focus on strong convergence in orthogonal
systems. To this end we reconsider the theory of meaningless terms of
Kennaway et al.\ \cite{kennaway99jflp}. In particular, we consider
Böhm extensions. The Böhm extension of a term rewriting system adds
rewrite rules which admit contracting meaningless terms to $\bot$. The
central result of the second part of this paper is that the additional
rules in Böhm extensions close the gap between partial order
convergence and metric convergence. More precisely, we show that
reachability w.r.t.\ partial order convergence in a term rewriting
system coincides with reachability w.r.t.\ metric convergence in the
corresponding Böhm extension.

From this result we can easily derive a number of properties for
strong partial order convergence in orthogonal systems:
\begin{itemize}
\item Infinitary confluence,
\item infinitary normalisation, and
\item compression, i.e.\ each reduction can be compressed to length at
  most $\omega$
\end{itemize}
The first two properties exhibit another improvement over the metric
model which does not have either of these. Moreover, it means that
each term has a unique infinitary normal form -- its Böhm tree.

The most important tool for establishing these results is provided by
a notion of complete developments that we have transferred from the
metric approach to infinitary rewriting \cite{kennaway95ic}. We show,
that the final outcome of a complete development is unique and that,
in contrast to the metric model, the partial order model admits
complete developments for any set of redex occurrences. To this end,
we use a technique similar to paths and finite jumps known from metric
infinitary term rewriting \cite{kennaway03book, ketema11ic}.

\subsubsection*{Outline}

After providing the basic preliminaries for this paper in
Section~\ref{sec:preliminaries}, we will briefly recapitulate the
metric model of infinitary term rewriting including meaningless terms
and Böhm extensions in Section~\ref{sec:metr-infin-term}. In
Section~\ref{sec:part-order-infin}, we introduce our novel approach to
infinitary term rewriting based on the partial order on terms. In
Section~\ref{sec:comp-mrs-conv}, we compare both models and establish
that the partial order model provides a conservative extension of the
metric model. In the remaining part of this paper, we focus on the
strong notion of convergence. In Section~\ref{sec:compl-devel}, we
establish a theory of complete developments in the setting of partial
order convergence. This is then used in
Section~\ref{sec:relation-bohm-trees} to prove the equality of
reachability w.r.t.\ partial order convergence and reachability
w.r.t.\ metric convergence in the Böhm extension. Finally, we evaluate
our results and point to interesting open questions in
Section~\ref{sec:conclusions}.

% \tableofcontents

\section{Preliminaries}
\label{sec:preliminaries}

We assume the reader to be familiar with the basic theory of ordinal
numbers, orders and topological spaces \cite{kelley55book}, as well as
term rewriting \cite{terese03book}. In the following, we
briefly recall the most important notions.

\subsection{Transfinite Sequences}
We use $\alpha, \beta, \gamma, \lambda, \iota$ to denote ordinal
numbers. A \emph{transfinite sequence} (or simply called
\emph{sequence}) $S$ of length $\alpha$ in a set $A$, written
$(a_\iota)_{\iota < \alpha}$, is a function from $\alpha$ to $A$ with
$\iota \mapsto a_\iota$ for all $\iota \in \alpha$. We use $\len{S}$
to denote the length $\alpha$ of $S$. If $\alpha$ is a limit ordinal,
then $S$ is called \emph{open}. Otherwise, it is called
\emph{closed}. If $\alpha$ is a finite ordinal, then $S$ is called
\emph{finite}. Otherwise, it is called \emph{infinite}. For a finite
sequence $(a_i)_{i < n}$ or a sequence $(a_i)_{i < \omega}$ of length
$\omega$, we also use the notation $\seq{a_0,a_1,\dots,a_{n-1}}$
respectively $\seq{a_0,a_1,\dots}$. In particular, $\emptyseq$ denotes
an empty sequence.

The \emph{concatenation} $(a_\iota)_{\iota<\alpha}\concat
(b_\iota)_{\iota<\beta}$ of two sequences is the sequence
$(c_\iota)_{\iota<\alpha+\beta}$ with $c_\iota = a_\iota$ for $\iota <
\alpha$ and $c_{\alpha+\iota} = b_\iota$ for $\iota < \beta$. A
sequence $S$ is a (proper) \emph{prefix} of a sequence $T$, denoted $S
\le T$ (resp.\ $S < T$), if there is a (non-empty) sequence $S'$ with
$S\concat S' = T$. The prefix of $T$ of length $\beta$ is denoted
$\prefix{T}{\beta}$. The binary relation $\le$ forms a complete
semilattice (see Section~\ref{sec:partial-orders} below). Similarly, a
sequence $S$ is a (proper) \emph{suffix} of a sequence $T$ if there is
a (non-empty) sequence $S'$ with $S'\concat S = T$.

Let $S = (a_\iota)_{\iota < \alpha}$ be a sequence. A sequence $T =
(b_{\iota})_{\iota < \beta}$ is called a \emph{subsequence} of $S$ if
there is a monotone function $f\fcolon \beta \to \alpha$ such that
$b_\iota = a_{f(\iota)}$ for all $\iota < \beta$. To indicate this, we
write $\subseq{S}{f}$ for the subsequence $T$. If $f(\iota) = f(0) +
\iota$ for all $\iota < \beta$, then $\subseq{S}{f}$ is called a
\emph{segment} of $S$. That is, $T$ is a segment of $S$ iff there are
two sequences $T_1, T_2$ such that $S = T_1\concat T \concat T_2$. We
write $\segm{S}{\beta}{\gamma}$ for the segment $\subseq{S}{f}$, where
$f\fcolon \alpha' \to \alpha$ is the mapping defined by $f(\iota) =
\beta + \iota$ for all $\iota < \alpha'$, with $\alpha'$ the unique
ordinal with $\gamma = \beta + \alpha'$. Note that in particular
$\segm{S}{0}{\alpha} = \prefix{S}{\alpha}$ for each sequence $S$ and
ordinal $\alpha \le \len{S}$.

\subsection{Metric Spaces}
\label{sec:metric-spaces}

A pair $(M,\dd)$ is called a \emph{metric space} if $\dd \fcolon M
\times M \to \realnn$ is a function satisfying $\dd(x,y) = 0$ iff
$x=y$ (identity), $\dd(x, y) = \dd(y, x)$ (symmetry), and $\dd(x, z)
\le \dd(x, y) + \dd(y, z)$ (triangle inequality), for all $x,y,z\in
M$.  If $\dd$ instead of the triangle inequality, satisfies the
stronger property $\dd(x, z) \le \max \set{ \dd(x, y),\dd(y, z)}$
(strong triangle), then $(M,\dd)$ is called an \emph{ultrametric
  space}. Let $(a_\iota)_{\iota<\alpha}$ be a sequence in a metric
space $(M,\dd)$. The sequence $(a_\iota)_{\iota<\alpha}$
\emph{converges} to an element $a\in M$, written
$\lim_{\iota\limto\alpha} a_\iota$, if, for each $\varepsilon \in
\realp$, there is a $\beta < \alpha$ such that $\dd(a,a_\iota) <
\varepsilon$ for every $\beta < \iota < \alpha$;
$(a_\iota)_{\iota<\alpha}$ is \emph{continuous} if
$\lim_{\iota\limto\lambda} a_\iota = a_\lambda$ for each limit ordinal
$\lambda < \alpha$. The sequence $(a_\iota)_{\iota<\alpha}$ is called
\emph{Cauchy} if, for any $\varepsilon \in \realp$, there is a
$\beta<\alpha$ such that, for all $\beta < \iota < \iota' < \alpha$,
we have that $\dd(m_\iota,m_{\iota'}) < \varepsilon$.  A metric space
is called \emph{complete} if each of its non-empty Cauchy sequences
converges.

\subsection{Partial Orders}
\label{sec:partial-orders}
A \emph{partial order} $\le$ on a set $A$ is a binary relation on $A$
that is \emph{transitive}, \emph{reflexive}, and
\emph{antisymmetric}. The pair $(A,\le)$ is then called a
\emph{partially ordered set}. We use $<$ to denote the strict part of
$\le$, i.e. $a < b$ iff $a \le b$ and $b \not\le a$. A sequence
$(a_\iota)_{\iota<\alpha}$ in $(A,\le)$ is called a (\emph{strict})
\emph{chain} if $a_\iota \le a_\gamma$ (resp.\ $a_\iota < a_\gamma$)
for all $\iota < \gamma < \alpha$. A subset $D$ of the underlying set
$A$ is called \emph{directed} if it is non-empty and each pair of
elements in $D$ has an upper bound in $D$. A partially ordered set
$(A, \le)$ is called a \emph{complete semilattice} if it has a
\emph{least element}, every \emph{directed subset} $D$ of $A$ has a
\emph{least upper bound} (\emph{lub}) $\Lub D$, and every subset of
$A$ having an upper bound also has a least upper bound. Hence,
complete semilattices also admit a \emph{greatest lower bound}
(\emph{glb}) $\Glb B$ for every \emph{non-empty} subset $B$ of $A$. In
particular, this means that for any non-empty sequence
$(a_\iota)_{\iota<\alpha}$ in a complete semilattice, its \emph{limit
  inferior}, defined by $\liminf_{\iota \limto \alpha}a_\iota =
\Lub_{\beta<\alpha} \left(\Glb_{\beta \le \iota < \alpha}
  a_\iota\right)$, exists.

It is easy to see that the limit inferior of closed sequences is
simply the last element of the sequence. This is, however, only a
special case of the following more general proposition:
\begin{prop}[invariance of the limit inferior]
  \label{prop:liminfSuffix}
  % invariance of limit inferior %
  Let $(a_\iota)_{\iota < \alpha}$ be a sequence in a complete
  semilattice and $(b_\iota)_{\iota< \beta}$ a non-empty suffix of
  $(a_\iota)_{\iota < \alpha}$. Then $\liminf_{\iota \limto \alpha}
  a_\iota = \liminf_{\iota \limto \beta} b_\iota$.
\end{prop}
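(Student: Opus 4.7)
The plan is to unfold the definition of $\liminf$ and exploit that a suffix differs from the whole sequence only by a shift by a fixed ordinal. Since $(b_\iota)_{\iota<\beta}$ is a non-empty suffix of $(a_\iota)_{\iota<\alpha}$, there is a unique ordinal $\gamma$ with $\alpha = \gamma + \beta$ and $b_\iota = a_{\gamma+\iota}$ for all $\iota < \beta$. Let me abbreviate $T_{\beta'} = \Glb_{\beta' \le \iota < \alpha} a_\iota$ for $\beta' < \alpha$ and $U_\delta = \Glb_{\delta \le \iota < \beta} b_\iota$ for $\delta < \beta$; what is to be shown is then $\Lub_{\beta'<\alpha} T_{\beta'} = \Lub_{\delta<\beta} U_\delta$. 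Both families form weakly increasing chains, because shrinking the indexing set of a $\glb$ can only raise its value.

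For the inequality $\Lub_{\delta<\beta} U_\delta \le \Lub_{\beta'<\alpha} T_{\beta'}$, I would carry out the change of variable $\iota = \gamma + \delta'$ inside the $\glb$ defining $T_{\gamma+\delta}$ and conclude that $T_{\gamma+\delta} = U_\delta$ for every $\delta < \beta$. Since every ordinal of the form $\gamma + \delta$ with $\delta < \beta$ lies below $\alpha$, taking the supremum over $\delta$ on the left and over the range $\gamma \le \beta' < \alpha$ on the right yields the desired bound.

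For the reverse inequality I would distinguish two cases on $\beta' < \alpha$. If $\beta' \ge \gamma$, then $\beta' = \gamma + \delta$ for a unique $\delta < \beta$ by left cancellation of ordinal addition, and the identity just established gives $T_{\beta'} = U_\delta \le \Lub_{\delta<\beta} U_\delta$. If $\beta' < \gamma$, then, since $T$ is a weakly increasing chain, $T_{\beta'} \le T_\gamma = U_0 \le \Lub_{\delta<\beta} U_\delta$; this is the only place where non-emptiness of the suffix is used, as it ensures that $U_0$ is defined. Taking the supremum over $\beta'$ then finishes the proof. There is no substantial obstacle: the argument is essentially bookkeeping, and the only subtlety lies in using ordinal arithmetic carefully so that the two index shifts line up.
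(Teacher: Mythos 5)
Your proof is correct and follows essentially the same route as the paper's: both reindex the suffix so that its inner greatest lower bounds form a subfamily of those of the full sequence, obtain one inequality from the subset inclusion, and obtain the other from the fact that the inner glbs form a weakly increasing chain (your two-case split on $\beta' \ge \gamma$ versus $\beta' < \gamma$ is just a more explicit rendering of the paper's observation that every element of the larger family is dominated by one of the smaller). No gaps.
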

\begin{proof}
  Let $a = \liminf_{\iota \limto \alpha} a_\iota$ and $b =
  \liminf_{\iota \limto \beta} b_\iota$.  Since $(b_\iota)_{\iota<
    \beta}$ is a suffix of $(a_\iota)_{\iota < \alpha}$, there is some
  $\delta < \alpha$ such that $b_\iota = a_{\delta+\iota}$ for all
  $\iota < \beta$. Hence, we know that $a = \Lub_{\gamma<\alpha}
  \Glb_{\gamma \le \iota < \alpha} a_\iota$ and $b = \Lub_{\delta \le
    \gamma<\alpha} \Glb_{\gamma \le \iota < \alpha} a_\iota$.  Let
  $c_\gamma = \Glb_{\gamma \le \iota < \alpha} a_\iota$ for each
  $\gamma < \alpha$, $A = \setcom{c_\gamma}{\gamma < \alpha}$ and $B =
  \setcom{c_\gamma}{\delta \le \gamma < \alpha}$. Note that $a = \Lub
  A$ and $b = \Lub B$. Because $B \subseteq A$, we have that $b \le
  a$. On the other hand, since $c_\gamma \le c_{\gamma'}$ for $\gamma
  \le \gamma'$, we find, for each $c_\gamma \in A$, some $c_{\gamma'}
  \in B$ with $c_\gamma \le c_{\gamma'}$. Hence, $a \le b$. Therefore,
  due to the antisymmetry of $\le$, we can conclude that $a = b$.
\end{proof} 
Note that the limit in a metric space has the same behaviour as the
one for the limit inferior described by the proposition
above. However, one has to keep in mind that -- unlike the limit --
the limit inferior is not invariant under taking cofinal subsequences!

With the prefix order $\le$ on sequences we can generalise
concatenation to arbitrary sequences of sequences: Let
$(S_\iota)_{\iota < \alpha}$ be a sequence of sequences in a common
set. The concatenation of $(S_\iota)_{\iota < \alpha}$, written
$\Concat_{\iota < \alpha} S_\iota$, is recursively defined as the
empty sequence $\emptyseq$ if $\alpha = 0$, $\left(\Concat_{\iota <
    \alpha'} S_\iota\right) \concat S_{\alpha'}$ if $\alpha = \alpha'
+ 1$, and $\Lub_{\gamma < \alpha} \Concat_{\iota < \gamma} S_\iota$ if
$\alpha$ is a limit ordinal. For instance, the concatenation
$\Concat_{i<\omega} \seq{i,i+1}$ yields the sequence
$\seq{0,1,1,2,2,\dots}$ of length $\omega$, and the concatenation
$\Concat_{\iota<\alpha} \seq{\iota}$, for any ordinal $\alpha$, yields
the sequence $(\iota)_{\iota<\alpha}$.

\subsection{Terms}
\label{sec:terms}

Unlike in the traditional -- i.e.\ finitary -- framework of term
rewriting, we consider the set $\iterms$ of \emph{infinitary terms}
(or simply \emph{terms}) over some \emph{signature} $\Sigma$ and a
countably infinite set $\calV$ of variables. A \emph{signature}
$\Sigma$ is a countable set of symbols. Each symbol $f$ is associated
with its arity $\srank{f}\in \nat$, and we write $\Sigma^{(n)}$ for
the set of symbols in $\Sigma$ which have arity $n$. The set $\iterms$
is defined as the \emph{greatest} set $T$ such that, for each element
$t \in T$, we either have $t \in \calV$ or $t = f(t_0,\dots,
t_{k-1})$, where $f \in \Sigma^{(k)}$, and $t_0,\dots,t_{k-1}\in T$. A
symbol $c \in \Sigma^{(0)}$ of arity $0$ is also called a
\emph{constant symbol}, and we use the shorthand $c$ to denote a term
$c()$. We consider $\iterms$ as a superset of the set $\terms$ of
\emph{finite terms}.

For each term $t \in \iterms$, we define the \emph{set of positions}
in $t$, denoted $\pos{t}$, as the smallest set of finite sequences in
$\nat$ such that $\emptyseq \in \pos t$, and $\seq i\concat \pi \in
\pos t$ whenever $t = f(t_0,\dots,t_{k-1})$, $i<k$, and $\pi \in
\pos{t_i}$. Given a position $\pi \in \pos t$, we define the
\emph{subterm} of $t$ at $\pi$, denoted $\atPos{t}{\pi}$, by recursion
on $\pi$ as follows: $\atPos t \emptyseq = t$, and
$\atPos{f(t_0,\dots,t_{k-1})}{\seq i \concat \pi} = \atPos{t_i}{\pi}$.
Moreover, we write $t(\pi)$ for the symbol in $t$ at $\pi$, i.e.\
$t(\pi) = f$ if $\atPos t \pi = f(t_0,\dots,t_{k-1})$ and $t(\pi) = v$
if $\atPos t \pi = v \in \calV$. For terms $s,t \in \iterms$ and a
position $\pi \in \pos{t}$, we write $\substAtPos{t}{\pi}{s}$ for the
term $t$ with the subterm at $\pi$ replaced by $s$, i.e.\
\[
\substAtPos{t}{\emptyseq}{s} = s,\quad\text{ and }\quad
\substAtPos{f(t_0,\dots,t_{k-1})}{\seq i\concat \pi}{s} =
f(t_0,\dots,t_{i-1},\substAtPos{t_i}{\pi}{s},t_{i+1},\dots,t_{k-1}).
\]
Note that while the set of terms $\iterms$ is defined coinductively,
the set of positions of a term is defined inductively. Consequently,
the subterm at a position and substitution at a position are defined
by recursion.

Two terms $s$ and $t$ are said to \emph{coincide} in a set of
positions $P \subseteq \pos{s} \cap \pos{t}$ if $s(\pi) = t(\pi)$ for
all $\pi \in P$. A position is also called an \emph{occurrence} if the
focus lies on the subterm at that position rather than the position
itself. Two positions $\pi_1, \pi_2$ are called \emph{disjoint} if
neither $\pi_1 \le \pi_2$ nor $\pi_2 \le \pi_1$.

A \emph{context} is a ``term with holes'', which are represented by a
distinguished variable $\Box$. We write $\Cxt{b}{,\dots,}$ for a
context with at least one occurrence of $\Box$, and $\Cxt{a}{,\dots,}$
for a context with zero more occurrences of $\Box$. $\Cxt{b}{t_1,
  \dots, t_n}$ denotes the result of replacing the occurrences of
$\Box$ in $C$ (from left to right) by $t_1,\dots,t_n$. $\Cxt{a}{t_1,
  \dots, t_n}$ is defined accordingly.

A \emph{substitution} $\sigma$ is a mapping from $\calV$ to
$\iterms$. Its \emph{domain}, denoted $\dom{\sigma}$, is the set
$\setcom{x \in \calV}{\sigma(x) \neq x}$ of variables not mapped to
itself by $\sigma$. Substitutions are uniquely extended to functions
from $\iterms$ to $\iterms$: $\sigma(f (t_1, \dots, t_n )) = f
(\sigma(t_1 ), \dots ,\sigma(t_n ))$ for $f \in \Sigma^{(n)}$ and
$t_1, \dots, t_n \in \iterms$. Instead of $\sigma(s)$, we shall also
write $s\sigma$.

On $\iterms$ a similarity measure $\similar{\cdot}{\cdot} \in \nat
\cup \set{\infty}$ can be
defined by setting
\[
\similar{s}{t} = \min \setcom{\len{\pi}}{\pi \in \pos{s}\cap\pos{t}, s(\pi) \neq
  t(\pi)} \cup \set{\infty} \qquad \text {for } s,t\in \iterms
\]
That is, $\similar{s}{t}$ is the minimal depth at which $s$ and $t$
differ, or $\infty$ if $s = t$. Based on this, a distance function
$\dd$ can be defined by $\dd(s,t) = 2^{-\similar{s}{t}}$, where we
interpret $2^{-\infty}$ as $0$. Note that $0\le\dd(s,t)\le 1$. In
particular, $\dd(s,t) = 0$ iff $s$ and $t$ coincide, and $\dd(s,t) =
1$ iff $s$ and $t$ differ at the root. The pair $(\iterms, \dd)$ is
known to form a complete ultrametric space
\cite{arnold80fi}. \emph{Partial terms}, i.e.\ terms over signature
$\Sigma_\bot = \Sigma \uplus \set{\bot}$ with $\bot$ a fresh constant
symbol, can be endowed with a binary relation $\lebot$ by defining $s
\lebot t$ iff $s$ can be obtained from $t$ by replacing some subterm
occurrences in $t$ by $\bot$. Interpreting the term $\bot$ as denoting
``undefined'', $\lebot$ can be read as ``is less defined than''. The
pair $(\ipterms,\lebot)$ is known to form a complete semilattice
\cite{goguen77jacm}. For a partial term $t\in \ipterms$ we use the
notation $\posNonBot{t}$ and $\posFun{t}$ for the set $\setcom{\pi\in
  \pos{t}}{t(\pi) \neq \bot}$ of non-$\bot$ positions resp.\ the set
$\setcom{\pi\in\pos{t}}{t(\pi) \in \Sigma}$ of positions of function
symbols. With this, $\lebot$ can be characterised alternatively by $s
\lebot t$ iff $s(\pi) = t(\pi)$ for all $\pi \in \posNonBot{s}$. To
explicitly distinguish them from partial terms, we call terms in
$\iterms$ \emph{total}.

\subsection{Term Rewriting Systems}
\label{sec:abstr-reduct-syst}

A \emph{term rewriting system} (TRS) $\calR$ is a pair $(\Sigma, R)$
consisting of a signature $\Sigma$ and a set $R$ of \emph{term rewrite
  rules} of the form $l \to r$ with $l \in \iterms \setminus \calV$
and $r \in \iterms$ such that all variables occurring in $r$ also
occur in $l$. Note that this notion of a TRS deviates slightly from
the standard notion of TRSs in the literature on infinitary rewriting
\cite{kennaway03book} in that it allows infinite terms on the
left-hand side of rewrite rules! This generalisation will be necessary
to accommodate Böhm extensions, which are introduced later in
Section~\ref{sec:mean-terms-bohm}. TRSs having only finite left-hand
sides are called \emph{left-finite}.

As in the finitary setting, every TRS $\calR$ defines a rewrite
relation $\to[\calR]$:
\[
s \to[\calR] t \iff \exists \pi \in \pos{s}, l\to r \in
R, \sigma\colon\; \atPos{s}{\pi} = l\sigma, t = \substAtPos{s}{\pi}{r\sigma}
\]
Instead of $s \to[\calR] t$, we sometimes write $s \to[\pi,\rho] t$ in
order to indicate the applied rule $\rho$ and the position $\pi$, or
simply $s \to t$. The subterm $\atPos{s}{\pi}$ is called a
\emph{$\rho$-redex} or simply \emph{redex}, $r\sigma$ its
\emph{contractum}, and $\atPos{s}{\pi}$ is said to be
\emph{contracted} to $r\sigma$.

Let $\rho\fcolon l \to r$ be a term rewrite rule. The \emph{pattern}
of $\rho$ is the context $l\sigma$, where $\sigma$ is the substitution
$\setcom{x \mapsto \Box}{x \in \calV}$ that maps all variables to
$\Box$. If $t$ is a $\rho$-redex, then the pattern $P$ of $\rho$ is
also called the \emph{redex pattern} of $t$ w.r.t.\ $\rho$. When
referring to the occurrences in a pattern, occurrences of the symbol
$\Box$ are neglected.

Let $\rho_1\fcolon l_1 \to r_1$, $\rho_2\fcolon l_2 \to r_2$ be rules
in a TRS $\calR$. The rules $\rho_1,\rho_2$ are said to \emph{overlap}
if there is a non-variable position $\pi$ in $l_1$ such that
$\atPos{l_1}{\pi}$ and $l_2$ are unifiable and $\pi$ is not the root
position $\emptyseq$ in case $\rho_1,\rho_2$ are renamed copies of the
same rule. A TRS is called \emph{non-overlapping} if none of its rules
overlap. A term $t$ is called \emph{linear} if each variable occurs at
most once in $t$. The TRS $\calR$ is called \emph{left-linear} if the
left-hand side of every rule in $\calR$ is linear. It is called
\emph{orthogonal} if it is left-linear and non-overlapping.

\section{Metric Infinitary Term Rewriting}
\label{sec:metr-infin-term}

In this section we briefly recall the metric model of infinitary term
rewriting \cite{kennaway95ic} and some of its properties. We will use
the metric model in two ways: Firstly, it will serve as a yardstick to
compare the partial order model to. But most importantly, we will use
known results for metric infinitary rewriting and transfer them to the
partial order model. In order to accomplish the latter, we shall
develop correspondence theorems (Theorem~\ref{thr:weakExt} and
Theorem~\ref{thr:strongExt}) that relate convergence in the metric
model and convergence in the partial order model. Specifically, these
correspondence results show that the two notions of convergence
coincide if we restrict ourselves to total terms.

At first we have to make clear what a \emph{reduction} in our setting
of infinitary rewriting is:
\begin{defi}[reduction (step)]
  \label{def:red}
  Let $\calR$ be a TRS. A \emph{reduction step} $\phi$ in $\calR$ is a
  tuple $(s,\pi,\rho,t)$ such that $s \to[\pi,\rho] t$; we also write
  $\phi\fcolon s \to[\pi,\rho] t$. A \emph{reduction} $S$ in $\calR$
  is a sequence $(\phi_\iota)_{\iota < \alpha}$ of reduction steps in
  $\calR$ such that there is a sequence $(t_\iota)_{\iota <
    \wsuc\alpha}$ of terms, with $\wsuc\alpha = \alpha$ if $S$ is open
  and $\wsuc\alpha = \alpha + 1$ if $S$ is closed, such that
  $\phi_\iota\fcolon t_\iota \to t_{\iota+1}$. If $S$ is finite, we
  write $S\fcolon t_0 \fto{*} t_\alpha$.
\end{defi}

This definition of reductions is a straightforward generalisation of
finite reductions. As an example consider the TRS with the single rule
$a \to f(a)$. In this system we get a reduction $S\fcolon a \fto{*}
f(f(f(a)))$ of length $3$:
\[
S =
\seq{\phi_0\fcolon a \to f(a),\phi_1\fcolon f(a) \to
  f(f(a)),\phi_2\fcolon f(f(a)) \to f(f(f(a)))}
\]
In a more concise notation we write
\[
S\fcolon a \to f(a) \to f(f(a)) \to f(f(f(a)))
\]
Clearly, we can extend this reduction arbitrarily often which results
in the following infinite reduction of length $\omega$:
\[
T\fcolon a \to f(a) \to f^2(a) \to f^3(a) \to f^4(a) \to \dots
\]
However, this is as far we can go with this simple definition of
reductions. As soon as we go beyond $\omega$, we get reductions which
do not make sense. For example, consider the following reduction:
\[
T\concat S\fcolon a \to f(a) \to f^2(a) \to f^3(a) \to f^4(a) \to
\dots\; a \to f(a) \to f(f(a)) \to f(f(f(a)))
\]
The reduction $T$ of length $\omega$ can be extended by an arbitrary
reduction, e.g.\ by the reduction $S$. The notion of reductions
according to Definition~\ref{def:red} is only meaningful if restricted
to reductions of length at most $\omega$. The problem is that the
$\omega$-th step in the reduction, viz.\ the second step of the form
$a \to f(a)$ in the example above, is completely independent of all
previous steps since it does not have an immediate predecessor. This
issue occurs at each limit ordinal number. An appropriate definition
of a reduction of length beyond $\omega$ requires a notion of
continuity to bridge the gaps that arise at limit ordinals. In the
next section we will present the well-know metric approach to
this. Later in Section~\ref{sec:part-order-infin}, we will introduce a
novel approach using partial orders.

\subsection{Metric Convergence}
\label{sec:metric-convergence}

In this section we consider two notions of \emph{convergence} based on
the metric on terms as defined in Section~\ref{sec:terms}. We consider
both the weak \cite{dershowitz91tcs} and the strong
\cite{kennaway95ic} variant known from the literature. Related to this
notion of convergence is a corresponding notion of
\emph{continuity}. In order to distinguish both from the partial order
model that we will introduce in Section~\ref{sec:part-order-infin} we
will use the names \emph{weak} resp.\ \emph{strong $\mrs$-convergence}
and \emph{weak} resp.\ \emph{strong $\mrs$-continuity}.

It is important to understand that a reduction is a \emph{sequence of
  reduction steps} rather than just a sequence of terms. This is
crucial for a proper definition of strong convergence resp.\
continuity, which does not only depend on the sequence of terms that
are derived within the reduction but does also depend on the positions
where the contractions take place:
\begin{defi}[$\mrs$-continuity/-convergence]
  Let $\calR$ be a TRS and $S = (\phi_\iota\fcolon t_\iota
  \to[\pi_\iota] t_{\iota+1})_{\iota < \alpha}$ a non-empty reduction
  in $\calR$. The reduction $S$ is called
  \begin{enumerate}[label=(\roman*)]
  \item \emph{weakly $\mrs$-continuous} in $\calR$, written $S\fcolon
    t_0 \mwacont[\calR]$, if $\lim_{\iota \limto \lambda} t_\iota =
    t_\lambda$ for each limit ordinal $\lambda < \alpha$.
  \item \emph{strongly $\mrs$-continuous} in $\calR$, written
    $S\fcolon t_0 \macont[\calR]$, if it is weakly $\mrs$-continuous
    and for each limit ordinal $\lambda < \alpha$, the sequence
    $(\len{\pi_\iota})_{\iota < \lambda}$ of contraction depths tends
    to infinity.
  \item \emph{weakly $\mrs$-converging} to $t$ in $\calR$, written
    $S\fcolon t_0 \mwato[\calR] t$, if it is weakly $\mrs$-continuous
    and $t = \lim_{\iota \limto \wsuc\alpha} t_\iota$.
  \item \emph{strongly $\mrs$-converging} to $t$ in $\calR$, written
    $S\fcolon t_0 \mato[\calR] t$, if it is strongly
    $\mrs$-continuous, weakly $\mrs$-converges to $t$ and, in case
    that $S$ is open, $(\len{\pi_\iota})_{\iota<\alpha}$ tends to
    infinity.
  \end{enumerate}
  Whenever $S\fcolon t_0 \mwato[\calR] t$ or $S\fcolon t_0
  \mato[\calR] t$, we say that $t$ is weakly resp.\ strongly
  \emph{$\mrs$-reachable} from $t_0$ in $\calR$. By abuse of notation
  we use $\mwato[\calR]$ and $\mato[\calR]$ as a binary relation to
  indicate weakly resp.\ strongly $\mrs$-reachability. In order to
  indicate the length of $S$ and the TRS $\calR$, we write $S\fcolon
  t_0 \mwto{\alpha}[\calR] t$ resp.\ $S\fcolon t_0 \mto{\alpha}[\calR]
  t$. The empty reduction $\emptyseq$ is considered weakly/strongly
  $\mrs$-continuous and $\mrs$-convergent for any identical start and
  end term, i.e.\ $\emptyseq\fcolon t\mato[\calR] t$ for all $t \in
  \terms$.
\end{defi}

From the above definition it is clear that strong $\mrs$-convergence
implies both weak $\mrs$-convergence and strong $\mrs$-continuity and
that both weak $\mrs$-convergence and strong $\mrs$-continuity imply
weak $\mrs$-continuity, respectively. This is indicated in
Figure~\ref{fig:contConvRel}.
\begin{figure}
  \centering
  \begin{tikzpicture}[on grid,node distance = 3cm]
    \node (sconv) {strong $\mrs$-convergence};%
    \node[below left=of sconv] (wconv) {weak $\mrs$-convergence};%
    \node[below right=of sconv] (scont) {strong $\mrs$-continuity};%
    \node[below right=of wconv] (wcont) {weak $\mrs$-continuity};%
    
    \draw[every edge/.style={draw,double, double distance=1mm,-implies}]%
    (sconv) edge (wconv)%
    (sconv) edge (scont)%
    (wconv) edge (wcont)%
    (scont) edge (wcont);%
  \end{tikzpicture}
  \caption{Relation between continuity and convergence properties.}
  \label{fig:contConvRel}
\end{figure}
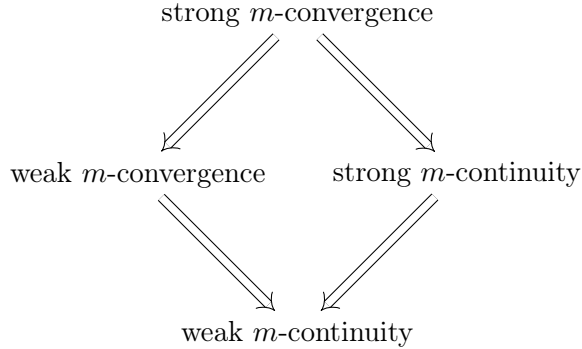
It is important to recognise that $\mrs$-convergence implies
$\mrs$-continuity. Hence, only meaningful, i.e.\ $\mrs$-continuous,
reductions can be $\mrs$-convergent.

For a reduction to be weakly $\mrs$-continuous, each open
\emph{proper} prefix of the underlying sequence
$(t_\iota)_{\iota<\wsuc\alpha}$ of terms must converge to the term
following next in the sequence -- or, equivalently,
$(t_\iota)_{\iota<\wsuc\alpha}$ must be continuous. For strong
$\mrs$-continuity, additionally, the depth at which contractions take
place has to tend to infinity for each of the reduction's open proper
prefixes. The convergence properties do only differ from the
continuity properties in that they require the above conditions to
hold for \emph{all} open prefixes, i.e.\ including the whole reduction
itself unless it is closed. For example, considering the rule $a \to
f(a)$, the reduction $g(a) \to g(f(a)) \to g(f(f(a))) \to \dots$
strongly $\mrs$-converges to the infinite term $g(f^\omega)$. The
first step takes place at depth $1$, the next step at depth $2$ and so
on. Having the rule $g(x) \to g(f(x))$ instead, the reduction $g(a)
\to g(f(a)) \to g(f(f(a))) \to \dots$ is trivially strongly
$\mrs$-continuous but is now not strongly $\mrs$-convergent since
every step in this reduction takes place at depth $0$, i.e. the
sequence of reduction depths does not tend to infinity. However, the
reduction still weakly $\mrs$-converges to $g(f^\omega)$.

In contrast to the strong notions of continuity and convergence, the
corresponding weak variants are independent from the rules that are
applied during the reduction. What makes strong $\mrs$-convergence
(and -continuity) strong is the fact that it employs a conservative
overapproximation of the differences between consecutive terms in the
reduction. For weak $\mrs$-convergence the distance
$\dd(t_\iota,t_{\iota+1})$ between consecutive terms in a reduction
$(t_\iota \to_{\pi_\iota} t_{\iota+1})_{\iota<\lambda}$ has to tend to
$0$. For strong $\mrs$-convergence the depth $\len{\pi_\iota}$ of the
reduction steps has to tend to infinity. In other words,
$2^{-\len{\pi_\iota}}$ has to tend to $0$. Note that
$2^{-\len{\pi_\iota}}$ is a conservative overapproximation of
$\dd(t_\iota,t_{\iota+1})$, i.e.\ $2^{-\len{\pi_\iota}} \ge
\dd(t_\iota,t_{\iota+1})$. So strong $\mrs$-convergence is simply weak
$\mrs$-convergence w.r.t.\ this overapproximation of $\dd$
\cite{bahr10rta}.  If this approximation is actually precise,
i.e. coincides with the actual value, both notions of
$\mrs$-convergence coincide.

\begin{rem}
  \label{rem:mcont}
  The notion of $\mrs$-continuity can be defined solely in terms of
  $\mrs$-convergence \cite{bahr10rta}. More precisely, we have for
  each reduction $S = (t_\iota \to t_{\iota+1})_{\iota < \alpha}$ that
  $S$ is weakly $\mrs$-continuous iff every (open) proper prefix of
  $\prefix{S}{\beta}$ weakly $\mrs$-converges to $t_\beta$.
  Analogously, strong $\mrs$-continuity can be characterised in terms
  of strong $\mrs$-convergence. An easy consequence of this is that
  $\mrs$-converging reductions are closed under concatenation, i.e.\
  $S\fcolon s \mwato t$, $T\fcolon t \mwato u$ implies $S\concat T
  \fcolon s \mwato u$ and likewise for strong $\mrs$-convergence.
\end{rem}

For the most part our focus in this paper is set on strong
$\mrs$-convergence and its partial order correspondent that we will
introduce in Section~\ref{sec:part-order-infin}. Weak
$\mrs$-convergence is well-known to be rather unruly
\cite{simonsen04ipl}. Strong convergence is far more well-behaved
\cite{kennaway95ic}. Most prominently, we have the following
Compression Lemma \cite{kennaway95ic} which in general does not hold
for weak $\mrs$-convergence:
\begin{thm}[Compression Lemma]
  \label{thr:mrsCompr}
  % Compression Lemma %
  % [kennaway95ic, Lem. 5.1]
  For each left-linear, left-finite TRS, $s \mato t$ implies $s
  \mto{\le \omega} t$.
\end{thm}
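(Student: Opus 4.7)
The plan is to prove the Compression Lemma by transfinite induction on the length $\alpha$ of the strongly $\mrs$-converging reduction $S\fcolon s \mto{\alpha} t$. If $\alpha \le \omega$ there is nothing to show, so assume $\alpha > \omega$. The proof is organised around an auxiliary lemma asserting that $\omega + 1$ steps can be compressed to at most $\omega$ steps, which covers both cases of the induction. For the successor case $\alpha = \beta + 1$, the induction hypothesis first compresses the prefix of length $\beta$ to length $\le \omega$; appending the final step produces a reduction of length $\le \omega + 1$, and the auxiliary lemma finishes the job. The limit case is handled by a diagonal argument: for every $n \in \nat$, strong $\mrs$-convergence supplies a stage $\beta_n < \alpha$ beyond which every contraction depth exceeds $n$, so that $s_{\beta_n}$ agrees with $t$ up to depth $n$; applying the induction hypothesis to each prefix of length $\beta_n$ and taking suitably long finite initial segments of the resulting compressed reductions then yields a single length-$\omega$ reduction strongly $\mrs$-converging to $t$.

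The core technical content is thus the auxiliary lemma: if $S = (s_i \to[\pi_i] s_{i+1})_{i < \omega+1}$ strongly $\mrs$-converges with final step $s_\omega \to[\pi] s_{\omega+1}$ applying a rule $\rho\fcolon l \to r$, then there is a strongly $\mrs$-converging reduction $s_0 \mto{\le\omega} s_{\omega+1}$. Since $\calR$ is left-finite, the left-hand side $l$ is finite, hence its non-variable positions have some maximal depth $d_l \in \nat$. Strong $\mrs$-convergence of the length-$\omega$ prefix of $S$ then supplies an index $N < \omega$ such that $|\pi_i| > |\pi| + d_l$ for every $N \le i < \omega$. Consequently no step after index $N$ alters the redex pattern of $\rho$ at position $\pi$, so $s_N|_\pi = l\sigma_N$ holds for the substitution $\sigma_N$ defined by $\sigma_N(x) = s_N|_{\pi\concat\tau_x}$, where $\tau_x$ denotes the unique (by left-linearity) position of $x$ in $l$. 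The compressed reduction is then assembled in three parts: the original prefix $s_0 \to \dots \to s_N$; one contraction of the redex at $\pi$ via $\rho$ and $\sigma_N$, reaching $s_N[r\sigma_N]_\pi$; and finally a projection of the original tail $s_N \to \dots \to s_\omega$ through this early contraction, landing at $s_{\omega+1}$.

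The main obstacle is to define and control this projected tail. Each original tail step $s_i \to[\pi_i] s_{i+1}$ with $i \ge N$ either has $\pi_i$ disjoint from $\pi$, in which case it is replayed verbatim, or satisfies $\pi_i = \pi\concat\tau_x\concat\tau$ for the unique variable position $\tau_x$ of some $x$ in $l$; in the latter case, by left-linearity the step lies entirely inside $\sigma_i(x)$ and must be replayed once at each occurrence of $x$ in $r$, producing a finite (possibly empty) burst of projected steps. Three properties then need to be verified. First, every projected step occurs strictly deeper than $d_l$ below the $r$-pattern at $\pi$, so the depths of the projected tail tend to infinity and strong $\mrs$-continuity is preserved. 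Second, since each of the $\omega$ tail indices contributes only finitely many projected steps, the total length of the projected tail is at most $\omega$. Third, the endpoint is $s_\omega[r\sigma]_\pi = s_{\omega+1}$, because the $\sigma_i(x)$ converge to $\sigma(x)$ for every variable $x$ in $l$ and the projected reduction independently updates each occurrence of $x$ in $r$ to $\sigma(x)$. The compressed reduction thus has total length $N + 1 + \omega = \omega$, completing the auxiliary lemma and hence the theorem.
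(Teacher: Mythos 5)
The paper itself offers no proof of Theorem~\ref{thr:mrsCompr} --- it is imported verbatim from Kennaway et al.\ \cite{kennaway95ic} --- so your proposal can only be measured against the standard proof in the literature, whose architecture (transfinite induction, the $\omega+1$ case as the technical core, early contraction of the final redex after a stage $N$ past which its pattern is stable, projection of the tail) you reproduce faithfully. Within that architecture, however, there are two genuine gaps. The first is in the auxiliary lemma: you claim that a tail step lying inside $\sigma_i(x)$ is ``replayed once at each occurrence of $x$ in $r$, producing a finite (possibly empty) burst,'' and you then use ``each tail index contributes only finitely many projected steps'' both to bound the projected tail by $\omega$ and to see that its depths tend to infinity. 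In this paper's setting only the \emph{left-hand} side is required to be finite; $r \in \iterms$ may be an infinite term in which a variable occurs infinitely often (e.g.\ $r = g(x,g(x,g(x,\dots)))$), in which case a single tail step projects to infinitely many copies and both of your claims fail as stated. The lemma is still true, but the argument must change: one observes that for each depth $d$ only finitely many of the countably many copies lie at depth $\le d$ (terms are finitely branching and the original tail depths tend to infinity), and then dovetails all copies into one length-$\omega$ reduction respecting the original order within each copy.

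The second gap is the limit case. The compressed reductions $s \mto{\le\omega} s_{\beta_n}$ supplied by the induction hypothesis are obtained independently for each $n$ and all start at $s$; their finite initial segments need not be prefixes of one another, so they cannot be concatenated or otherwise assembled into ``a single length-$\omega$ reduction.'' What is needed is a coherence property --- the approximating finite reductions must form a chain under the prefix order whose union strongly $\mrs$-converges to $t$ --- and securing it is precisely where the content of the limit case lies: one either strengthens the induction hypothesis (a reduction all of whose steps are at depth $\ge d$ compresses to one with the same property, via the decomposition into finitely many disjoint subterm reductions) and then glues the compressed segments with a further interleaving, or iterates the $\omega+1$ construction and shows that the successive approximations stabilize on every finite prefix. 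Your one-sentence description skips exactly this step.
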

% left-finiteness is necessary: Consider a -> f(a), f^\omega -> b and
% reduction a -> f(a) -> ... -> f^\omega -> b

As an easy corollary we obtain that the final term of a strongly
$\mrs$-converging reduction can be approximated arbitrarily accurately
by a finite reduction:
\begin{cor}[finite approximation]
  \label{cor:mrsFinApprox}
  Let $\calR$ be a left-linear, left-finite TRS and $s \mato t$. Then,
  for each depth $d \in \nat$, there is a finite reduction $s \fto{*}
  t'$ such that $t$ and $t'$ coincide up to depth $d$, i.e.\
  $\dd(t,t') < 2^{-d}$.
\end{cor}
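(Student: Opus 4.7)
The plan is to apply the Compression Lemma (Theorem~\ref{thr:mrsCompr}) to reduce the problem to reductions of length at most $\omega$, and then exploit the key feature of strong $\mrs$-convergence, namely that the sequence of contraction depths tends to infinity.

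First I would invoke Theorem~\ref{thr:mrsCompr} on $s \mato t$ to obtain a reduction $S\fcolon s \mto{\le\omega} t$. If $S$ is finite, the corollary holds trivially by taking $t' = t$ and noting $\dd(t,t) = 0 < 2^{-d}$. Otherwise, $S$ is of the form $(\phi_\iota\fcolon t_\iota \to[\pi_\iota] t_{\iota+1})_{\iota < \omega}$ with $t_0 = s$ and $\lim_{\iota \limto \omega} t_\iota = t$, and since $S$ strongly $\mrs$-converges, $(\len{\pi_\iota})_{\iota < \omega}$ tends to infinity.

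Given $d \in \nat$, the latter yields some $\beta < \omega$ such that $\len{\pi_\iota} > d$ for all $\beta \le \iota < \omega$. Setting $t' = t_\beta$, the finite prefix $\prefix{S}{\beta}\fcolon s \fto{*} t'$ is the desired finite reduction. Since each step $\phi_\iota$ with $\iota \ge \beta$ contracts a redex strictly below depth $d$, the terms $t_\iota$ and $t_\beta$ coincide on all positions of length at most $d$ for every $\iota \ge \beta$; equivalently, $\dd(t_\beta, t_\iota) < 2^{-d}$ for all such $\iota$. Because $t = \lim_{\iota \limto \omega} t_\iota$ in the ultrametric space $(\iterms,\dd)$, this bound is preserved in the limit, so $\dd(t, t') < 2^{-d}$, as required.

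There is no real obstacle here: the Compression Lemma does all the heavy lifting by reducing transfinite reductions to $\omega$-length ones, after which the depth condition in the definition of strong $\mrs$-convergence immediately provides the desired finite approximation. The only point that deserves care is the passage to the limit, which is justified by the fact that the metric is ultrametric, so the bound $\dd(t_\beta, t_\iota) < 2^{-d}$ for cofinitely many $\iota$ transfers directly to $\dd(t_\beta, t)$.
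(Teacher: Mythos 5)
Your proposal is correct and follows exactly the paper's own argument: compress to length at most $\omega$ via Theorem~\ref{thr:mrsCompr}, then use the fact that contraction depths tend to infinity to cut the reduction off at a finite prefix beyond which all steps occur below depth $d$. The only difference is that you spell out the passage to the limit via the ultrametric inequality, which the paper leaves implicit with ``it is clear''.
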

\begin{proof}
  Assume $s \mato t$. By Theorem~\ref{thr:mrsCompr}, there is a
  reduction $S\fcolon s \mto{\le \omega} t$. If $S$ is of finite
  length, then we are done. If $S\fcolon s \mto{\omega} t$, then, by
  strong $\mrs$-convergence, there is some $n < \omega$ such that all
  reductions steps in $S$ after $n$ take place at a depth greater than
  $d$. Consider $\prefix{S}{n}\fcolon s \fto{*} t'$. It is clear that
  $t$ and $t'$ coincide up to depth $d$.
\end{proof}
As a special case of the above corollary, we obtain that $s \mato t$
implies $s \fto* t$ whenever $t$ is a finite term.

% left-finiteness is necessary: Consider a -> f(a), f^\omega -> b and
% reduction a -> f(a) -> ... -> f^\omega -> b

An important difference between $\mrs$-converging reductions and
finite reductions is the confluence of orthogonal systems. In contrast
to finite reachability, $\mrs$-reachability of orthogonal TRSs -- even
in its strong variant -- does not necessarily have the diamond
property, i.e.\ orthogonal systems are confluent but not infinitarily
confluent \cite{kennaway95ic}:
\begin{exa}[failure of infinitary confluence]
  \label{ex:mconfl}
  Consider the orthogonal TRS consisting of the \emph{collapsing}
  rules $\rho_1\fcolon f(x) \to x$ and $\rho_2\fcolon g(x) \to x$ and
  the infinite term $t = g(f(g(f(\dots))))$. We then obtain the
  reductions $S\fcolon t \mato g^\omega$ and $T\fcolon t \mato
  f^\omega$ by successively contracting all $\rho_1$- resp.\
  $\rho_2$-redexes. However, there is no term $s$ such that $g^\omega
  \mato s \mafrom f^\omega$ (or $g^\omega \mwato s \mwafrom f^\omega$)
  as both $g^\omega$ and $f^\omega$ can only be rewritten to
  themselves, respectively.
\end{exa}

In the following section we discuss a method for obtaining an
appropriate notion of transfinite reachability based on strong
$\mrs$-reachability which actually has the diamond property.

\subsection{Meaningless Terms and Böhm Trees}
\label{sec:mean-terms-bohm}
At the end of the previous section we have seen that orthogonal TRSs
are in general not infinitarily confluent. However, as Kennaway et
al.\ \cite{kennaway95ic} have shown, orthogonal TRSs are infinitarily
confluent modulo so-called \emph{hyper-collapsing} terms -- in the
sense that two forking strongly $\mrs$-converging reductions $t \mato
t_1, t \mato t_2$ can always be extended by two strongly
$\mrs$-converging reductions $t_1 \mato t_3, t_2 \mato t_3'$ such that
the resulting terms $t_3, t_3'$ only differ in the hyper-collapsing
subterms they contain.

This result was later generalised by Kennaway et al.\
\cite{kennaway99jflp} to develop an axiomatic theory of
\emph{meaningless terms}. Intuitively, a set of meaningless terms in
this setting consists of terms that are deemed meaningless since, from
a term rewriting perspective, they cannot be distinguished from one
another and they do not contribute any information to any
computation. Kennaway et al.\ capture this by a set of axioms that
characterise a set of meaningless terms. For orthogonal TRSs, one such
set of terms, in fact the least such set, is the set of
\emph{root-active} terms \cite{kennaway99jflp}:
\begin{defi}[root-activeness]
  Let $\calR$ be a TRS and $t \in \iterms$. Then $t$ is called
  \emph{root-active} if for each reduction $t \fto{*} t'$, there is a
  reduction $t' \fto{*} s$ to a redex $s$. The set of all root-active
  terms of $\calR$ is denoted $\rAct[\calR]$ or simply $\rAct$ if
  $\calR$ is clear from the context.
\end{defi}
Intuitively speaking, as the name already suggests, root-active terms
are terms that can be contracted at the root arbitrarily often, e.g.\
the terms $f^\omega$ and $g^\omega$ from Example~\ref{ex:mconfl}.

In this paper we are only interested in this particular set of
meaningless terms. So for the sake of brevity we restrict our
discussion in this section to the set $\rAct$ instead of the original
more general axiomatic treatment by Kennaway et al.\
\cite{kennaway99jflp}.

Since, denotationally, root-active terms cannot be distinguished from
each other it is appropriate to equate them
\cite{kennaway99jflp}. This can be achieved by introducing a new
constant symbol $\bot$ and making each root-active term equal to
$\bot$. By adding rules which enable rewriting root-active terms to
$\bot$, this can be encoded into an existing TRS
\cite{kennaway99jflp}:
\begin{defi}[Böhm extension]
  Let $\calR = (\Sigma, R)$ be a TRS, and $\calU \subseteq \iterms$.
  \begin{enumerate}[label=(\roman*)]
  \item A term $t \in \iterms$ is called a
    \emph{$\bot,\calU$-instance} of a term $s \in
    \ipterms$ if $t$ can be obtained from $s$ by
    replacing each occurrence of $\bot$ in $s$ with some term in
    $\calU$.
  \item $\calU_\bot$ is the set of terms in
    $\ipterms$ that have a $\bot,\calU$-instance in $\calU$.
  \item The \emph{Böhm extension}{} of $\calR$ w.r.t.\ $\calU$ is the
    TRS $\calB_{\calR,\calU} = (\Sigma_\bot,R \cup B)$, where
    \[
    B = \setcom{t \to \bot}{t \in \calU_\bot\setminus\set{\bot}}
    \]
    We write $s \to[\calU,\bot] t$ for a reduction step using a rule
    in $B$. If $\calR$ and $\calU$ are clear from the context, we
    simply write $\calB$ and $\to[\bot]$ instead of
    $\calB_{\calR,\calU}$ and $\to[\calU,\bot]$, respectively.
  \end{enumerate}
  A reduction that is strongly $\mrs$-converging in the Böhm extension
  $\calB$ is called \emph{Böhm-converging}. A term $t$ is called
  \emph{Böhm-reachable} from $s$ if there is a Böhm-converging
  reduction from $s$ to $t$.
\end{defi}

The definition of $\calU_\bot$ is quite subtle and deserves further
attention before we move on. According to the definition, a term $t$
is in $\calU_\bot$ iff the term obtained from $t$ by replacing
occurrences of $\bot$ in $t$ by terms from $\calU$ is also in
$\calU$. More illuminating, however, is the converse view, i.e.\ how
to construct a term in $\calU_\bot$ from a term in $\calU$. First of
all, any term in $\calU$ is also in $\calU_\bot$. Secondly, we may
obtain a term in $\calU_\bot$ by taking a term $t \in \calU$ and
replacing any number of subterms of $t$ that are in $\calU$ by
$\bot$. For B\"ohm extensions, this means that we may contract any
term $t \in \calU$ to $\bot$, even if we already contracted some
proper subterms of $t$ to $\bot$ before.

It is at this point where we, in fact, need the generality of allowing
infinite terms on the left-hand side of rewrite rules: The additional
rules of a Böhm extension allow possibly infinite terms $t \in
\calU_\bot\setminus\set{\bot}$ on the left-hand side.

\begin{rem}[closure under substitution]
  \label{rem:cloSub}
  Note that, for orthogonal TRSs, $\rAct$ is closed under
  substitutions and, hence, so is $\rAct_\bot$
  \cite{kennaway99jflp}. Therefore, whenever $\Cxt{b}{t}
  \to[\rAct,\bot] \Cxt{b}{\bot}$, we can assume that $t \in
  \rAct_\bot$.
\end{rem}

With the additional rules provided by the Böhm extension, we gain
infinitary confluence of orthogonal systems:
\begin{thm}[infinitary confluence of Böhm-converging reductions,
  {\cite{kennaway99jflp}}]
  %[kennaway99jflp, Thm. 2]%
  \label{thr:bohmCR}
  % confluence of Böhm reduction %
  Let $\calR$ be an orthogonal, left-finite TRS. Then the Böhm
  extension $\calB$ of $\calR$ w.r.t.\ $\rAct$ is infinitarily
  confluent, i.e.\ $s_1 \mafrom[\calB] t \mato[\calB] s_2$ implies
  $s_1 \mato[\calB] t' \mafrom[\calB] s_2$.
% TODO: is the restriction to left-finite TRS necessary?
\end{thm}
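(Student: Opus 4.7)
The plan is to bootstrap from the result of Kennaway et al.\ that orthogonal, left-finite TRSs are infinitarily confluent \emph{modulo root-active subterms}: any fork $s_1 \mafrom[\calR] t \mato[\calR] s_2$ can be closed to $s_1 \mato[\calR] u_1$, $s_2 \mato[\calR] u_2$ with $u_1$ and $u_2$ agreeing everywhere except at subterm occurrences that are root-active. The Böhm extension then provides precisely the rules needed to collapse these remaining root-active differences to $\bot$, yielding a common join.

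Concretely, I would proceed in three steps. \textbf{(1) Lifting.} First I would establish that for every Böhm-converging reduction $t \mato[\calB] s$ there exists a strongly $\mrs$-converging reduction $t \mato[\calR] s^*$ in the original system, where $s^*$ is some $\bot,\rAct$-instance of $s$. The idea is to retain every proper $\calR$-step unchanged, and to replace each Böhm step $u \to[\bot] \bot$ by a suitable continuation that witnesses $u \in \rAct_\bot$, using Remark~\ref{rem:cloSub} to substitute chosen root-active terms for any $\bot$s already present in $u$. Strong $\mrs$-convergence is inherited because the depths of the $\bot$-steps in the original reduction must tend to infinity, forcing the replacement reduction to do so as well. \textbf{(2) Closure modulo root-active.} Apply the lifting to both branches of the fork $s_1 \mafrom[\calB] t \mato[\calB] s_2$ to obtain $s_1^* \mafrom[\calR] t \mato[\calR] s_2^*$. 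By the Kennaway et al.\ result cited above, there are $\calR$-reductions $s_1^* \mato[\calR] u_1$ and $s_2^* \mato[\calR] u_2$ such that $u_1$ and $u_2$ differ only on root-active subterms. \textbf{(3) Collapsing.} Finally, contract every differing root-active subterm to $\bot$ using the Böhm rules, in parallel, to obtain a single term $t'$ with $u_1 \mato[\calB] t' \mafrom[\calB] u_2$. The positions where $u_1$ and $u_2$ disagree form an antichain whose depths tend to infinity (otherwise $u_1$ and $u_2$ would coincide only up to a bounded depth, contradicting either the strong $\mrs$-convergence of the original reductions or the compression of differences to root-active subterms), so these collapsing reductions are strongly $\mrs$-convergent. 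Concatenation then yields $s_1 \mato[\calB] t' \mafrom[\calB] s_2$.

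The main obstacle is step~(1): making the lifting precise across transfinite ordinals. At limit stages one must show that the locally chosen root-active witnesses assemble into a globally coherent, strongly $\mrs$-converging $\calR$-reduction, which requires that witnesses chosen at one $\bot$-step remain consistent with $\calR$-steps performed later at deeper positions. This is where left-finiteness and left-linearity of $\calR$ become essential, together with closure of $\rAct$ under substitution. By applying Theorem~\ref{thr:mrsCompr} to each Böhm-converging reduction (compression remains valid in $\calB$ because left-hand sides of Böhm rules need not be finite but the argument can be restructured to apply compression separately on the $\calR$-part), the transfinite case reduces to reductions of length at most $\omega$, where the lifting can be constructed by induction on the reduction steps and the choices of witnesses coordinated via König's lemma on the tree of possible witness choices at each $\bot$-step.
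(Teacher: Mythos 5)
This theorem is not proved in the paper at all: it is imported verbatim from Kennaway et al.\ \cite{kennaway99jflp}, so there is no in-paper proof to compare against. Your three-step architecture (postpone/eliminate the $\to_\bot$-steps, invoke confluence of $\calR$ modulo meaningless subterms, then rejoin by collapsing the residual meaningless differences to $\bot$) is in fact the standard route taken in that reference, and steps (2) and (3) are essentially sound --- though in (3) the correct justification that the collapsing reduction is strongly $\mrs$-convergent is simply that the outermost disagreement positions form an antichain in a finitely branching tree, so only finitely many of them lie at or above any fixed depth; your parenthetical argument about "coinciding only up to a bounded depth" is not needed and does not quite parse.

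The genuine gap is step (1) as you have formulated it. A Böhm step $\Cxt{b}{u} \to_\bot \Cxt{b}{\bot}$ cannot be replaced by any "continuation in $\calR$ that witnesses $u \in \rAct_\bot$": root-activeness is witnessed by the existence of finite reductions to redexes, not by a reduction to a canonical reduct, and a root-active term such as $f^\omega$ with the rule $f(x) \to x$ rewrites in $\calR$ only to itself, so no $\mrs$-converging $\calR$-reduction can simulate $f^\omega \to_\bot \bot$. The correct move is not to simulate the $\bot$-step but to \emph{omit} it and commute the remainder of the reduction past the untouched subterm; this is exactly the postponement lemma (Lemma~\ref{lem:procBot}), whose proof requires Lemma~\ref{lem:botInst} to argue that later Böhm redexes whose patterns overlap the suppressed $\bot$ still have the corresponding $\bot,\rAct$-instance as a Böhm redex. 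Your appeal to König's lemma on a "tree of witness choices", together with an unproven claim that compression can be "restructured" to apply in $\calB$ despite its infinite left-hand sides, does not supply this commutation argument. If you are permitted to cite Lemma~\ref{lem:procBot} and Lemma~\ref{lem:comprBot} as the paper does, your step (1) becomes immediate and the whole proof goes through; as a self-contained argument it does not.
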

The lack of confluence for strongly $\mrs$-converging reductions is
resolved in Böhm extensions by allowing (sub-)terms, which where
previously not joinable, to be contracted to $\bot$. Returning to
Example~\ref{ex:mconfl}, we can see that $g^\omega$ and $f^\omega$ can
be rewritten to $\bot$ as both terms are root-active.

In fact, w.r.t.\ Böhm-convergence, every term of an orthogonal TRS has
a normal form:
\begin{thm}[infinitary normalisation of Böhm-converging reductions,
  {\cite{kennaway99jflp}}]
  \label{thr:bohmWn}
  %[kennaway99jflp, Thm. 1]%
  Let $\calR$ be an orthogonal, left-finite TRS. Then the Böhm
  extension $\calB$ of $\calR$ w.r.t.\ $\rAct$ is infinitarily
  normalising, i.e.\ for each term $t$ there is a $\calB$-normal form
  Böhm-reachable from $t$.
% TODO: is the restriction to left-finite TRS necessary?
\end{thm}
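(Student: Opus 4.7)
The plan is to construct, for each term $t$, a Böhm-converging reduction to a $\calB$-normal form by corecursion on the target term. First, I would invoke the classical head-normal form dichotomy for orthogonal TRSs: every term $t$ is either root-active, or admits a finite reduction $t \fto{*}[\calR] f(t_1, \ldots, t_n)$ where the root symbol $f$ is \emph{stable}, in the sense that no further $\calR$-reduction from $f(t_1, \ldots, t_n)$ produces a redex at the root position. This dichotomy is the standard consequence of orthogonality together with the definition of $\rAct$.

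Given this, I would define the normalising reduction $N(t)$ corecursively: if $t \in \calV \cup \set{\bot}$ then $N(t)$ is the empty reduction; if $t \in \rAct$, then $N(t)$ is the single Böhm step $t \to[\bot] \bot$; otherwise, pick a finite head-normalising reduction $t \fto{*}[\calR] f(t_1, \ldots, t_n)$ with $f$ stable, and let $N(t)$ be this finite reduction followed by the reductions $N(t_i)$, executed sequentially at the positions $\seq{i}$ for $i = 1, \ldots, n$.

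Next I would verify three properties of this construction: (i) $N(t)$ is strongly $\mrs$-continuous in $\calB$; (ii) $N(t)$ strongly $\mrs$-converges in $\calB$ to a well-defined limit $N_\infty(t)$; and (iii) $N_\infty(t)$ is a $\calB$-normal form. Properties (i) and (ii) follow by transfinite induction on the corecursion: each recursive call operates strictly below the enclosing call, so the sequence of contraction depths tends to infinity at every limit ordinal. The resulting reduction has transfinite length, but by the Compression Lemma (Theorem~\ref{thr:mrsCompr}) it can be compressed to length at most $\omega$.

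The main obstacle lies in establishing (iii), and in particular in ruling out further applications of Böhm rules. That no $\calR$-rule applies to $N_\infty(t)$ is immediate: every non-$\bot$ position of $N_\infty(t)$ carries a symbol that was selected as the stable head of a head-normal form, so stability is preserved under the subsequent reductions below, which never touch that symbol. To exclude further Böhm steps, I would show that no subterm of $N_\infty(t)$ distinct from $\bot$ lies in $\rAct_\bot$. Membership in $\rAct_\bot \setminus \set{\bot}$ would force some $\bot,\rAct$-instance of the subterm to be root-active in $\calR$; combined with the closure of $\rAct$ under substitutions (Remark~\ref{rem:cloSub}), this would contradict the stability of the head symbol chosen at that position of $N_\infty(t)$. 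This final bookkeeping step — reconciling the coinductive construction of $N_\infty(t)$ with the inductive shape of $\rAct_\bot$ — is the delicate part of the argument and is where an appeal to the axiomatic framework of Kennaway et al.~\cite{kennaway99jflp} becomes most useful.
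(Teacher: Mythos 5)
The paper does not prove this theorem: it is stated as an imported result of Kennaway et al.\ \cite{kennaway99jflp}, so there is no in-paper proof to compare against. Your proposal reconstructs what is essentially the standard argument from that source -- build the B\"ohm tree corecursively by repeatedly either head-normalising to a root-stable term or firing a single $\to_\bot$-step on a root-active (sub)term, and observe that the resulting depth-increasing concatenation is strongly $\mrs$-convergent. The dichotomy you invoke is indeed immediate from the definition of $\rAct$, and the convergence argument (i)--(ii) is standard. Two caveats on the periphery: the construction should be phrased over $\Sigma_\bot$ (the terms of $\calB$ are partial terms, so the B\"ohm-step case must test membership in $\rAct_\bot\setminus\set\bot$, not $\rAct$), and your appeal to Theorem~\ref{thr:mrsCompr} is not licensed -- $\calB$ has infinite left-hand sides and is therefore not left-finite, so the Compression Lemma does not apply to $\calB$-reductions as stated (compression for B\"ohm reductions needs a separate argument; fortunately it is not needed for normalisation).

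The genuine gap is in (iii), where you declare that the absence of $\calR$-redexes in $N_\infty(t)$ ``is immediate.'' It is not. Root-stability of the chosen head $f(t_1,\dots,t_n)$ only says that no \emph{finite $\calR$-reduct} is a redex, whereas the subterm of $N_\infty(t)$ rooted at that position is the \emph{limit of a transfinite $\calB$-reduction} below the root, which interleaves $\calR$-steps with $\to_\bot$-steps. Whether a term is a redex depends on its whole pattern, not just its root symbol, so ``the reductions below never touch that symbol'' does not settle the matter. To close this you need: left-finiteness and strong convergence to find a proper reduct already matching the (finite) pattern; left-linearity to argue that $\to_\bot$-steps, which only insert $\bot$ below pattern positions, cannot create a match that was not already there; and postponement of $\to_\bot$-steps together with finite approximation (Lemma~\ref{lem:procBot}, Corollary~\ref{cor:mrsFinApprox}) to convert the offending transfinite $\calB$-reduct into a finite $\calR$-reduct contradicting root-stability. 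A parallel chain of reasoning (via Lemma~\ref{lem:botInst}) is needed for the B\"ohm redexes, which you do acknowledge as delicate; the $\calR$-redex case deserves the same care rather than a claim of immediacy.
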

This means that each term $t$ of an orthogonal, left-finite TRS
$\calR$ has a unique normal form in $\calB_{\calR, \rAct}$. This
normal form is called the \emph{Böhm tree} of $t$ (w.r.t.\ $\rAct$)
\cite{kennaway99jflp}.

The rest of this paper is concerned with establishing an alternative
to the metric notion of convergence based on the partial order on
terms that is equivalent to the Böhm extension approach.

\section{Partial Order Infinitary Rewriting}
\label{sec:part-order-infin}

In this section we introduce an alternative model of infinitary term
rewriting which uses the partial order on terms to formalise
convergence of transfinite reductions. To this end we will turn to
partial terms which, like in the setting of Böhm extensions, have an
additional constant symbol $\bot$. The result will be a more
fine-grained notion of convergence in which, intuitively speaking, a
reduction can be diverging in some positions but at the same time
converging in other positions. The ``diverging parts'' are then
indicated by a $\bot$-occurrence in the final term of the reduction:
\begin{exa}
  \label{ex:prsConv}
  Consider the TRS consisting of the rules $h(x) \to h(g(x)), b \to
  g(b)$ and the term $t= f(h(a),b)$. In this system, we have the
  reduction
  \[
  S\fcolon f(h(a),b) \to f(h(g(a)),b) \to f(h(g(a)),g(b)) \to
  f(h(g(g(a))),g(b)) \to \dots
  \]
  which alternately contracts the redex in the left and in the right
  argument of $f$.
\end{exa}
The reduction $S$ weakly $\mrs$-converges to the term
$f(h(g^\omega),g^\omega)$. But it does not \emph{strongly}
$\mrs$-converge as the depth at which contractions are performed does
not tend to infinity. However, this does only happen in the left
argument of $f$, not in the other one. Within the partial order model
we will still be able to obtain that $S$ weakly converges to
$f(h(g^\omega),g^\omega)$ but we will also obtain that it strongly
converges to the term $f(\bot,g^\omega)$. That is, we will be able to
identify that the reduction $S$ strongly converges except at position
$\seq{0}$, the first argument of $f$.

\subsection{Partial Order Convergence}
\label{sec:part-order-conv}

In order to formalise continuity and convergence in terms of the
complete semilattice $(\ipterms,\lebot)$ instead of the complete
metric space $(\iterms, \dd)$, we move from the limit of the metric
space to the limit inferior of the complete semilattice:
\begin{defi}[$\prs$-continuity/-convergence]
  Let $\calR = (\Sigma,R)$ be a TRS and $S = (\phi_\iota\fcolon
  t_\iota \to[\pi_\iota] t_{\iota+1})_{\iota<\alpha}$ a non-empty
  reduction in $\calR_\bot = (\Sigma_\bot,R)$. The reduction $S$ is
  called
  \begin{enumerate}[label=(\roman*)]
  \item \emph{weakly $\prs$-continuous} in $\calR$, written $S\fcolon
    t_0 \pwacont[\calR]$, if $\liminf_{\iota \limto \lambda} t_\iota =
    t_\lambda$ for each limit ordinal $\lambda < \alpha$.
  \item \emph{strongly $\prs$-continuous} in $\calR$, written
    $S\fcolon t_0 \pacont[\calR]$, if $\liminf_{\iota \limto \lambda}
    c_\iota = t_\lambda$ for each limit ordinal $\lambda < \alpha$,
    where $c_\iota = \substAtPos{t_\iota}{\pi_\iota}{\bot}$. Each
    $c_\iota$ is called the \emph{context} of the reduction step
    $\phi_\iota$, which we indicate by writing $\phi_\iota\fcolon
    t_\iota \to[c_\iota] t_{\iota+1}$.
  \item \emph{weakly $\prs$-converging} to $t$ in $\calR$, written
    $S\fcolon t_0 \pwato[\calR] t$, if it is weakly $\prs$-continuous
    and $t = \liminf_{\iota \limto \wsuc\alpha} t_\iota$.
  \item \emph{strongly $\prs$-converging} to $t$ in $\calR$, written
    $S\fcolon t_0 \pato[\calR] t$, if it is strongly
    $\prs$-continuous and $S$ is closed with $t=t_{\alpha+1}$
    or $t = \liminf_{\iota \limto \alpha} c_\iota$.
  \end{enumerate}
  Whenever $S\fcolon t_0 \pwato[\calR] t$ or $S\fcolon t_0
  \pato[\calR] t$, we say that $t$ is weakly resp.\ strongly
  \emph{$\prs$-reachable} from $t_0$ in $\calR$. By abuse of notation
  we use $\pwato[\calR]$ and $\pato[\calR]$ as a binary relation to
  indicate weak resp.\ strong $\prs$-reachability. In order to
  indicate the length of $S$ and the TRS $\calR$, we write $S\fcolon
  t_0 \pwto{\alpha}[\calR] t$ resp.\ $S\fcolon t_0 \pto{\alpha}[\calR]
  t$. The empty reduction $\emptyseq$ is considered weakly/strongly
  $\prs$-continuous and $\prs$-convergent for any start and end term,
  i.e.\ $\emptyseq\fcolon t\pato[\calR] t$ for all $t \in \terms$.
\end{defi}

The definitions of weak $\prs$-continuity and weak $\prs$-convergence
are straightforward ``translations'' from the metric setting to the
partial order setting replacing the limit $\lim_{\iota \limto \alpha}$
by the limit inferior $\liminf_{\iota \limto \alpha}$. On the other
hand, the definitions of the strong counterparts seem a bit different
compared to the metric model: Whereas strong $\mrs$-convergence simply
adds a side condition regarding the depth $\len{\pi_\iota}$ of the
reduction steps, strong $\prs$-convergence is defined in a different
way compared to the weak variant. Instead of the terms $t_\iota$ of
the reduction, it considers the contexts $c_\iota =
\substAtPos{t_\iota}{\bot}{\pi_\iota}$. However, one can surmise some
similarity due to the fact that the partial order model of strong
convergence indirectly takes into account the position $\pi_\iota$ of
each reduction step as well. Moreover, for the sake of understanding
the intuition of strong $\prs$-convergence it is better to compare the
contexts $c_\iota$ rather with the glb of two consecutive terms
$t_\iota \glbbot t_{\iota +1}$ instead of the term $t_\iota$
itself. The following proposition allows precisely that.
\begin{prop}[limit inferior of open sequences]
  \label{prop:liminfOpen}
  % limit inferior of open sequences %
  Let $(a_\iota)_{\iota < \lambda}$ be an open sequence in a complete
  semilattice. Then it holds that $\liminf_{\iota \limto \lambda} a_\iota =
  \liminf_{\iota \limto \lambda} (a_\iota \glb a_{\iota+1})$.
\end{prop}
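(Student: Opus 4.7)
The plan is to unfold the definition of the limit inferior on both sides and reduce the statement to showing that the inner greatest lower bounds agree for every tail. Writing $b_\iota = a_\iota \glbbot a_{\iota+1}$, we have
\[
\liminf_{\iota \limto \lambda} a_\iota = \Lubbot_{\beta < \lambda} \Glbbot_{\beta \le \iota < \lambda} a_\iota
\quad\text{and}\quad
\liminf_{\iota \limto \lambda} b_\iota = \Lubbot_{\beta < \lambda} \Glbbot_{\beta \le \iota < \lambda} (a_\iota \glbbot a_{\iota+1}),
\]
so it suffices to prove that for every $\beta < \lambda$
\[
\Glbbot_{\beta \le \iota < \lambda} a_\iota \;=\; \Glbbot_{\beta \le \iota < \lambda} (a_\iota \glbbot a_{\iota+1}).
\]

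First I would show the inequality $\Glbbot_{\beta \le \iota < \lambda} (a_\iota \glbbot a_{\iota+1}) \lebot \Glbbot_{\beta \le \iota < \lambda} a_\iota$: each term on the left is bounded above by the corresponding $a_\iota$, so the glb on the left is a lower bound for the family $\{a_\iota \mid \beta \le \iota < \lambda\}$. For the reverse inequality, the crucial point is that $\lambda$ is a \emph{limit} ordinal: whenever $\beta \le \iota < \lambda$ we also have $\beta \le \iota+1 < \lambda$, so both $a_\iota$ and $a_{\iota+1}$ belong to the family $\{a_\gamma \mid \beta \le \gamma < \lambda\}$. Hence $\Glbbot_{\beta \le \gamma < \lambda} a_\gamma \lebot a_\iota \glbbot a_{\iota+1}$ for every such $\iota$, giving the other direction.

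Taking the $\Lubbot$ over $\beta < \lambda$ on both sides yields the claim. The only place where any care is required is the use of the fact that $\lambda$ is a limit ordinal so that $\iota+1 < \lambda$ whenever $\iota < \lambda$; without this, the set of indices $\{\iota+1\}$ could fail to stay inside the range and the second inequality would break (this is exactly why the proposition restricts to open sequences). No existence issue arises because every non-empty subset of a complete semilattice has a glb and every bounded subset has a lub, and the families involved are bounded above by $a_\beta$ and hence their lubs exist.
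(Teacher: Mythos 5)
Your proof is correct and follows essentially the same route as the paper's: both directions come down to comparing the tail greatest lower bounds, with the key observation that $\lambda$ being a limit ordinal guarantees $\iota+1<\lambda$ so that $\Glbbot_{\beta\le\gamma<\lambda}a_\gamma$ is also a lower bound of each $a_\iota\glbbot a_{\iota+1}$. You in fact prove the marginally stronger statement that the tail glbs coincide for every $\beta$, whereas the paper only establishes the two inequalities between the limits inferior, but the substance of the argument is identical.
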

\begin{proof}
  Let $\ol a = \liminf_{\iota \limto \lambda} a_\iota$ and $\oh a =
  \liminf_{\iota \limto \lambda} (a_\iota \glb a_{\iota+1})$. Since
  $a_\iota \glb a_{\iota+1} \le a_\iota$ for each $\iota < \lambda$, we
  have $\oh a \le \ol a$. On the other hand, consider the sets $\ol
  A_\alpha = \setcom{a_\iota}{\alpha \le \iota < \lambda}$ and $\oh
  A_\alpha = \setcom{a_\iota\glb a_{\iota+1}}{\alpha \le \iota <
    \lambda}$ for each $\alpha < \lambda$. Of course, we then have
  $\Glb\ol A_\alpha \le a_\iota$ for all $\alpha\le\iota<\lambda$, and
  thus also $\Glb\ol A_\alpha \le a_\iota \glb a_{\iota+1}$ for all
  $\alpha\le\iota<\lambda$. Hence, $\Glb \ol A_\alpha$ is a lower
  bound of $\oh A_\alpha$ which implies that $\Glb \ol A_\alpha \le
  \Glb \oh A_\alpha$. Consequently, $\ol a \le \oh a$ and, due to the
  antisymmetry of $\le$, we can conclude that $\ol a = \oh a$.
\end{proof}
With this in mind we can replace $\liminf_{\iota \limto \lambda}
t_\iota$ in the definition of weak $\prs$-convergence resp.\
$\prs$-continuity with $\liminf_{\iota \limto \lambda} t_\iota \glbbot
t_{\iota+1}$. From there it is easier to see the intention of moving
from $t_\iota \glbbot t_{\iota+1}$ to the context
$\substAtPos{t_\iota}{\pi_\iota}{\bot}$ in order to model strong
convergence:

What makes the notion of strong $\prs$-convergence (and
$\prs$-continuity) \emph{strong}, similar to the notion of strong
$\mrs$-convergence (resp.\ $\mrs$-continuity), is the choice of taking
the contexts $\substAtPos{t_\iota}{\pi_\iota}{\bot}$ for defining the
limit behaviour of reductions instead of the whole terms
$t_\iota$. The context $\substAtPos{t_\iota}{\pi_\iota}{\bot}$
provides a conservative underapproximation of the shared structure
$t_\iota \glbbot t_{\iota + 1}$ of two consecutive terms $t_\iota$ and
$t_{\iota+1}$ in a reduction step $\phi_\iota\fcolon t_\iota
\to[\pi_\iota] t_{\iota+1}$. More specifically, we have that
$\substAtPos{t_\iota}{\pi_\iota}{\bot} \lebot t_\iota \glbbot t_{\iota
  + 1}$. That is, as in the metric model of strong convergence, the
difference between two consecutive terms is overapproximated by using
the position of the reduction step as an indicator. Likewise, strong
$\prs$-convergence is simply weak $\prs$-convergence w.r.t.\ this
underapproximation of $t_\iota \glbbot t_{\iota+1}$
\cite{bahr10rta}. If this approximation is actually precise, i.e.\
coincides with the actual value, both notions of $\prs$-convergence
coincide.

\begin{rem}
  \label{rem:pcont}
  As for the metric model, also in the partial order model, continuity
  can be defined solely in terms of convergence \cite{bahr10rta}. More
  precisely, we have for each reduction $S = (t_\iota \to
  t_{\iota+1})_{\iota < \alpha}$ that $S$ is weakly $\prs$-continuous
  iff every (open) proper prefix of $\prefix{S}{\beta}$ weakly
  $\prs$-converges to $t_\beta$. Analogously, strong $\prs$-continuity
  can be characterised in terms of strong $\prs$-convergence.  An easy
  consequence of this is that $\prs$-converging reductions are closed
  under concatenation, i.e.\ $S\fcolon s \wato t$, $T\fcolon t \wato
  u$ implies $S\concat T \fcolon s \wato u$ and likewise for strong
  $\prs$-convergence.
\end{rem}

In order to understand the difference between weak and strong
$\prs$-convergence let us look at a simple example:
\begin{exa}
\label{ex:weakVsStrong}
  Consider the TRS with the single rule $f(x,y) \to f(y,x)$. This rule
  induces the following reduction:
  \[
  S\fcolon f(a,f(g(a),g(b))) \to f(a,f(g(b),g(a))) \to
  f(a,f(g(a),g(b))) \to\;\dots
  \]
  $S$ simply alternates between the terms $f(a,f(g(a),g(b)))$ and
  $f(a,f(g(b),g(a)))$ by swapping the arguments of the inner $f$
  occurrence. The reduction is depicted in
  Figure~\ref{fig:pconv}. The picture illustrates the parts of the
  terms that remain \emph{unchanged} and those that remain completely
  \emph{untouched} by the corresponding reduction step by using a
  lighter resp.\ a darker shade of grey. The unchanged part
  corresponds to the glb of the two terms of a reduction step, viz.\
  for the first step
  \[
  f(a,f(g(a),g(b))) \glbbot f(a,f(g(b),g(a))) =
  f(a,f(g(\bot),g(\bot)))
  \]
  By symmetry, the glb of the terms of the second step is the same
  one. It is depicted in Figure~\ref{fig:pconvWeak}. Let $(t_i)_{i <
    \omega}$ be the sequence of terms of the reduction $S$. By
  definition, $S$ weakly $\prs$-converges to $\liminf_{i\limto\omega}
  t_i$. According to Proposition~\ref{prop:liminfOpen}, this is equal
  to $\liminf_{i\limto\omega} (t_i \glbbot t_{i+1})$. Since $t_i \glbbot
  t_{i+1}$ is constantly $f(a,f(g(\bot),g(\bot)))$, the reduction
  sequence weakly $\prs$-converges to $f(a,f(g(\bot),g(\bot)))$.

  Similarly, the part of the term that remains untouched by the
  reduction step corresponds to the context. For the first step, this
  is $f(a,\bot)$. It is depicted in Figure~\ref{fig:pconvStrong}. By
  definition, $S$ strongly $\prs$-converges to
  $\liminf_{i\limto\omega} c_i$ for $(c_i)_{i<\omega}$ the sequence of
  contexts of $S$. As one can see in Figure~\ref{fig:pconv}, the
  context constantly remains $f(a,\bot)$. Hence, $S$ strongly
  $\prs$-converges to $f(a,\bot)$.  The example sequence is a
  particularly simple one as both the glbs $t_i \glbbot t_{i+1}$ and
  the contexts $c_i$ remain stable. In general, this is not necessary,
  of course.
\end{exa}
\begin{figure}
  \centering
  \begin{tikzpicture}[node distance=3cm]
  \node (a) {$f$}%
  child {%
    node (a1) {$a$}%
    child[no edge] { node (a11) {} child { node (a111) {}}}%
  } child {%
    node (a2) {$f$}%
    child {%
      node (a21) {$g$}%
      child {%
        node (a211) {$a$}%
      }%
    } child {%
      node (a22) {$g$}%
      child {%
        node (a221) {$b$}%
      }%
    }%
  };%

  \node (b) [right=of a] {$f$}%
  child {%
    node (b1) {$a$}%
    child[no edge] { node (b11) {} child { node (b111) {}}}%
  } child {%
    node (b2) {$f$}%
    child {%
      node (b21) {$g$}%
      child {%
        node (b211) {$b$}%
      }%
    } child {%
      node (b22) {$g$}%
      child {%
        node (b221) {$a$}%
      }%
    }%
  };%

  \node (c) [right=of b] {$f$}%
  child {%
    node (c1) {$a$}%
    child[no edge] { node (c11) {} child { node (c111) {}}}%
  } child {%
    node (c2) {$f$}%
    child {%
      node (c21) {$g$}%
      child {%
        node (c211) {$a$}%
      }%
    } child {%
      node (c22) {$g$}%
      child {%
        node (c221) {$b$}%
      }%
    }%
  };%
  \foreach \l in {a,b,c} {%
    \foreach \n in {, 1, 11, 111, 2, 21,21, 211, 22, 221} {%
      \node[shape=circle,minimum size=10mm] (\l\n) at (\l\n) {};%
    }%
  }%
    
  \begin{scope}[rounded corners=.3cm]
    \begin{pgfonlayer}{background} %
      \foreach \l in {a,b,c} {%
        \path[fill=termback]%
        (\l.north) -- (\l 1.170) -- (\l 211.south west) -- (\l
        221.south east) -- (\l 22.north east) -- cycle;%
      }%
      \foreach \l in {b,c} {%
        \path[term const]%
        (\l.north) -- (\l 1.170) -- (\l 1.250) -- (\l 21.south west)
        -- (\l 22.south east) -- (\l 22.north east) -- cycle;%
      }%
      \foreach \l in {b,c} {%
        \path[term stable]%
        (\l.north) -- (\l 1.170) -- (\l 1.250) -- (\l 1.330) -- (\l.-10) -- cycle;%
      }%
      \foreach \l in {a,b,c} {%
        \path[term fringe]%
        (\l.north) -- (\l 1.170) -- (\l 211.south west) -- (\l
        221.south east) -- (\l 22.north east) -- cycle;%
      }%
    \end{pgfonlayer}
  \end{scope}

  \node[node distance=2cm,right=of c2]
  (d2) {};
  \node[node distance=1cm,right=of d2]
  (e2) {};
    \draw[thick,loosely dotted]
  (d2) -- (e2);
  \foreach \l in {a,b,c} {%
    \node[red site] (\l 2) at (\l 2) {};%
  }%
  \draw[single step,every edge/.style={bend left=20,draw=black!80}]%
  (a2) edge (b2)%
  (b2) edge (c2)%
  (c2) edge (d2);%
\end{tikzpicture}
\caption{Reduction with stable context.}
\label{fig:pconv}
\end{figure}
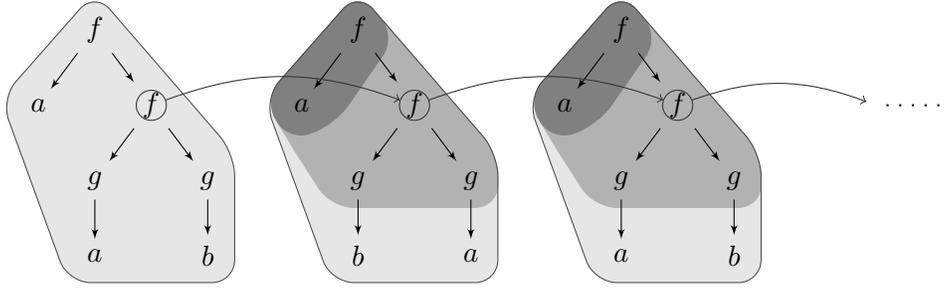

\begin{figure}
  \centering
  \subfloat[Limit w.r.t.\ weak $\prs$-convergence.]{
    \begin{minipage}{.4\linewidth}
      \centering
      \begin{tikzpicture}
        \node (a) {$f$}%
        child {%
          node (a1) {$a$}%
          child[no edge] { node (a11) {} child { node (a111) {}}}%
        } child {%
          node (a2) {$f$}%
          child {%
            node (a21) {$g$}%
            child {%
              node (a211) {$\bot$}%
            }%
          } child {%
            node (a22) {$g$}%
            child {%
              node (a221) {$\bot$}%
            }%
          }%
        };%

        \foreach \l in {a} {%
          \foreach \n in {, 1, 11, 111, 2, 21,21, 211, 22, 221} {%
            \node[shape=circle,minimum size=10mm] (\l\n) at (\l\n)
            {};%
          }%
        }%
        \begin{scope}[rounded corners=.3cm]
          \begin{pgfonlayer}{background} %
            \foreach \l in {a} {%
              \path[term const]%
              (\l.north) -- (\l 1.170) -- (\l 1.250) -- (\l 21.south
              west) -- (\l 22.south east) -- (\l 22.north east) --
              cycle;%
            }%
          \end{pgfonlayer}
        \end{scope}
      \end{tikzpicture}
    \end{minipage}
    \label{fig:pconvWeak}}
  \hspace{2cm}
  \subfloat[Limit w.r.t.\ strong $\prs$-convergence.]{
    \begin{minipage}{.4\linewidth}
      \centering
      \begin{tikzpicture}
        \node (a) {$f$}%
        child {%
          node (a1) {$a$}%
          child[no edge] { node (a11) {} child { node (a111) {}}}%
        } child {%
          node (a2) {$\bot$}%
          child[no edge] {%
            node (a21) {}%
            child {%
              node (a211) {}%
            }%
          } child[no edge] {%
            node (a22) {}%
            child {%
              node (a221) {}%
            }%
          }%
        };%
        \foreach \l in {a} {%
          \foreach \n in {, 1, 11, 111, 2, 21,21, 211, 22, 221} {%
            \node[shape=circle,minimum size=10mm] (\l\n) at (\l\n)
            {};%
          }%
        }%
        \begin{scope}[rounded corners=.3cm]
          \begin{pgfonlayer}{background} %
            \foreach \l in {a} {%
              \path[term stable]%
              (\l.north) -- (\l 1.170) -- (\l 1.250) -- (\l 1.330) --
              (\l.-10) -- cycle;%
            }%
          \end{pgfonlayer}
        \end{scope}
      \end{tikzpicture}
    \end{minipage}

    \label{fig:pconvStrong}}
  \caption{Limits of a $\prs$-converging reduction.}
  \label{fig:pconvLimits}
\end{figure}
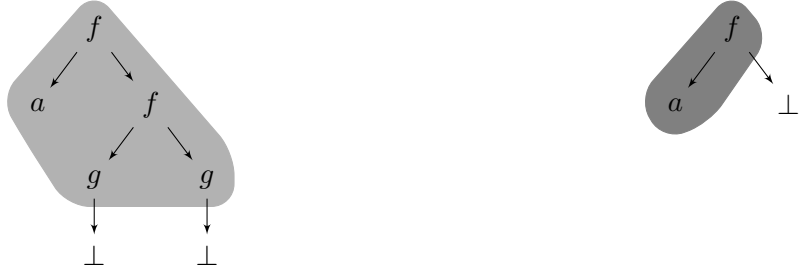

%%% Local Variables: 
%%% mode: latex
%%% TeX-master: "paper"
%%% End: 

One can clearly see from the definition that, as for their metric
counterparts, weak resp.\ strong $\prs$-convergence implies weak
resp.\ strong $\prs$-continuity. In contrast to the metric model,
however, also the converse implication holds! Since the partial order
$\lebot$ on partial terms forms a complete semilattice, the limit
inferior is defined for any non-empty sequence of partial
terms. Hence, any weakly resp.\ strongly $\prs$-continuous reduction
is also weakly resp.\ strongly $\prs$-convergent. This is a major
difference to $\mrs$-convergence/-continuity. Nevertheless,
$\prs$-convergence constitutes a meaningful notion of convergence: The
final term of a $\prs$-convergent reduction contains a $\bot$ subterm
at each position at which the reduction is ``locally diverging'' as we
have seen in Example~\ref{ex:prsConv} and
Example~\ref{ex:weakVsStrong}. In fact, as we will show in
Section~\ref{sec:comp-mrs-conv}, whenever there are no '$\bot$'s
involved, i.e.\ if there is no ``local divergence'',
$\mrs$-convergence and $\prs$-convergence coincide -- both in the weak
and the strong variant.

Recall that strong $\mrs$-continuity resp.\ $\mrs$-convergence implies
weak $\mrs$-continuity resp.\ $\mrs$-convergence. This is not the case
in the partial order setting. The reason for this is that strong
$\prs$-convergence resp.\ $\prs$-continuity is defined differently
compared to its weak variant. It uses the contexts instead of the
terms in the reduction, whereas in the metric setting the strong
notion of convergence is a mere restriction of the weak counterpart as
we have observed earlier.

\begin{exa}
  \label{ex:strWeakPconv}
  Consider the TRS consisting of the rules $\rho_1\fcolon h(x) \to h(g(x)),
  \rho_2\fcolon f(x) \to g(x)$ and the reductions
  \[
  S\fcolon f(h(a)) \to[\rho_1] f(h(g(a))) \to[\rho_1] f(h(g(g(a)))) \to[\rho_1] \dots
  \quad \text{ and } \quad T\fcolon f(\bot) \to[\rho_2] g(\bot)
  \]
  Then the reduction
  \[
  S\concat T\fcolon f(h(a)) \to[\rho_1] f(h(g(a))) \to[\rho_1]
  f(h(g(g(a)))) \to[\rho_1]
  \dots \; f(\bot) \to[\rho_2] g(\bot)
  \]
  is clearly both strongly $\prs$-continuous and -convergent. On the
  other hand it is neither weakly $\prs$-continuous nor -convergent
  for the simple fact that $S$ does not weakly $\prs$-converge to
  $f(\bot)$ but to $f(h(g^\omega))$.
\end{exa}

Nevertheless, by observing that $\liminf_{\iota \limto \alpha} c_\iota
\lebot \liminf_{\iota \limto \alpha} t_\iota$ since $c_\iota \lebot
t_\iota$ for each $\iota < \alpha$, we obtain the following weaker
relation between weak and strong $\prs$-convergence:
\begin{prop}
  Let $\calR$ be a left-linear TRS with $s \pato[\calR] t$. Then there
  is a term $t' \gebot t$ with $s \pwato[\calR] t'$.
\end{prop}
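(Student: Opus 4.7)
The plan is to construct, from the given strongly $\prs$-converging reduction $S\fcolon s \pato[\calR] t$ of length $\alpha$, a new reduction $S'$ of the same length that tracks $S$ step by step but uses terms that are pointwise larger in the $\lebot$-order. The central tool is a monotonicity lemma for left-linear rewriting: if $u \to[\pi,\rho] v$ and $u \lebot u'$, then $u' \to[\pi,\rho] v'$ for some $v' \gebot v$. This holds because any matching substitution $\sigma$ for the (possibly infinite) left-hand side $l$ of $\rho$ in $u|_\pi$ can be extended, using left-linearity, to a substitution $\sigma'$ with $\sigma'(x) = u'|_{\pi\cdot\pi_x} \gebot \sigma(x)$, where $\pi_x$ is the unique position of $x$ in $l$. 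The contractum then satisfies $r\sigma' \gebot r\sigma$, so $v' := u'[r\sigma']_\pi \gebot v$.

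Using this lemma, I construct $S' = (\phi'_\iota\fcolon t'_\iota \to[\pi_\iota,\rho_\iota] t'_{\iota+1})_{\iota < \alpha}$ by transfinite recursion on $\iota$, maintaining the invariant $t'_\iota \gebot t_\iota$. For $\iota = 0$, set $t'_0 := s$. For a successor step $\iota + 1 < \wsuc\alpha$, apply the monotonicity lemma to $t_\iota \to[\pi_\iota,\rho_\iota] t_{\iota+1}$ and the inductive assumption $t'_\iota \gebot t_\iota$ to obtain $t'_{\iota+1} \gebot t_{\iota+1}$. For a limit ordinal $\lambda \le \alpha$, set $t'_\lambda := \liminf_{\iota \limto \lambda} t'_\iota$; this ensures that $S'$ is weakly $\prs$-continuous at $\lambda$ by construction. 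The invariant at limits follows from the observation preceding the proposition together with strong $\prs$-continuity of $S$:
\[
t_\lambda \;=\; \liminf_{\iota\limto\lambda} c_\iota \;\lebot\; \liminf_{\iota\limto\lambda} t_\iota \;\lebot\; \liminf_{\iota\limto\lambda} t'_\iota \;=\; t'_\lambda,
\]
where the second inequality uses monotonicity of $\liminf$ in $\lebot$ applied to the inductive invariant. Finally, at a limit $\lambda < \alpha$, the next step $t_\lambda \to[\pi_\lambda,\rho_\lambda] t_{\lambda+1}$ from $S$ lifts to a step from $t'_\lambda$ by another application of the monotonicity lemma, so the construction continues.

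The resulting reduction $S'$ is weakly $\prs$-continuous throughout, and its final term $t'$ -- either $t'_\alpha$ when $S$ is closed or $\liminf_{\iota \limto \alpha} t'_\iota$ when $\alpha$ is a limit -- satisfies $t' \gebot t$: for closed $S$ this is the invariant at $\alpha$, and for open $S$ it is the same limit argument applied at $\lambda = \alpha$, using strong $\prs$-convergence $t = \liminf_{\iota \limto \alpha} c_\iota$. Hence $S'\fcolon s \pwato[\calR] t'$ with $t' \gebot t$. The main subtlety is verifying the monotonicity lemma in the presence of infinite left-hand sides (admissible here to accommodate Böhm extensions), but left-linearity makes this essentially routine since each variable of $l$ has a unique position at which it can be independently re-matched against the correspondingly larger subterm of $u'$.
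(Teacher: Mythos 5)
Your proof is correct and follows essentially the same route as the paper's: a transfinite construction of a pointwise $\gebot$-larger reduction, using left-linearity to lift each redex to the larger term at successor steps, taking $t'_\lambda = \liminf_{\iota\limto\lambda} t'_\iota$ at limits, and closing the invariant there via the chain $t_\lambda = \liminf c_\iota \lebot \liminf t_\iota \lebot \liminf t'_\iota = t'_\lambda$. The only cosmetic difference is that you isolate the single-step monotonicity as an explicit lemma, which the paper leaves as a one-line remark.
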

\proof
  Let $S = (\phi_\iota\fcolon t_\iota \to[\rho_\iota]
  t_{\iota+1})_{\iota<\alpha}$ be a reduction strongly
  $\prs$-converging to $t_\alpha$. By induction we construct for each
  prefix $\prefix{S}{\beta}$ of $S$ a reduction $S'_\beta =
  (\phi'_\iota\fcolon t'_\iota \to[\rho_\iota]
  t'_{\iota+1})_{\iota<\beta}$ weakly $\prs$-converging to a term
  $t'_\beta$ such that $t_\iota \lebot t'_\iota$ for each $\iota \le
  \alpha$. The proposition then follows from the case where $\beta =
  \alpha$.

  The case $\beta = 0$ is trivial. If $\beta = \gamma + 1$, then by
  induction hypothesis we have a reduction $S'_\gamma\fcolon t'_0
  \pwato[\calR] t'_\gamma$. Since $t_\gamma \lebot t'_\gamma$ and $
  t_\gamma$ is a $\rho_\gamma$-redex, also $t'_\gamma$ is a
  $\rho_\gamma$-redex due to the left-linearity of $\calR$. Hence,
  there is a reduction step $\phi'_\gamma\fcolon t'_\gamma \to
  t'_\beta$. One can easily see that then $t_\beta \lebot
  t'_\beta$. Hence, $S'_\beta = S'_\gamma \concat \seq{\phi'_\gamma}$
  satisfies desired conditions.

  If $\beta$ is a limit ordinal, we can apply the induction hypothesis
  to obtain for each $\gamma < \beta$ a reduction $S'_\gamma =
  (\phi'_\iota\fcolon t'_\iota \to[\rho_\iota]
  t'_{\iota+1})_{\iota<\gamma}$ that weakly $\prs$-converges to
  $t'_\gamma \gebot t_\gamma$. Hence, according to
  Remark~\ref{rem:pcont}, $S'_\beta = (\phi'_\iota\fcolon t'_\iota
  \to[\rho_\iota] t'_{\iota+1})_{\iota<\beta}$ is weakly
  $\prs$-continuous. Therefore, we obtain that $S'_\beta$ weakly
  $\prs$-converges to $t'_\beta = \liminf_{\iota \limto \beta}
  t'_\iota$. Moreover, since $c_\iota \lebot t_\iota$ and $t_\iota \lebot
  t'_\iota$ for each $\iota < \beta$, we can conclude that 
  \[
  t_\beta = \liminf_{\iota \limto \beta} c_\iota \lebot \liminf_{\iota
    \limto \beta} t_\iota \lebot \liminf_{\iota \limto \beta} t'_\iota =
  t'_\beta.
  \eqno{\qEd}\]

And indeed, returning to Example~\ref{ex:strWeakPconv}, we can see
that there is a reduction
\[
f(h(a)) \to[\rho_1] f(h(g(a))) \to[\rho_1] f(h(g(g(a)))) \to[\rho_1]
\dots \; f(h(g^\omega)) \to[\rho_2] g(h(g^\omega))
\]
that, starting from $f(h(a))$, weakly $\prs$-converges to
$g(h(g^\omega))$ which is strictly larger than $g(\bot)$.

A simple example shows that left-linearity is crucial for the above
proposition:
\begin{exa}
  Let $\calR$ be a TRS consisting of the rules
  \[
  \rho_1\fcolon a\to a,\quad \rho_2\fcolon b\to b, \quad \rho_3\fcolon
  f(x,x)\to c.
  \]
  We then get the strongly $\prs$-converging reduction
  \[
  f(a,b) \to[\rho_1] f(a,b) \to[\rho_2] f(a,b) \to[\rho_1] f(a,b) \to[\rho_2]\dots\;
  f(\bot,\bot) \to[\rho_3] c
  \]
  Yet, there is no reduction in $\calR$ that, starting from $f(a,b)$,
  weakly $\prs$-converges to $c$.
\end{exa}

\subsection{Strong \texorpdfstring{$\prs$}{p}-Convergence}
\label{sec:strong-prs-conv}

In this paper we are mainly focused on the strong notion of
convergence. To this end, the rest of this section will be concerned
exclusively with strong $\prs$-convergence. We will, however, revisit
weak $\prs$-convergence in Section~\ref{sec:comp-mrs-conv} when
comparing it to weak $\mrs$-convergence.

Note that in the partial order model we have to consider reductions
over the extended signature $\Sigma_\bot$, i.e.\ reductions containing
partial terms. Thus, from now on, we assume reductions in a TRS over
$\Sigma$ to be implicitly over $\Sigma_\bot$. When we want to make it
explicit that a reduction $S$ contains only total terms, we say that
$S$ is \emph{total}. When we say that a strongly $\prs$-convergent
reduction $S\fcolon s \pato t$ is total, we mean that both the
reduction $S$ and the final term $t$ are total.\footnote{Note that if
  $S$ is open, the final term $t$ is not explicitly contained in
  $S$. Hence, the totality of $S$ does not necessarily imply the
  totality of $t$.}

In order to understand the behaviour strong $\prs$-convergence, we
need to look at how the lub and glb of a set of terms looks like. The
following two lemmas provide some insight.
\begin{lem}[lub of terms]
  \label{lem:lubbot}
  For each $T \subseteq \ipterms$ and $t = \Lubbot T$, the following
  holds
  \begin{enumerate}[label=(\roman*)]
  \item $\pos{t} = \bigcup_{s\in T} \pos{s}$
    \label{item:lubbotI}
  \item $t(\pi) = f$ iff there is some $s \in T$ with $s(\pi) = f$ \quad for
    each $f \in \Sigma\cup\calV$, and position $\pi$.
  \label{item:lubbotII}
  \end{enumerate}
\end{lem}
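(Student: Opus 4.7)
The plan is to exhibit an explicit candidate term $t^*$ that manifestly satisfies conditions \ref{item:lubbotI} and \ref{item:lubbotII}, show that $t^*$ is the lub of $T$, and then conclude $t = t^*$ by the antisymmetry of $\lebot$.

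First, I would define $t^*$ by specifying its position set and labelling: set $\pos{t^*} = \bigcup_{s \in T} \pos{s}$, and for $\pi \in \pos{t^*}$ put $t^*(\pi) = f$ whenever some $s \in T$ has $s(\pi) = f \in \Sigma \cup \calV$, and $t^*(\pi) = \bot$ otherwise. Since the lub of $T$ is assumed to exist, $T$ has an upper bound $u$, and from $s \lebot u$ for every $s \in T$ together with the characterisation $s \lebot u$ iff $s(\pi) = u(\pi)$ on $\posNonBot{s}$, it follows that any two terms in $T$ agree at every common position labelled by a function symbol or a variable. This makes the labelling of $t^*$ unambiguous.

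Next, I would verify that $t^*$ is a well-formed partial term. Prefix-closedness of $\pos{t^*}$ follows immediately from the prefix-closedness of each $\pos{s}$. For the arity condition, suppose $t^*(\pi) = f \in \Sigma^{(k)}$; then some $s \in T$ witnesses $s(\pi) = f$, so $\pi \concat \seq{i} \in \pos{s} \subseteq \pos{t^*}$ for all $i < k$. Conversely, if $\pi \concat \seq{i} \in \pos{t^*}$, some $s' \in T$ has $\pi \concat \seq{i} \in \pos{s'}$, forcing $s'(\pi)$ to be a function symbol whose arity exceeds $i$; consistency then guarantees $t^*(\pi) = s'(\pi)$, so $i$ lies below the arity of $t^*(\pi)$. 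The case of variable labels is analogous (variables are leaves).

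Then I would show that $t^*$ is the lub of $T$. That $t^*$ is an upper bound is immediate: for every $s \in T$ and every $\pi \in \posNonBot{s}$, the construction yields $t^*(\pi) = s(\pi)$, so $s \lebot t^*$. For the least-upper-bound property, let $u$ be any other upper bound of $T$; for each $\pi \in \posNonBot{t^*}$ pick a witness $s \in T$ with $s(\pi) = t^*(\pi) \in \Sigma \cup \calV$. Since $s \lebot u$, we obtain $u(\pi) = s(\pi) = t^*(\pi)$, hence $t^* \lebot u$. Antisymmetry of $\lebot$ now gives $t = t^*$, and the desired properties \ref{item:lubbotI} and \ref{item:lubbotII} read off directly from the construction of $t^*$.

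The only non-routine step is the well-formedness of $t^*$, which rests entirely on the consistency of labellings across $T$ inherited from the existence of an upper bound; once consistency is in place, the remaining verifications are straightforward bookkeeping.
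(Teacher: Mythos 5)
Your proof is correct, but it takes a different route from the paper's. You construct the least upper bound explicitly: you define a candidate $t^*$ by $\pos{t^*} = \bigcup_{s\in T}\pos{s}$ with the induced labelling, check well-formedness via the consistency of labels across $T$ (which you correctly derive from the existence of an upper bound), verify that $t^*$ is the lub, and conclude $t = t^*$ by uniqueness. The paper instead takes the lub $t$ as given and argues abstractly: clause (i) is reduced to clause (ii); the ``if'' direction of (ii) is immediate from $s \lebot t$ for each $s \in T$; and the ``only if'' direction is by contradiction — if $t(\pi) = f$ but no $s \in T$ carries a non-$\bot$ symbol at $\pi$, then $t' = \substAtPos{t}{\pi}{\bot}$ is a strictly smaller upper bound of $T$, contradicting leastness. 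The paper's argument is shorter and leans entirely on the assumed existence of the lub (citing the completeness of $(\ipterms,\lebot)$ from the literature), whereas yours essentially re-proves that bounded subsets of $\ipterms$ have lubs and, as a bonus, yields an explicit description of $\Lubbot T$ from which both clauses can be read off directly. The only blemish, shared with the lemma as stated, is the degenerate case $T = \emptyset$: there $\Lubbot T = \bot$ has position set $\set{\emptyseq}$ while $\bigcup_{s\in T}\pos{s}$ is empty (and your $t^*$ would not be a well-formed term), so clause (i) implicitly presupposes $T \neq \emptyset$; this does not affect the substance of your argument.
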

\begin{proof}
  Clause~(\ref{item:lubbotI}) follows straightforwardly from
  clause~(\ref{item:lubbotII}). The ``if'' direction of
  (\ref{item:lubbotII}) follows from the fact that if $s \in T$, then
  $s\lebot t$ and, therefore, $s(\pi) = f$ implies $t(\pi) = f$. For
  the ``only if'' direction assume that no $s \in T$ satisfies $s(\pi)
  = f$. Since, $s\lebot t$ for each $s\in T$, we have $\pi \nin
  \posNonBot{s}$ for each $s \in T$. But then $t' =
  \substAtPos{t}{\pi}{\bot}$ is an upper bound of $T$ with $t' \lbot
  t$. This contradicts the assumption that $t$ is the least upper
  bound of $T$.
\end{proof}

\begin{lem}[glb of terms]
  \label{lem:glbbot}
  Let $T \subseteq \ipterms$ and $P$ a set of positions closed under
  prefixes such that all terms in $T$ coincide in all positions in
  $P$, i.e.\ $s(\pi) = t(\pi)$ for all $\pi \in P$ and $s,t \in
  T$. Then the glb $\Glbbot T$ also coincides with all terms in $T$ in
  all positions in $P$.
\end{lem}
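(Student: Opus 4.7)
The plan is to construct an explicit lower bound of $T$ whose shape is dictated by $P$, exploit the universal property of the glb, and then read off coincidence on $P$.

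First I will dispense with the trivial case $T = \emptyset$, for which the conclusion is vacuous, and assume $T \neq \emptyset$. Since $P$ is prefix-closed and non-empty implies $\emptyseq \in P$, and since $P \subseteq \pos{s}\cap\pos{t}$ for all $s,t\in T$, the assumption of coincidence means that for each $\pi \in P$ there is a well-defined symbol $f_\pi \in \Sigma_\bot \cup \calV$ such that $t(\pi) = f_\pi$ for every $t \in T$.

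The main step is to build a partial term $s' \in \ipterms$ from this data. I would set $s'(\pi) = f_\pi$ for all $\pi \in P$, and for every $\pi \in P$ with $f_\pi$ a function symbol of arity $k$ and every $i < k$ with $\pi\concat\seq{i} \notin P$, set $s'(\pi\concat\seq{i}) = \bot$. The key check is that $s'$ is then a well-formed partial term: its domain is closed under the arity requirement by construction (children of a function symbol at $\pi\in P$ either already lie in $P$ or are supplied with $\bot$, while $\bot$ itself is a constant needing no children), and $\pos{s'} \supseteq P$.

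Once $s'$ is in place, the characterisation $s \lebot t$ iff $s(\pi) = t(\pi)$ for all $\pi \in \posNonBot{s}$ immediately gives $s' \lebot t$ for every $t \in T$: the positions where $s'$ is not $\bot$ are contained in $P$, and on $P$ we have $s'(\pi) = f_\pi = t(\pi)$ by construction. Hence $s'$ is a lower bound of $T$, and by the universal property of the glb we obtain $s' \lebot \Glbbot T$. Since $\lebot$ entails an inclusion of position sets, this yields $P \subseteq \pos{s'} \subseteq \pos{\Glbbot T}$, so it makes sense to speak of the value of $\Glbbot T$ on $P$.

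Finally I would read off coincidence on $P$ by a small case distinction on $f_\pi$. If $f_\pi \neq \bot$, then $\pi \in \posNonBot{s'}$ and $s' \lebot \Glbbot T$ forces $\Glbbot T(\pi) = s'(\pi) = f_\pi = t(\pi)$ for all $t \in T$. If $f_\pi = \bot$, i.e.\ $t(\pi) = \bot$ for all $t \in T$, then from $\Glbbot T \lebot t$ together with $\pi \in \pos{\Glbbot T}$ I conclude $\Glbbot T(\pi) = \bot$: otherwise $\pi$ would lie in $\posNonBot{\Glbbot T}$ and force $\Glbbot T(\pi) = t(\pi) = \bot$, a contradiction. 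In either case $\Glbbot T(\pi) = t(\pi)$ for every $t \in T$, as required. The only genuinely delicate point in the argument is the verification that the constructed $s'$ is a valid partial term, i.e.\ that the arity discipline is satisfied at the ``boundary'' of $P$; once that is settled, the rest is pure bookkeeping against the glb property.
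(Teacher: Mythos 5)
Your proof is correct and follows essentially the same route as the paper: both construct an explicit lower bound that agrees with $T$ on $P$ and carries $\bot$ at the boundary positions, then transfer coincidence to $\Glbbot T$ via $s' \lebot \Glbbot T$. Your explicit case split on whether $f_\pi = \bot$ just spells out the step the paper compresses into ``this property carries over.''
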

\begin{proof}
  Construct a term $s$ such that it coincides with all terms in $T$ in
  all positions in $P$ and has $\bot$ at all other positions. That is,
  given an arbitrary term $t \in T$, we define $s$ as the unique term
  with $s(\pi) = t(\pi)$ for all $\pi \in P$, and $s(\pi\concat\seq i)
  = \bot$ for all $\pi \in P$ with $\pi\concat \seq i \in \pos t
  \setminus P$. Then $s$ is a lower bound of $T$. By construction, $s$
  coincides with all terms in $T$ in all positions in $P$. Since $s
  \lebot \Glbbot T$, this property carries over to $\Glbbot T$.
\end{proof}

Following the two lemmas above, we can observe that -- intuitively
speaking -- the limit inferior $\liminf_{\iota \limto \alpha} t_\iota$
of a sequence of terms is the term that contains those parts that
become \emph{eventually stable} in the sequence. Remaining holes in
the term structure are filled with '$\bot$'s. Let us see what this
means for strongly $\prs$-converging reductions:
\begin{lem}[non-$\bot$ symbols in open reductions]
  \label{lem:nonBotLimRed}
  % non-$\bot$ symbols in open reductions %
  Let $\calR = (\Sigma,R)$ be a TRS and $S\fcolon s
  \pto{\lambda}[\calR] t$ an open reduction with $S = (t_\iota
  \to[\pi_\iota,c_\iota] t_{\iota + 1})_{\iota<\lambda}$. Then the
  following statements are equivalent for all positions $\pi$:
  \begin{enumerate}[label=(\alph*)]
  \item $t(\pi)\neq \bot$.
    \label{item:nonBotLimRed1}
  \item there is some $\alpha < \lambda$ such that $c_\iota(\pi) =
    t(\pi) \neq \bot$ for all $\alpha \le \iota < \lambda$.
    \label{item:nonBotLimRed2}
  \item there is some $\alpha < \lambda$ such that $t_\alpha(\pi) =
    t(\pi) \neq \bot$ and $\pi_\iota \not\le \pi$ for all $\alpha \le
    \iota < \lambda$.
    \label{item:nonBotLimRed3}
  \item there is some $\alpha < \lambda$ such that $t_\alpha(\pi)\neq
    \bot$ and $\pi_\iota \not\le \pi$ for all $\alpha \le \iota <
    \lambda$.
    \label{item:nonBotLimRed4}
  \end{enumerate}
\end{lem}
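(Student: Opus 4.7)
The plan is to establish the equivalence by closing the cycle (a) $\Rightarrow$ (b) $\Rightarrow$ (c) $\Rightarrow$ (d) $\Rightarrow$ (a). The first three implications essentially unfold definitions and invoke Lemmas~\ref{lem:lubbot} and~\ref{lem:glbbot}, while the last requires a transfinite induction that exploits strong $\prs$-continuity at limit ordinals.

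For (a) $\Rightarrow$ (b), I will unfold the liminf as $t = \Lubbot_{\beta < \lambda}\Glbbot_{\beta \le \iota < \lambda} c_\iota$. Since $t(\pi)\neq\bot$, Lemma~\ref{lem:lubbot} yields some $\alpha<\lambda$ such that the glb $g = \Glbbot_{\alpha\le\iota<\lambda} c_\iota$ satisfies $g(\pi) = t(\pi)\neq\bot$. Because $g\lebot c_\iota$ for every $\alpha\le\iota<\lambda$, the characterisation of $\lebot$ in terms of $\posNonBot{\cdot}$ forces $c_\iota(\pi) = t(\pi)$ for each such $\iota$. For (b) $\Rightarrow$ (c), I invoke the definition $c_\iota = \substAtPos{t_\iota}{\pi_\iota}{\bot}$: since $c_\iota(\pi)\neq\bot$, the contraction position $\pi_\iota$ cannot be a prefix of $\pi$, and therefore $t_\iota(\pi) = c_\iota(\pi) = t(\pi)$. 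The implication (c) $\Rightarrow$ (d) is immediate weakening.

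For (d) $\Rightarrow$ (a), set $v = t_\alpha(\pi)\neq\bot$. I will prove by \emph{transfinite induction} on $\beta$ with $\alpha\le\beta\le\lambda$ that $t_\beta(\pi) = v$, where at $\beta=\lambda$ this statement is understood as $t(\pi)=v$. The base case $\beta=\alpha$ holds by definition of $v$. In the successor case $\beta = \gamma+1$, since $\pi_\gamma\not\le\pi$ the reduction step leaves position $\pi$ untouched, so $t_\beta(\pi) = t_\gamma(\pi) = v$ by induction hypothesis. For a limit ordinal $\beta$, strong $\prs$-continuity (and the convergence hypothesis when $\beta=\lambda$) gives $t_\beta = \liminf_{\iota\limto\beta} c_\iota$. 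The induction hypothesis together with the successor reasoning supplies $c_\iota(\pi) = t_\iota(\pi) = v$ for all $\alpha\le\iota<\beta$. Applying Lemma~\ref{lem:glbbot} with $P$ the set of prefixes of $\pi$ shows that the tail glb $\Glbbot_{\alpha\le\iota<\beta} c_\iota$ takes value $v$ at $\pi$, and Lemma~\ref{lem:lubbot} then forces $t_\beta(\pi) = v$.

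The main obstacle is exactly this limit case: one must transfer a ``cofinally stable'' property of the sequence of contexts to its limit inferior. The combined use of Lemma~\ref{lem:glbbot} (so that a tail glb inherits the stable value at $\pi$) and Lemma~\ref{lem:lubbot} (so that this value is preserved by the enclosing lub) is precisely what makes the transfer go through; everything else in the proof is bookkeeping.
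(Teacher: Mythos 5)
Your proof is correct and follows essentially the same route as the paper's: the identical cycle of implications, with (a)$\Rightarrow$(b) via Lemma~\ref{lem:lubbot} applied to the unfolded limit inferior, (b)$\Rightarrow$(c) by unfolding $c_\iota = \substAtPos{t_\iota}{\pi_\iota}{\bot}$, and (d)$\Rightarrow$(a) via Lemma~\ref{lem:glbbot} on the prefix-closed set of prefixes of $\pi$ followed by Lemma~\ref{lem:lubbot}. The only difference is organisational: you wrap the last implication in a transfinite induction establishing $t_\beta(\pi)=v$ at every intermediate $\beta$, whereas the paper applies the glb/lub argument once, directly at $\lambda$ --- the limit case of your induction is exactly the paper's one-shot argument.
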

\begin{proof}
  \def\claima{(\ref{item:nonBotLimRed1})}
  \def\claimb{(\ref{item:nonBotLimRed2})}
  \def\claimc{(\ref{item:nonBotLimRed3})}
  \def\claimd{(\ref{item:nonBotLimRed4})}
  At first consider the implication from \claima{} to \claimb{}. To
  this end, let $t(\pi)\neq \bot$ and $s_\gamma = \Glbbot_{\gamma \le
    \iota < \lambda} c_\iota$ for each $\gamma < \lambda$. Note that
  then $t = \Lubbot_{\gamma < \lambda} s_\gamma$. Applying
  Lemma~\ref{lem:lubbot} yields that there is some $\alpha < \lambda$
  such that $s_{\alpha}(\pi) = t(\pi)$. Moreover, for each $\alpha \le
  \iota < \lambda$, we have $s_\alpha \lebot c_\iota$ and, therefore,
  $s_\alpha(\pi) = c_\iota(\pi)$. Consequently, we obtain
  $c_\iota(\pi) = t(\pi)$ for all $\alpha \le \iota < \lambda$.

  Next consider the implication from \claimb{} to \claimc{}. Let
  $\alpha < \lambda$ be such that $c_\iota(\pi) = t(\pi) \neq \bot$
  for all $\alpha \le \iota < \lambda$. Recall that $c_\iota =
  \substAtPos{t_\iota}{\pi_\iota}{\bot}$ for all $\iota <
  \lambda$. Hence, the fact that $c_\iota(\pi)\neq\bot$ for all
  $\alpha \le \iota < \lambda$ implies that $t_\alpha(\pi) =
  c_\alpha(\pi)$ and that $\pi_\iota \not\le \pi$ for all $\alpha \le
  \iota < \lambda$. Since $c_\alpha(\pi) = t(\pi) \neq \bot$, we also
  have $t_\alpha(\pi) = t(\pi) \neq \bot$.

  The implication from \claimc{} to \claimd{} is trivial.

  Finally, consider the implication from \claimd{} to \claima{}. For
  this purpose, let $\alpha < \lambda$ be such that (1) $\pi \in
  \posNonBot{t_\alpha}$ and (2) $\pi_\iota \not\le \pi$ for all
  $\alpha \le \iota < \lambda$. Consider the set $P$ consisting of all
  positions in $t_\alpha$ that are prefixes of $\pi$. $P$ is obviously
  closed under prefixes and, because of (2), all terms in the set $T =
  \setcom{c_\iota}{\alpha\le \iota <\lambda}$ coincide in all
  positions in $P$. According to Lemma~\ref{lem:glbbot}, also
  $s_\alpha = \Glbbot T$ coincides with all terms in $T$ in all
  positions in $P$. Since $\pi \in P$ and $c_\alpha \in T$, we thereby
  obtain that $c_\alpha(\pi) = s_\alpha(\pi)$. As we also have
  $t_\alpha(\pi) = c_\alpha(\pi)$, due to (2), and $\pi \in
  \posNonBot{t_\alpha}$ we can infer that $\pi \in
  \posNonBot{s_\alpha}$. Since $s_\alpha \lebot t$, we can then
  conclude $\pi \in \posNonBot{t}$.
\end{proof}

The above lemma is central for dealing with strongly $\prs$-convergent
reductions. It also reveals how the final term of a strongly
$\prs$-convergent reduction is constructed. According to the equality
of (\ref{item:nonBotLimRed1}) and (\ref{item:nonBotLimRed3}), the
final term has the non-$\bot$ symbol $f$ at some position $\pi$ iff
some term $t_\alpha$ in the reduction also had this symbol $f$ at this
position $\pi$ and no reduction after that term occurred at $\pi$ or
above. In this way, the final outcome of a strongly $\prs$-convergent
reduction consists of precisely those parts of the intermediate terms
which become \emph{eventually persistent} during the reduction, i.e.\
are from some point on not subjected to contraction any more.

Now we turn to a characterisation of the parts that are not included
in the final outcome of a strongly $\prs$-convergent reduction, i.e.\
those that do not become persistent. These parts are either omitted or
filled by the placeholder $\bot$. We will call these positions
\emph{volatile}:
\begin{defi}[volatility]
  \label{def:volatile}
  % volatile position %
  Let $\calR$ be a TRS and $S = (t_\iota \to[\pi_\iota] t_{\iota +
    1})_{\iota < \lambda}$ an open $\prs$-converging reduction in
  $\calR$. A position $\pi$ is said to be \emph{volatile} in $S$ if,
  for each ordinal $\beta < \lambda$, there is some $\beta \le \gamma
  < \lambda$ such that $\pi_\gamma = \pi$. If $\pi$ is volatile in $S$
  and no proper prefix of $\pi$ is volatile in $S$, then $\pi$ is
  called \emph{outermost-volatile}.
\end{defi}
In Example~\ref{ex:prsConv} the position $\seq{0}$ is
outermost-volatile in the reduction $S$.

\begin{exa}[volatile positions]
  \label{ex:volPos}
  Consider the TRS $\calR$ consisting of the rules
  \[
  \rho_1\fcolon h(x) \to g(x),\qquad \rho_2\fcolon s(g(x)) \to
  s(h(s(x)))
  \]
  $\calR$ admits the following reduction $S$ of length $\omega$:
  \begin{align*}
    S\fcolon f(s(0),s(h(0))) &\to[\rho_1] f(s(0),s(g(0))) \to[\rho_2]
    f(s(0),s(h(s(0))))
    \\&\to[\rho_1] f(s(0),s(g(s(0)))) \to[\rho_2] f(s(0),s(h(s(s(0)))))
  \end{align*}
  The reduction $S$ $\prs$-converges to $f(s(0),\bot)$, i.e.\ we have
  $S\fcolon f(s(0),s(h(0)))
  \pto\omega[\calR]f(s(0),\bot)$. Figure~\ref{fig:volPos} illustrates
  the reduction indicating the position of each reduction step by two
  circles and a reduction arrow in between. One can clearly see that
  both $\pi_1 = \seq{1}$ and $\pi_2 = \seq{1,0}$ are volatile in
  $S$. Again and again reductions take place at $\pi_1$ and
  $\pi_2$. Since these are the only volatile positions and $\pi_1$ is
  a prefix of $\pi_2$, we have that $\pi_1$ is an outermost-volatile
  position in $S$.
\end{exa}
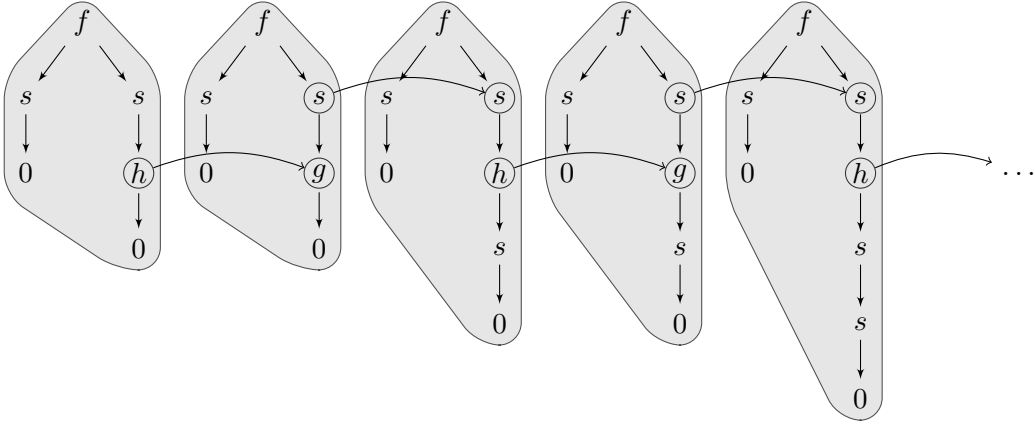
\begin{figure}
  \centering
  \begin{tikzpicture}[node distance=1.9cm]
    \node[alias=a_1] (a) {$f$}%
    child {%
      node[alias=a_2] (a1) {$s$}%
      child {%
        node[alias=a_3] (a11) {$0$}%
      }%
    } child {%
      node[alias=a_5] (a2) {$s$}%
      child {%
        node (a21) {$h$}%
        child {%
          node[alias=a_4] (a211) {$0$}%
        }%
      }%
    };%
    \node [right=of a, alias=b_1] (b) {$f$}%
    child {%
      node[alias=b_2] (b1) {$s$}%
      child {%
        node[alias=b_3] (b11) {$0$}%
      }%
    } child {%
      node[alias=b_5] (b2) {$s$}%
      child {%
        node (b21) {$g$}%
        child {%
          node[alias=b_4] (b211) {$0$}%
        }%
      }%
    };%
    \node [right=of b,alias=c_1] (c) {$f$}%
    child {%
      node[alias=c_2] (c1) {$s$}%
      child {%
        node[alias=c_3] (c11) {$0$}%
      }%
    } child {%
      node[alias=c_5] (c2) {$s$}%
      child {%
        node (c21) {$h$}%
        child {%
          node (c211) {$s$}%
          child {%
            node[alias=c_4] (c2111) {$0$}%
          }%
        }%
      }%
    };%
    \node [right=of c,alias=d_1] (d) {$f$}%
    child {%
      node[alias=d_2] (d1) {$s$}%
      child {%
        node[alias=d_3] (d11) {$0$}%
      }%
    } child {%
      node[alias=d_5] (d2) {$s$}%
      child {%
        node (d21) {$g$}%
        child {%
          node (d211) {$s$}%
          child {%
            node[alias=d_4] (d2111) {$0$}%
          }%
        }%
      }%
    };%
    \node [right=of d,alias=e_1] (e) {$f$}%
    child {%
      node[alias=e_2] (e1) {$s$}%
      child {%
        node[alias=e_3] (e11) {$0$}%
      }%
    } child {%
      node[alias=e_5] (e2) {$s$}%
      child {%
        node (e21) {$h$}%
        child {%
          node (e211) {$s$}%
          child {%
            node (e2111) {$s$}%
            child {%
              node[alias=e_4] (e21111) {$0$}%
            }%
          }%
        }%
      }%
    };%
    \node [node distance=1.5cm,right=of e21] (f21) {$\dots$};%

  \begin{scope}[every node/.style={red site}]
    \foreach \n in {a21,b21,b2,c2,c21,d21,d2,e2,e21} {%
      \node (\n) at (\n) {};%
    }%

    \draw[single step, every edge/.style={draw,bend left=20}]%
    (a21) edge (b21)%
    (b2) edge (c2)%
    (c21) edge (d21)%
    (d2) edge (e2)%
    (e21) edge (f21);%
  \end{scope}
  
  \foreach \l in {a_,b_,c_,d_,e_} {%
    \foreach \n in {1,2,3,4,5} {%
      \node[shape=circle,minimum size=8mm] (\l\n) at (\l\n) {};%
    }%
  }%
    
  \begin{scope}[rounded corners=.3cm]
    \begin{pgfonlayer}{background} %
      \foreach \l in {a_,b_,c_,d_,e_} {%
        \path[term back]%
        (\l 1.north) -- (\l 2.north west) -- (\l 3.south west) -- (\l
        4.south west) -- (\l 4.south east) -- (\l 5.north east) --
        cycle;%
      }%
    \end{pgfonlayer}
  \end{scope}

\end{tikzpicture}

\caption{Reduction with two nested volatile positions.}
\label{fig:volPos}
\end{figure}
%%% Local Variables: 
%%% mode: latex
%%% TeX-master: "paper"
%%% End: 
%
As we shall see later in Section~\ref{sec:relation-bohm-trees},
volatility is closely related to root-active terms: if a reduction has
a volatile position $\pi$, then we find a term in the reduction with a
root-active subterm at $\pi$. Conversely, from each root-active term
starts a reduction with volatile position $\emptyseq$ (cf.\
Proposition~\ref{prop:rootAct}). This connection between volatility
and root-activeness is the cornerstone of the correspondence between
$\prs$-convergence and B\"ohm-convergence that we prove in
Section~\ref{sec:relation-bohm-trees}.

The following lemma shows that $\bot$ symbols are produced precisely
at outermost-volatile positions in open reduction.
\begin{lem}[$\bot$ subterms in open reductions]
  \label{lem:botLimRed}
  % $\bot$-symbols in open reductions %
  Let $S = (t_\iota \to[\pi_\iota] t_{\iota + 1})_{\iota < \alpha}$
  an open reduction $\prs$-converging to $t_\alpha$ in some
  TRS. Then, for every position $\pi$, we have the following:
  \begin{enumerate}[label=(\roman*)]
  \item If $\pi$ is volatile in $S$, then $\pi \nin
    \posNonBot{t_\alpha}$.
    \label{item:botLimRed1}
  \item $t_\alpha(\pi) = \bot$ iff
    \begin{enumerate}[label=(\alph*)]
    \item $\pi$ is outermost-volatile in $S$, or
      \label{item:botLimRed2a}
    \item there is some $\beta < \alpha$ such that $t_\beta(\pi) = \bot$
      and $\pi_\iota \not\le \pi$ for all $\beta \le \iota < \alpha$.
      \label{item:botLimRed2b}
    \end{enumerate}
    \label{item:botLimRed2}
  \item Let $t_\iota$ be total for all $\iota < \alpha$. Then
    $t_\alpha(\pi) = \bot$ iff $\pi$ is outermost-volatile in $S$.
    \label{item:botLimRed3}
  \end{enumerate}
\end{lem}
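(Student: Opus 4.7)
The plan is to prove the three items in order, relying on Lemma \ref{lem:nonBotLimRed} as the principal tool and, throughout, on the basic fact that a reduction step whose contraction position is neither equal to nor a prefix of a fixed position $\pi$ preserves both the existence of $\pi$ in $\pos{t_\iota}$ and the symbol occurring there. For part (\ref{item:botLimRed1}) I would argue by contraposition: if $\pi \in \posNonBot{t_\alpha}$, the equivalence (\ref{item:nonBotLimRed1}) $\Leftrightarrow$ (\ref{item:nonBotLimRed3}) of Lemma \ref{lem:nonBotLimRed} yields some $\alpha' < \alpha$ with $\pi_\iota \not\le \pi$ for all $\alpha' \le \iota < \alpha$, which directly contradicts volatility of $\pi$.

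For the ``if'' direction of (\ref{item:botLimRed2}) I would treat the two cases separately. In case (\ref{item:botLimRed2a}), part (\ref{item:botLimRed1}) already gives $\pi \nin \posNonBot{t_\alpha}$, so it remains to show $\pi \in \pos{t_\alpha}$. Since $\pi$ has only finitely many proper prefixes and each is non-volatile, I pick $\beta < \alpha$ beyond all their thresholds; no contraction occurs at any proper prefix of $\pi$ after $\beta$, so the symbols at those positions are stable. Volatility of $\pi$ forces cofinally many $t_\gamma$ to contain $\pi$ as a position, so these stable symbols must be function symbols of sufficient arity. Applying (\ref{item:nonBotLimRed3}) $\Rightarrow$ (\ref{item:nonBotLimRed1}) of Lemma \ref{lem:nonBotLimRed} to each proper prefix of $\pi$ transports these symbols into $t_\alpha$, yielding $\pi \in \pos{t_\alpha}$. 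In case (\ref{item:botLimRed2b}), I would use the given $\beta$: as $\bot$ is a constant, no position strictly extends $\pi$ in any $t_\iota$ with $\iota \ge \beta$, and since $\pi_\iota \not\le \pi$ we get $c_\iota(\pi) = t_\iota(\pi) = \bot$ for every such $\iota$. The equivalence (\ref{item:nonBotLimRed1}) $\Leftrightarrow$ (\ref{item:nonBotLimRed2}) then rules out a non-$\bot$ symbol at $\pi$ in $t_\alpha$, while applying the same prefix-transport argument to the parent of $\pi$ ensures $\pi \in \pos{t_\alpha}$.

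For the ``only if'' direction of (\ref{item:botLimRed2}), suppose $t_\alpha(\pi) = \bot$. The key preliminary observation is that no proper prefix of $\pi$ can be volatile: if some outermost-volatile ancestor $\pi'$ of $\pi$ existed, then applying case (\ref{item:botLimRed2a}) (already proved) to $\pi'$ would give $t_\alpha(\pi') = \bot$, forcing $\pi \nin \pos{t_\alpha}$. Consequently, if $\pi$ itself is volatile it is outermost-volatile, giving case (\ref{item:botLimRed2a}); otherwise no prefix of $\pi$ is volatile at all, so a single $\beta$ witnesses $\pi_\iota \not\le \pi$ for all $\iota \ge \beta$, the symbol at $\pi$ is preserved into every $c_\iota$ beyond $\beta$, and the equivalence (\ref{item:nonBotLimRed1}) $\Leftrightarrow$ (\ref{item:nonBotLimRed2}) of Lemma \ref{lem:nonBotLimRed} forces this symbol to be $\bot$, giving case (\ref{item:botLimRed2b}). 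Part (\ref{item:botLimRed3}) then follows immediately: totality of every $t_\iota$ rules out case (\ref{item:botLimRed2b}) of part (\ref{item:botLimRed2}), leaving only (\ref{item:botLimRed2a}). The main technical nuisance will be the bookkeeping of how positions and their labels evolve under reduction steps, in particular carefully distinguishing whether a step's contraction position lies above, below, equal to, or parallel with $\pi$; beyond that, everything reduces to a direct application of Lemma \ref{lem:nonBotLimRed}.
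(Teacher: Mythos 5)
Your proposal is correct and follows essentially the same route as the paper's proof: part (i) via the equivalence of (\ref{item:nonBotLimRed1}) and (\ref{item:nonBotLimRed3}) in Lemma~\ref{lem:nonBotLimRed}, part (ii) by the same case analysis on which prefixes of $\pi$ are volatile (including the arity argument at the parent position to get $\pi \in \pos{t_\alpha}$), and part (iii) as the immediate specialisation. The only cosmetic divergence is in the (\ref{item:botLimRed2b})-implies-$\bot$ subcase, where the paper re-derives the conclusion from Lemma~\ref{lem:glbbot} and Lemma~\ref{lem:lubbot} while you reuse the (\ref{item:nonBotLimRed1})$\Leftrightarrow$(\ref{item:nonBotLimRed2}) equivalence of Lemma~\ref{lem:nonBotLimRed} together with the parent-position argument; both are sound.
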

\begin{proof}
%  \todo{Can this proof be shortened?} 
  \def\itemI{(\ref{item:botLimRed1})}
  \def\itemII{(\ref{item:botLimRed2})}
  \def\itemIIa{(\ref{item:botLimRed2a})}
  \def\itemIIb{(\ref{item:botLimRed2b})}
  \def\itemIII{(\ref{item:botLimRed3})}
  \itemI{} This follows from Lemma~\ref{lem:nonBotLimRed}, in particular
  the equivalence of (\ref{item:nonBotLimRed1}) and
  (\ref{item:nonBotLimRed3}).

  \itemII{} At first consider the ``only if'' direction. To this end,
  suppose that $t_\alpha(\pi) = \bot$. In order to show that then
  \itemIIa{} or \itemIIb{} holds, we will prove that \itemIIb{} must
  hold true whenever \itemIIa{} does not hold. For this purpose, we
  assume that $\pi$ is not outermost-volatile in $S$. Note that no
  proper prefix $\pi'$ of $\pi$ can be volatile in $S$ as this would
  imply, according to clause \itemI{}, that $\pi' \nin
  \posNonBot{t_\alpha}$ and, therefore, $\pi \nin
  \pos{t_\alpha}$. Hence, $\pi$ is also not volatile in $S$. In sum,
  no prefix of $\pi$ is volatile in $S$. Consequently, there is an
  upper bound $\beta < \alpha$ on the indices of reduction steps
  taking place at $\pi$ or above. But then $t_\beta(\pi) = \bot$ since
  otherwise Lemma~\ref{lem:nonBotLimRed} would imply that
  $t_\alpha(\pi) \neq \bot$. This shows that \itemIIb{} holds.
  
  For the converse direction, we will show that both \itemIIa{} and
  \itemIIb{} independently imply that $t_\alpha(\pi) = \bot$:

  \itemIIa{} Let $\pi$ be outermost-volatile in $S$. By clause
  \itemI{}, this implies $\pi\nin\posNonBot{t_\alpha}$. Hence, it
  remains to be shown that $\pi \in \pos{t_\alpha}$. If $\pi =
  \emptyseq$, then this is trivial. Otherwise, $\pi$ is of the form
  $\pi'\concat i$. Since all proper prefixes of $\pi$ are not
  volatile, there is some $\beta < \alpha$ such that $\pi_\beta = \pi$
  and $\pi_\iota \not\le \pi'$ for all $\beta \le \iota <
  \alpha$. This implies that $\pi \in \pos{t_\beta}$. Hence,
  $t_\beta(\pi') = f$ is a symbol having an arity of at least
  $i+1$. Consequently, according to Lemma~\ref{lem:nonBotLimRed}, also
  $t_\alpha(\pi') = f$. Since $f$'s arity is at least $i+1$, also $\pi
  = \pi'\concat i \in \pos{t_\alpha}$.

  \itemIIb{} Let $\beta < \alpha$ such that $t_\beta(\pi) = \bot$ and
  $\pi_\iota \not\le \pi$ for all $\beta \le \iota <
  \alpha$. According to Proposition~\ref{prop:liminfSuffix}, we have
  that $t_\alpha = \Lubbot_{\beta\le \gamma < \alpha} \Glbbot_{\gamma
    \le \iota < \alpha} c_\iota$. Define $s_\gamma = \Glbbot_{\gamma
    \le \iota < \alpha} c_\iota$ for each $\gamma < \alpha$. Since
  from $\beta$ onwards no reduction takes place at $\pi$ or above, it
  holds that all $c_\iota$, for $\beta \le \iota < \alpha$, coincide
  in all prefixes of $\pi$. By Lemma~\ref{lem:glbbot}, this also holds
  for all $s_\iota$ and $c_\iota$ with $\beta \le \iota <
  \alpha$. Since $c_\beta(\pi) = t_\beta(\pi) = \bot$, this means that
  $s_\iota(\pi) = \bot$ for all $\beta \le \iota < \alpha$. Recall
  that $t_\alpha = \Lubbot_{\beta\le \gamma < \alpha}
  s_\gamma$. Hence, according to Corollary~\ref{lem:lubbot},
  we can conclude that $t_\alpha(\pi) = \bot$.

  \itemIII{} is a special case of \itemII{}: If each $t_\iota$, $\iota
  < \alpha$, is total, then \itemIIb{} cannot be true.
\end{proof}

Clause~(\ref{item:botLimRed2}) shows that a $\bot$ subterm in the
final term can only have its origin either in a preceding term which
already contains this $\bot$ which then becomes stable, or in an
outermost-volatile position. That is, it is exactly the
outermost-volatile positions that generate '$\bot$'s.

We can apply this lemma to Example~\ref{ex:volPos}: As we have seen,
the position $\pi_1 = \seq{1}$ is outermost-volatile in the reduction
$S$ mentioned in the example. Hence, $S$ strongly $\prs$-converges to
a term that has, according to Lemma~\ref{lem:botLimRed}, the symbol
$\bot$ at position $\pi_1$. That is, $S$ strongly $\prs$-converges to
$f(s(0),\bot)$.

This characterisation of the final outcome of a $\prs$-converging
reduction clearly shows that the partial order model captures the
intuition of strong convergence in transfinite reductions even though
it allows that every continuous reduction is also convergent: The
final outcome only represents the parts of the reduction that
\emph{are} converging. Locally diverging parts are cut off and
replaced by $\bot$.

In fact, the absence of such local divergence, or volatility, as we
call it here, is equivalent to the absence of $\bot$:
\begin{lem}[total reductions]
  \label{lem:totalRed}
  % total reductions %
  Let $\calR$ be a TRS, $s$ a total term in $\calR$, and $S\fcolon s
  \pato[\calR] t$. $S\fcolon s \pato[\calR] t$ is total iff no prefix of $S$
  has a volatile position.
\end{lem}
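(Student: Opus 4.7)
The plan is to prove the two directions separately. For the forward direction ($S$ total implies no volatile positions), I would argue by contradiction using clause~(ii)(a) of Lemma~\ref{lem:botLimRed}. For the backward direction, I would use transfinite induction on the position along $S$, with clause~(iii) of that same lemma dispatching the limit case; the successor case will rest on the fact that the rules of $\calR$ lie over the $\bot$-free signature $\Sigma$.

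For the forward direction, suppose $S$ is total but some open prefix $\prefix{S}{\beta}$ has a volatile position $\pi$. The set of volatile prefixes of $\pi$ is non-empty and well-ordered by the prefix relation, so it has a shortest element $\pi'$, which is outermost-volatile in $\prefix{S}{\beta}$ by definition. Because $S$ is $\prs$-continuous, Remark~\ref{rem:pcont} ensures that $\prefix{S}{\beta}$ strongly $\prs$-converges to $t_\beta$ (or to $t$ itself if $\beta = \alpha$ and $S$ is open). Applying clause~(ii)(a) of Lemma~\ref{lem:botLimRed} to $\prefix{S}{\beta}$ then yields $t_\beta(\pi') = \bot$, contradicting totality of the corresponding intermediate or final term of $S$.

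For the backward direction, assume that no prefix of $S$ has a volatile position, and prove by transfinite induction that $t_\iota$ is total for every $\iota$ in the underlying sequence. The base case $t_0 = s$ is total by hypothesis. In the successor step, if $t_\gamma$ is total and $t_\gamma \to[\pi_\gamma,\rho] t_{\gamma+1}$ uses a rule $\rho\fcolon l \to r$ from $R$, then $l$ and $r$ contain no $\bot$ because $\rho$ lies over $\Sigma$, and the matching substitution $\sigma$ assigns each variable of $l$—hence of $r$—to a total subterm of $t_\gamma$, so $r\sigma$ and thus $t_{\gamma+1}$ are total. In the limit case at $\beta$, the induction hypothesis gives totality of all $t_\iota$ with $\iota < \beta$, so Lemma~\ref{lem:botLimRed}(iii) applies to $\prefix{S}{\beta}$ and states that $t_\beta(\pi) = \bot$ would force $\pi$ to be outermost-volatile in $\prefix{S}{\beta}$, contradicting our assumption. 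The same argument at $\beta = \alpha$ (when $S$ is open) shows that the final term $t$ is total.

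The main obstacle is purely bookkeeping: the statement ``$S$ is total'' bundles together the totality of the intermediate terms \emph{and} of the final term $t$, and in the induction these are handled by different cases depending on whether the length $\alpha$ is a successor or a limit ordinal. All the genuine algebraic content is already delivered by clauses~(ii)(a) and~(iii) of Lemma~\ref{lem:botLimRed}, so no further inventive step is required once the induction is set up carefully.
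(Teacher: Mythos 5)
Your proposal is correct and follows essentially the same route as the paper: the ``only if'' direction via Lemma~\ref{lem:botLimRed} (you make explicit the passage from a volatile position to an outermost-volatile one, which the paper leaves implicit), and the ``if'' direction by transfinite induction, with single steps preserving totality at successors and Lemma~\ref{lem:botLimRed}(iii) at limits. The only difference is level of detail; no gap.
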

\begin{proof}
  The ``only if'' direction follows straightforwardly from
  Lemma~\ref{lem:botLimRed}.

  We prove the ``if'' direction by induction on the length of $S$. If
  $\len{S} = 0$, then the totality of $S$ follows from the assumption of
  $s$ being total. If $\len{S}$ is a successor ordinal, then the
  totality of $S$ follows from the induction hypothesis since single
  reduction steps preserve totality. If $\len{S}$ is a limit ordinal,
  then the totality of $S$ follows from the induction hypothesis using
  Lemma~\ref{lem:botLimRed}.
\end{proof}

Moreover, as we shall show in the next section, if local divergences
are excluded, i.e.\ if total reductions are considered, both the
metric model and the partial order model coincide.

\section{Comparing \texorpdfstring{$\mrs$}{m}-Convergence and 
  \texorpdfstring{$\prs$}{p}-Convergence}
\label{sec:comp-mrs-conv}

In this section we want to compare the metric and the partial order
model of convergence. In particular, we shall show that the partial
order model is only a conservative extension of the metric model: If
we only consider total reductions, i.e.\ reductions over terms in
$\iterms$, then $\mrs$-convergence and $\prs$-convergence coincide
both in their weak and strong variant.

The first and rather trivial observation to this effect is that
already on the level of single reduction steps the partial order model
conservatively extends the metric model:
\begin{fact}
  \label{fact:step}
  Let $\calR = (\Sigma,R)$ be a TRS, $\calR_\bot = (\Sigma_\bot, R)$,
  and $s, t \in \ipterms$. Then we have 

  \[s \to[\calR,\pi] t \quad \text{ iff } \quad s \to[\calR_\bot,\pi]
  t \text{ and } s \text{ is total}.
  \]
\end{fact}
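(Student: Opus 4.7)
The plan is a simple unfolding of the definitions, with the only substantive point being that totality of $s$ forces totality of the matching substitution and hence of $t$. I would split the proof into the two implications and use the fact that $\calR$ and $\calR_\bot$ share the same rule set $R$, so the only thing that can differ between the two rewrite relations is which partial terms are admissible as substitution images.

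For the forward direction, assume $s \to_{\calR,\pi} t$. By the definition of a reduction step, there is a rule $l \to r \in R$ and a substitution $\sigma \colon \calV \to \iterms$ with $\atPos{s}{\pi} = l\sigma$ and $t = \substAtPos{s}{\pi}{r\sigma}$. Since the range of $\sigma$ lies in $\iterms$ and $l, r \in \iterms$, we get $s, t \in \iterms$, so in particular $s$ is total. Because $R$ is also the rule set of $\calR_\bot$, the same $l \to r$ and $\sigma$ witness $s \to_{\calR_\bot, \pi} t$, reading $\sigma$ as a substitution into $\ipterms \supseteq \iterms$.

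For the backward direction, assume $s \to_{\calR_\bot,\pi} t$ with $s$ total. Then there are a rule $l \to r \in R$ and a substitution $\sigma \colon \calV \to \ipterms$ with $\atPos{s}{\pi} = l\sigma$ and $t = \substAtPos{s}{\pi}{r\sigma}$. I would argue that $\sigma$ can be restricted to a substitution into $\iterms$: since $s$ is total, every subterm of $s$ is total, in particular $\atPos{s}{\pi} = l\sigma$ is total. For each variable $x$ occurring in $l$, the subterm $\sigma(x)$ of $l\sigma$ is therefore total as well. For variables $x \notin \Var{l}$, the image $\sigma(x)$ is irrelevant and can be replaced by any total term (e.g.\ itself). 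The resulting modified $\sigma'$ lies in $\iterms$ and still satisfies $l\sigma' = l\sigma = \atPos{s}{\pi}$. By the side condition on TRS rules, $\Var{r} \subseteq \Var{l}$, so also $r\sigma' = r\sigma \in \iterms$, and hence $t = \substAtPos{s}{\pi}{r\sigma'} \in \iterms$. Thus $s \to_{\calR,\pi} t$.

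There is no real obstacle here; the statement is essentially a bookkeeping observation. The only point worth being explicit about is the use of the variable-inclusion condition $\Var{r} \subseteq \Var{l}$ on rewrite rules, which is exactly what guarantees that matching a total term with a rule never produces a $\bot$ in the contractum when one was absent from the redex.
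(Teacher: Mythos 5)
The paper states this as a Fact without proof (it is introduced as a ``rather trivial observation''), so there is no official argument to compare against; your write-up is the routine verification one would supply. Your backward direction is correct and contains the one point of substance: totality of $\atPos{s}{\pi} = l\sigma$ lets you replace $\sigma$ by a substitution into $\iterms$ agreeing with it on $\Var{l}$, and the condition $\Var{r} \subseteq \Var{l}$ then makes $r\sigma$, and hence $t = \substAtPos{s}{\pi}{r\sigma}$, total, so the step is a step of $\calR$.

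The forward direction, however, rests on an invalid inference. From ``the range of $\sigma$ lies in $\iterms$ and $l,r \in \iterms$'' you conclude $s \in \iterms$; this does not follow, because $s$ could contain $\bot$ at positions disjoint from $\pi$. For instance, with the rule $a \to b$ and $s = f(\bot,a)$, $\pi = \seq{1}$, the redex $\atPos{s}{\pi} = a$ and the (empty) matching substitution are total, yet $s$ is not. The totality of $s$ has to come from the definition of the relation itself: $\to[\calR,\pi]$ is the rewrite relation of the TRS $\calR = (\Sigma,R)$, hence a relation on $\iterms$, and since $\bot \nin \Sigma$ any $s$ with $s \to[\calR,\pi] t$ is total simply by virtue of being a term over $\Sigma$. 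This is in fact the entire content of the statement --- the two relations are generated by the same rules and differ only in the ambient set of terms --- so the one step where your argument slips is exactly the point the Fact is making. With that justification substituted for your inference, the remainder of your proof goes through.
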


The next step is to establish that the underlying structures that are
used to formalise convergence exhibit this behaviour as well. That is,
the limit inferior in the complete semilattice $(\ipterms,\lebot)$ is
conservative extension of the limit in the complete metric space
$(\iterms,\dd)$. More precisely, we want to have that for a sequence
$(t_\iota)_{\iota<\alpha}$ in $\iterms$ 
\begin{gather*}
  \liminf_{\iota \limto \alpha} t_\iota = \lim_{\iota \limto \alpha}
  t_\iota \qquad \text{ whenever}\quad
  \begin{aligned}
    &\lim_{\iota \limto \alpha} t_\iota \text{ is defined, or}\\
    &\liminf_{\iota \limto \alpha} t_\iota \text{ is a total term.}
  \end{aligned}
\end{gather*}
Note that, as a corollary, the above property implies that
$\lim_{\iota \limto \alpha} t_\iota$ is defined iff $\liminf_{\iota
  \limto \alpha} t_\iota$ is a total term.  In
Section~\ref{sec:limit-infer-cons} we shall establish the above
property. This result is then used in Section~\ref{sec:prs-conv-cons}
in order to show the desired property that $\prs$-convergence is a
conservative extension of $\mrs$-convergence in both their respective
weak and strong variant.

\subsection{Complete Semilattice vs.\ Complete Metric Space}
\label{sec:limit-infer-cons}

In order to compare the complete semilattice of partial terms with the
complete metric space of term, it is convenient to have an alternative
characterisation of the similarity $\similar{s}{t}$ of two terms
$s,t$, which in turn provides an alternative characterisation of the
metric $\dd$ on terms. To this end we use the \emph{truncation} of a
term at a certain depth. This notion was originally used by Arnold and
Nivat~\cite{arnold80fi} to show that the $\dd$ is a complete
ultrametric on terms:
\begin{defi}[truncation]
  \label{def:trunc}
  Let $d \in \nat \cup \set{\infty}$ and $t \in \ipterms$. The
  \emph{truncation} $\trunc{t}{d}$ of $t$ at depth $d$ is defined
  inductively on $d$ as follows
  \begin{align*}
    \trunc{t}{0} &= \bot \hspace{40pt} \trunc{t}{\infty} = t \\
    \trunc{t}{d + 1} &=
    \begin{cases}
      t &\text{ if } t \in \calV\\
      f(\trunc{t_1}{d},\dots,\trunc{t_k}{d}) &\text{ if } t = f(t_1,\dots, t_k)
    \end{cases}
  \end{align*}
\end{defi}
More concisely we can say that the truncation of a term $t$ at depth
$d$ replaces all subterms at depth $d$ with $\bot$. From this we can
easily establish the following two properties of the truncation:
\begin{prop}[truncation]
  \label{prop:trunc}
  For each two $s,t \in \ipterms$ we
  have
  \begin{enumerate}[label=(\roman*)]
  \item $\trunc{t}{d} \lebot t$ for all $d\in \nat\cup \set\infty$.
  \item $\trunc{s}{d} \lebot t$ implies $\trunc{s}{d} = \trunc{t}{d}$
    for all $d\in \nat\cup \set\infty$ given $s$ is total.
  \item $\trunc{s}{d} = \trunc{t}{d}$ for all $d\in \nat$ iff $s = t$.
  \end{enumerate}
\end{prop}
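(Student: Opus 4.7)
I will prove each of the three clauses in turn. Clauses (i) and (ii) are natural inductions on $d$, while (iii) follows by picking $d$ just above the depth of a discrepancy between $s$ and $t$.

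For (i), I proceed by induction on $d \in \nat \cup \set{\infty}$. The cases $d = 0$ and $d = \infty$ are immediate, since $\bot \lebot t$ by definition of $\lebot$ and $t \lebot t$ by reflexivity. For the successor step $d+1$, either $t \in \calV$ and $\trunc{t}{d+1} = t$, or $t = f(t_1,\dots,t_k)$; in the latter case the induction hypothesis gives $\trunc{t_i}{d} \lebot t_i$ for each $i$. Using the characterisation $s \lebot t$ iff $s(\pi) = t(\pi)$ for all $\pi \in \posNonBot{s}$, it is clear that $\lebot$ is preserved under identical surrounding $f$-contexts, so $\trunc{t}{d+1} = f(\trunc{t_1}{d},\dots,\trunc{t_k}{d}) \lebot f(t_1,\dots,t_k) = t$.

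For (ii) I again induct on $d$, with $s$ restricted to total terms. The case $d = 0$ is trivial. The case $d = \infty$ reduces to showing that $s \lebot t$ with $s$ total forces $s = t$: the characterisation of $\lebot$ immediately gives $s(\pi) = t(\pi)$ on $\pos{s} = \posNonBot{s}$, and $\pos{s} = \pos{t}$ then follows because any minimal extra position in $t$ would force its parent in $\pos{s}\cap\pos{t}$ to carry a symbol of strictly smaller arity in $s$ than in $t$, contradicting both fixed arities and $s \lebot t$. For the successor step, I split on whether $s$ is a variable (handled as at $d = \infty$) or $s = f(s_1,\dots,s_k)$; the latter case forces $t = f(t_1,\dots,t_k)$ with $\trunc{s_i}{d} \lebot t_i$, and the induction hypothesis yields $\trunc{s_i}{d} = \trunc{t_i}{d}$ componentwise, whence $\trunc{s}{d+1} = \trunc{t}{d+1}$.

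For (iii), the direction from right to left is immediate. For the converse I argue contrapositively: if $s \neq t$, take a position of shortest length at which $s$ and $t$ disagree; I claim such a $\pi$ can always be chosen in $\pos{s} \cap \pos{t}$, since if a shortest discrepancy lay in only one of the two position sets then its parent would already witness a disagreement in root symbol, via the arity argument used above. Setting $d = \len{\pi} + 1$, the truncation $\trunc{\cdot}{d}$ leaves all symbols at depth at most $\len{\pi}$ untouched, so $\trunc{s}{d}(\pi) = s(\pi) \neq t(\pi) = \trunc{t}{d}(\pi)$, and hence $\trunc{s}{d} \neq \trunc{t}{d}$. The only mild obstacle throughout is exactly this handling of potentially distinct position sets, which is uniformly resolved by the fixed-arity observation.
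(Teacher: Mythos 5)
Your proof is correct, and since the paper dismisses this proposition with ``Straightforward,'' your inductions on $d$ (using the characterisation of $\lebot$ via non-$\bot$ positions and the maximality of total terms) together with the shortest-discrepancy argument for (iii) are exactly the routine details the authors leave implicit. The fixed-arity observation you use to handle positions lying in only one of $\pos{s}$, $\pos{t}$ is the right way to close the only real gap.
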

\begin{proof}
  Straightforward.
\end{proof}

Recall that the similarity of two terms is the minimal depth at which
they differ resp.\ $\infty$ if they are equal. However, saying that
two terms differ at a certain minimal depth $d$ is the same as saying
that the truncation of the two terms at that depth $d$ coincide. This
provides an alternative characterisation of similarity:
\begin{prop}[characterisation of similarity]
  \label{prop:simTrunc}
  For each pair $s,t \in \iterms$ we have
  \[
  \similar{s}{t} = \max \setcom{d\in \nat\cup\set\infty}{\trunc{s}{d}
    = \trunc{t}{d}}
  \]
\end{prop}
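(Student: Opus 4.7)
The plan is to reduce the statement to a characterization of when two truncations are equal, namely: for all $s,t \in \iterms$ and $d \in \nat \cup \set{\infty}$,
\[
\trunc{s}{d} = \trunc{t}{d} \quad \Iff \quad s(\pi) = t(\pi) \text{ for all } \pi \in \pos{s}\cap\pos{t} \text{ with } \len{\pi} < d.
\]
Call the right-hand condition $(\star_d)$. (For $d = \infty$ the bound $\len{\pi} < \infty$ is vacuously true for every position.) Once this is in hand, the rest is bookkeeping.

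First I would prove the characterization by straightforward induction on $d$. The base case $d = 0$ holds trivially since both sides of the equation yield $\bot$ and $(\star_0)$ is vacuous. For the successor case $d+1$, I unfold Definition~\ref{def:trunc}: $\trunc{s}{d+1} = \trunc{t}{d+1}$ iff $s$ and $t$ have the same root symbol (or the same variable), \emph{and}, writing $s = f(s_1,\dots,s_k)$ and $t = f(t_1,\dots,t_k)$, we have $\trunc{s_i}{d} = \trunc{t_i}{d}$ for each $i$. By the induction hypothesis, this in turn is equivalent to agreement of every $s_i$ and $t_i$ on common positions of depth $< d$, which reassembles exactly to $(\star_{d+1})$. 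Here the totality of $s,t$ enters implicitly: once the root symbols agree, the set of children positions is automatically the same on both sides. The case $d = \infty$ is immediate from Proposition~\ref{prop:trunc}(iii), since $(\star_\infty)$ amounts to $s = t$.

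With the characterization established, I do case analysis on whether $s = t$. If $s = t$, then $\similar{s}{t} = \infty$ by definition, and $(\star_d)$ holds for every $d \in \nat\cup\set{\infty}$, so the set on the right-hand side of the proposition is all of $\nat\cup\set{\infty}$ and its maximum is $\infty$. Otherwise, let $n = \similar{s}{t} \in \nat$. By definition of $\symb{sim}$, there exists a common position $\pi$ of length $n$ at which $s$ and $t$ disagree, and no common position of smaller length has this property. Hence $(\star_d)$ holds exactly when $d \le n$: for $d \le n$, every common position of depth $< d$ has depth $< n$, so $s$ and $t$ agree there by minimality of $n$; for $d > n$, the witness $\pi$ lies in $\pos{s}\cap\pos{t}$ with $\len{\pi} = n < d$, violating $(\star_d)$. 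Thus the set on the right-hand side is exactly $\set{d \in \nat \cup \set{\infty} \| d \le n}$, whose maximum is $n = \similar{s}{t}$.

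The only mildly delicate point is the induction step of the characterization, specifically showing that when root symbols agree the truncations of the children match up positionally. This is where totality of $s$ and $t$ is used — a partial term could have $\bot$ at a shallow position and so possess fewer children than its counterpart even when the shallow symbols ``agree''. Since Proposition~\ref{prop:simTrunc} is stated only for $s,t \in \iterms$, this issue does not arise and the induction goes through cleanly.
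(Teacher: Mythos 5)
Your proof is correct. The paper dismisses this proposition with the single word ``Straightforward'', so there is no recorded argument to compare against; your route --- establishing by induction on $d$ that $\trunc{s}{d}=\trunc{t}{d}$ iff $s$ and $t$ agree on every common position of length less than $d$, and then reading off the maximum --- is a perfectly good way of filling in the details, and you correctly isolate the one place where totality of $s,t\in\iterms$ is needed (agreement of root symbols forces agreement of the sets of child positions, which is what makes the truncations of the children line up).
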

\begin{proof}
  Straightforward.
\end{proof}

We can use this characterisation to show the first part of the
compatibility of the metric and the partial order:
\begin{lem}[metric limit equals limit inferior]
  \label{lem:limLiminf}
  Let $(t_\iota)_{\iota < \alpha}$ be a convergent sequence in
  $(\iterms,\dd)$. Then $\lim_{\iota \limto \alpha} t_\iota =
  \liminf_{\iota \limto \alpha} t_\iota$.
\end{lem}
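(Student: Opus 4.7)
My plan is to translate the metric hypothesis into a statement about truncations via Proposition~\ref{prop:simTrunc}, and then prove the two inequalities $t \lebot \liminf_{\iota \limto \alpha} t_\iota$ and $\liminf_{\iota \limto \alpha} t_\iota \lebot t$ separately, concluding by antisymmetry. Writing $t = \lim_{\iota \limto \alpha} t_\iota$ and $s_\gamma = \Glbbot_{\gamma \le \iota < \alpha} t_\iota$, so that $\liminf_{\iota \limto \alpha} t_\iota = \Lubbot_{\gamma<\alpha} s_\gamma$, the convergence hypothesis together with the relation $\dd(s,u) = 2^{-\similar{s}{u}}$ and Proposition~\ref{prop:simTrunc} gives: for every $d \in \nat$ there is some $\beta_d < \alpha$ with $\trunc{t_\iota}{d} = \trunc{t}{d}$ for all $\beta_d \le \iota < \alpha$.

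For $t \lebot \liminf$, fix $d \in \nat$ and let $P$ be the set of positions of $t$ at depth strictly less than $d$. Since $t$ is total, $P$ is closed under prefixes and coincides with $\posNonBot{\trunc{t}{d}}$. The equality $\trunc{t_\iota}{d} = \trunc{t}{d}$ for all $\iota \ge \beta_d$ means that every such $t_\iota$ agrees with $t$ on $P$; in particular all these $t_\iota$ agree with each other on $P$, so Lemma~\ref{lem:glbbot} yields $s_{\beta_d}(\pi) = t(\pi)$ for every $\pi \in P$. This forces $\trunc{t}{d} \lebot s_{\beta_d} \lebot \liminf_{\iota \limto \alpha} t_\iota$. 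Since $t$ is total, $t$ is the least upper bound of its truncations $\set{\trunc{t}{d}}_{d \in \nat}$, so letting $d$ range over $\nat$ gives $t \lebot \liminf_{\iota \limto \alpha} t_\iota$.

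For $\liminf \lebot t$, it suffices to show $s_\gamma \lebot t$ for each $\gamma < \alpha$. Given $\pi \in \posNonBot{s_\gamma}$, pick $d = \len{\pi} + 1$ and choose any $\iota$ with $\iota \ge \max\set{\gamma, \beta_d}$ and $\iota < \alpha$; such an $\iota$ exists because $\gamma < \alpha$ and $\beta_d < \alpha$. Then $s_\gamma \lebot t_\iota$ yields $s_\gamma(\pi) = t_\iota(\pi)$, while $\trunc{t_\iota}{d} = \trunc{t}{d}$ and $\len{\pi} < d$ yield $t_\iota(\pi) = t(\pi)$. Hence $s_\gamma(\pi) = t(\pi)$, showing $s_\gamma \lebot t$, so $\liminf_{\iota \limto \alpha} t_\iota \lebot t$. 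Combined with the previous inequality, antisymmetry of $\lebot$ yields the claim. I do not expect a genuine obstacle here; the only delicate points are checking that truncation interacts well with the totality of $t$ (so that $\posNonBot{\trunc{t}{d}}$ really is the depth-$<d$ part of $\pos{t}$) and that the lub of the truncations reconstructs $t$, both of which are immediate from Proposition~\ref{prop:trunc}.
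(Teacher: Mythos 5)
Your proof is correct and follows essentially the same route as the paper's: both translate metric convergence into the statement that the truncations $\trunc{t_\iota}{d}$ eventually equal $\trunc{t}{d}$ (via Proposition~\ref{prop:simTrunc}) and then deduce $\trunc{t}{d} \lebot \liminf_{\iota\limto\alpha} t_\iota$ for every $d$ from the $\Glbbot$/$\Lubbot$ structure of the limit inferior. The only difference is cosmetic: where you prove the reverse inequality $\liminf \lebot t$ by a separate hands-on argument, the paper gets it for free from the totality of $t$ via Proposition~\ref{prop:trunc}(ii)--(iii), which turn $\trunc{t}{d} \lebot \liminf$ for all $d$ directly into equality.
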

\begin{proof}
  If $\alpha$ is a successor ordinal, this is trivial. Let $\alpha$ be
  a limit ordinal, $\oh t = \lim_{\iota \limto \alpha} t_\iota$, and
  $\ol t = \liminf_{\iota \limto \alpha} t_\iota$. Then for each
  $\epsilon \in \realp$ there is a $\beta < \alpha$ such that $\dd(\oh
  t, t_\iota) < \epsilon$ for all $\beta \le \iota < \alpha$. Hence,
  for each $d \in \nat$ there is a $\beta < \alpha$ such that
  $\similar{\oh t}{t_\iota} > d$ for all $\beta \le \iota <
  \alpha$. According to Proposition~\ref{prop:simTrunc}, $\similar{\oh
    t}{t_\iota} > d$ implies $\trunc{\oh t}{d} = \trunc{t_\iota}{d}$,
  which, according to Proposition~\ref{prop:trunc}, implies
  $\trunc{\oh t}{d} \lebot t_\iota$. Therefore, $\trunc{\oh t}{d}$ is
  a lower bound of $T_\beta = \setcom{t_\iota}{\beta \le \iota <
    \alpha}$, i.e.\ $\trunc{\oh t}{d} \lebot \Glbbot T_\beta$. Since
  $\ol t = \Lubbot_{\beta<\alpha} \Glbbot T_\beta$, we also have that
  $\Glbbot T_\beta \lebot \ol t$. By transitivity, we obtain
  $\trunc{\oh t}{d} \lebot \ol t$ for each $d \in \nat$. Since $\oh t$
  is total, we can thus conclude, according to
  Proposition~\ref{prop:trunc}, that $\oh t = \ol t$.
\end{proof}

Before we continue, we want introduce another characterisation of
similarity which bridges the gap to the partial order $\lebot$. In
order to follow this approach, we need the to define the
\emph{$\bot$-depth} of a term $t \in \ipterms$. It is the minimal
depth of an occurrence of the subterm $\bot$ in $t$:
\[
\sdepth{t}{\bot} = \min \setcom{\len\pi}{t(\pi) = \bot}\cup \set\infty
\]

Intuitively, the glb $s \glbbot t$ of two terms $s,t$ represents the
common structure that both terms share. The similarity
$\similar{s}{t}$ is a much more condensed measure. It only provides
the depth up two which the terms share a common structure. Using the
$\bot$-depth we can directly condense the glb $s \glbbot t$ to the
similarity $\similar{s}{t}$:
\begin{prop}[characterisation of similarity]
  \label{prop:simDepth}
  For each pair $s,t \in \iterms$ we have
  \[
  \similar{s}{t} = \sdepth{s \glbbot t}{\bot}
  \]
\end{prop}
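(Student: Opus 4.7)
The plan is to leverage the alternative characterisation of similarity via truncation given in Proposition~\ref{prop:simTrunc} and bridge it to the glb formulation through the observation that, for total $s,t$, the condition $\trunc{s}{d} = \trunc{t}{d}$ is equivalent to $s \glbbot t$ having no $\bot$ at any position of depth $< d$. I would prove the equality by establishing the two inequalities $\similar{s}{t} \le \sdepth{s \glbbot t}{\bot}$ and $\similar{s}{t} \ge \sdepth{s \glbbot t}{\bot}$, and then separately handle the degenerate case in which one side is $\infty$.

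For $\similar{s}{t} \le \sdepth{s \glbbot t}{\bot}$, I would take any $d \le \similar{s}{t}$ and apply Proposition~\ref{prop:simTrunc} to obtain $\trunc{s}{d} = \trunc{t}{d}$. Together with Proposition~\ref{prop:trunc}(i), this makes $\trunc{s}{d}$ a lower bound of $\set{s,t}$, so $\trunc{s}{d} \lebot s \glbbot t$. Since $s$ is total, every position of $\trunc{s}{d}$ at depth $< d$ carries a non-$\bot$ symbol, which by the characterisation of $\lebot$ via $\posNonBot{\cdot}$ must be preserved in $s \glbbot t$. Combined with $\pos{s \glbbot t} \subseteq \pos{s}$, this yields $\sdepth{s \glbbot t}{\bot} \ge d$. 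For the reverse inequality, I would fix $d \le \sdepth{s \glbbot t}{\bot}$ and prove by induction on $k < d$ that every position of depth $k$ in $\pos{s \glbbot t}$ is also in $\pos{s}\cap\pos{t}$ with $s(\pi) = t(\pi) = (s \glbbot t)(\pi)$, and conversely every position of $s$ (and of $t$) at that depth lies in $\pos{s \glbbot t}$. The base case is immediate from the assumption that the root is non-$\bot$, and the inductive step follows by inspecting an extended position $\pi' \concat \seq{i}$ and using that the parent's symbol determines which children belong to the tree. This gives $\trunc{s}{d} = \trunc{t}{d}$ and hence $\similar{s}{t} \ge d$ via Proposition~\ref{prop:simTrunc}.

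The $\infty$ case can be handled by noting that $\sdepth{s \glbbot t}{\bot} = \infty$ makes $s \glbbot t$ total, and since $s \glbbot t \lebot s$ with $s$ also total, Proposition~\ref{prop:trunc}(ii) (in the limit, or direct inspection) forces $s \glbbot t = s$, and symmetrically $= t$, so $s = t$ and $\similar{s}{t} = \infty$; the converse direction is trivial. The only mildly delicate part, I expect, is the inductive step of the second inequality: one must treat carefully the positions $\pi'\concat\seq{i}$ for which $i$ exceeds the arity of the parent's symbol, and keep the invariant strong enough to propagate both symbol equality and equality of the position sets up through successive depths.
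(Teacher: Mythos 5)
Your argument is correct, but it takes a different route from the paper. The paper's proof is a one-liner: it invokes Lemma~\ref{lem:glbbot} with $T = \set{s,t}$ and $P$ the (prefix-closed) set of positions of depth less than $\similar{s}{t}$ on which $s$ and $t$ coincide, concluding immediately that $s \glbbot t$ coincides with $s$ on $P$ and hence carries no $\bot$ above that depth; the reverse bound is then the easy observation that at a position where $s$ and $t$ disagree, a common lower bound cannot carry a non-$\bot$ symbol, so either that position or one of its prefixes is $\bot$ in $s \glbbot t$. You instead go through the truncation characterisation (Proposition~\ref{prop:simTrunc} together with Proposition~\ref{prop:trunc}) and re-derive, by induction on depth, exactly the instance of Lemma~\ref{lem:glbbot} that the paper uses as a black box: your inductive invariant (positions and symbols of $s$, $t$ and $s \glbbot t$ agree at every depth $k < d$) is precisely the statement that the glb coincides with both terms on the prefix-closed set of shared positions. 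Both arguments are sound and elementary; the paper's is shorter because the needed coincidence fact has already been isolated as a lemma, while yours is more self-contained and makes the depth-by-depth mechanism explicit. Your worry about child indices exceeding the parent's arity is unfounded, since $\pos{\cdot}$ is generated by the arities, and your handling of the $\infty$ case via maximality of total terms w.r.t.\ $\lebot$ is fine (it also follows from Proposition~\ref{prop:trunc}(iii) once all finite truncations are known to agree).
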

\begin{proof}
  Follows from Lemma~\ref{lem:glbbot}.
\end{proof}

We can employ this alternative characterisation of similarity to show
the second part of the compatibility of the metric and the partial
order:
\begin{lem}[total limit inferior implies Cauchy]
  \label{lem:liminfCauchy}
  Let $(t_\iota)_{\iota<\alpha}$ be a sequence in $\iterms$ such that
  $\liminf_{\iota\limto\alpha} t_\iota$ is total. Then
  $(t_\iota)_{\iota<\alpha}$ is Cauchy.
\end{lem}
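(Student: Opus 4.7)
The plan is to reduce the Cauchy condition to a statement about $\bot$-depths of pairwise glbs via Proposition~\ref{prop:simDepth}, and then exploit the fact that the totality of the limit inferior forces these glbs to grow ever more defined along a tail of the sequence. Concretely, I would set $t = \liminf_{\iota \limto \alpha} t_\iota$ and, for each $\beta < \alpha$, define $s_\beta = \Glbbot_{\beta \le \iota < \alpha} t_\iota$, so that the $s_\beta$ form an increasing $\lebot$-chain with $\Lubbot_{\beta < \alpha} s_\beta = t$. By construction $s_\beta \lebot t_\iota$ for every $\iota \ge \beta$, and hence also $s_\beta \lebot t_\iota \glbbot t_{\iota'}$ for all $\iota, \iota' \ge \beta$.

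To obtain the Cauchy bound at precision $d \in \nat$, I would produce a single index $\beta^* < \alpha$ with $\trunc{t}{d+1} \lebot s_{\beta^*}$. The chain of inequalities $\trunc{t}{d+1} \lebot s_{\beta^*} \lebot t_\iota \glbbot t_{\iota'}$ for $\iota,\iota' \ge \beta^*$ then forces $t_\iota \glbbot t_{\iota'}$ to carry no $\bot$ at any position of depth $\le d$, so $\sdepth{t_\iota \glbbot t_{\iota'}}{\bot} > d$; Proposition~\ref{prop:simDepth} will then yield $\similar{t_\iota}{t_{\iota'}} > d$, i.e.\ $\dd(t_\iota, t_{\iota'}) < 2^{-d}$, and since $d$ is arbitrary the sequence is Cauchy. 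The index $\beta^*$ itself is constructed by applying Lemma~\ref{lem:lubbot} position by position: for each position $\pi$ of $t$ with $\len{\pi} \le d$, totality of $t$ guarantees $t(\pi) \neq \bot$, so the lub characterisation supplies some $\beta_\pi < \alpha$ with $s_{\beta_\pi}(\pi) = t(\pi)$; taking $\beta^* = \max_\pi \beta_\pi$ and using monotonicity of the chain gives $s_{\beta^*}(\pi) = t(\pi)$ for every such $\pi$, which is exactly $\trunc{t}{d+1} \lebot s_{\beta^*}$.

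The single point that needs attention is the legitimacy of forming $\beta^* = \max_\pi \beta_\pi$ as an ordinal strictly below $\alpha$, which rests on the observation that the set of positions of $t$ of depth at most $d$ is finite, since each function symbol in the signature has finite arity and $t$ is total. I do not expect any further obstacle: the remaining steps amount to routine bookkeeping with the truncation properties of Proposition~\ref{prop:trunc} and the glb/lub characterisations in Lemmas~\ref{lem:glbbot} and~\ref{lem:lubbot}.
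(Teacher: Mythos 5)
Your proof is correct and uses essentially the same ingredients as the paper's: the reduction of the Cauchy condition to $\bot$-depths of pairwise glbs via Proposition~\ref{prop:simDepth}, the chain $s_\beta = \Glbbot_{\beta \le \iota < \alpha} t_\iota$ with $\Lubbot_\beta s_\beta = t$ and Lemma~\ref{lem:lubbot}, and the finiteness of the set of positions of bounded depth. The only difference is presentational: the paper argues by contradiction (non-Cauchyness pins a $\bot$ at some fixed position $\pi^*$ of depth $\le d$ in cofinally many $s_\beta$, hence in $t$), whereas you run the same argument directly by taking the maximum of finitely many indices $\beta_\pi$ — a harmless reversal.
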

\begin{proof}
  For $\alpha$ a successor ordinal this is trivial. For the case that
  $\alpha$ is a limit ordinal, suppose that $(t_\iota)_{\iota<\alpha}$
  is not Cauchy. That is, there is an $\epsilon \in \realp$ such that
  for all $\beta < \alpha$ there is a pair $\beta < \iota,\iota' <
  \alpha$ with $\dd(t_\iota,t_{\iota'}) \ge \epsilon$. Hence, there is
  a $d \in \nat$ such that for all $\beta < \alpha$ there is a pair
  $\beta < \iota,\iota' < \alpha$ with $\similar{t_\iota}{t_{\iota'}}
  \le d$, which, according to Proposition~\ref{prop:simDepth}, is
  equivalent to $\sdepth{t_\iota \glbbot t_{\iota'}}{\bot} \le
  d$. That is,
  \begin{gather}
    \label{eq:liminfCauchy}
    \text{for each } \beta < \alpha \text{ there are } \beta < \iota ,
    \iota' < \alpha \text{ with } \sdepth{t_\iota \glbbot
      t_{\iota'}}{\bot} \le d
    \tag{1}
  \end{gather}

  Let $s_\beta = \Glbbot_{\beta \le \iota < \alpha} t_\iota$. Then
  $s_\beta \lebot t_\iota \glbbot t_{\iota'}$ for all $\beta \le
  \iota, \iota' < \alpha$, which implies $\sdepth{s_\beta}{\bot} \le
  \sdepth{t_\iota \glbbot t_{\iota'}}{\bot}$. By combining this with
  \eqref{eq:liminfCauchy}, we obtain $\sdepth{s_\beta}{\bot} \le
  d$. More precisely, we have that
  \begin{gather}
    \label{eq:liminfCauchyI}
    \text{for each } \beta < \alpha \text{ there is a } \pi \in
    \pos{s_\beta} \text{ with } \len\pi \le d \text{ and }
    s_\beta(\pi) = \bot.
    \tag{2}
  \end{gather}
  Let $\ol t = \liminf_{\iota \limto \alpha} t_\iota$. Note that $\ol t
  = \Lubbot_{\beta < \alpha} s_\beta$. Since, according to
  Lemma~\ref{lem:lubbot}, $\pos{\ol t} = \bigcup_{\beta <
    \alpha}\pos{s_\beta}$ we can reformulate \eqref{eq:liminfCauchyI}
  as follows:
  \begin{gather}
    \label{eq:liminfCauchyIp}
    \text{for each } \beta < \alpha \text{ there is a } \pi \in
    \pos{\ol t} \text{ with } \len\pi \le d \text{ and }
    s_\beta(\pi) = \bot.
    \tag{2'}
  \end{gather}
  Since there are only finitely many positions in $\ol t$ of length at
  most $d$, there is some $\pi^* \in \pos{\ol t}$ such that
  \begin{gather}
    \label{eq:liminfCauchyII}
    \text{for each } \beta < \alpha \text{ there is a } \beta \le
    \gamma < \alpha \text{ with } s_\gamma(\pi^*)
    = \bot.  \tag{3}
  \end{gather}
  Since $s_\beta \lebot s_\gamma$, whenever $\beta \le \gamma$, we can
  rewrite \eqref{eq:liminfCauchyII} as follows:
  \begin{gather}
    \label{eq:liminfCauchyIIp}
    s_\beta(\pi^*) = \bot \text{ for all } \beta < \alpha \text{ with } \pi^*
    \in \pos{s_\beta}.  \tag{3'}
  \end{gather}
  Since $\pi^* \in \pos{\ol t}$, we can employ Lemma~\ref{lem:lubbot}
  to obtain from \eqref{eq:liminfCauchyIIp} that $\ol t(\pi^*) =
  \bot$. This contradicts the assumption that $\ol t =
  \liminf_{\iota\limto\alpha} t_\iota$ is total.
\end{proof}

The following proposition combines Lemma~\ref{lem:limLiminf} and
Lemma~\ref{lem:liminfCauchy} in order to obtain the desired property
that the metric and the partial order are compatible:
\begin{prop}[partial order conservatively extends metric]
  \label{prop:poMetric}
  For every sequence $(t_\iota)_{\iota<\alpha}$ in $\iterms$ the
  following holds:
  \begin{gather*}
    \liminf_{\iota \limto \alpha} t_\iota = \lim_{\iota \limto \alpha}
    t_\iota \qquad \text{ whenever}\quad
    \begin{aligned}
      &\lim_{\iota \limto \alpha} t_\iota \text{ is defined, or}\\
      &\liminf_{\iota \limto \alpha} t_\iota \text{ is a total term.}
    \end{aligned}
  \end{gather*}
\end{prop}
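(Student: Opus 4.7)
The proposition splits naturally into two cases according to which of the two hypotheses is assumed, and both cases are designed to be resolved by the two lemmas immediately preceding the statement.

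\textbf{Case 1: $\lim_{\iota \limto \alpha} t_\iota$ is defined.} Here there is essentially nothing left to do: Lemma~\ref{lem:limLiminf} directly yields $\lim_{\iota \limto \alpha} t_\iota = \liminf_{\iota \limto \alpha} t_\iota$.

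\textbf{Case 2: $\liminf_{\iota \limto \alpha} t_\iota$ is a total term.} The plan is to first invoke Lemma~\ref{lem:liminfCauchy} to deduce that $(t_\iota)_{\iota<\alpha}$ is a Cauchy sequence in $(\iterms,\dd)$. Since $(\iterms,\dd)$ is a complete (ultra)metric space (as recalled in Section~\ref{sec:terms}), every Cauchy sequence converges, so $\lim_{\iota \limto \alpha} t_\iota$ exists in $\iterms$. Having reduced to the situation of Case~1, a second application of Lemma~\ref{lem:limLiminf} gives the desired equality.

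A small technical point worth being explicit about: the definition of completeness in Section~\ref{sec:metric-spaces} speaks of \emph{non-empty} Cauchy sequences, so one should briefly observe that the case $\alpha = 0$ is vacuous (neither a limit nor a non-trivial limit inferior is involved). For successor $\alpha$, both sides collapse to the last element of the sequence, and for limit $\alpha$ the argument above applies. I do not expect any genuine obstacle here, since the real content has been isolated into the two preceding lemmas; the proof of the proposition is purely a matter of chaining them together and noting the appeal to completeness of $(\iterms,\dd)$.
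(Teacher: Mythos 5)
Your proposal is correct and follows essentially the same route as the paper: Lemma~\ref{lem:limLiminf} handles the first hypothesis directly, and under the second hypothesis Lemma~\ref{lem:liminfCauchy} plus completeness of $(\iterms,\dd)$ reduces it to the first case. The extra remark about empty and successor-length sequences is harmless bookkeeping that the paper's proof omits.
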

\begin{proof}
  If $\lim_{\iota \limto \alpha}$ is defined, the equality follows
  from Lemma~\ref{lem:limLiminf}. If $\liminf_{\iota \limto \alpha}
  t_\iota$ is total, the sequence $(t_\iota)_{\iota<\alpha}$ is Cauchy
  by Lemma~\ref{lem:liminfCauchy}. Then, as the metric space
  $(\iterms,\dd)$ is complete, $(t_\iota)_{\iota<\alpha}$ converges
  and we can apply Lemma~\ref{lem:limLiminf} to conclude the equality.
\end{proof}

\subsection{\texorpdfstring{$\prs$}{p}-Convergence vs.\ 
  \texorpdfstring{$\mrs$}{m}-Convergence}
\label{sec:prs-conv-cons}

In the previous section we have established that the metric and the
partial order on (partial) terms are compatible in the sense that the
corresponding notions of limit and limit inferior coincide whenever
the limit is defined or the limit inferior is a total term. As weak
$\mrs$-convergence and weak $\prs$-convergence are solely based on the
limit in the metric space resp.\ the limit inferior in the partially
ordered set, we can directly apply this result to show that both
notions of convergence coincide on total reductions:
\begin{thm}[total weak $\prs$-convergence = weak
  $\mrs$-convergence]
  \label{thr:weakExt}
  For every reduction $S$ in a TRS the following equivalences hold:
%  \begin{center}
  \begin{enumerate}[label=(\roman*)]
%    \begin{inparaenum}[(i)]
    \item $S\fcolon s \pwacont$ is total iff $S\fcolon s \mwacont$,
      and \label{item:weakExtI}%
      \quad%
    \item $S\fcolon s \pwato t$ is total iff $S\fcolon s
      \mwato t$. \label{item:weakExtII}
  \end{enumerate}
%    \end{inparaenum}
%  \end{center}
\end{thm}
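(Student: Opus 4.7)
The proof will be a direct application of the compatibility results already established, in particular Fact~\ref{fact:step} (which handles single steps) and Proposition~\ref{prop:poMetric} (which handles limits at limit ordinals). The plan is to prove (\ref{item:weakExtI}) first and then reduce (\ref{item:weakExtII}) to it by treating the final term as just another ``limit'' that the reduction passes through.

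For (\ref{item:weakExtI}), let $S = (t_\iota \to[\pi_\iota] t_{\iota+1})_{\iota < \alpha}$. The first task is to verify that in both directions all intermediate terms $t_\iota$ lie in $\iterms$: on the $\prs$-side this is given by totality of $S$, which then lets me invoke Fact~\ref{fact:step} to conclude that every step is a legitimate $\calR$-step; on the $\mrs$-side this follows by a simple transfinite induction because successor steps preserve totality (Fact~\ref{fact:step}) and weak $\mrs$-continuity forces the limit term at each limit ordinal to equal $\lim_{\iota\limto\lambda} t_\iota$, which is a member of $\iterms$ by hypothesis. Having secured that the sequence lives entirely in $\iterms$, the continuity condition at each limit ordinal $\lambda < \alpha$ is equated in the two models by Proposition~\ref{prop:poMetric}: in the ``only if'' direction, $t_\lambda = \liminf_{\iota\limto\lambda} t_\iota$ is a total term, so the proposition gives $\lim_{\iota\limto\lambda} t_\iota = t_\lambda$; in the ``if'' direction, $\lim_{\iota\limto\lambda} t_\iota = t_\lambda$ is defined, so the proposition gives $\liminf_{\iota\limto\lambda} t_\iota = t_\lambda$.

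For (\ref{item:weakExtII}), I will show both directions using (\ref{item:weakExtI}) together with one more application of Proposition~\ref{prop:poMetric}, this time at the ordinal $\wsuc\alpha$. In the ``only if'' direction, $S\fcolon s \pwato t$ total means $S$ is weakly $\prs$-continuous total and $t = \liminf_{\iota\limto\wsuc\alpha} t_\iota$ with $t\in\iterms$; by (\ref{item:weakExtI}) the reduction is weakly $\mrs$-continuous, and Proposition~\ref{prop:poMetric} gives $\lim_{\iota\limto\wsuc\alpha} t_\iota = t$, so $S\fcolon s\mwato t$. In the ``if'' direction, $S\fcolon s \mwato t$ gives weak $\mrs$-continuity and $\lim_{\iota\limto\wsuc\alpha} t_\iota = t$; by (\ref{item:weakExtI}) the reduction is weakly $\prs$-continuous and total, and Proposition~\ref{prop:poMetric} gives $\liminf_{\iota\limto\wsuc\alpha} t_\iota = t$, so $S\fcolon s\pwato t$ is total.

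I do not anticipate any real obstacle; the conceptual work has all been done in Proposition~\ref{prop:poMetric}. The only point requiring a modicum of care is the bookkeeping needed to confirm that no term in $S$ contains $\bot$ under each set of hypotheses, which is why I plan to treat the totality argument explicitly by transfinite induction rather than take it for granted. In particular, one must remember that for a closed reduction, $\wsuc\alpha = \alpha + 1$ and the clause for $\wsuc\alpha$ in the definitions is trivially satisfied, so the interesting content is entirely at limit ordinals, precisely where Proposition~\ref{prop:poMetric} applies.
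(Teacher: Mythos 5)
Your proposal is correct and takes essentially the same route as the paper, whose proof of Theorem~\ref{thr:weakExt} consists precisely of invoking Proposition~\ref{prop:poMetric} and Fact~\ref{fact:step} under the presupposition that all terms in the reduction are total. You merely make explicit the totality bookkeeping (the transfinite induction and the treatment of the final limit at $\wsuc\alpha$) that the paper leaves implicit.
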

\begin{proof}
  Both equivalences follow directly from
  Proposition~\ref{prop:poMetric} and Fact~\ref{fact:step}, both of
  which are applicable as we presuppose that each term in the
  reduction is total.
\end{proof}

In order to replicate Theorem~\ref{thr:weakExt} for the strong notions
of convergence, we first need the following two lemmas that link the
property of increasing contraction depth to volatile positions and the
limit inferior, respectively:
\begin{lem}[strong $\mrs$-convergence]
  \label{lem:strongConvPos}
  Let $S = (t_\iota \to[\pi_\iota] t_{\iota+1})_{\iota < \lambda}$ be
  an open reduction. Then $(\len{\pi_\iota})_{\iota < \lambda}$ tends
  to infinity iff, for each position $\pi$, there is an ordinal $\alpha
  < \lambda$ such that $\pi_\iota \neq \pi$ for all $\alpha \le \iota <
  \lambda$.
\end{lem}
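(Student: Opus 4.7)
The forward direction is immediate: if $(\len{\pi_\iota})_{\iota<\lambda}$ tends to infinity, then for any position $\pi$ of length $d = \len\pi$ there is an $\alpha < \lambda$ with $\len{\pi_\iota} > d$ for all $\alpha \le \iota < \lambda$, and such $\pi_\iota$ cannot equal $\pi$.

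For the converse I would argue by contraposition, or equivalently, show by induction on $d \in \nat$ that the set $I_d = \setcom{\iota < \lambda}{\len{\pi_\iota} \le d}$ is bounded in $\lambda$. The base case $d = 0$ is precisely the hypothesis applied to $\pi = \emptyseq$: there is some $\alpha < \lambda$ with $\pi_\iota \neq \emptyseq$ for all $\alpha \le \iota < \lambda$, so $I_0$ is bounded by $\alpha$.

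For the step case, suppose $I_d$ is bounded by some $\beta < \lambda$. Then no reduction step from index $\beta$ onwards occurs at a position of length $\le d$. Consequently, for every $\iota \ge \beta$ and every position $\tau$ of length at most $d$, the symbol $t_\iota(\tau)$ is the same as $t_\beta(\tau)$; in particular, the set $P$ of positions of length exactly $d+1$ lying in $\pos{t_\iota}$ coincides with the corresponding set for $t_\beta$. This set $P$ is finite, because a term has only finitely many positions of each fixed depth (by induction on depth, since each symbol has finite arity). Applying the hypothesis to each of the finitely many $\pi \in P$ yields ordinals $\alpha_\pi < \lambda$ beyond which $\pi_\iota \neq \pi$; their maximum, together with $\beta$, gives an $\alpha^* < \lambda$ (using that $\lambda$ is a limit) such that for all $\alpha^* \le \iota < \lambda$ the step position $\pi_\iota$ neither has length $\le d$ (by $\iota \ge \beta$) nor equals any element of $P$ (by $\iota \ge \alpha_\pi$), whence $\len{\pi_\iota} > d+1$. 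Thus $I_{d+1}$ is bounded by $\alpha^*$, completing the induction.

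The key step is the induction, and the only genuinely substantive point inside it is the observation that the reduction stops touching the ``top $d$ layers'' of the term once we pass $\beta$, so the admissible positions of length $d+1$ form a \emph{finite} set drawn from a single fixed term $t_\beta$. This finiteness is what lets us collapse infinitely many individual ``eventually avoided'' hypotheses into a single uniform bound, which is exactly what is needed to conclude that the depths tend to infinity.
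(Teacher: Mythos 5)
Your proof is correct and rests on the same core observation as the paper's: once the reduction stops touching positions of depth at most $d$, the positions of depth $d+1$ form a fixed \emph{finite} set, so finitely many ``eventually avoided'' hypotheses can be combined into one uniform bound. The only difference is organisational — you run a direct induction on the depth $d$, while the paper argues by contradiction via a minimal depth that is hit cofinally — and both arguments share the same implicit appeal to the stability of the truncated term across limit ordinals.
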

\begin{proof}
  The ``only if'' direction is trivial. For the converse direction,
  suppose that $\len{\pi_\iota}$ does not tend to infinity as $\iota$
  approaches $\lambda$. That is, there is some depth $d \in \nat$ such
  that there is no upper bound on the indices of reduction steps
  taking place at depth $d$. Let $d^*$ be the minimal such depth. That
  is, there is some $\alpha < \lambda$ such that all reduction steps in
  $\segm{S}{\alpha}{\lambda}$ are at depth at least $d^*$, i.e.\
  $\len{\pi_\iota} \ge d^*$ holds for all $\alpha \le \iota <
  \lambda$. Of course, also in $\segm{S}{\alpha}{\lambda}$ the indices of
  steps at depth $d^*$ are not bounded from above.  As all reduction
  steps in $\segm{S}{\alpha}{\lambda}$ take place at depth $d^*$ or
  below, $\trunc{t_\iota}{d^*} = \trunc{t_{\iota'}}{d^*}$ holds for
  all $\alpha \le \iota,\iota' < \lambda$. That is, all terms in
  $\segm{S}{\alpha}{\lambda}$ have the same set of positions of length
  $d^*$. Let $P^* = \setcom{\pi \in \pos{t_n}}{\len{\pi} = d^*}$ be
  this set. Since there is no upper bound on the indices of steps in
  $\segm{S}{\alpha}{\lambda}$ taking place at a position in $P^*$, yet,
  $P^*$ is finite, there has to be some position $\pi^*\in P^*$ for
  which there is also no such upper bound. This contradicts the
  assumption that there is always such an upper bound.
\end{proof}

\begin{lem}[limit inferior of truncations]
  \label{lem:limInfTrunc}
  % limit inferior of truncations %
  Let $(t_\iota)_{\iota<\lambda}$ be a sequence in $\ipterms$ and
  $(d_\iota)_{\iota <\lambda}$ a sequence in $\nat$ such that
  $\lambda$ is a limit ordinal and $(d_\iota)_{\iota<\lambda}$ tends
  to infinity. Then $\liminf_{\iota \limto \lambda} t_\iota =
  \liminf_{\iota \limto \lambda} \trunc{t_\iota}{d_\iota}$.
\end{lem}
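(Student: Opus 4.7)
\medskip

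My plan is to prove the two inequalities $\liminf_{\iota \limto \lambda} \trunc{t_\iota}{d_\iota} \lebot \liminf_{\iota \limto \lambda} t_\iota$ and $\liminf_{\iota \limto \lambda} t_\iota \lebot \liminf_{\iota \limto \lambda} \trunc{t_\iota}{d_\iota}$ separately, and then conclude equality from antisymmetry of $\lebot$. Abbreviate $s_\beta = \Glbbot_{\beta \le \iota < \lambda} t_\iota$ and $s'_\beta = \Glbbot_{\beta \le \iota < \lambda} \trunc{t_\iota}{d_\iota}$, so the two limit inferiors are $\Lubbot_{\beta<\lambda} s_\beta$ and $\Lubbot_{\beta<\lambda} s'_\beta$.

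The first inequality is immediate from Proposition~\ref{prop:trunc}: since $\trunc{t_\iota}{d_\iota} \lebot t_\iota$ for every $\iota$, monotonicity of $\Glbbot$ and $\Lubbot$ gives $s'_\beta \lebot s_\beta$ for every $\beta$ and hence $\Lubbot_\beta s'_\beta \lebot \Lubbot_\beta s_\beta$.

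For the harder direction I would argue position by position. Fix $\beta < \lambda$ and a position $\pi \in \posNonBot{s_\beta}$ with $s_\beta(\pi) = f \in \Sigma \cup \calV$; every prefix $\pi'$ of $\pi$ is then also in $\posNonBot{s_\beta}$, and the alternative characterisation of $\lebot$ yields $t_\iota(\pi') = s_\beta(\pi')$ for all $\beta \le \iota < \lambda$ and all prefixes $\pi' \le \pi$. Since $(d_\iota)$ tends to infinity, choose $\gamma \ge \beta$ such that $d_\iota > \len\pi$ for every $\iota \ge \gamma$. Then $\trunc{t_\iota}{d_\iota}(\pi') = t_\iota(\pi') = s_\beta(\pi')$ for all such $\iota$ and all prefixes $\pi' \le \pi$. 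Applying Lemma~\ref{lem:glbbot} to the set $\set{\trunc{t_\iota}{d_\iota} \mid \gamma \le \iota < \lambda}$ with the prefix-closed set $P = \setcom{\pi'}{\pi' \le \pi}$ shows that $s'_\gamma$ agrees with these terms on $P$; in particular $s'_\gamma(\pi) = f$. By Lemma~\ref{lem:lubbot}, this forces $(\Lubbot_{\delta<\lambda} s'_\delta)(\pi) = f = s_\beta(\pi)$. Since this holds for every $\pi \in \posNonBot{s_\beta}$, we obtain $s_\beta \lebot \Lubbot_\delta s'_\delta$ via the alternative characterisation of $\lebot$, and taking the lub over $\beta$ yields $\Lubbot_\beta s_\beta \lebot \Lubbot_\delta s'_\delta$ as required.

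The only delicate step is the last one: one must choose the index $\gamma$ after which truncation leaves $\pi$ untouched \emph{per position $\pi$}, so a single $\gamma$ does not work uniformly. This is exactly why the argument proceeds point-wise and then uses Lemma~\ref{lem:lubbot} to assemble the information into a single lub; the fact that $(d_\iota)$ tends to infinity is what guarantees such a $\gamma$ exists for each individual $\pi$.
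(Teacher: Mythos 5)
Your proof is correct and follows essentially the same route as the paper's: the easy inequality from $\trunc{t_\iota}{d_\iota}\lebot t_\iota$, and the hard one argued position by position, using the fact that $(d_\iota)$ tending to infinity makes the truncation eventually transparent on all prefixes of a fixed $\pi$, then assembling via Lemma~\ref{lem:glbbot} and Lemma~\ref{lem:lubbot}. The only cosmetic difference is that you quantify over $\pi\in\posNonBot{s_\beta}$ for each fixed $\beta$ and take the lub at the end, whereas the paper starts from $\pi\in\posNonBot{\oh t}$ and extracts the index $\alpha$ via Lemma~\ref{lem:lubbot}; these are interchangeable.
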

\begin{proof}
  Let $\ol t = \liminf_{\iota \limto \lambda}
  \trunc{t_\iota}{d_\iota}$ and $\oh t = \liminf_{\iota \limto
    \lambda} t_\iota$. Since, according to
  Proposition~\ref{prop:trunc}, $\trunc{t_\iota}{d_\iota} \lebot
  t_\iota$ for each $\iota < \lambda$, we have that $\ol t \lebot \oh
  t$. Thus, it remains to be shown that also $\oh t \lebot \ol t$
  holds. That is, we have to show that $\oh t(\pi) = \ol t(\pi)$ holds
  for all $\pi \in \posNonBot{\oh t}$.

  Let $\pi \in \posNonBot{\oh t}$. That is, $\oh t(\pi) = f \neq
  \bot$. Hence, by Lemma~\ref{lem:lubbot}, there is some $\alpha <
  \lambda$ with $(\Glbbot_{\alpha\le\iota<\lambda} t_\iota)(\pi) =
  f$. Let $P = \setcom{\pi'}{\pi' \le \pi}$ be the set of all prefixes
  of $\pi$. Note that $\Glbbot_{\alpha\le\iota<\lambda} t_\iota \lebot
  t_\gamma$ for all $\alpha \le \gamma < \lambda$. Hence,
  $\Glbbot_{\alpha\le\iota<\lambda} t_\iota$ and $t_\gamma$ coincide
  in all occurrences in $P$ for all $\alpha \le \gamma <
  \lambda$. Because $(d_\iota)_{\iota < \lambda}$ tends to infinity,
  there is some $\alpha \le \beta < \lambda$ such that $d_\gamma >
  \len{\pi}$ for all $\beta \le \gamma < \lambda$. Consequently, since
  $\trunc{t_\gamma}{d_\gamma}$ and $t_\gamma$ coincide in all
  occurrences of length smaller than $d_\gamma$ for all $\gamma <
  \lambda$, we have that $\trunc{t_\gamma}{d_\gamma}$ and $t_\gamma$
  coincide in all occurrences in $P$ for all $\beta \le \gamma <
  \lambda$. Hence, $\trunc{t_\gamma}{d_\gamma}$ and
  $\Glbbot_{\alpha\le\iota<\lambda} t_\iota$ coincide in all
  occurrences in $P$ for all $\beta \le \gamma < \lambda$. Hence,
  according to Lemma~\ref{lem:glbbot},
  $\Glbbot_{\alpha\le\iota<\lambda} t_\iota$ and
  $\Glbbot_{\beta\le\iota<\lambda} \trunc{t_\iota}{d_\iota}$ coincide
  in all occurrences in $P$. Particularly, it holds that
  $(\Glbbot_{\beta\le\iota<\lambda} \trunc{t_\iota}{d_\iota})(\pi) =
  f$ which in turn implies by Lemma~\ref{lem:lubbot} that $\ol t(\pi)
  = f$.
\end{proof}

We now can prove the counterpart of Theorem~\ref{thr:weakExt} for
strong convergences:
\begin{thm}[total strong $\prs$-convergence = strong
  $\mrs$-convergence]
  \label{thr:strongExt}
  For every reduction $S$ in a TRS the following equivalences hold:
%  \begin{center}
%    \begin{inparaenum}[(i)]
  \begin{enumerate}[label=(\roman*)]
    \item $S\fcolon s \pacont$ is total iff $S\fcolon s \macont$,
      and \label{item:strongExtI}%
      \quad%
    \item $S\fcolon s \pato t$ is total iff $S\fcolon s \mato
      t$. \label{item:strongExtII}
  \end{enumerate}
%    \end{inparaenum}
%  \end{center}
\end{thm}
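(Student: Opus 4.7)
The plan is to reduce both equivalences to a single analysis at a limit ordinal $\lambda$, where the metric definition demands $\lim_{\iota\limto\lambda} t_\iota = t_\lambda$ together with $(\len{\pi_\iota})_{\iota<\lambda}$ tending to infinity, and the partial order definition demands $\liminf_{\iota\limto\lambda} c_\iota = t_\lambda$. The crucial observation is that since $c_\iota = \substAtPos{t_\iota}{\pi_\iota}{\bot}$ differs from $t_\iota$ only at positions at or below $\pi_\iota$, we have $\trunc{c_\iota}{\len{\pi_\iota}} = \trunc{t_\iota}{\len{\pi_\iota}}$. Hence, whenever contraction depths tend to infinity, Lemma~\ref{lem:limInfTrunc} gives
\[
\liminf_{\iota \limto \lambda} c_\iota \;=\; \liminf_{\iota \limto \lambda} \trunc{c_\iota}{\len{\pi_\iota}} \;=\; \liminf_{\iota \limto \lambda} \trunc{t_\iota}{\len{\pi_\iota}} \;=\; \liminf_{\iota \limto \lambda} t_\iota,
\]
bridging the two models at each limit.

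For the direction from strong $\mrs$ to total strong $\prs$, reductions in a TRS over $\Sigma$ automatically consist of total terms, so totality is automatic. At each proper limit $\lambda$, strong $\mrs$-continuity gives $\lim t_\iota = t_\lambda$ (total) and depths tending to infinity, so by Proposition~\ref{prop:poMetric} we get $\liminf t_\iota = \lim t_\iota = t_\lambda$, and by the displayed equation above also $\liminf c_\iota = t_\lambda$. For (ii) this is applied additionally at $\lambda = \alpha$ when $\alpha$ is a limit; for closed $S$ both notions simply demand $t = t_{\alpha}$ (or $t_{\alpha'+1}$) as the last term, so the equivalence is immediate.

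For the converse direction, assume $S\fcolon s \pacont$ (respectively $s \pato t$) is total. Then for each limit $\lambda < \alpha$ the prefix $\prefix{S}{\lambda}$ is a total strongly $\prs$-convergent reduction to the total term $t_\lambda$, so by Lemma~\ref{lem:totalRed} it has no volatile position. By Lemma~\ref{lem:strongConvPos}, $(\len{\pi_\iota})_{\iota<\lambda}$ tends to infinity, so the displayed equation above applies and $\liminf t_\iota = \liminf c_\iota = t_\lambda$. Since $t_\lambda$ is total, Proposition~\ref{prop:poMetric} yields $\lim t_\iota = \liminf t_\iota = t_\lambda$, establishing strong $\mrs$-continuity. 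For (ii), the same argument is applied at $\lambda = \alpha$ (if open) using that $t$ is total, while the closed case is again immediate.

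The main obstacle is handling the continuity/convergence distinction: Lemma~\ref{lem:totalRed} is stated for convergent reductions, so to apply it inside the continuity case we rely on the fact that in the partial order model every continuous reduction is convergent, which lets us view each proper limit prefix $\prefix{S}{\lambda}$ as a total $\prs$-converging reduction to $t_\lambda$. Everything else is then a mechanical combination of Proposition~\ref{prop:poMetric}, Lemma~\ref{lem:limInfTrunc}, Lemma~\ref{lem:strongConvPos} and Lemma~\ref{lem:totalRed}.
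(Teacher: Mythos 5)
Your proof is correct and rests on the same key ingredients as the paper's: Proposition~\ref{prop:poMetric}, the identity $\trunc{c_\iota}{\len{\pi_\iota}} = \trunc{t_\iota}{\len{\pi_\iota}}$ fed into Lemma~\ref{lem:limInfTrunc}, and Lemma~\ref{lem:strongConvPos} together with the volatility results. The architecture differs in two ways. First, the paper argues by transfinite induction on $\len{S}$, using Remarks~\ref{rem:pcont} and~\ref{rem:mcont} to pass between continuity of $S$ and convergence of its prefixes; you instead exploit that both strong continuity notions are, by definition, conjunctions of per-limit-ordinal conditions, so a single equivalence at each limit $\lambda$ suffices. (Your appeal to Lemma~\ref{lem:totalRed} at $\lambda$ still requires $\prefix{S}{\lambda}$ to be a convergent reduction ending in $t_\lambda$, hence the hypotheses at all smaller limits; since full continuity of $S$ is assumed, this is harmless.) Second, in the direction from total $\prs$-convergence to $\mrs$-convergence the paper first obtains $\liminf_{\iota\limto\lambda} t_\iota = t_\lambda$ from $\liminf_{\iota\limto\lambda} c_\iota \lebot \liminf_{\iota\limto\lambda} t_\iota$ and the maximality of total terms under $\lebot$, and only then derives the depth condition by contradiction via Lemmas~\ref{lem:strongConvPos} and~\ref{lem:botLimRed}; you derive the depth condition first (totality $\Rightarrow$ no volatile positions $\Rightarrow$ depths tend to infinity) and then reuse the truncation identity, which makes the two directions symmetric and avoids the contradiction argument. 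Both routes are sound; yours is the more uniform, the paper's successor/limit case split is the more routine to check. The one point worth making explicit in your write-up is that for ``$\mrs$ implies total $\prs$'' the totality of the terms of $S$ is precisely Fact~\ref{fact:step}, and the totality of the final term of an open reduction comes from $\lim_{\iota\limto\alpha} t_\iota$ lying in $\iterms$.
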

\proof
  It suffices to only prove (\ref{item:strongExtII}) since
  (\ref{item:strongExtI}) follows from (\ref{item:strongExtII})
  according to Remark~\ref{rem:pcont} resp.\ Remark~\ref{rem:mcont}.

  Let $S = (\phi_\iota\fcolon t_\iota \to[\pi_\iota,c_\iota] t_{\iota+1})_{\iota<\alpha}$ be a reduction in
  a TRS $\calR_\bot$. We continue the proof by induction on
  $\alpha$. The case $\alpha = 0$ is trivial. If $\alpha$ is a
  successor ordinal $\beta + 1$, we can reason as follows
  \begin{align*}
    S\fcolon t_0 \pato t_\alpha \text{ total } &\text{ iff }\;
    \prefix{S}{\beta}\fcolon t_0 \pato t_\beta \text{ and }
    t_\beta\to[\calR] t_\alpha \tag{Remark~\ref{rem:pcont},
      Fact~\ref{fact:step}}\\%
    &\text{ iff }\; \prefix{S}{\beta}\fcolon t_0 \mato t_\beta \text{
      and } t_\beta \to[\calR] t_\alpha \tag{ind.\ hyp.}\\%
    &\text{ iff }\; S\fcolon t_0 \mato t_\alpha
    \tag{Remark~\ref{rem:mcont}}
  \end{align*}

  Let $\alpha$ be a limit ordinal. At first consider the ``only if''
  direction. That is, we assume that $S\fcolon t_0 \pato t_\alpha$ is
  total. According to Remark~\ref{rem:pcont}, we have that
  $\prefix{S}{\beta}\fcolon t_0 \pato t_\beta$ for each $\beta <
  \alpha$. Applying the induction hypothesis yields
  $\prefix{S}{\beta}\fcolon t_0 \mato t_\beta$ for each $\beta <
  \alpha$. That is, following Remark~\ref{rem:mcont}, we have
  $S\fcolon t_0 \macont$. Since $c_\iota \lebot
  t_\iota$ for all $\iota < \alpha$, we have that $t_ \alpha =
  \liminf_{\iota \limto \alpha} c_\iota \lebot \liminf_{\iota \limto
    \alpha} t_\iota$. Because $t_\alpha$ is total and, therefore,
  maximal w.r.t.\ $\lebot$, we can conclude that $t_\alpha =
  \liminf_{\iota \limto \alpha} t_\iota$. According to
  Proposition~\ref{prop:poMetric}, this also means that $t_\alpha =
  \lim_{\iota \limto \alpha} t_\iota$. For strong $\mrs$-convergence
  it remains to be shown that $(\len{\pi_\iota})_{\iota<\alpha}$ tends
  to infinity. So let us assume that this is not the case. By
  Lemma~\ref{lem:strongConvPos}, this means that there is a position
  $\pi$ such that, for each $\beta < \alpha$, there is some $\beta \le
  \gamma < \alpha$ such that the step $\phi_\gamma$ takes place at
  position $\pi$. By Lemma~\ref{lem:botLimRed}, this contradicts the
  fact that $t_\alpha$ is a total term.

  Now consider the converse direction and assume that $S \fcolon t_0
  \mato t_\alpha$. Following Remark~\ref{rem:mcont} we obtain
  $\prefix{S}{\beta} \fcolon t_0 \mato t_\beta$ for all $\beta <
  \alpha$, to which we can apply the induction hypothesis in order to
  get $\prefix{S}{\beta} \fcolon t_0 \pato t_\beta$ for all $\beta <
  \alpha$ so that we have $S \fcolon t_0 \pacont$, according to
  Remark~\ref{rem:pcont}. It remains to be shown that $t_\alpha=
  \liminf_{\iota \limto \alpha} c_\iota$. Since $S$ strongly
  $\mrs$-converges to $t_\alpha$, we have that
  \begin{enumerate}[label=(\alph*)]
%  \begin{inparaenum}[(a)]
  \item $t_\alpha= \lim_{\iota
      \limto \alpha} t_\iota$, and that
    \label{item:strongExtA}
  \item the sequence of depths $(d_\iota = \len{\pi_\iota})_{\iota<\alpha}$ tends to
    infinity.
    \label{item:strongExtB}
  \end{enumerate}
%  \end{inparaenum}
  Using Proposition~\ref{prop:poMetric} we can deduce from
  (\ref{item:strongExtA}) that $t_\alpha= \liminf_{\iota \limto \alpha}
  t_\iota$. Due to (\ref{item:strongExtB}), we can apply
  Lemma~\ref{lem:limInfTrunc} to obtain
  \[
  \liminf_{\iota \limto \alpha} t_\iota = \liminf_{\iota \limto
    \alpha} \trunc{t_\iota}{d_\iota} \quad\text{ and }\quad \liminf_{\iota \limto
    \alpha} c_\iota = \liminf_{\iota \limto \alpha}
  \trunc{c_\iota}{d_\iota}.
  \]
  Since $\trunc{t_\iota}{d_\iota} = \trunc{c_\iota}{d_\iota}$ for all
  $\iota < \alpha$, we can conclude that
  \[
  t_\alpha = \liminf_{\iota \limto \alpha} t_\iota = \liminf_{\iota \limto
    \alpha} \trunc{t_\iota}{d_\iota} = \liminf_{\iota \limto \alpha}
  \trunc{c_\iota}{d_\iota} = \liminf_{\iota \limto \alpha} c_\iota.
  \eqno{\qEd}\]

The main result of this section is that we do not loose anything when
switching from the metric model to the partial order model of
infinitary term rewriting. Restricted to the domain of the metric
model, i.e.\ total terms, both models coincide in the strongest
possible sense as Theorem~\ref{thr:weakExt} and
Theorem~\ref{thr:strongExt} confirm.

At the same time, however, the partial order model provides more
structure. Whenever the metric model can only conclude divergence, the
partial order model can qualify the degree of divergence. If a
reduction $\prs$-converges to $\bot$, it can be considered completely
divergent. If it $\prs$-converges to a term that only contains $\bot$
as proper subterms, it can be recognised as being only partially
divergent with the diverging parts of the reduction indicated by
'$\bot$'s, whereas complete absence of '$\bot$'s then indicates
complete convergence.

In the rest of this paper we will put our focus on strong
convergence. Theorem~\ref{thr:strongExt} will be one of the central
tools in Section~\ref{sec:relation-bohm-trees} where we shall discover
that Böhm-reachability coincides with strong $\prs$-reachability in
orthogonal systems. The other crucial tool that we will leverage is
the existence and uniqueness of complete developments. This is the
subject of the subsequent section.

\section{Strongly \texorpdfstring{$\prs$}{p}-Converging Complete Developments}
\label{sec:compl-devel}

The purpose of this section is to establish a theory of residuals and
complete developments in the setting of strongly $\prs$-convergent
reductions. Intuitively speaking, the residuals of a set of redexes
are the remains of this set of redexes after a reduction, and a
complete development of a set of redexes is a reduction which only
contracts residuals of these redexes and ends in a term with no
residuals.

Complete developments are a well-known tool for proving (finitary)
confluence of orthogonal systems \cite{terese03book}. It has also been
lifted to the setting of strongly $\mrs$-convergent reductions in
order to establish (restricted forms of) infinitary confluence of
orthogonal systems \cite{kennaway95ic}. As we have seen in
Example~\ref{ex:mconfl}, $\mrs$-convergence in general does not have
this property.

After introducing residuals and complete developments in
Section~\ref{sec:residuals}, we will show in
Section~\ref{sec:complete-development} resp.\
Section~\ref{sec:uniqueness} that complete developments do always
exist and that their final terms are uniquely determined. We then use
this in Section~\ref{sec:results} to show the Infinitary Strip Lemma
for strongly $\prs$-converging reductions which is a crucial tool for
proving our main result in Section~\ref{sec:relation-bohm-trees}.

\subsection{Residuals}
\label{sec:residuals}

At first we need to formalise the notion of residuals. It is virtually
equivalent to the definition for strongly $\mrs$-convergent reduction
by Kennaway et al.\ \cite{kennaway95ic}:
\begin{defi}[descendants, residuals]
  % generalisation of [kennaway95ic, Def. 4.1]
  \label{def:desc}
  % descendants, residuals %
  Let $\calR$ be a TRS, $S\fcolon t_0 \pto{\alpha}[\calR] t_\alpha$,
  and $U \subseteq \posNonBot{t_0}$. The \emph{descendants} of $U$ by
  $S$, denoted $\dEsc{U}{S}$, is the set of positions in $t_\alpha$
  inductively defined as follows:
  \begin{enumerate}[label=(\alph*)]
  \item If $\alpha = 0$, then $\dEsc{U}{S} = U$.
    \label{item:descA}
  \item If $\alpha = 1$, i.e.\ $S\fcolon t_0 \to[\pi,\rho] t_1$ for
    some $\rho\fcolon l \to r$, take any $u\in U$ and define the set
    $R_u$ as follows: If $\pi \not\le u$, then $R_u = \set{u}$. If $u$
    is in the pattern of the $\rho$-redex, i.e.\ $u = \pi\concat\pi'$
    with $\pi' \in \posFun{l}$, then $R_u = \emptyset$. Otherwise,
    i.e.\ if $u = \pi \concat w \concat x$, with $\atPos{l}{w} \in
    \calV$, then $R_u = \setcom{\pi \concat w' \concat
      x}{\atPos{r}{w'} = \atPos{l}{w}}$. Define $\dEsc{U}{S} =
    \bigcup_{u \in U} R_u$.
    \label{item:descB}
  \item If $\alpha = \beta + 1$, then $\dEsc{U}{S} =
    \dEsc{(\dEsc{U}{\prefix{S}{\beta}})}{\segm{S}{\beta}{\alpha}}$
    \label{item:descC}
  \item If $\alpha$ is a limit ordinal, then \quad
    $\dEsc{U}{S} = \posNonBot{t_\alpha} \cap \liminf_{\iota \limto \alpha}
    \dEsc{U}{\prefix{S}{\iota}}$
    \\
    That is, $u \in \dEsc{U}{S} \quad \text{ iff } \quad u \in
    \posNonBot{t_\alpha} \text{ and } \exists \beta < \alpha \forall
    \beta \le \iota < \alpha\fcolon u \in
    \dEsc{U}{\prefix{S}{\iota}}$
    \label{item:descD}
  \end{enumerate}
  If, in particular, $U$ is a set of redex occurrences, then
  $\dEsc{U}{S}$ is also called the set of \emph{residuals} of $U$ by
  $S$. Moreover, by abuse of notation, we write $\dEsc{u}{S}$ instead
  of $\dEsc{\set{u}}{S}$.
\end{defi}
Clauses (\ref{item:descA}), (\ref{item:descB}) and (\ref{item:descC})
are as in the finitary setting. Clause (\ref{item:descD}) lifts the
definition to the infinitary setting. However, the only difference to
the definition of Kennaway et al.\ is, that we consider partial terms
here. Yet, for technical reasons, the notion of descendants has to be
restricted to non-$\bot$ occurrences. Since $\bot$ cannot be a redex,
this is not a restriction for residuals, though.

\begin{rem}
  \label{rem:desc}
  One can easily see that the descendants of a set of
  non-$\bot$-occurrences is again a set of non-$\bot$-occurrences. The
  restriction to non-$\bot$-occurrences has to be made explicit for
  the case of open reductions. In fact, without this explicit
  restriction the definition would yield descendants which might not
  even be occurrences in the final term $t_\alpha$ of the
  reduction. For example, consider the system with the single rule
  $f(x) \to x$ and the strongly $\prs$-convergent reduction
  \[
  S\fcolon f^\omega \to f^\omega \to\; \dots \;\bot
  \]
  in which each reduction step contracts the redex at the root of
  $f^\omega$. Consider the set $U =\set{\emptyseq,
    \seq{0},\seq{0,0},\seq{0,0,0},\dots}$ of all positions in
  $t^\omega$. Without the abovementioned restriction, the descendants
  of $U$ by $S$ would be $U$ itself as the descendants of $U$ by each
  proper prefix of $S$ is also $U$. However, none of the positions
  $\seq{0},\seq{0,0},\seq{0,0,0},\dots \in U$ is even a position in
  the final term $\bot$. The position $\emptyseq \in U$ occurs in
  $\bot$, but only as a $\bot$-occurrence. With the restriction to
  non-$\bot$-occurrences we indeed get the expected result
  $\dEsc{U}{S} = \emptyset$.
\end{rem}

The definition of descendants of open reductions is quite subtle which
makes it fairly cumbersome to use in proofs. The lemma below
establishes an alternative characterisation which will turn out to be
useful in later proofs:
\begin{lem}[descendants of open reductions]
  \label{lem:descLimRed}
  % descendants of open reductions %
  Let $\calR$ be a TRS, $S\fcolon s \pto{\lambda}[\calR] t$ and $U
  \subseteq \posNonBot{s}$, with $\lambda$ a limit ordinal and $S =
  ({t_\iota\to[\pi_\iota,c_\iota]t_{\iota + 1}})_{\iota <
    \lambda}$. Then it holds that for each position $\pi$
  \[
  \pi \in \dEsc{U}{S} \quad \text{iff} \quad \text{there is some } \beta <
  \lambda \text{ with } \pi \in \dEsc{U}{\prefix{S}{\beta}} \text{
    and }\forall \beta \le \iota < \lambda\;\; \pi_\iota \not\le \pi.
  \]
\end{lem}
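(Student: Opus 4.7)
The plan is to show the two directions separately, with Lemma~\ref{lem:nonBotLimRed} doing most of the heavy lifting in both.

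For the ``only if'' direction, suppose $\pi \in \dEsc{U}{S}$. Unfolding clause~(\ref{item:descD}) of Definition~\ref{def:desc}, we have $\pi \in \posNonBot{t_\lambda}$ and there is some $\beta_0 < \lambda$ with $\pi \in \dEsc{U}{\prefix{S}{\iota}}$ for all $\beta_0 \le \iota < \lambda$. Applying Lemma~\ref{lem:nonBotLimRed} (equivalence of \ref{item:nonBotLimRed1} and \ref{item:nonBotLimRed3}) to the fact that $\pi \in \posNonBot{t_\lambda}$ yields an $\alpha < \lambda$ with $\pi_\iota \not\le \pi$ for all $\alpha \le \iota < \lambda$. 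Choosing $\beta < \lambda$ with $\beta \ge \max(\alpha,\beta_0)$ (which exists because $\lambda$ is a limit ordinal) gives the desired witness.

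For the ``if'' direction, suppose some $\beta < \lambda$ satisfies $\pi \in \dEsc{U}{\prefix{S}{\beta}}$ and $\pi_\iota \not\le \pi$ for all $\beta \le \iota < \lambda$. By Remark~\ref{rem:desc}, $\dEsc{U}{\prefix{S}{\beta}} \subseteq \posNonBot{t_\beta}$, so $t_\beta(\pi) \neq \bot$. Combining this with $\pi_\iota \not\le \pi$ for all $\beta \le \iota < \lambda$, the implication from (\ref{item:nonBotLimRed4}) to (\ref{item:nonBotLimRed1}) in Lemma~\ref{lem:nonBotLimRed} gives $\pi \in \posNonBot{t_\lambda}$. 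It remains to establish that $\pi \in \dEsc{U}{\prefix{S}{\iota}}$ for all $\beta \le \iota < \lambda$, for which we proceed by transfinite induction on $\iota$.

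The base case $\iota = \beta$ is by assumption. For a successor $\iota = \gamma + 1$ with $\beta \le \gamma$, the induction hypothesis gives $\pi \in \dEsc{U}{\prefix{S}{\gamma}}$ and the step at $\gamma$ takes place at $\pi_\gamma \not\le \pi$; clause~(\ref{item:descB}) of Definition~\ref{def:desc} (applied via clause~(\ref{item:descC})) then shows $\pi \in \dEsc{U}{\prefix{S}{\gamma+1}}$. For a limit $\iota$ with $\beta < \iota < \lambda$, the induction hypothesis gives $\pi \in \dEsc{U}{\prefix{S}{\gamma}}$ for all $\beta \le \gamma < \iota$, so the ``$\liminf$'' condition of clause~(\ref{item:descD}) holds with witness $\beta$; and the same application of Lemma~\ref{lem:nonBotLimRed} as above (now restricted to the prefix of length $\iota$, which also $\prs$-converges to $t_\iota$) shows $\pi \in \posNonBot{t_\iota}$. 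Hence $\pi \in \dEsc{U}{\prefix{S}{\iota}}$, completing the induction and thus the proof. The main subtlety is the limit case of this induction, which requires reapplying Lemma~\ref{lem:nonBotLimRed} to each intermediate prefix rather than just once to $S$ itself.
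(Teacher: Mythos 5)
Your proposal is correct and follows essentially the same route as the paper: the ``only if'' direction unfolds clause~(\ref{item:descD}) and invokes Lemma~\ref{lem:nonBotLimRed} to bound the steps at or above $\pi$, and the ``if'' direction proves by transfinite induction that $\pi$ remains a descendant along every prefix, using Lemma~\ref{lem:nonBotLimRed} again at limit stages to verify the $\posNonBot{\cdot}$ side condition. The only cosmetic difference is that the paper runs the induction up to and including $\lambda$ itself, whereas you treat $\lambda$ separately first; the content is identical.
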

\begin{proof}
  \def\claima{(\ref{eq:descLimRed1})}
  \def\claimb{(\ref{eq:descLimRed2})}
  We first prove the ``only if'' direction. To this end, assume that
  $\pi \in \dEsc{U}{S}$. Hence, it holds that
  \begin{gather*}
    \pi \in \posNonBot{t} \text{ and there is some } \gamma_1 < \lambda
    \text{ such that } \pi \in \dEsc{U}{\prefix{S}{\iota}}
    \text{ for all } \gamma_1 \le \iota < \lambda \tag{1}
    \label{eq:descLimRed1}
  \end{gather*}
    Particularly, we have that $t(\pi) \neq \bot$. Applying
  Lemma~\ref{lem:nonBotLimRed} then yields that
  \begin{gather*}
    \text{there is some } \gamma_2 < \lambda \text{ such that }
    \pi_\iota \not\le \pi \text{ for all }  \gamma_2 \le \iota < \lambda
    \tag{2}
    \label{eq:descLimRed2}
  \end{gather*}
  Now take $\beta = \max \set{\gamma_1,\gamma_2}$. Then it holds that
  $\pi \in \dEsc{U}{\prefix{S}{\beta}}$ and that $\pi_\iota
  \not\le \pi$ for all $\beta \le \iota < \lambda$ due to \claima{}
  and \claimb{}, respectively.

  Next, consider the converse direction of the statement: Let $\beta <
  \lambda$ be such that $\pi \in \dEsc{U}{\prefix{S}{\beta}}$ and
  $\pi_\iota \not\le \pi$ for all $\beta \le \iota < \lambda$. We will
  show that $\pi \in \dEsc{U}{S}$ by proving the stronger statement
  that $\pi \in \dEsc{U}{\prefix{S}{\gamma}}$ for all $\beta \le
  \gamma \le \lambda$. We do this by induction on $\gamma$.
  
  For $\gamma = \beta$, this is trivial. Let $\gamma = \gamma' + 1 >
  \beta$. Note that, by definition, $\dEsc{U}{\prefix{S}{\gamma}} =
  \dEsc{\left(\dEsc{U}{\prefix{S}{\gamma'}}\right)}
  {\segm{S}{\gamma'}{\gamma}}$. Hence, since for the $\gamma'$-th step
  we have, by assumption, $\pi_{\gamma'} \not\le \pi$ and for the
  preceding reduction we have, by induction hypothesis, that $\pi \in
  \dEsc{U}{\prefix{S}{\gamma'}}$, we can conclude that $\pi \in
  \dEsc{U}{\prefix{S}{\gamma}}$.

  Let $\gamma > \beta$ be a limit ordinal. By induction hypothesis, we
  have that $\pi \in \dEsc{U}{\prefix{S}{\iota}}$ for each $\beta \le
  \iota < \gamma$. Particularly, this implies that $\pi \in
  \posNonBot{t_\beta}$. Together with the assumption that $\pi_\iota
  \not\le \pi$ for all $\beta \le \iota < \gamma$, this yields that
  $\pi \in \posNonBot{t_\gamma}$ according to
  Lemma~\ref{lem:nonBotLimRed}. Hence, $\pi \in
  \dEsc{U}{\prefix{S}{\gamma}}$.
\end{proof}

The following lemma confirms the expected monotonicity of descendants:
\begin{lem}[monotonicity of descendants]
  \label{lem:descMon}
  % monotonicity of descendants %
  Let $\calR$ be a TRS, $S\fcolon s \pato[\calR] t$ and $U,V \subseteq
  \posNonBot{s}$. If $U\subseteq V$, then $\dEsc{U}{S} \subseteq
  \dEsc{V}{S}$.
\end{lem}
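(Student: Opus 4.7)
The plan is to prove the statement by transfinite induction on the length $\alpha$ of $S$, following the four cases of Definition~\ref{def:desc}. In each case the argument is essentially routine monotonicity, since descendants of a union of sets are the union of descendants; the only mildly nontrivial case is the limit case, where we need to use the alternative characterisation provided by Lemma~\ref{lem:descLimRed}.

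First I would dispatch the base case $\alpha = 0$ by noting that $\dEsc{U}{S} = U \subseteq V = \dEsc{V}{S}$. For the single-step case $\alpha = 1$, since $\dEsc{U}{S} = \bigcup_{u\in U} R_u$ and the set $R_u$ depends only on the individual position $u$ and on the contracted rule/redex, the inclusion $U\subseteq V$ immediately gives $\bigcup_{u \in U} R_u \subseteq \bigcup_{u \in V} R_u$. The successor case $\alpha = \beta + 1$ then follows by applying the induction hypothesis to $\prefix{S}{\beta}$ to obtain $\dEsc{U}{\prefix{S}{\beta}} \subseteq \dEsc{V}{\prefix{S}{\beta}}$, and then applying the single-step case to the final step $\segm{S}{\beta}{\alpha}$.

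For the limit case, I would use Lemma~\ref{lem:descLimRed}: if $\pi \in \dEsc{U}{S}$, there is some $\beta < \alpha$ with $\pi \in \dEsc{U}{\prefix{S}{\beta}}$ and $\pi_\iota \not\le \pi$ for all $\beta \le \iota < \alpha$. By the induction hypothesis applied to $\prefix{S}{\beta}$ we obtain $\pi \in \dEsc{V}{\prefix{S}{\beta}}$, and the second condition is unchanged, so the lemma in the reverse direction gives $\pi \in \dEsc{V}{S}$.

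I do not anticipate any real obstacle; the proof is essentially bookkeeping, with the main subtlety being the choice to use Lemma~\ref{lem:descLimRed} rather than unfolding the $\liminf$ in clause~(\ref{item:descD}) directly, which keeps the limit case as short and transparent as the other cases.
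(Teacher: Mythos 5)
Your proof is correct and takes the same route as the paper, which simply records this as a straightforward induction on the length of $S$; your elaboration of the four cases is sound, and in the limit case you could even avoid Lemma~\ref{lem:descLimRed} by applying the induction hypothesis directly inside the $\liminf$ of clause~(\ref{item:descD}), since both the membership condition and the restriction to $\posNonBot{t_\alpha}$ are monotone in $U$.
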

\begin{proof}
  Straightforward induction on the length of $S$.
\end{proof}

This lemma can be generalised such that we can see that descendants
are defined ``pointwise'':
\begin{prop}[pointwise definition of descendants]
  \label{prop:descPoint}
  % pointwise definition of descendants %
  Let $\calR$ be a TRS, $S\fcolon s \pato[\calR] t$ and $U \subseteq
  \posNonBot{s}$. Then it holds that $\dEsc{U}{S} = \bigcup_{u \in U}
  \dEsc{u}{S}$.
\end{prop}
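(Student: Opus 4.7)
My plan is to prove the two inclusions separately. The inclusion $\bigcup_{u \in U} \dEsc{u}{S} \subseteq \dEsc{U}{S}$ is an immediate consequence of Lemma~\ref{lem:descMon}: since $\set{u} \subseteq U$ for every $u \in U$, monotonicity gives $\dEsc{u}{S} \subseteq \dEsc{U}{S}$, and taking the union over $u$ preserves this inclusion. The interesting direction is $\dEsc{U}{S} \subseteq \bigcup_{u \in U} \dEsc{u}{S}$, which I would prove by transfinite induction on the length $\alpha$ of $S$.

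The base case $\alpha = 0$ is immediate from clause~\ref{item:descA} of Definition~\ref{def:desc}. The case $\alpha = 1$ follows directly from clause~\ref{item:descB}, which already defines $\dEsc{U}{S}$ as $\bigcup_{u\in U} R_u$, where $R_u = \dEsc{u}{S}$. For the successor case $\alpha = \beta + 1$, I would apply clause~\ref{item:descC} to rewrite $\dEsc{U}{S}$ as $\dEsc{(\dEsc{U}{\prefix{S}{\beta}})}{\segm{S}{\beta}{\alpha}}$, then successively apply the induction hypothesis to the prefix of length $\beta$ and the base case to the final single step.

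The main obstacle, as usual, is the limit case $\alpha = \lambda$, where the definition via $\liminf$ requires a careful unfolding. Here I would use the alternative characterisation from Lemma~\ref{lem:descLimRed}: if $\pi \in \dEsc{U}{S}$, then there is some $\beta < \lambda$ with $\pi \in \dEsc{U}{\prefix{S}{\beta}}$ and $\pi_\iota \not\le \pi$ for all $\beta \le \iota < \lambda$. Applying the induction hypothesis to the prefix $\prefix{S}{\beta}$ yields some $u \in U$ with $\pi \in \dEsc{u}{\prefix{S}{\beta}}$. Then, since the witness $\beta$ also satisfies the non-descending condition $\pi_\iota \not\le \pi$ for $\beta \le \iota < \lambda$, a second application of Lemma~\ref{lem:descLimRed} — this time to the singleton $\set{u}$ — gives $\pi \in \dEsc{u}{S}$, which is what we need.

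The delicate point in the limit step is making sure that the bound $\beta$ obtained from Lemma~\ref{lem:descLimRed} for the set $U$ can be reused as a witness for some individual $u \in U$. This works because the non-descending condition on $(\pi_\iota)_{\beta \le \iota < \lambda}$ depends only on $\pi$ and not on whether we track residuals of the whole set $U$ or of a single element, and because the inductive hypothesis applied at stage $\beta$ already hands us a suitable $u$ with $\pi \in \dEsc{u}{\prefix{S}{\beta}}$. No further bookkeeping (such as choosing a common bound across all $u\in U$) is required, which is what makes the argument go through cleanly.
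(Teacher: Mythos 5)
Your proposal is correct and follows essentially the same route as the paper's proof: the $\supseteq$ inclusion via Lemma~\ref{lem:descMon}, the successor case via clause~(c) of Definition~\ref{def:desc} and regrouping of unions, and the limit case via two applications of Lemma~\ref{lem:descLimRed} (once for $U$, once for the singleton $\set{u}$ with the same witness $\beta$). The only cosmetic difference is that the paper handles the successor case as a single equational chain establishing equality directly, whereas you split the equality into two inclusions up front; the substance is identical.
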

\begin{proof}
  Let $S = (t_\iota \to[\pi_\iota,c_\iota] t_{\iota + 1})_{\iota <
    \alpha}$. For $\alpha = 0$ and $\alpha = 1$, the statement is
  trivially true. If $\alpha = \alpha' + 1 > 1$, then abbreviate
  $\prefix{S}{\alpha'}$ and $\segm{S}{\alpha'}{\alpha}$ by $S_1$ and
  $S_2$, respectively, and reason as follows:
  \begin{align*}
    \dEsc{U}{S} & =\dEsc{(\dEsc{U}{S_1})}{S_2}
    \stackrel{IH}{=} \dEsc{\underbrace{(\bigcup_{u \in
          U} \overbrace{\dEsc{u}{S_1}}^{V_u})}_V}{S_2}
    \stackrel{IH}= \bigcup_{u\in V} \dEsc{u}{S_2}
    \\
    &= \bigcup_{u \in U} \bigcup_{v \in V_u}
    \dEsc{v}{S_2}
    \stackrel{IH}= \bigcup_{u\in U} \dEsc{V_u}{S_2}
    = \bigcup_{u\in U} \dEsc{(\dEsc{u}{S_1})}{S_2}
    = \bigcup_{u \in U} \dEsc{u}{S}
  \end{align*}

  Let $\alpha$ be a limit ordinal. The ``$\supseteq$'' direction of
  the equation follows from Lemma~\ref{lem:descMon}. For the converse
  direction, assume that $\pi \in \dEsc{U}{S}$. By
  Lemma~\ref{lem:descLimRed}, there is some $\beta < \alpha$ such that
  $\pi_\iota \not\le \pi$ for all $\beta \le \iota < \alpha$ and $\pi
  \in \dEsc{U}{\prefix{S}{\beta}}$. Applying the induction hypothesis
  yields that $\pi \in \bigcup_{u \in U}
  \dEsc{u}{\prefix{S}{\beta}}$, i.e.\ there is some $u^* \in U$ such
  that $\pi \in \dEsc{u^*}{\prefix{S}{\beta}}$. By employing
  Lemma~\ref{lem:descLimRed} again, we can conclude that $\pi \in
  \dEsc{u^*}{S}$ and, therefore, that $\pi \in \bigcup_{u \in U}
  \dEsc{u}{S}$.
\end{proof}

Note that the above proposition fails if we would include
$\bot$-occurrences in our definition of descendants: Reconsider the
example in Remark~\ref{rem:desc} and assume we would drop the
restriction to non-$\bot$-occurrences. Then the residuals
$\dEsc{u}{S}$ of each occurrence $u\in U$ would be empty, whereas the
residuals $\dEsc{U}{S}$ of all occurrences would be the root
occurrence $\seq{}$.

\begin{prop}[uniqueness of ancestors]
  \label{prop:descUnique}
  % uniqueness of descendants %
  Let $\calR$ be TRS, $S\fcolon s \pato[\calR] t$ and $U,V \subseteq
  \posNonBot{s}$. If $U \cap V = \emptyset$, then
  $\dEsc{U}{S}\cap\dEsc{V}{S} = \emptyset$.
\end{prop}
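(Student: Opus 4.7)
The plan is to perform transfinite induction on $\alpha = \len{S}$. Thanks to Proposition~\ref{prop:descPoint}, descendants distribute over unions, so it suffices to establish the pointwise version: for any two distinct $u, v \in \posNonBot{s}$, we have $\dEsc{u}{S} \cap \dEsc{v}{S} = \emptyset$. The base case $\alpha = 0$ is immediate, since descendants over an empty reduction act as the identity on singletons.

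For the single-step case $\alpha = 1$, with $S\fcolon s \to[\pi,l\to r] t$, I would argue by direct case analysis on the relative positions of $u$ and $v$ with respect to $\pi$. If neither lies below $\pi$, their $R$-sets are just themselves and remain distinct; if exactly one of them lies in the redex pattern, its $R$-set is empty and disjointness is trivial; if exactly one is disjoint from $\pi$, the other contributes only positions strictly below $\pi$. The essential case is when both $u = \pi \concat w_u \concat x_u$ and $v = \pi \concat w_v \concat x_v$ sit below variable positions of $l$; a common element of $R_u \cap R_v$ would have the form $\pi \concat w' \concat x_u = \pi \concat w'' \concat x_v$, forcing $x_u = x_v$ and (since $\atPos{r}{w'}$ is a single variable symbol) $\atPos{l}{w_u} = \atPos{l}{w_v}$. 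Invoking left-linearity, which is the standing assumption of this section on complete developments for orthogonal systems, this yields $w_u = w_v$, so $u = v$ as required.

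For a successor length $\alpha = \beta+1$, clause~(\ref{item:descC}) of Definition~\ref{def:desc} gives $\dEsc{U}{S} = \dEsc{(\dEsc{U}{\prefix{S}{\beta}})}{\segm{S}{\beta}{\alpha}}$, and analogously for $V$. The induction hypothesis applied to $\prefix{S}{\beta}$ yields $\dEsc{U}{\prefix{S}{\beta}} \cap \dEsc{V}{\prefix{S}{\beta}} = \emptyset$; then the single-step case, applied to $\segm{S}{\beta}{\alpha}$, propagates disjointness one step further.

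For a limit ordinal $\alpha$, suppose for contradiction that some $\pi$ lies in $\dEsc{U}{S} \cap \dEsc{V}{S}$. By Lemma~\ref{lem:descLimRed} there exist $\beta_U, \beta_V < \alpha$ with $\pi \in \dEsc{U}{\prefix{S}{\beta_U}}$ and $\pi \in \dEsc{V}{\prefix{S}{\beta_V}}$ and no reduction step at or above $\pi$ from the larger of these indices onward. Setting $\beta = \max\{\beta_U, \beta_V\}$, the forward-propagation argument used in the converse direction of Lemma~\ref{lem:descLimRed} shows that $\pi$ belongs to both $\dEsc{U}{\prefix{S}{\beta}}$ and $\dEsc{V}{\prefix{S}{\beta}}$, contradicting the induction hypothesis at stage $\beta < \alpha$. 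The main technical obstacle is the one-step case, where one must carefully unpack Definition~\ref{def:desc}~(\ref{item:descB}) and apply left-linearity; the successor and limit cases are routine once the bookkeeping of Lemma~\ref{lem:descLimRed} is available.
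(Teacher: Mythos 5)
Your proof is correct and follows essentially the same route as the paper's: both reduce to the pointwise statement via Proposition~\ref{prop:descPoint} and then perform transfinite induction on the length of $S$, with the limit case handled by pushing the two membership witnesses forward to a common prefix $\prefix{S}{\beta}$ and invoking the induction hypothesis there. The paper's proof is far terser --- it dismisses the successor and open cases with ``follows from the induction hypothesis'' and never spells out the one-step case --- so your case analysis of clause~(\ref{item:descB}) of Definition~\ref{def:desc} is a welcome elaboration rather than a deviation. Note in particular that your appeal to left-linearity is not optional: the proposition as printed assumes only an arbitrary TRS, yet for the non-left-linear rule $f(x,x) \to g(x)$ applied to $f(a,a)$ the disjoint occurrences $\seq{0}$ and $\seq{1}$ share the descendant $\seq{0}$, so the statement fails without that hypothesis. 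You have thereby surfaced an implicit assumption that the paper's own proof sketch glosses over; since the proposition is only ever used downstream for left-linear (indeed orthogonal) systems, this is a gap in the statement rather than in your argument.
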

\begin{proof}
  We will prove the contraposition of the statement. To this end,
  suppose that there is some occurrence $w \in \dEsc{U}{S} \cap
  \dEsc{V}{S}$. By Proposition~\ref{prop:descPoint}, there are
  occurrences $u \in U$ and $v \in V$ such that $w \in
  \dEsc{u}{S}\cap\dEsc{v}{S}$. We will show by induction on the length
  of $S$ that then $u = v$ and, therefore, $U \cap V
  \neq\emptyset$. If $S$ is empty, then this is trivial. If $S$ is of
  successor ordinal length or open, then $u=v$ follows from the
  induction hypothesis.
\end{proof}

\begin{rem}
  \label{rem:prsAncestor}
  The two propositions above imply that each descendant $u' \in
  \dEsc{U}{S}$ of a set $U$ of occurrences is the descendant of a
  uniquely determined occurrence $u \in U$, i.e.\ $u' \in \dEsc{u}{S}$
  for exactly one $u\in U$. This occurrence $u$ is also called the
  \emph{ancestor} of $u'$ by $S$.
\end{rem}

The following proposition confirms a property of descendants that one
expects intuitively: The descendants of descendants are again
descendants. That is, the concept of descendants is composable.
\begin{prop}[descendants of sequential reductions]
  \label{prop:descSeqRed}
  % descendants of sequential reductions %
  Let $\calR$ be a TRS, $S\fcolon t_0 \pato[\calR] t_1$,
  $T\fcolon t_1 \pato[\calR] t_2$, and $U \subseteq
  \posNonBot{t_0}$. Then $\dEsc{U}{S\concat T} =
  \dEsc{(\dEsc{U}{S})}{T}$.
\end{prop}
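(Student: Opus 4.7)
The plan is to prove the identity by transfinite induction on the length of $T$. When $\len{T} = 0$ the statement is immediate, since then $S\concat T = S$ and $\dEsc{(\dEsc{U}{S})}{\emptyseq} = \dEsc{U}{S}$. For the successor case, write $T = T'\concat \seq{\phi}$ with $\phi$ the final step; then clause~(\ref{item:descC}) of Definition~\ref{def:desc} unfolds both sides by one step, and the induction hypothesis applied to $T'$ gives $\dEsc{U}{S\concat T} = \dEsc{(\dEsc{U}{S\concat T'})}{\seq\phi} = \dEsc{(\dEsc{(\dEsc{U}{S})}{T'})}{\seq\phi} = \dEsc{(\dEsc{U}{S})}{T}$.

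The real work lies in the limit case, which I would handle via Lemma~\ref{lem:descLimRed}. Let $\len{T} = \lambda$ be a limit ordinal, $\alpha = \len{S}$, and observe that $\len{S\concat T} = \alpha + \lambda$ is then also a limit ordinal, so Lemma~\ref{lem:descLimRed} applies both to $T$ and to $S\concat T$. Re-indexing so that the steps of $T$ are written $(t_\iota \to[\pi_\iota] t_{\iota+1})_{\alpha \le \iota < \alpha + \lambda}$, a prefix $\prefix{T}{\beta}$ of $T$ corresponds to $\prefix{(S\concat T)}{\alpha + \beta}$ of $S\concat T$ via $S\concat \prefix{T}{\beta} = \prefix{(S\concat T)}{\alpha+\beta}$.

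For the inclusion $\dEsc{(\dEsc{U}{S})}{T} \subseteq \dEsc{U}{S\concat T}$: if $\pi \in \dEsc{(\dEsc{U}{S})}{T}$, Lemma~\ref{lem:descLimRed} produces some $\beta_0 < \lambda$ with $\pi \in \dEsc{(\dEsc{U}{S})}{\prefix{T}{\beta_0}}$ and $\pi_\iota \not\le \pi$ for all $\alpha + \beta_0 \le \iota < \alpha + \lambda$. The induction hypothesis rewrites this as $\pi \in \dEsc{U}{S\concat \prefix{T}{\beta_0}} = \dEsc{U}{\prefix{(S\concat T)}{\alpha + \beta_0}}$, and a second application of Lemma~\ref{lem:descLimRed}, now to $S\concat T$, yields $\pi \in \dEsc{U}{S\concat T}$.

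The converse direction is the main obstacle, because the witness $\gamma$ produced by Lemma~\ref{lem:descLimRed} for $\pi \in \dEsc{U}{S\concat T}$ may lie strictly below $\alpha$, i.e.\ inside the $S$-part, rather than inside the $T$-part where the induction hypothesis can be applied. I would resolve this with a short side-observation: whenever $\pi \in \dEsc{U}{\prefix{(S\concat T)}{\gamma}}$ and no step of $S\concat T$ at position $\le \pi$ occurs from index $\gamma$ onwards, one may replace $\gamma$ by any larger index $\gamma' \le \alpha + \lambda$ and still obtain $\pi \in \dEsc{U}{\prefix{(S\concat T)}{\gamma'}}$. This is proved by a straightforward inner transfinite induction on $\gamma'$, using clause~(\ref{item:descB}) of Definition~\ref{def:desc} in the successor step (the contracted position is not $\le \pi$, so $\pi$ survives as a descendant) and Lemma~\ref{lem:descLimRed} in the limit step. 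Using this side-observation I can lift $\gamma$ to $\alpha + \beta_0$ for some $\beta_0 < \lambda$, obtain $\pi \in \dEsc{U}{S \concat \prefix{T}{\beta_0}}$, apply the induction hypothesis to conclude $\pi \in \dEsc{(\dEsc{U}{S})}{\prefix{T}{\beta_0}}$, and finally invoke Lemma~\ref{lem:descLimRed} on $T$ together with the already-known absence of steps at $\le \pi$ after $\beta_0$ to deduce $\pi \in \dEsc{(\dEsc{U}{S})}{T}$.
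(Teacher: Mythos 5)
Your proof is correct and follows exactly the route the paper takes, which simply records ``straightforward induction on the length of $T$'' without further detail. You have correctly identified and resolved the one point that is not entirely immediate --- namely that in the limit case the witness index supplied by Lemma~\ref{lem:descLimRed} for $S\concat T$ may fall inside the $S$-part, which your monotonicity side-observation (itself a routine inner induction using clause~(\ref{item:descB}) and Lemma~\ref{lem:descLimRed}) handles properly.
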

\begin{proof}
  Straightforward proof by induction on the length of $T$.
\end{proof}
% \begin{proof}
%   We conduct the proof by induction on $\beta$. If $\beta = 0$, then
%   the statement is trivially true. Suppose that $\beta$ is a successor
%   ordinal $\beta' + 1$. Let $T_1$ and $T_2$ denote
%   $\prefix{T}{\beta'}$ and $\segm{T}{\beta'}{\beta}$,
%   respectively. Then we have the following equations:
%   \begin{align*}
%     \dEsc{U}{S\concat T} &=
%     \dEsc{U}{S\concat T_1\concat T_2} =
%     \dEsc{(\dEsc{U}{S\concat T_1})}{T_2}
%     \stackrel{IH}= \dEsc{(\dEsc{(\dEsc{U}{S})}{T_1})}{T_2}
%     = \dEsc{(\dEsc{U}{S})}{T_1\concat T_2} =
%     \dEsc{(\dEsc{U}{S})}{T}
%   \end{align*}
  
%   Suppose $\beta$ is a limit ordinal. Hence, also $\alpha + \beta$,
%   the length of $S\concat T$, is a limit ordinal. Therefore, we can
%   reason as follows:
%   \begin{align*}
%     u \in \dEsc{U}{S\concat T} \quad &\text{iff} \quad u \in
%     \posNonBot{t_2} \text{ and } \exists \gamma < \alpha + \beta \forall
%     \gamma \le \iota < \alpha + \beta\;\; u \in
%     \dEsc{U}{\prefix{(S\concat T)}{\iota}}
%     \\
%     &\text{iff} \quad u \in \posNonBot{t_2} \text{ and } \exists \gamma <
%     \beta \forall \gamma \le \iota < \beta\;\; u \in
%     \dEsc{U}{S\concat(\prefix{T}{\iota})}
%     \\
%     &\text{iff} \quad u \in \posNonBot{t_2} \text{ and } \exists \gamma <
%     \beta \forall \gamma \le \iota < \beta\;\; u \in
%     \dEsc{(\dEsc{U}{S})}{\prefix{T}{\iota}}
%     \tag{ind.\ hyp.}\\
%     &\text{iff} \quad u \in \dEsc{(\dEsc{U}{S})}{T}
%   \end{align*}
% \end{proof}
The following proposition confirms that the disjointness of
occurrences is propagated through their descendants:
\begin{prop}[disjoint descendants]
  \label{prop:disjDesc}
  % disjoint descendants %
  The descendants of a set of pairwise disjoint occurrences are
  pairwise disjoint as well.
\end{prop}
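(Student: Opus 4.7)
The plan is to prove the claim by transfinite induction on the length $\alpha$ of the underlying reduction $S\fcolon s \pato[\calR] t$. Fix a set $U \subseteq \posNonBot{s}$ of pairwise disjoint occurrences and take two distinct residuals $v_1, v_2 \in \dEsc{U}{S}$. By Proposition~\ref{prop:descPoint} together with the ancestor uniqueness noted in Remark~\ref{rem:prsAncestor}, each $v_i$ descends from a uniquely determined $u_i \in U$, where $u_1$ and $u_2$ are either equal or disjoint. The task is to show that $v_1$ and $v_2$ are disjoint in either situation.

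The core of the argument is the single-step case $\alpha = 1$, say $t_0 \to[\pi,\rho] t_1$ with $\rho\fcolon l \to r$. A case analysis on the defining clauses of $R_{u_i}$ from Definition~\ref{def:desc} handles it. If some $u_i$ lies at a function-symbol position of the redex pattern, then $R_{u_i} = \emptyset$ and nothing remains to show. If $\pi \not\le u_i$, then $R_{u_i} = \{u_i\}$, and the claim follows from disjointness of the ancestors together with the observation that every residual of an occurrence having $\pi$ as a prefix itself has $\pi$ as a prefix. In the remaining subcase both $u_i = \pi\concat w_i\concat x_i$ with $l(w_i)\in\calV$ and $v_i = \pi\concat w'_i\concat x_i$ for some $w'_i$ with $r(w'_i) = l(w_i)$. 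Either $w_1 \neq w_2$, so the leaves $w_1, w_2$ in $l$ are disjoint and, by left-linearity of $\rho$, carry distinct variables, making the leaves $w'_1, w'_2$ in $r$ disjoint; or $w_1 = w_2$, in which case disjointness of $u_1, u_2$ forces $x_1, x_2$ to be disjoint when $u_1 \neq u_2$, whereas $u_1 = u_2$ forces the leaves $w'_1 \neq w'_2$ in $r$ to be distinct and hence disjoint. In every instance $v_1$ and $v_2$ are disjoint.

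For the successor case $\alpha = \beta + 1 > 1$, Proposition~\ref{prop:descSeqRed} yields $\dEsc{U}{S} = \dEsc{\dEsc{U}{\prefix{S}{\beta}}}{\segm{S}{\beta}{\alpha}}$, so the induction hypothesis applied first to $\prefix{S}{\beta}$ and then to the final single-step reduction preserves pairwise disjointness. For a limit ordinal $\alpha$, applying Lemma~\ref{lem:descLimRed} to each singleton $\{u_i\}$ yields $\beta_i < \alpha$ with $v_i \in \dEsc{u_i}{\prefix{S}{\beta_i}}$ and $\pi_\iota \not\le v_i$ for all $\beta_i \le \iota < \alpha$. Setting $\beta = \max\{\beta_1, \beta_2\} < \alpha$ and invoking Proposition~\ref{prop:descSeqRed} together with the absence of steps at prefixes of $v_i$ in $\segm{S}{\beta_i}{\beta}$ gives $v_1, v_2 \in \dEsc{U}{\prefix{S}{\beta}}$. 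The induction hypothesis applied to $\prefix{S}{\beta}$ then delivers the disjointness of $v_1$ and $v_2$.

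The main obstacle is the single-step case, specifically the subcase of two distinct variable positions $w_1 \neq w_2$ in $l$: the left-linearity assumption, implicit in this section on complete developments in orthogonal systems, is indispensable there, for without it two disjoint occurrences descending into substituted copies of the same variable of $l$ would collapse into a common residual position in $r$, and the statement would fail.
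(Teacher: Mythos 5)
Your proof is correct and follows essentially the same route as the paper's: transfinite induction on the length of the reduction, with the limit case reduced to a common prefix $\prefix{S}{\beta}$ for $\beta = \max\set{\beta_1,\beta_2}$ (the paper obtains this directly from clause (d) of Definition~\ref{def:desc} rather than via Lemma~\ref{lem:descLimRed}, but the two are interchangeable here). You additionally spell out the single-step case that the paper dismisses as straightforward, and your observation that left-linearity is indispensable is accurate -- the statement as printed carries no such hypothesis, yet a rule like $f(x,x) \to x$ applied to $f(g(a),g(a))$ sends the disjoint occurrences $\seq{0}$ and $\seq{1,0}$ to the nested positions $\emptyseq$ and $\seq{0}$, so the blanket assumption of left-linearity in this section is genuinely doing work.
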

\begin{proof}
  Let $S\fcolon s \pto{\alpha} t$ and let $U$ be a set of pairwise
  disjoint occurrences in $s$. We show that $\dEsc{U}{S}$ is also a
  set of pairwise disjoint occurrences by induction on $\alpha$.

  For $\alpha$ being $0$, the statement is trivial, and, for $\alpha$
  being a successor ordinal, the statement follows straightforwardly
  from the induction hypothesis. Let $\alpha$ be limit ordinal and
  suppose that there are two occurrences $u,v \in \dEsc{U}{S}$ which
  are not disjoint. By definition, there are ordinals $\beta_1,\beta_2
  < \alpha$ such that $u \in \dEsc{U}{\prefix{S}{\iota}}$ for all
  $\beta_1\le\iota<\alpha$, and $v \in \dEsc{U}{\prefix{S}{\iota}}$
  for all $\beta_2\le\iota<\alpha$. Let $\beta =
  \max\set{\beta_1,\beta_2}$. Then we have that $u,v \in
  \dEsc{U}{\prefix{S}{\beta}}$. This, however, contradicts the
  induction hypothesis which, in particular, states that
  $\dEsc{U}{\prefix{S}{\beta}}$ is a set of pairwise disjoint
  occurrences.
\end{proof}

For the definition of complete developments it is important that the
descendants of redex occurrences are again redex occurrences:
\begin{prop}[residuals]
  \label{prop:residual}
  % disjoint redexes %
  Let $\calR$ be an orthogonal TRS, $S\fcolon s \pato[\calR] t$ and
  $U$ a set of redex occurrences in $s$. Then $\dEsc{U}{S}$ is a set
  of redex occurrences in $t$.
\end{prop}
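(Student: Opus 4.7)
The proof goes by transfinite induction on $\alpha = \len{S}$, and by Proposition~\ref{prop:descPoint} it suffices to show the claim for a single redex $u \in U$: every element of $\dEsc{u}{S}$ is a redex in $t$. The case $\alpha = 0$ is immediate; for $\alpha = 1$ the argument is the standard finitary one, in which non-overlapping rules out a contraction hitting a function symbol position of the redex pattern at $u$ from below, while left-linearity ensures that each position in $R_u$ described by Definition~\ref{def:desc}(b) carries the same rule as $u$. The successor case $\alpha = \beta + 1$ combines the induction hypothesis applied to $\prefix{S}{\beta}$ with the case $\alpha = 1$.

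The limit case is where the real work lies. Fix $u^* \in \dEsc{u}{S}$. By Lemma~\ref{lem:descLimRed}, there is a $\beta < \alpha$ with $u^* \in \dEsc{u}{\prefix{S}{\beta}}$ and $\pi_\iota \not\le u^*$ for every $\beta \le \iota < \alpha$. The outer induction hypothesis applied to $\prefix{S}{\beta}$ yields that $u^*$ is a redex in $t_\beta$; denote its rule by $\rho\fcolon l \to r$. Using Proposition~\ref{prop:descSeqRed} together with the reasoning of the $\alpha = 1$ case repeated, one sees that $u^*$ is a redex with the same rule $\rho$ in every intermediate $t_\gamma$, $\beta \le \gamma < \alpha$.

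It remains to lift this through the limit to $t = t_\alpha$. By left-linearity, it suffices to show that $t(u^* \concat \pi') = l(\pi')$ for every function symbol position $\pi' \in \posFun{l}$, since then the substitution at the variable positions of $l$ can simply be read off from $t$. For each such $\pi'$, we have $t_\beta(u^* \concat \pi') = l(\pi') \ne \bot$, so by Lemma~\ref{lem:nonBotLimRed} the required equality follows as soon as $\pi_\iota \not\le u^* \concat \pi'$ for all $\beta \le \iota < \alpha$. Suppose, towards a contradiction, that $\pi_\iota = u^* \concat \pi''$ for some non-empty $\pi''$ that is a prefix of $\pi'$. Since $\posFun{l}$ is closed under prefixes, $\pi''$ is a non-variable position of $l$. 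But $u^*$ is a $\rho$-redex and $\pi_\iota$ is a $\rho'$-redex in $t_\iota$ for some rule $\rho'\fcolon l' \to r'$, so $\atPos{l}{\pi''}$ and $l'$ unify at the non-empty position $\pi''$, contradicting non-overlapping.

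The principal technical difficulty is coupling the stability guarantee provided by Lemma~\ref{lem:nonBotLimRed} with the non-overlapping condition, so as to argue simultaneously that the rule at $u^*$ is preserved through every step after stage $\beta$ and that the entire skeleton of function symbol positions of the matched redex pattern survives the passage to the limit.
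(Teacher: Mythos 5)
Your proof is correct and follows essentially the same route as the paper's: Lemma~\ref{lem:descLimRed} supplies the stage $\beta$ past which no step occurs at or above the residual, non-overlappingness rules out steps landing strictly inside its pattern, and a limit argument transfers the matched pattern to $t_\alpha$. The only cosmetic difference is that the paper runs an explicit inner transfinite induction showing that the contexts $c_\iota$ restricted to the residual position remain instances of $l$ and concludes via Lemma~\ref{lem:glbbot}, whereas you invoke the outer induction hypothesis at intermediate stages and transfer each pattern position individually via Lemma~\ref{lem:nonBotLimRed}; the two come to the same thing.
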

\begin{proof}
  \def\claima{(\ref{eq:prsDisjRedex1})}
  \def\claimb{(\ref{eq:prsDisjRedex2})}
  \def\claimc{(\ref{eq:prsDisjRedex3})}
  \def\claimbp{(\ref{eq:prsDisjRedex2p})}
  \def\claimcp{(\ref{eq:prsDisjRedex3p})}
  Let $S = (t_\iota \to[\pi_\iota,c_\iota] t_{\iota + 1})_{\iota <
    \alpha}$. We proceed by induction on $\alpha$. For $\alpha$ being
  $0$, the statement is trivial, and, for $\alpha$ a successor
  ordinal, the statement follows straightforwardly from the induction
  hypothesis.

  So assume that $\alpha$ is a limit ordinal and that $\pi \in
  \dEsc{U}{S}$. We will show that $\atPos{t}{\pi}$ is a redex. From
  Lemma~\ref{lem:descLimRed} we obtain that
  \begin{gather*}
    \text{there is some } \beta < \alpha \text{ with } \pi \in
    \dEsc{U}{\prefix{S}{\beta}} \text{ and } \pi_\iota
    \not\le \pi \text{ for all } \beta \le \iota < \alpha.  \tag{1}
    \label{eq:prsDisjRedex1}
  \end{gather*}
  By applying the induction hypothesis, we get that $\pi$ is a redex
  occurrence in $t_\beta$. Hence, there is some rule $l\to r \in R$
  such that $\atPos{t_\beta}{\pi}$ is an instance of $l$.

  We continue this proof by showing the following stronger claim:
  \begin{align*}
    \text{for all } \beta \le \gamma \le \alpha
    &&&\atPos{t_\gamma}{\pi} \text{ is an instance of } l, \text{ and}
    \tag{2}
    \label{eq:prsDisjRedex2}
    \\
    &&&\atPos{c_\iota}{\pi} \text{ is an instance of } l \text{ for
      all } \beta \le \iota < \gamma \tag{3}
    \label{eq:prsDisjRedex3}
  \end{align*}
  For the special case $\gamma = \alpha$ the above claim \claimb{}
  implies that $\atPos{t}{\pi}$ is a redex.

  We proceed by an induction on $\gamma$. For $\gamma = \beta$, part
  \claimb{} of the claim has already been shown and \claimc{} is
  vacuously true. Let $\gamma = \gamma' + 1 > \beta$. According to the
  induction hypothesis, \claimb{} and \claimc{} hold for
  $\gamma'$. Hence, it remains to be shown that both
  $\atPos{t_\gamma}{\pi}$ and $\atPos{c_{\gamma'}}{\pi}$ are instances
  of $l$. At first consider $\atPos{c_{\gamma'}}{\pi}$. Recall that
  $c_{\gamma'} = \substAtPos{t_{\gamma'}}{\pi_{\gamma'}}{\bot}$. At
  first consider the case where $\pi$ and $\pi_{\gamma'}$ are
  disjoint. Then $\atPos{c_{\gamma'}}{\pi} =
  \atPos{t_{\gamma'}}{\pi}$. Since, by induction hypothesis,
  $\atPos{t_{\gamma'}}{\pi}$ is an instance of $l$, so is
  $\atPos{c_{\gamma'}}{\pi}$. Next, consider the case where $\pi$ and
  $\pi_{\gamma'}$ are not disjoint. Because of \claima{}, we then have
  that $\pi < \pi_{\gamma'}$, i.e.\ there is some non-empty $\pi'$
  with $\pi_{\gamma'} = \pi \concat \pi'$. Since $\calR$ is
  non-overlapping, $\pi'$ cannot be a position in the pattern of the
  redex $\atPos{t_{\gamma'}}{\pi}$ w.r.t.\ $l$. Therefore, also
  $\atPos{c_{\gamma'}}{\pi}$ is an instance of $l$. So in either case
  $\atPos{c_{\gamma'}}{\pi}$ is an instance of $l$. Since $c_{\gamma'}
  \lebot t_\gamma$, also $\atPos{t_\gamma}{\pi}$ is an instance of
  $l$.

  Let $\gamma > \beta$ be a limit ordinal. Part \claimc{} of the claim
  follows immediately from the induction hypothesis. Hence,
  $\atPos{c_\iota}{\pi}$ is an instance of $l$ for all $\beta \le
  \iota < \gamma$. This and \claima{} implies that all terms in the
  set $T = \setcom{c_\iota}{\beta \le \iota < \gamma}$ coincide in all
  occurrences in the set
  \[
  P = \setcom{\pi'}{\pi'\le \pi} \cup \setcom{\pi\concat\pi'}{\pi' \in
    \posFun{l}}
  \]
  $P$ is obviously closed under prefixes. Therefore, we can apply
  Lemma~\ref{lem:glbbot} in order to obtain that $\Glbbot T$
  coincides with all terms in $T$ in all occurrences in $P$. Since
  $\Glbbot T \lebot t_\gamma$, this property carries over to
  $t_\gamma$. Consequently, also $\atPos{t_\gamma}{\pi}$ is an
  instance of $l$.
\end{proof}

Next we want to establish an alternative characterisation of
descendants based on labellings. This is a well-known technique
\cite{terese03book} that keeps track of descendants by labelling the
symbols at the relevant positions in the initial term. In order to
formalise this idea, we need to extend a given TRS such that it can
also deal with terms that contain labelled symbols:
\begin{defi}[labelled TRSs/terms]
  Let $\calR = (\Sigma,R)$ be a TRS.
  \begin{enumerate}[label=(\roman*)]
  \item The \emph{labelled signature} $\Sigma^\lab$ is defined as
    $\Sigma \cup \setcom{f^\lab}{f \in \Sigma}$. The arity of the
    function symbol $f^\lab$ is the same as that of $f$. The symbols
    $f^\lab$ are called \emph{labelled}; the symbols $f \in \Sigma$
    are called \emph{unlabelled}. Terms over $\Sigma^\lab$ are called
    \emph{labelled terms}. Note that the symbol $\bot \in \Sigma_\bot$
    has no corresponding labelled symbol $\bot^\lab$ in the labelled
    signature $\Sigma^\lab_\bot$. Likewise, there are no labelled
    variables.
  \item Labelled terms can be projected back to the original
    unlabelled ones by removing the labels via the projection function
    $\unlab{\cdot}\fcolon\iterms[\Sigma^\lab_\bot] \funto \ipterms$:
    \begin{align*}
      \unlab{\bot} &= \bot%
      \qquad \qquad \unlab{x} = x  &&\text{for all } x \in \calV, \text{ and} \\%
      \unlab{f^\lab(t_1,\dots,t_k)} &= \unlab{f(t_1,\dots,t_k)} =
      f(\unlab{t_1},\dots,\unlab{t_k}) &&\text{for all } f \in
      \Sigma^{(k)}
    \end{align*}
  \item The \emph{labelled TRS} $\calR^\lab$ is defined as
    $(\Sigma^\lab,R^\lab)$, where $R^\lab = \setcom{l \to r}{\unlab{l} \to r \in R}$.
  \item For each rule $l \to r \in R^\lab$, we define its unlabelled
    original $\unlab{l \to r} = \unlab{l} \to r$ in $R$.
  \item Let $t \in \ipterms$ and $U \subseteq \posFun{t}$. The term
    $t^{(U)} \in \iterms[\Sigma_\bot^\lab]$ is defined by
    \begin{gather*}
      t^{(U)}(\pi) = 
      \begin{cases}
        t(\pi) &\text{if } \pi \nin U\\
        t(\pi)^\lab &\text{if } \pi \in U
      \end{cases}
    \end{gather*}
    That is, $\unlab{t^{(U)}} = t$ and the labelled symbols in
    $t^{(U)}$ are exactly those at positions in $U$.
  \end{enumerate}
\end{defi}

\noindent The key property which is needed in order to make the labelling
approach work is that any reduction in a left-linear TRS that starts
in some term $t$ can be lifted for any labelling $t'$ of $t$ to a
unique equivalent reduction in the corresponding labelled TRS that
starts in $t'$:
\begin{prop}[lifting reductions to labelled TRSs]
  \label{prop:liftLabelled}
  % lifting of reductions to labelled TRS %
  Let $\calR = (\Sigma,R)$ be a left-linear TRS, $S = (s_\iota
  \to[\rho_\iota,\pi_\iota] s_{\iota + 1})_{\iota<\alpha}$ a reduction
  strongly $\prs$-converging to $s_\alpha$ in $\calR$ , and $t_0 \in
  \iterms[\Sigma_\bot^\lab]$ a labelled term with $\unlab{t_0} =
  s_0$. Then there is a unique reduction $T = (t_\iota
  \to[\rho'_\iota,\pi_\iota] t_{\iota + 1})_{\iota< \alpha }$ strongly
  $\prs$-converging to $t_\alpha$ in $\calR^\lab$ such that
  \begin{enumerate}[label=(\alph*)]
  \item $\unlab{t_\iota} = s_\iota$, $\unlab{\rho'_\iota} =
    \rho_\iota$, for all $\iota < \alpha$, and
    \label{item:liftLabelled1}
  \item $\unlab{t_\alpha} = s_\alpha$.
    \label{item:liftLabelled2}
  \end{enumerate}
\end{prop}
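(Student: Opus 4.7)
The plan is to prove the statement by transfinite induction on $\alpha$. The base case $\alpha = 0$ is immediate: take $T = \emptyseq$. For the successor case $\alpha = \gamma + 1$, the induction hypothesis delivers a unique lifted reduction $\prefix{T}{\gamma}\fcolon t_0 \pto{\gamma}[\calR^\lab] t_\gamma$ with $\unlab{t_\gamma} = s_\gamma$. Thus $\atPos{t_\gamma}{\pi_\gamma}$ is a labelled term whose projection equals $\atPos{s_\gamma}{\pi_\gamma} = l\sigma$, where $\rho_\gamma\fcolon l \to r$ and $\sigma$ is the matching substitution. Here left-linearity of $\calR$ is essential: it guarantees that the labels already carried by $\atPos{t_\gamma}{\pi_\gamma}$ single out a unique labelled rule $\rho'_\gamma\fcolon l' \to r$ in $R^\lab$ with $\unlab{l'} = l$ together with a unique labelled substitution $\sigma'$ with $\unlab{\sigma'} = \sigma$ such that $\atPos{t_\gamma}{\pi_\gamma} = l'\sigma'$ (without left-linearity two occurrences of the same variable could carry incompatible labellings). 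The $\gamma$-th step $t_\gamma \to[\rho'_\gamma,\pi_\gamma] t_{\gamma+1} = \substAtPos{t_\gamma}{\pi_\gamma}{r\sigma'}$ is then forced, and $\unlab{t_{\gamma+1}} = s_{\gamma+1}$ follows because $\unlab{\cdot}$ commutes with substitution and with subterm replacement.

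For the limit case, I would apply the induction hypothesis at every $\gamma < \alpha$ to obtain compatible unique reductions in $\calR^\lab$, which combine into a single sequence of steps $(t_\iota \to[\rho'_\iota,\pi_\iota] t_{\iota+1})_{\iota < \alpha}$ satisfying condition~\ref{item:liftLabelled1}. Setting $t_\alpha = \liminf_{\iota \limto \alpha} c'_\iota$ for $c'_\iota = \substAtPos{t_\iota}{\pi_\iota}{\bot}$ gives a well-defined term, since $(\iterms[\Sigma_\bot^\lab], \lebot)$ is a complete semilattice. Strong $\prs$-continuity of every proper prefix of $T$ --- obtained from the induction hypothesis together with Remark~\ref{rem:pcont} --- then yields $T\fcolon t_0 \pato[\calR^\lab] t_\alpha$.

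The remaining and most delicate task is the verification of $\unlab{t_\alpha} = s_\alpha$; this is the step I expect to be the principal obstacle. Since $\unlab{c'_\iota} = c_\iota$ holds for every $\iota < \alpha$, the strategy is to apply Lemma~\ref{lem:nonBotLimRed} simultaneously to $T$ and $S$: at any position $\pi$, a non-$\bot$ symbol becomes eventually persistent in $(c'_\iota(\pi))_{\iota < \alpha}$ exactly when it does so in $(c_\iota(\pi))_{\iota < \alpha}$, because from the point at which no $\pi_\iota$ lies at or above $\pi$ the labelled symbol at $\pi$ is frozen along with its projection; in particular two distinct labellings of the same underlying symbol cannot both appear cofinally. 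Hence $\unlab{t_\alpha}(\pi)$ coincides with $s_\alpha(\pi)$ at every position, establishing condition~\ref{item:liftLabelled2}. Uniqueness propagates throughout: at each successor stage left-linearity pins down $\rho'_\gamma$ and hence $t_{\gamma+1}$, and at each limit stage strong $\prs$-convergence forces the final term to equal $\liminf_{\iota \limto \alpha} c'_\iota$.
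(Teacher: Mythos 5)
Your proposal is correct and follows essentially the same route as the paper: transfinite induction on $\alpha$, with the successor case handled by the standard finitary lifting argument (where left-linearity pins down the labelled rule and substitution), and the limit case assembling the unique prefix liftings into a chain and verifying $\unlab{t_\alpha}=s_\alpha$ positionwise via Lemma~\ref{lem:nonBotLimRed}. The paper is terser in the successor case (deferring to the finitary literature) and phrases the limit-case verification as two $\lebot$-inequalities, but the substance is identical.
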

\begin{proof}
  \def\itema{(\ref{item:liftLabelled1})}
  \def\itemb{(\ref{item:liftLabelled2})}
  We prove this by an induction on $\alpha$. For the case of $\alpha$
  being zero, the statement is trivially true. For the case of
  $\alpha$ being a successor ordinal, the statement follows
  straightforwardly from the induction hypothesis (the argument is the
  same as for finite reductions; e.g.\ consult \cite{terese03book}).

  Let $\alpha$ be a limit ordinal. By induction hypothesis, for each
  proper prefix $\prefix{S}{\gamma}$ of $S$ there is a uniquely
  defined strongly $\prs$-convergent reduction $T_\gamma$ in $\calR^\lab$
  satisfying \itema{} and \itemb{}. Since the sequence
  $(\prefix{S}{\iota})_{\iota < \alpha}$ forms a chain w.r.t.\ the
  prefix order $\le$, so does the corresponding sequence
  $(T_\iota)_{\iota < \alpha}$. Hence the sequence $T = \Lub_{\iota <
    \alpha} T_\iota$ is well-defined. By construction, $T_\gamma \le
  T$ holds for each $\gamma < \alpha$, and we can use the induction
  hypothesis to obtain part \itema{} of the proposition. In order to
  show $s_\alpha = \unlab{t_\alpha}$, we prove the two inequalities
  $s_\alpha \lebot \unlab{t_\alpha}$ and $s_\alpha \gebot
  \unlab{t_\alpha}$:

  To show $\unlab{t_\alpha} \lebot s_\alpha$, we take some $\pi \in
  \posNonBot{\unlab{t_\alpha}}$ and show that $\unlab{t_\alpha}(\pi) =
  s_\alpha(\pi)$. Let $f = \unlab{t_\alpha}(\pi)$. That is, either
  $t_\alpha(\pi) = f$ or $t_\alpha(\pi) = f^\lab$. In either case, we can
  employ Lemma~\ref{lem:nonBotLimRed} to obtain some $\beta < \alpha$
  such that $t_\beta(\pi) = f$ resp.\ $t_\beta(\pi) = f^\lab$ and
  $\pi_\iota \not\le \pi$ for all $\beta \le \iota < \alpha$. Since,
  by \itema{}, $s_\beta = \unlab{t_\beta}$, we have in both cases that
  $s_\beta(\pi) = f$. By applying Lemma~\ref{lem:nonBotLimRed} again,
  we get that $s_\alpha(\pi) = f$, too.

  Lastly, we show the converse inequality $s_\alpha \lebot
  \unlab{t_\alpha}$. For this purpose, let $\pi \in
  \posNonBot{s_\alpha}$ and $f = s_\alpha(\pi)$. By
  Lemma~\ref{lem:nonBotLimRed}, there is some $\beta < \alpha$ such
  that $s_\beta(\pi) = f$ and $\pi_\iota \not\le \pi$ for all $\beta
  \le \iota < \alpha$. Since, by \itema{}, $s_\beta =
  \unlab{t_\beta}$, we have that $t_\beta(\pi) \in
  \set{f,f^\lab}$. Applying Lemma~\ref{lem:nonBotLimRed} again then
  yields that $t_\alpha(\pi) \in \set{f,f^\lab}$ and, therefore,
  $\unlab{t_\alpha}(\pi) = f$.
\end{proof}

Having this, we can establish an alternative characterisation of
descendants using labellings:
\begin{prop}[alternative characterisation of descendants]
  \label{prop:chaDesc}
  % alternative characterisation of descendants % 
  Let $\calR$ be a left-linear TRS, $S\fcolon s_0 \pato[\calR]
  s_\alpha$, and $U \subseteq \posNonBot{s_0}$. Following
  Proposition~\ref{prop:liftLabelled}, let $T\fcolon t_0 \pato[\calR]
  t_\alpha$ be the unique lifting of $S$ to $\calR^\lab$ starting with
  the term $t_0 = s_0^{(U)}$. Then it holds that $t_\alpha =
  s_\alpha^{(\dEsc{U}{S})}$. That is, for all $\pi \in
  \posNonBot{s_\alpha}$, it holds that $t_\alpha(\pi)$ is labelled iff
  $\pi \in \dEsc{U}{S}$.
\end{prop}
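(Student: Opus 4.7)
My plan is to prove the proposition by transfinite induction on the length $\alpha$ of $S$, mirroring the inductive definition of descendants in Definition~\ref{def:desc}. The base case $\alpha = 0$ is immediate: both $T$ and $S$ are empty, $t_0 = s_0^{(U)}$ by hypothesis, and $\dEsc{U}{\emptyseq} = U$ by clause~\ref{item:descA}.

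For the single-step case $\alpha = 1$, say $S\fcolon s_0 \to[\pi_0,\rho_0] s_1$ with $\rho_0\fcolon l \to r$, the lifted rule $\rho'_0\fcolon l' \to r$ in $\calR^\lab$ is uniquely determined by which pattern positions of $t_0 = s_0^{(U)}$ carry labels; by left-linearity of $\calR$ this rule fires unambiguously at $\pi_0$. I would then do a direct case analysis on the position $u$ of a potential label in $t_0$ relative to $\pi_0$ and compare the outcome with clause~\ref{item:descB} of Definition~\ref{def:desc}: labels at positions disjoint from or strictly above $\pi_0$ are preserved verbatim; labels at non-variable pattern positions (i.e.\ $u = \pi_0 \concat \pi'$ with $\pi' \in \posFun{l}$) are discarded since $r$ carries no labels; and labels inside variable instantiations ($u = \pi_0 \concat w \concat x$ with $\atPos{l}{w} \in \calV$) are copied to every occurrence of the corresponding variable in $r$. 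This matches the three subcases for $R_u$ exactly. The successor case $\alpha = \beta + 1$ with $\beta \ge 1$ then follows routinely by applying the induction hypothesis to $\prefix{S}{\beta}$ to obtain $t_\beta = s_\beta^{(\dEsc{U}{\prefix{S}{\beta}})}$, applying the single-step case to the last step with initial labelling $\dEsc{U}{\prefix{S}{\beta}}$, and composing via clause~\ref{item:descC}.

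The limit case is where the work concentrates. I would combine Lemma~\ref{lem:descLimRed}, which characterises membership in $\dEsc{U}{S}$ at a limit by eventual persistence together with no contraction at or above $\pi$, with the structurally analogous Lemma~\ref{lem:nonBotLimRed} applied to $T$, which characterises when $t_\alpha(\pi)$ equals a given non-$\bot$ symbol — in our case a labelled one — by the same eventual-persistence condition. Crucially, by Proposition~\ref{prop:liftLabelled}\ref{item:liftLabelled1} the contraction positions of $T$ coincide with those of $S$, so both conditions refer to exactly the same $\pi_\iota$'s. For the forward direction, if $\pi \in \dEsc{U}{S}$, Lemma~\ref{lem:descLimRed} yields some $\beta < \alpha$ with $\pi \in \dEsc{U}{\prefix{S}{\beta}}$ and $\pi_\iota \not\le \pi$ for all $\beta \le \iota < \alpha$; the induction hypothesis gives $t_\beta(\pi) = s_\beta(\pi)^\lab$, and Lemma~\ref{lem:nonBotLimRed} then carries this labelled symbol unchanged to $t_\alpha(\pi)$. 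For the converse, if $t_\alpha(\pi) = f^\lab$, Lemma~\ref{lem:nonBotLimRed} supplies some $\beta < \alpha$ with $t_\beta(\pi) = f^\lab$ and no contraction at or above $\pi$ after stage $\beta$; the induction hypothesis yields $\pi \in \dEsc{U}{\prefix{S}{\beta}}$, and Lemma~\ref{lem:descLimRed} pushes this to $\pi \in \dEsc{U}{S}$. The main obstacle is ensuring that the ``eventual'' witnesses provided by the two limit-inferior lemmas line up on the same $\beta$, which is precisely what the identification of contraction positions between $S$ and $T$ from Proposition~\ref{prop:liftLabelled} gives us.
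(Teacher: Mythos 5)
Your proposal is correct and follows essentially the same route as the paper: transfinite induction on $\len{S}$, with the limit case handled by chaining Lemma~\ref{lem:nonBotLimRed} (applied to the lifted reduction $T$, whose contraction positions agree with those of $S$ by Proposition~\ref{prop:liftLabelled}), the induction hypothesis, and Lemma~\ref{lem:descLimRed}. The paper merely compresses the successor case into ``a straightforward argument'' (noting, as you do, that left-linearity is vital) and writes the limit case as a single chain of equivalences rather than two directions, so your extra detail is a faithful expansion rather than a deviation.
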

\begin{proof}
  Let $S = (s_\iota \to[\pi_\iota] s_{\iota + 1})_{\iota<\alpha}$ and
  $T = (t_\iota \to[\pi_\iota] t_{\iota + 1})_{\iota< \alpha }$.  We
  prove the statement by an induction on the length $\alpha$ of
  $S$. If $\alpha = 0$, then the statement is trivially true. If
  $\alpha$ is a successor ordinal, then a straightforward argument
  shows that the statement follows from the induction hypothesis. Here
  the restriction to left-linear systems is vital.

  Let $\alpha$ be a limit ordinal and let $\pi \in
  \posNonBot{s_\alpha}$. We can then reason as follows:
  \begin{align*}
    t_\alpha(\pi) \text{ is labelled} \quad &\text{iff} \quad
    \exists \beta < \alpha\fcolon\;
    t_\beta(\pi) \text{ is labelled and } \forall  \beta \le \iota <
    \alpha\colon \;\; \pi_\iota \not\le \pi
    \tag{Lem.~\ref{lem:nonBotLimRed}}
    \\
    &\text{iff} \quad 
    \pi \in \dEsc{U}{\prefix{S}{\beta}} \text{ and } \forall  \beta \le \iota <
    \alpha\colon \;\; \pi_\iota \not\le \pi
    \tag{ind.\ hyp.}
    \\
    &\text{iff} \quad
    \pi \in \dEsc{U}{S}
    \tag{Lem.~\ref{lem:descLimRed}}
  \end{align*}
\end{proof}

\subsection{Constructing Complete Developments}
\label{sec:complete-development}

Complete developments are usually defined for (almost) orthogonal
systems. This ensures that the residuals of redexes are again redexes.
Since we are going to use complete developments for potentially
overlapping systems as well, we need to make restrictions on the set
of redex occurrences instead:
\begin{defi}[conflicting redex occurrences]
  Two distinct redex occurrences $u,v$ in a term $t$ are called
  \emph{conflicting} if there is a position $\pi$ such that $v =
  u\concat \pi$ and $\pi$ is a pattern position of the redex at $u$,
  or, vice versa, $u = v\concat \pi$ and $\pi$ is a pattern position
  of the redex at $v$. If this is not the case, then $u$ and $v$ are
  called \emph{non-conflicting}.
\end{defi}
One can easily see that in an orthogonal TRS any pair of redex
occurrences is non-conflicting.

\begin{defi}[(complete) development]
  \label{def:devel}
  % (complete) development %
  Let $\calR$ be a left-linear TRS, $s$ a partial term in $\calR$, and
  $U$ a set of pairwise non-conflicting redex occurrences in $s$.
  \begin{enumerate}[label=(\roman*)]
  \item A \emph{development} of $U$ in $s$ is a strongly
    $\prs$-converging reduction $S\fcolon s \pto{\alpha} t$ in which
    each reduction step $\phi_\iota\fcolon t_\iota \to[\pi_\iota]
    t_{\iota + 1}$ contracts a redex at $\pi_\iota \in
    \dEsc{U}{\prefix{S}{\iota}}$.
  \item A development $S\fcolon s \pato t$ of $U$ in $s$ is called
    \emph{complete},
    denoted $S\fcolon s \pato[U] t$, if $\dEsc{U}{S} = \emptyset$.
  \end{enumerate}
\end{defi}
This is a straightforward generalisation of complete developments
known from the finitary setting and coincides with the corresponding
formalisation for metric infinitary rewriting \cite{kennaway95ic} if
restricted to total terms.

The restriction to non-conflicting redex occurrences is essential in
order guarantee that the redex occurrences are independent from each
other:
\begin{prop}[non-conflicting residuals]
  Let $\calR$ be a left-linear TRS, $s$ a partial term in $\calR$, $U$
  a set of pairwise non-conflicting redex occurrences in $s$, and
  $S\fcolon s \sato[U] t$ a development of $U$ in $s$. Then also
  $\dEsc{U}{S}$ is a set of pairwise non-conflicting redex
  occurrences.
\end{prop}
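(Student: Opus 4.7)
The plan is to prove the claim by transfinite induction on the length $\alpha$ of the development $S$, running in parallel a re-proof of Proposition~\ref{prop:residual} that works without assuming global non-overlappingness. The key observation is that in a development every contraction takes place at a residual of $U$, so by the induction hypothesis the contracted positions are pairwise non-conflicting with all other residuals. This supplies exactly the local analogue of the non-overlapping property that the proof of Proposition~\ref{prop:residual} invoked.

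The base case $\alpha = 0$ is immediate from the assumption on $U$. For a successor ordinal $\alpha = \alpha' + 1$, the induction hypothesis applied to $\prefix{S}{\alpha'}$ yields that $V := \dEsc{U}{\prefix{S}{\alpha'}}$ is a pairwise non-conflicting set of redex occurrences, and the final step contracts a redex at some $\pi_{\alpha'} \in V$. Using Proposition~\ref{prop:descSeqRed}, it then suffices to establish the finitary claim that contracting one redex in a pairwise non-conflicting set produces residuals that are again pairwise non-conflicting. I would verify this by a case analysis on the relative positions of two ancestors $v_1, v_2 \in V$ of two distinct residuals $\pi_1, \pi_2$: when $v_1 = v_2$ lies strictly below $\pi_{\alpha'}$ at a variable position, the residuals sit at pairwise disjoint occurrences of that variable in the right-hand side; when $v_1$ and $v_2$ are disjoint, their residuals remain disjoint even after possible duplication below $\pi_{\alpha'}$; and when one is strictly below the other at a variable of the enclosing redex's pattern, this configuration is preserved verbatim by the residual computation.

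For a limit ordinal $\alpha$, given distinct $\pi_1, \pi_2 \in \dEsc{U}{S}$, Lemma~\ref{lem:descLimRed} supplies an ordinal $\beta < \alpha$ such that both $\pi_1, \pi_2 \in \dEsc{U}{\prefix{S}{\beta}}$ and no step in $\segm{S}{\beta}{\alpha}$ lies at or above either position. The induction hypothesis then yields that $\pi_1, \pi_2$ are non-conflicting redex occurrences in $t_\beta$. I would conclude by adapting the nested induction from the limit case of Proposition~\ref{prop:residual}: at every subsequent step, the contracted position $\pi_{\gamma'}$ is by the induction hypothesis non-conflicting with each $\pi_i$, so the quotient $\pi_{\gamma'}/\pi_i$ (should $\pi_i < \pi_{\gamma'}$ occur) cannot invade the pattern of the redex at $\pi_i$ --- precisely where the original proof used non-overlappingness. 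A parallel application of Lemma~\ref{lem:glbbot} to the contexts $c_\iota$ then carries both the preservation of each redex pattern at $\pi_i$ and the preservation of the relative non-conflicting configuration of $\pi_1, \pi_2$ through to $t_\alpha$. The main obstacle is the successor case, which requires a careful enumeration of the relative positions of three pairwise non-conflicting occurrences; the limit case is essentially a routine adaptation of the proof of Proposition~\ref{prop:residual}.
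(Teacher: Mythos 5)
Your proposal is correct and follows essentially the same route as the paper: induction on the length of $S$, re-running the proof of Proposition~\ref{prop:residual} with the non-conflictingness of the preceding residuals (available from the induction hypothesis) standing in for non-overlappingness, and handling the preservation of non-conflictingness itself by the same device as Proposition~\ref{prop:disjDesc}. You in fact supply more detail than the paper, whose proof is only a sketch of exactly this argument.
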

\begin{proof}
  This can be proved by induction on the length of $S$. The part
  showing that the descendants are again redex occurrences can be
  copied almost verbatim from Proposition~\ref{prop:residual}. Instead
  of referring to the non-overlappingness of the system one can refer
  to the non-conflictingness of the preceding residuals which can be
  assumed by the induction hypothesis. The part of the induction proof
  that shows non-conflictingness is analogous to
  Proposition~\ref{prop:disjDesc}.
\end{proof}

It is relatively easy to show that complete developments of sets of
non-conflicting redex occurrences do always exists in the partial
order setting. The reason for this is that strongly $\prs$-continuous
reductions do always strongly $\prs$-converge as well. This means that
as long as there are (residuals of) redex occurrences left after an
incomplete development, one can extend this development arbitrarily by
contracting some of the remaining redex occurrences. The only thing
that remains to be shown is that one can devise a reduction strategy
which eventually contracts (all residuals of) all redexes. The
proposition below shows that a parallel-outermost reduction strategy
will always yield a complete development in a left-linear system.
\begin{prop}[complete developments]
  \label{prop:exComplDev}
  % generalisation of [kennaway95ic, Prop. 4.6]
  Let $\calR$ be a left-linear TRS, $t$ a partial term in $\calR$, and
  $U$ a set of pairwise non-conflicting redex occurrences in $t$. Then
  $U$ has a complete development in $t$.
\end{prop}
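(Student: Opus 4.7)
The plan is to construct a complete development of $U$ in $t$ by transfinite recursion, following an outermost-first strategy. I would set $t_0 = t$ and $U_0 = U$. At each successor stage $\iota + 1$, I would terminate if $U_\iota = \emptyset$, and otherwise pick a leftmost-outermost element $\pi_\iota \in U_\iota$ and contract it, obtaining $t_{\iota+1}$ and $U_{\iota+1} = \dEsc{U_\iota}{\seq{\phi_\iota}}$ where $\phi_\iota$ is the resulting step. At each limit stage $\lambda$, I would define $t_\lambda = \liminf_{\iota\limto\lambda} c_\iota$, with $c_\iota$ the context of $\phi_\iota$, and $U_\lambda = \dEsc{U}{\prefix{S}{\lambda}}$.

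Three things need checking. First, each $U_\iota$ is a set of pairwise non-conflicting redex occurrences in $t_\iota$, so that the contraction at stage $\iota+1$ is legitimate; this follows by an argument parallel to Propositions~\ref{prop:residual} and~\ref{prop:disjDesc}, replacing appeals to non-overlappingness with the induction hypothesis that preceding residuals are non-conflicting. Second, the constructed reduction is strongly $\prs$-convergent: since $(\ipterms,\lebot)$ is a complete semilattice, $\liminf_{\iota\limto\lambda} c_\iota$ is always defined, so strong $\prs$-continuity holds at every limit stage by construction, and continuity coincides with convergence in the partial order setting. Third, and most importantly, the construction must terminate at some ordinal $\alpha$ with $U_\alpha = \emptyset$.

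The third point is the main obstacle. It cannot be read off from the strategy directly, because a collapsing rule can move a residual from a variable-position beneath $\pi_\iota$ back up to $\pi_\iota$ itself, permitting repeated contractions at the same position. My argument hinges on Lemma~\ref{lem:botLimRed}: if a position $\pi$ were contracted at an unbounded set of stages within the constructed reduction, then some prefix of $\pi$ would be outermost-volatile, so $t_\lambda$ carries a $\bot$ at that prefix for the corresponding limit $\lambda$; since residuals occupy only non-$\bot$ positions (Remark~\ref{rem:desc}), no residual can survive at or beneath that $\bot$, and $\pi$ cannot be contracted at any stage after $\lambda$. Combined with the uniqueness of ancestors (Remark~\ref{rem:prsAncestor}), which ensures that each contraction at $\pi$ discharges a distinct ancestor in the countable set $U$, this shows that each position is contracted only countably many times; since the set of positions across all terms is itself countable, the reduction has countable length, and $U_\alpha$ must be empty at some countable ordinal $\alpha$.
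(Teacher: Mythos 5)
Your construction is genuinely different from the paper's: the paper performs $\omega$ rounds of \emph{parallel}-outermost contraction (each round a complete development of the pairwise disjoint outermost residuals, done left to right) and then gets completeness directly from Lemma~\ref{lem:descLimRed} --- a surviving residual $u$ would, by that lemma, see no steps at or above $u$ from some point on, which the parallel-outermost schedule makes impossible. Your one-step-at-a-time transfinite recursion instead rests on a termination bound, and that is where the gap sits. The step ``$\pi$ cannot be contracted at any stage after $\lambda$'' does not follow from what you cite. What you establish is that $U_\lambda$ contains nothing at or beneath the outermost-volatile prefix $\pi'$; you do not establish that this \emph{persists}. A residual at a proper prefix $q$ of $\pi'$ could in principle survive into $U_\lambda$, and a later contraction at $q$ with a rule whose right-hand side permutes variable positions (e.g.\ $f(x,y)\to f(y,x)$) can move a residual from outside the subtree at $\pi'$ to a position extending $\pi'$, after which $\pi$ becomes contractible again. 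Closing this needs two further observations: (a) because your strategy only contracts outermost residuals and $\pi'$ is contracted cofinally in $\lambda$, Lemma~\ref{lem:descLimRed} forces $U_\lambda$ to contain no proper prefix of $\pi'$ either (such a prefix would have to be present at cofinally many stages where $\pi'$ is chosen as outermost, a contradiction); and (b) once every element of $U_\lambda$ is disjoint from $\pi'$, an induction over the remaining reduction shows all later descendants stay disjoint from $\pi'$, since a descendant either equals its ancestor or extends the contraction position. Neither step is hard, but without (a) the claim is actually false, so this is a real hole rather than a routine omission.

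Two smaller points. First, the clause ``each contraction at $\pi$ discharges a distinct ancestor'' is both unnecessary --- the volatility argument alone caps the number of contractions at any fixed position at $\omega$, and the set of all positions is countable, which already bounds the length --- and incorrect for duplicating right-hand sides, where a single $u\in U$ has several residuals that may visit the same position at different stages; Remark~\ref{rem:prsAncestor} gives uniqueness of the ancestor of a given residual, not injectivity of contraction events into $U$. Second, ``leftmost-outermost'' need not be well defined when there are infinitely many pairwise disjoint outermost residuals (there may be no lexicographically least one); any choice of an outermost element works, but outermostness is exactly the property your persistence argument depends on, so it should be stated as the essential ingredient rather than a tie-breaking convenience. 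With (a) and (b) supplied and the ancestor remark dropped, your cardinality argument does yield termination and hence completeness; the paper's route avoids all of this bookkeeping at the price of organizing the development into $\omega$ parallel rounds of length at most $\omega$ each.
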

\begin{proof}
  \def\claima{(\ref{eq:compDev1})}
  Let $t_0 = t$, $U_0 = U$ and $V_0$ the set of outermost occurrences
  in $U_0$. Furthermore, let $S_0\fcolon t_0 \pato[V_0] t_1$ be some
  complete development of $V_0$ in $t_0$. $S_0$ can be constructed by
  contracting the redex occurrences in $V_0$ in a left-to-right
  order. This step can be continued for each $i < \omega$ by taking
  $U_{i+1} = \dEsc{U_i}{S_i}$, where $S_{i}\fcolon t_{i} \pato[V_{i}]
  t_{i+1}$ is some complete development of $V_{i}$ in $t_{i}$ with
  $V_{i}$ the set of outermost redex occurrences in $U_{i}$.

  Note that then, by iterating Proposition~\ref{prop:descSeqRed}, it
  holds that
  \begin{gather}
    \dEsc{U}{S_0\concat \dots \concat S_{n-1}} = U_n \quad
    \text{ for all } n < \omega \tag{1}
    \label{eq:compDev1}
  \end{gather}
  If there is some $n < \omega$ for which $U_n = \emptyset$, then
  $S_0 \concat \dots \concat S_{n-1}$ is a complete
  development of $U$ according to \claima{}.

  If this is not the case, consider the reduction $S = \Concat_{i <
    \omega} S_i$, i.e.\ the concatenation of all '$S_i$'s. We claim
  that $S$ is a complete development of $U$. Suppose that this is not
  the case, i.e.\ $\dEsc{U}{S} \neq \emptyset$. Hence, there is some
  $u \in \dEsc{U}{S}$. Since all '$U_i$'s are non-empty, so are the
  '$V_i$'s. Consequently, all '$S_i$'s are non-empty reductions which
  implies that $S$ is a reduction of limit ordinal length, say
  $\lambda$. Therefore, we can apply Lemma~\ref{lem:descLimRed} to
  infer from $u \in \dEsc{U}{S}$ that there is some $\alpha < \lambda$
  such that $u \in \dEsc{U}{\prefix{S}{\alpha}}$ and all reduction
  steps beyond $\alpha$ do not take place at $u$ or above. This is not
  possible due to the parallel-outermost reduction strategy that $S$
  adheres.
\end{proof}

This shows that complete developments of any set of redex occurrences
do always exist in any (almost) orthogonal system. This is already an
improvement over strongly $\mrs$-converging reductions, which only
allow this if no collapsing rules are present or the considered set of
redex occurrences does not contain an infinite set of nested
collapsing redexes -- also known as an \emph{infinite collapsing
  tower}.

We shall discuss the issue of collapsing rules as well as infinite
collapsing towers in more detail in the subsequent section, where we
will show that complete developments are also unique in the sense that
the final outcome is uniquely determined by the initial set of redexes
occurrences.

\subsection{Uniqueness of Complete Developments}
\label{sec:uniqueness}

The goal of this section is to show that the final term of a complete
development is uniquely determined by the initial set of redex
occurrences $U$. There are several techniques to show that in the
metric model. One of these approaches, introduced by Kennaway and de
Vries~\cite{kennaway03book} and detailed by Ketema and
Simonsen~\cite{ketema10lmcs,ketema05lpar} for infinitary combinatory
reduction systems, uses so-called \emph{paths}. Paths are constructed
such that they, starting from the root, run through the initial term
$t$ of the complete development, and whenever a redex occurrence of
the development is encountered, the path jumps to the root of the
right-hand side of the corresponding rule and jumps back to the term
$t$ when it reaches a variable in the right-hand side.
\def\thepath{%
  \draw[]%
  (ta) .. controls ($(ta)+(0,-.7)$) and ($(s1a)+(0,.7)$) .. (s1a);%
  \draw[]%
  (s1o) .. controls ($(s1o)+(0,-.3)$) and ($(s2a)+(0,.3)$) .. (s2a);%
  \draw[-|]%
  (s2o) .. controls ($(s2o)+(0,-.3)$) and ($(to)+(0,.3)$) .. (to);%
  \draw[]%
  (r1a) .. controls ($(r1a)+(.1,-.5)$) .. ($(r1o)+(-.1,.7)$)
  .. controls ($(r1o)+(-.2,.5)$) and ($(r1o)+(0,.3)$) .. (r1o);%
  \draw[]%
  (r2a) .. controls ($(r2a)+(.05,-.3)$) .. ($(r2o)+(.2,.7)$)
  .. controls ($(r2o)+(.1,.5)$) and ($(r2o)+(0,.3)$) .. (r2o);%
}

\begin{figure}
  \centering
  \subfloat[Constructing a path in a term.]{
    \begin{tikzpicture}[%
      remember picture,%
      jump/.style={%
        dashed,%
        ->,%
        draw%
      },%
      node distance=1.7cm,%
      tree triangle/.style={%
        isosceles triangle,%
        inner sep=1pt,%
        shape border rotate=90%
      },%
      big term/.style={%
        minimum width=5cm,%
        isosceles triangle apex angle=45,%
        tree triangle,%
        term back%
      },%
      small term 1/.style={%
        minimum width=1cm,%
        isosceles triangle apex angle=30,%
        tree triangle,%
        term const%
      },%
      small term 2/.style={%
        minimum width=1cm,%
        isosceles triangle apex angle=45,%
        tree triangle,%
        term const%
      },%
      small term 3/.style={%
        minimum width=1.3cm,%
        isosceles triangle apex angle=45,%
        tree triangle,%
        term stable%
      },%
      small term 4/.style={%
        minimum width=1.3cm,%
        isosceles triangle apex angle=55,%
        tree triangle,%
        term stable%
      }]%
      \node[big term, anchor=apex] (t) at (0,0) {};%
      \node at ($(t)+(1.5,-1)$) {$t$};%
      \node[small term 1, anchor=apex] (s1) at (-.3,-1.5) {$l_1$};%
      \node[small term 2, anchor=apex] (s2) at (.1,-4) {$l_2$};%

      \node[small term 1, anchor=center] (l1) at (-7,-1.5) {$l_1$};%
      \node[small term 2, anchor=center] (l2) at (-7,-4) {$l_2$};%

      \node[small term 3, anchor=center,right=of l1] (r1) {};%
      \node[small term 4, anchor=center,right=of l2] (r2) {};%
      \node at ($(r1)+(.3,-.25)$) {$r_1$};%
      \node at ($(r2)+(.3,-.2)$) {$r_2$};%

      \draw[shorten <=.9cm, shorten >=.9cm]%
      ($(l1)+(0,.4)$) edge[single step] ($(r1)+(0,.4)$)%
      ($(l2)+(0,.3)$) edge[single step] ($(r2)+(0,.3)$);%

      \coordinate (b1) at ($(l2.left corner)+(0,-.7)$);
      \coordinate (b2) at ($(b1)+(1,0)$);

      \draw[decorate,decoration=brace]%
      ($(b1)!(r1.right corner)!(b2)$) -- ($(b1)!(l1.left corner)!(b2)$)
      node[midway,below=.3cm] {$\calR$};

      \coordinate (ta) at (t.apex);%
      \coordinate (s1a) at (s1.apex);%
      \coordinate (s2a) at (s2.apex);%
      \coordinate (r1a) at (r1.apex);%
      \coordinate (r2a) at (r2.apex);%
      \coordinate (s1o) at (s1.250);%
      \coordinate (s2o) at (s2.290);%
      \coordinate (l1o) at (l1.250);%
      \coordinate (l2o) at (l2.290);%
      \coordinate (to)  at ($(t.290)+(0,.3)$);%
      \coordinate (r1o) at (r1.270);%
      \coordinate (r2o) at (r2.230);%

      \begin{scope}
        \foreach \n/\v in {1/$x$,2/$y$} {%
          \draw[black!30] (l\n o) -- ($(l\n o)+(0,-.1)$);%
          \node at ($(l\n o)+(0,-.3)$) {\v};%

          \draw[black!30] (r\n o) -- ($(r\n o)+(-.1,-.1)$);%
          \node at ($(r\n o)+(-.2,-.25)$) {\v};%
        }%

      \end{scope}
      \thepath

      \draw [every edge/.style={jump,bend right=15}]%
      (s1a) edge (r1a)%
      (r1o) edge (s1o)%
      (s2a) edge (r2a)%
      (r2o) edge (s2o);%

    \end{tikzpicture}%
  \label{fig:path1}
  }%
\\[1cm]
\subfloat[The constructed path.]{
  \begin{tikzpicture}[%
    remember picture,%
    jump/.style={%
      dashed,%
      ->,%
      draw%
    } ]
    \path[use as bounding box] (0,.6) rectangle (9.5,-1.6);
    \begin{scope}[transform canvas={rotate=90}]

      \coordinate (tmp) at (0,0);%
      \coordinate (s1a) at ($(tmp)-(ta)+(s1a)$);%
      \coordinate (ta) at (tmp);%
      
      \coordinate (tmp) at ($(s1a)-(0,1)$);%
      \coordinate (r1o) at ($(tmp)-(r1a)+(r1o)$);%
      \coordinate (r1a) at (tmp);%
      
      \coordinate (tmp) at ($(r1o)-(0,1)$);%
      \coordinate (s2a) at ($(tmp)-(s1o)+(s2a)$);%
      \coordinate (s1o) at (tmp);%
      
      \coordinate (tmp) at ($(s2a)-(0,1)$);%
      \coordinate (r2o) at ($(tmp)-(r2a)+(r2o)$);%
      \coordinate (r2a) at (tmp);%

      \coordinate (tmp) at ($(r2o)-(0,1)$);%
      \coordinate (to) at ($(tmp)-(s2o)+(to)$);%
      \coordinate (s2o) at (tmp);%

      \coordinate (top1) at (.5,0);%
      \coordinate (top2) at (.5,1);%

      \coordinate (bot1) at (-1.5,0);%
      \coordinate (bot2) at (-1.5,1);%

      \coordinate (to) at ($(to)-(0,.3)$);%

      \foreach \d in {top, bot} {%
        \foreach \n in {ta,s1a,r1a,r1o,s1o,s2a,r2a,r2o,s2o,to} {%
          \coordinate (\d _\n) at ($(\d 1)!(\n)!(\d 2)$);%
        }%
      }%
      \foreach \na / \nb / \ci / \co / \n in {%
        ta/s1a/termback/termfringe/$t$,%
        r1a/r1o/termstable/termstablefringe/$r_1$,%
        s1o/s2a/termback/termfringe/$t$,%
        r2a/r2o/termstable/termstablefringe/$r_2$,%
        s2o/to/termback/termfringe/$t$%
      }%
      {%
        \path[fill=\ci,decoration={random steps,segment length=2mm}]%
        decorate{(top_\na) -- (top_\nb)} -- (bot_\nb) decorate{ --
          (bot_\na)} -- cycle; %
        \path[draw=\co]%
        (top_\na) -- (bot_\na)%
        (top_\nb) -- (bot_\nb);%
        \node[rotate=-90] at ($(bot_\nb)!.5!(bot_\na) +(.5,0) $) {\n};%
      };%
      
      \coordinate (to) at ($(to)+(0,.3)$);%
      \thepath

      \draw [every edge/.style={jump,bend left=15}]%
      (s1a) edge (r1a)%
      (r1o) edge (s1o)%
      (s2a) edge (r2a)%
      (r2o) edge (s2o);%
      
    \end{scope}
  \end{tikzpicture}
  \label{fig:path2}
}
  
  \caption{A path.}
  \label{fig:path}
\end{figure}

%%% Local Variables: 
%%% mode: latex
%%% TeX-master: "thesis"
%%% End: 

%
Figure~\ref{fig:path1} illustrates this idea. It shows a path in a
term $t$ that encounters two redex occurrences of the complete
development. As soon as such a redex occurrence is encountered, the
path jumps to the right-hand side of the corresponding rule as
indicated by the dashed arrows. Then the path runs through the
right-hand side. When a variable is encountered, the path jumps back
to the position of the term $t$ that matches the variable. This jump
is again indicated by a dashed arrow. The path that is obtained by
this construction is shown in Figure~\ref{fig:path2}. With the
collection of the thus obtained paths one can then construct the final
term of the complete development. This technique -- slightly modified
-- can also be applied in the present setting.

A path consists of nodes, which are connected by edges. We have two
kinds of nodes: a node $(\top,\pi)$ represents a location in the term
$t$ and a node $(r,\pi,u)$ represents a location in the right-hand
side $r$ of a rule. These nodes of the form $(\top,\pi)$ and
$(r,\pi,u)$ encode that the path is currently at position $\pi$ in the
term $t$ resp.\ $r$. The additional component $u$ provides the
information that the path jumped to the right-hand side $r$ from the
redex $\atPos{t}{u}$. Both nodes and the edges between them are
labelled. Each node is labelled with the symbol at the current
location of the path, unless it is a redex occurrence in $t$ or a
variable occurrence in a right-hand side. The labellings of the edges
provide information on how the path moves through the terms: a
labelling $i$ represents a move along the $i$-th edge in the term tree
from the current location whereas an empty labelling indicates a jump
from or to a right-hand side of a rule.
\begin{defi}[path]
  \label{def:redexPath}
  % path %
  Let $\calR$ be a left-linear TRS, $t$ a partial term in $\calR$, and
  $U$ a set of pairwise non-conflicting redex occurrence in $t$. A
  $U,\calR$-\emph{path} (or simply \emph{path}) in $t$ is a sequence
  of length at most $\omega$ containing so-called \emph{nodes} and
  \emph{edges} in an alternating manner like this:
  \[
  \seq{n_0, e_0, n_1, e_1, n_2, e_2, \dots}
  \]
  where the '$n_i$'s are nodes and the '$e_i$'s are edges. A node is
  either a pair of the form $(\top,\pi)$ with $\pi \in \pos{t}$ or a
  triple of the form $(r,\pi,u)$ with $r$ the right-hand side of a
  rule in $\calR$, $\pi \in \pos{r}$, and $u \in U$. Edges are denoted
  by arrows $\edge$. Both edges and nodes might be labelled by
  elements in $\Sigma_\bot \cup \calV$ and $\nat$, respectively. We
  write paths as the one sketched above as
  \[
  \node{n_0} \edge \node{n_1} \edge \node{n_2} \edge \cdots
  \]
  or, when explicitly indicating labels, as
  \[
  \node{n_0}[l_0] \edge[l_1] \node{n_1}[l_2] \edge[l_3] \node{n_2}[l_4] \edge[l_5] \cdots
  \]
  where empty labels are explicitly given by the symbol $\emptylab$.
  If a path has a segment of the form $n \edge n'$, then we say there
  is an edge from $n$ to $n'$ or that $n$ has an outgoing edge to
  $n'$.

  Every path starts with the node $(\top,\emptyseq)$ and is either
  infinitely long or ends with a node. For each node $n$ having an
  outgoing edge to a node $n'$, the following must hold:
  \begin{enumerate}%[(1)]
  \item If $n$ is of the form $(\top,\pi)$, then
    \label{item:redexPath1}
    \begin{enumerate}[label=(\alph*)]
    \item $n' =(\top,\pi \concat i)$ and the edge is labelled by $i$,
      with $\pi\concat i \in \pos{t}$ and $\pi \nin U$, or
      \label{item:redexPath1a}
    \item $n' = (r,\emptyseq,u)$ and the edge is unlabelled, with
      $\atPos{t}{u}$ a $\rho$-redex for $\rho\fcolon l \to r \in R$
      and $u \in U$.
      \label{item:redexPath1b}
    \end{enumerate}
  \item If $n$ is of the form $(r,\pi,u)$, then
    \label{item:redexPath2}
    \begin{enumerate}[label=(\alph*)]
    \item $n' = (r,\pi \concat i,u)$ and the edge is labelled by $i$,
      with $\pi \concat i \in \pos{r}$, or
      \label{item:redexPath2a}
    \item $n' = (\top,u \concat \pi')$ and the edge is unlabelled, with
      $\atPos{t}{u}$ a $\rho$-redex for $\rho\fcolon l \to r \in R$,
      $\atPos{r}{\pi}$ a variable, and $\pi'$ the unique occurrence of
      $\atPos{r}{\pi}$ in $l$.  .
      \label{item:redexPath2b}
    \end{enumerate}
  \end{enumerate}

  Additionally, the nodes of a path are supposed to be labelled in the
  following way:
  \begin{enumerate}%[(1)]
    \setcounter{enumi}{2}
  \item A node of the form $(\top,\pi)$ is unlabelled if $\pi \in U$
    and is labelled by $t(\pi)$ otherwise.
    \label{item:redexPath3}
  \item A node of the form $(r,\pi, u)$ is unlabelled if
    $\atPos{r}{\pi}$ is a variable and labelled by $r(\pi)$ otherwise.
    \label{item:redexPath4}
  \end{enumerate}
\end{defi}
\def\itemRedexPathI{(\ref{item:redexPath1})}
\def\itemRedexPathIa{(\ref{item:redexPath1a})}
\def\itemRedexPathIb{(\ref{item:redexPath1b})}
\def\itemRedexPathII{(\ref{item:redexPath2})}
\def\itemRedexPathIIa{(\ref{item:redexPath2a})}
\def\itemRedexPathIIb{(\ref{item:redexPath2b})}
\def\itemRedexPathIII{(\ref{item:redexPath3})}
\def\itemRedexPathIV{(\ref{item:redexPath4})}
\begin{rem}
  The above definition is actually a coinductive one. This is
  necessary to also define paths of infinite length. Also in
  \cite{kennaway03book} paths are considered to be possibly infinite,
  although they are defined inductively and are, therefore, finite.
\end{rem}

\begin{rem}
  \label{rem:paths}
  Our definition of paths deviates slightly from the usual definition
  found in the literature \cite{kennaway95ic,ketema10lmcs,ketema11ic}:
  In our setting, term nodes are of the form $(\top,\pi)$. The symbol
  $\top$ is used to indicate that we are in the host term $t$. In the
  definitions found in the literature, the term $t$ itself is used for
  that, i.e.\ term nodes are of the form $(t,\pi)$. Our definition of
  paths makes them less dependant on the term $t$ they are constructed
  in. This makes it easier to construct a path in a host term from
  other paths in different host terms. This will become necessary in
  the proof of Lemma~\ref{lem:presPath}.  However, we have to keep in
  mind that the node labels in a path are dependent on the host term
  under consideration. Thus, the labelling of a path might be
  different depending on which host term it is considered to be in.
\end{rem}

Returning to the schematic example illustrated in
Figure~\ref{fig:path}, we can observe how the construction of a path
is carried out: The path starts with a segment in the term $t$. This
segment is entirely regulated by the rule \itemRedexPathIa{}; all its
edges and nodes are labelled according to \itemRedexPathIa{} and
\itemRedexPathIII{}. The jump to the right-hand side $r_1$ following
that initial segment is justified by rule \itemRedexPathIb{}. This
jump consists of a node $(\top,u_1)$, unlabelled according to
\itemRedexPathIII{}, corresponding to the redex occurrence $u_1$, and
an unlabelled edge to the node $(r_1,\emptyseq,u_1)$, corresponding to
the root of the right-hand side $r_1$. The segment of the path that
runs through the right-hand side $r_1$ is subject to rule
\itemRedexPathIIa{}; again all its nodes and edges are labelled, now
according to \itemRedexPathIIa{} and \itemRedexPathIV{}. As soon as a
variable is reached in the right-hand side term (in the schematic
example it is the variable $x$) a jump to the main term $t$ is
performed as required by rule \itemRedexPathIIb{}. This jump consists
of a node $(r_1,\pi,u_1)$, unlabelled according to \itemRedexPathIV{},
where $\pi$ is the current position in $r_1$, i.e.\ the variable
occurrence, and an unlabelled edge to the node $(\top,u_1\concat
\pi')$. The position $\pi'$ is the occurrence of the variable $x$ in
the left-hand side. As we only consider left-linear systems, this
occurrence is unique. Afterwards, the same behaviour is repeated: A
segment in $t$ is followed by a jump to a segment in the right-hand
side $r_2$ which is in turn followed by a jump back to a final segment
in $t$.

Note that paths do not need to be maximal. As indicated in the
schematic example, the path ends somewhere within the main term, i.e.\
not necessarily at a constant symbol or a variable. What the example
does not show, but which is obvious from the definition, is that paths
can also terminate within a right-hand side. A jump back to the main
term is only required if a variable is encountered.

The purpose of the concept of paths is to simulate the contraction of
all redexes of the complete development in a locally restricted
manner, i.e.\ only along some branch of the term tree. This locality
will keep the proofs more concise and makes them easier to understand
once we have grasped the idea behind paths. The strategy to prove our
conjecture of uniquely determined final terms is to show that paths
can be used to define a term and that a contraction of a redex of the
complete development preserves a property of the collection of all
paths which ensures that the induced term remains invariant. Then we
only have to observe that the induced term of paths in a term with no
redexes (in $U$) is the term itself.

The following fact is obvious from the definition of a path.
\begin{fact}
  \label{fact:emptyEdge}
  Let $\calR$ be a left-linear TRS, $t$ a partial term in $\calR$, and
  $U$ a set of redex occurrences in $t$.
  \begin{enumerate}[label=(\roman*)]
  \item An edge in a $U,\calR$-path in $t$ is unlabelled iff the
    preceding node is unlabelled.
  \item Any prefix of a $U,\calR$-path in $t$ that ends in a node is
    also a $U,\calR$-path in $t$.
  \end{enumerate}
\end{fact}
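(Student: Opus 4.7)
The plan is to prove both clauses by a direct appeal to the clauses of Definition~\ref{def:redexPath}, performing a case analysis on the shape of the node that precedes the edge in question (for clause~(i)) and on where the prefix terminates (for clause~(ii)). Both statements are bookkeeping observations about the definition, so no new machinery will be required; the only care needed is to match up the labelling conventions \itemRedexPathIII{} and \itemRedexPathIV{} with the edge-labelling conditions in \itemRedexPathI{} and \itemRedexPathII{}.

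For clause~(i), I would fix a segment $n \edge n'$ in a $U,\calR$-path and split on the form of $n$. If $n = (\top,\pi)$, then exactly one of the sub-rules \itemRedexPathIa{} and \itemRedexPathIb{} applies to the outgoing edge; case \itemRedexPathIa{} forces $\pi \notin U$, so by \itemRedexPathIII{} the node is labelled $t(\pi)$ and the edge carries the label $i$, while case \itemRedexPathIb{} requires $\pi \in U$, so \itemRedexPathIII{} leaves $n$ unlabelled and \itemRedexPathIb{} leaves the edge unlabelled. The case $n = (r,\pi,u)$ is entirely analogous using \itemRedexPathIIa{}/\itemRedexPathIIb{} together with \itemRedexPathIV{}: in sub-case \itemRedexPathIIa{} the symbol $\atPos{r}{\pi}$ is not a variable so both node and edge are labelled, while in \itemRedexPathIIb{} it is a variable, which simultaneously unlabels the node and the outgoing edge. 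Thus in every case labelled $\Iff$ labelled and unlabelled $\Iff$ unlabelled, as required.

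For clause~(ii), let $P'$ be a prefix of a $U,\calR$-path $P$ ending in a node. The three defining features of a $U,\calR$-path are: (a)~it starts with the node $(\top,\emptyseq)$; (b)~every node/edge carries the labels prescribed by \itemRedexPathIII{}/\itemRedexPathIV{}; and (c)~every internal segment $n \edge n'$ satisfies one of the transition rules \itemRedexPathIa{}--\itemRedexPathIIb{}. Requirement~(a) is inherited because $P'$ is a non-empty prefix of $P$; requirement~(b) is inherited because the labelling is a purely local property of each node and edge; and requirement~(c) is inherited because every internal segment of $P'$ is an internal segment of $P$. Finally, $P'$ is by hypothesis either finite ending in a node, or infinite, so it satisfies the termination clause of Definition~\ref{def:redexPath}.

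I expect no real obstacle here; the only subtle point is that one must actually read clauses~\itemRedexPathIb{} and~\itemRedexPathIIb{} as asserting simultaneously that the preceding node is unlabelled (via \itemRedexPathIII{}/\itemRedexPathIV{}) and that the outgoing edge is unlabelled, so that the biconditional in~(i) is really a conjunction of four implications that each follow from a single defining clause.
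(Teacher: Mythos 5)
Your proof is correct and is exactly the routine unfolding of Definition~\ref{def:redexPath} that the paper has in mind: the paper states this Fact without proof, calling it obvious from the definition, and your case analysis (matching \itemRedexPathIa{}/\itemRedexPathIb{} and \itemRedexPathIIa{}/\itemRedexPathIIb{} against the labelling clauses \itemRedexPathIII{}/\itemRedexPathIV{}, plus the locality of all conditions for the prefix claim) is precisely the verification one would write out. The one observation worth keeping is your closing remark that clause~\itemRedexPathIb{} must be read with the jump node at the redex position itself (so that \itemRedexPathIII{} unlabels it), which is indeed how the paper uses the definition.
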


\noindent As we have already mentioned, collapsing rules and in particular
so-called infinite collapsing towers play a significant role in
$\mrs$-convergent reductions as they obstruct complete
developments. Also in our setting of $\prs$-convergent reductions they
are important as they are responsible for volatile positions:
\begin{defi}[collapsing rules]
  Let $\calR$ be a TRS.
  \begin{enumerate}[label=(\roman*)]
  \item A rule $l \to r$ in $\calR$ is called \emph{collapsing} if $r$
    is a variable. The unique position of the variable $r$ in $l$ is
    called the \emph{collapsing position} of the rule.
  \item A $\rho$-redex is called \emph{collapsing} if $\rho$ is a
    collapsing rule.
  \item A \emph{collapsing tower} is a non-empty sequence $(u_i)_{i <
      \alpha}$ of collapsing redex occurrences in a term $t$ such that
    $u_{i+1} = u_i \concat \pi_i$ for each $i<\alpha$, where $\pi_i$
    is a collapsing position of the redex at $u_i$. It is called
    \emph{maximal} if it is not a proper prefix of another collapsing
    tower.
  \end{enumerate}
\end{defi}

\noindent One can easily see that, in orthogonal TRSs, maximal collapsing towers
in the same term are uniquely determined by their topmost redex
occurrence. That is, two maximal collapsing towers $(u_i)_{i<\alpha},
(v_i)_{i<\alpha}$ in the same term are equal iff $u_0 = v_0$.

As mentioned, we shall use the $U,\calR$-paths in a term $t$ in order
to define the final term of a complete development of $U$ in
$t$. However, in order to do that, we only need the information that
is available from the labellings. The inner structure of nodes is only
used for the bookkeeping that is necessary for defining paths. The
following notion of traces defines projections to the labels of paths:
\begin{defi}[trace]
  Let $\calR$ be a left-linear TRS, $t$ a partial term in $\calR$, and
  $U$ a set of pairwise non-conflicting redex occurrences in $t$.
  \begin{enumerate}[label=(\roman*)]
  \item Let $\Pi$ be a $U,\calR$-path in $t$. The \emph{trace} of
    $\Pi$, denoted $\trace{t}{\Pi}$, is the projection of $\Pi$ to the
    labelling of its nodes and edges ignoring empty labels and the
    node label $\bot$.
  \item $\paths{t}{U}{\calR}$ is used to denote the set of all
    $U,\calR$-paths in $t$ that end in a labelled node, or are
    infinite but have a finite trace. The set of traces of paths in
    $\paths{t}{U}{\calR}$ is denoted by $\traces{t}{U}{\calR}$.
  \end{enumerate}
\end{defi}

\noindent By Fact~\ref{fact:emptyEdge}, the trace of a path is a sequence
alternating between elements in $\Sigma \cup \calV$ and $\nat$, which,
if non-empty, starts with an element in $\Sigma \cup \calV$. Moreover,
by definition, $\traces{t}{U}{\calR}$ is a set of finite traces of
$U,\calR$-paths in $t$.

As we have mentioned in Remark~\ref{rem:paths}, the labelling of a
path depends on the host term under consideration. Hence, also the
trace of a path is depended on the host term. That is why we need to
index the trace mapping $\trace{t}{\cdot}$ with the corresponding host
term $t$.

\begin{exa}
  Consider the term $t = g(f(g(h(\bot))))$ and the TRS $\calR$
  consisting of the two rules 
  \[
  f(x) \to h(x), \qquad h(x) \to x.
  \]
  Furthermore, let $U$ be the set of all redex occurrences in $t$,
  viz.\ $U = \set{\seq{0},\seq{0}^3}$. The following path $\Pi$ is a
  $U,\calR$-path in $t$:
  \begin{align*}
    \node{(\top,\emptyseq)}[g] &\edge[0] \node{(\top,\seq{0})}[\emptylab]
    \edge[\emptylab] \node{(r_1,\emptyseq,\seq{0})}[h] \edge[0]
    \node{(r_1,\seq{0},\seq{0})}[\emptylab] \edge[\emptylab]
    \node{(\top,\seq{0}^2)}[g] \\ &\edge[0] \node{(\top,\seq{0}^3)}[\emptylab]
    \edge[\emptylab] \node{(r_2,\emptyseq,\seq{0}^3)}[\emptylab]
    \edge[\emptylab] \node{(\top,\seq{0}^4)}[\bot]
  \end{align*}
  As a matter of fact, $\Pi$ is the greatest path of $t$. Hence,
  according to Fact~\ref{fact:emptyEdge}, the set of all prefixes of
  $\Pi$ ending in a node is the set of all $U,\calR$-paths in
  $t$. Note that since $\Pi$ itself ends in a labelled node, it is
  contained in $\paths{t}{U}{\calR}$. The trace $\trace{t}{\Pi}$ of $\Pi$
  is the sequence
  \[
  \seq{g, 0, h, 0, g, 0}
  \]

  Now consider the term $t' = g(f(g(h^\omega)))$ and the set $U'$ of
  all its redexes, viz.\ $U' =
  \set{\seq{0}}\cup\set{\seq{0}^3,\seq{0}^4,\dots}$. Then the
  following path $\Pi'$ is a $U,\calR$-path in $t'$:
  \begin{align*}
    \node{(\top,\emptyseq)}[g] &\edge[0] \node{(\top,\seq{0})}[\emptylab]
    \edge[\emptylab] \node{(r_1,\emptyseq,\seq{0})}[h] \edge[0]
    \node{(r_1,\seq{0},\seq{0})}[\emptylab] \edge[\emptylab]
    \node{(\top,\seq{0}^2)}[g] \edge[0] \node{(\top,\seq{0}^3)}[\emptylab]
    \\ &\edge[\emptylab] \node{(r_2,\emptyseq,\seq{0}^3)}[\emptylab]
    \edge[\emptylab] \node{(\top,\seq{0}^4)}[\emptylab] \edge[\emptylab] \node{(r_2,\emptyseq,\seq{0}^4)}[\emptylab]
    \edge[\emptylab] \node{(\top,\seq{0}^5)}[\emptylab] \edge[\emptylab] \dots
  \end{align*}
  $\Pi'$ is the greatest path of $t'$. The trace $\trace{t'}{\Pi'}$ of
  $\Pi'$ is the sequence
  \[
  \seq{g, 0, h, 0, g, 0}
  \]
  Since $\Pi'$ is infinitely long but has a finite trace, it is
  contained in $\paths{t'}{U}{\calR}$.
\end{exa}

The lemma below shows that there is a one-to-one correspondence
between paths in $\paths{t}{U}{\calR}$ and their traces in
$\traces{t}{U}{\calR}$.
\begin{lem}[$\trace{t}{\cdot}$ is a bijection]
  \label{lem:traceBij}
  % $\trace{\cdot}$ is a bijection %
  Let $\calR$ be an orthogonal TRS, $t$ a partial term in $\calR$, and
  $U$ a set of redex occurrences in $t$. $\trace{t}{\cdot}$ is a
  bijection from $\paths{t}{U}{\calR}$ to $\traces{t}{U}{\calR}$.
\end{lem}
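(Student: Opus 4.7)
The plan is to prove the surjectivity and injectivity of $\trace{t}{\cdot}$ separately. Surjectivity is immediate from the definition of $\traces{t}{U}{\calR}$ as the image of $\paths{t}{U}{\calR}$ under $\trace{t}{\cdot}$, so the real content lies in injectivity.

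For injectivity, I would suppose $\Pi_1, \Pi_2 \in \paths{t}{U}{\calR}$ share a common trace $\tau$, and show that their underlying node/edge sequences agree. Since paths may be infinite, the argument is naturally coinductive: both paths start with the fixed node $(\top,\emptyseq)$, and given that they agree up through some node $n$, I would argue the next edge and node are uniquely determined. By Fact~\ref{fact:emptyEdge}, whether the outgoing edge is labelled is dictated by whether $n$ itself is labelled, so there is no choice between clauses \itemRedexPathIa{}/\itemRedexPathIb{} (respectively \itemRedexPathIIa{}/\itemRedexPathIIb{}) of Definition~\ref{def:redexPath}.

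The two subcases are then handled as follows. If $n$ is unlabelled, the outgoing edge is empty and its target is forced by the structure of $\calR$ and $t$: for $n = (\top,\pi)$ with $\pi \in U$, orthogonality (non-overlappingness together with left-linearity) guarantees a unique rule $\rho\fcolon l \to r$ of which $\atPos{t}{\pi}$ is an instance, so the successor must be $(r,\emptyseq,\pi)$; for $n = (r,\pi,u)$ with $\atPos{r}{\pi}$ a variable, left-linearity of $l$ gives a unique occurrence $\pi'$ of that variable in $l$, forcing the successor to be $(\top, u\concat\pi')$. If instead $n$ is labelled, then the outgoing edge carries a label $i \in \nat$, which must coincide with the next $\nat$-entry of $\tau$ that has not yet been consumed; the successor is then $(\top,\pi\concat i)$ or $(r,\pi\concat i,u)$ respectively, again uniquely determined.

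The main subtlety, and the only place the membership condition for $\paths{t}{U}{\calR}$ is used, is the \emph{termination} of the path. At a labelled node $n$, the outgoing edge would be labelled and hence contribute to the trace, so the path must end at $n$ precisely when $\tau$ has been fully consumed; at an unlabelled node the outgoing edge is empty and hence does not consume from $\tau$, so the path cannot terminate there (a terminating path in $\paths{t}{U}{\calR}$ ends in a labelled node), and must instead continue along the forced empty-edge transition. In the event that this forced continuation never re-encounters a labelled node, we obtain an infinite path of finite trace, which again is uniquely determined by the deterministic nature of the empty-edge transitions. In this way, both the node sequence and its length are pinned down by $\tau$, yielding $\Pi_1 = \Pi_2$.
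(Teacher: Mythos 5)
Your proposal is correct. Surjectivity is indeed immediate from the definition, and your injectivity argument -- a forward, node-by-node determinism argument showing that the trace together with the already-constructed prefix forces the next edge and node -- is sound: Fact~\ref{fact:emptyEdge} fixes which clause of Definition~\ref{def:redexPath} applies, non-overlappingness gives the unique rule at a redex occurrence, left-linearity gives the unique variable occurrence for the jump back, and membership in $\paths{t}{U}{\calR}$ (a finite path ends in a labelled node) pins down where the path may stop. The paper establishes the same injectivity claim in the opposite direction: by induction on the length of the common (necessarily finite) trace, peeling off the last one or two labels, invoking the induction hypothesis on suitable prefixes of $\Pi_1,\Pi_2$, and handling the base cases by an explicit analysis of finite and infinite collapsing towers together with their uniqueness in orthogonal systems. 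Your coinductive organization absorbs the collapsing-tower cases uniformly into the ``forced empty-edge transition'' step, which is arguably cleaner and makes the treatment of infinite paths with finite traces fall out for free; the paper's induction on trace length avoids any appeal to a coinduction principle, at the cost of having to identify which prefixes of the two paths realize a given trace prefix. The essential content -- that every apparent choice in extending a path is resolved by orthogonality, so a path is determined by its trace -- is the same in both arguments.
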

\begin{proof}
  By definition, $\trace{t}{\cdot}$ is surjective. Let $\Pi_1, \Pi_2$ be
  two paths having the same trace. We will show that then $\Pi_1 =
  \Pi_2$ by an induction on the length of the common trace.

  Let $\trace{t}{\Pi_1} = \emptyseq$. Following Fact~\ref{fact:emptyEdge},
  there are two different cases: The first case is that $\Pi_1 =
  \Pi\concat \node{(\top, \pi)}[\bot]$, where the prefix $\Pi$
  corresponds to a finite maximal collapsing tower $(u_i)_{i \le
    \alpha}$ starting at the root of $t$ or $\Pi$ is empty if such a
  collapsing tower does not exists. If the collapsing tower exists,
  then
  \[
  \Pi = \node{(\top,u_0)}[\emptylab] \edge[\emptylab]
  \node{(r_0,\emptyseq,u_0)}[\emptylab] \edge[\emptylab]
  \node{(\top, u_1)}[\emptylab] \edge[\emptylab]
  \node{(r_1,\emptyseq,u_1)}[\emptylab] \edge[\emptylab] 
  \dots \edge[\emptylab]
  \node{(\top, u_\alpha)}[\emptylab] \edge[\emptylab]
  \]
  But then also $\Pi_2$ starts with the prefix $\Pi \concat (\top,\pi)$
  due to the uniqueness of the collapsing tower and the involved
  rules. In both cases, $\Pi_1 = \Pi_2$ follows immediately.

  The second case is that $\Pi_1$ is infinite. Then there is an
  infinite collapsing tower $(u_i)_{i < \omega}$ starting at the root
  of $t$. Hence,
  \[
  \Pi_1 = \node{(\top,u_0)}[\emptylab] \edge[\emptylab]
  \node{(r_0,\emptyseq,u_0)}[\emptylab] \edge[\emptylab]
  \node{(\top, u_1)}[\emptylab] \edge[\emptylab]
  \node{(r_1,\emptyseq,u_1)}[\emptylab] \edge[\emptylab] \dots
  \]
  $\Pi_1 = \Pi_2$ follows from the uniqueness of the infinite
  collapsing tower.

  At first glance one might additionally find a third case where
  $\Pi_1 = \Pi \concat \node{(\top,\pi)}[\emptylab] \edge[\emptylab]
  \node{(r,\emptyseq,\pi)}[\bot]$ with $\Pi$ a prefix corresponding to
  a collapsing tower as in the first case. However, this is not
  possible as it would require the occurrence of $\bot$ in a rule.

  Let $\trace{t}{\Pi_1} = f$. Then there are two cases: Either $\Pi_1 =
  \Pi\concat \node{(\top, \pi)}[f]$ or $\Pi_1 = \Pi \concat
  \node{(\top,\pi)}[\emptylab] \edge[\emptylab]
  \node{(r,\emptyseq,\pi)}[f]$, where the prefix $\Pi$ corresponds to a
  finite maximal collapsing tower $(u_i)_{i \le \alpha}$ starting at
  the root of $t$ or $\Pi$ is empty if such a collapsing tower does
  not exists. The argument is analogous to the argument employed for
  the first case of the induction base above.

  Finally, we consider the induction step. Hence, there are the two
  cases: Either $\trace{t}{\Pi_1} = T \concat \seq{i}$ or
  $\trace{t}{\Pi_1} = T \concat \seq{i, f}$. For both cases,
  the induction hypothesis can be invoked by taking two prefixes
  $\Pi'_1$ and $\Pi'_2$ of $\Pi_1$ and $\Pi_2$, respectively, which
  both have the trace $T$ and, therefore, are equal according to the
  induction hypothesis. The argument that the remaining suffixes of
  $\Pi_1$ and $\Pi_2$ are equal is then analogous to the argument for
  two base cases.
\end{proof}

As mentioned above, the traces of paths contain all information
necessary to define a term which we will later identify to be the
final term of the corresponding complete development. The following
definition explains how such a term, called a \emph{matching term}, is
determined:
\begin{defi}[matching term]
  Let $\calR$ be a left-linear TRS, $t$ a partial term in $\calR$, and
  $U$ a set of pairwise non-conflicting redex occurrences in $t$.
  \begin{enumerate}[label=(\roman*)]
  \item The \emph{position} of a trace $T \in \traces{t}{U}{\calR}$,
    denoted $\postrace{T}$, is the subsequence of $T$ containing only
    the edge labels. The set of all positions of traces in
    $\traces{t}{U}{\calR}$ is denoted $\postraces{t}{U}{\calR}$.
  \item The \emph{symbol} of a trace $T \in \traces{t}{U}{\calR}$,
    denoted $\symtrace{t}{T}$, is $f$ if $T$ ends in a node label $f$,
    and is $\bot$ otherwise, i.e.\ whenever $T$ is empty or ends in an
    edge label.
  \item A term $t'$ is said to \emph{match} $\traces{t}{U}{\calR}$ if
    $\pos{t'} = \postraces{t}{U}{\calR}$ and $t'(\postrace{T}) =
    \symtrace{t}{T}$ for all $T \in \traces{t}{U}{\calR}$.
  \end{enumerate}
\end{defi}

\noindent Returning to the definition of paths, one can see that the label of a
node is the symbol of the ``current'' position in a term. Similarly,
the label of an edge says which edge in the term tree was taken at
that point in the construction of the path. Hence, by projecting to
the edge labels, we obtain the ``history'' of the path, i.e.\ the
position. In the same way we obtain the symbol of that node by taking
the label of the last node of the path, provided the corresponding
path ends in a non-$\bot$-labelled node. In the other case that the
trace does not end in a node label, the corresponding path either ends
in a node labelled $\bot$ or is infinite. As we will see, infinite
paths with finite traces correspond to infinite collapsing towers,
which in turn yield volatile positions within the complete
development. Eventually, these volatile positions will also give rise
to $\bot$ subterms.

The following lemma shows that there is also a one-to-one
correspondence between the traces in $\traces{t}{U}{\calR}$ and their
positions in $\postraces{t}{U}{\calR}$:
\begin{lem}[$\postrace{\cdot}$ is a bijection]
  \label{lem:postraceBij}
  % $\postrace{\cdot}$ is a bijection %
  Let $\calR$ be an orthogonal TRS, $t$ a partial term in $\calR$ and
  $U$ a set of redex occurrences in $t$. $\postrace{\cdot}$ is a
  bijection from $\traces{t}{U}{\calR}$ to $\postraces{t}{U}{\calR}$.
\end{lem}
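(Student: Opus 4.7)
The plan is to dispatch surjectivity immediately, since by definition $\postraces{t}{U}{\calR}$ is the image of $\traces{t}{U}{\calR}$ under $\postrace{\cdot}$, and then reduce injectivity, via Lemma~\ref{lem:traceBij}, to showing that two paths $\Pi_1,\Pi_2 \in \paths{t}{U}{\calR}$ with $\postrace{\trace{t}{\Pi_1}} = \postrace{\trace{t}{\Pi_2}}$ must coincide.

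The central observation is that the choices in building a $U,\calR$-path are recorded exactly in its numerical edge labels, which are precisely what $\postrace{\cdot}$ retains. All unlabelled transitions are forced: from $(\top,\pi)$ with $\pi \in U$ the path must jump to $(r,\emptyseq,\pi)$ for the right-hand side $r$ of the unique rule matching $\atPos{t}{\pi}$, where uniqueness of the rule uses orthogonality (non-overlap); from $(r,\pi,u)$ with $\atPos{r}{\pi} = x \in \calV$ the path must jump to $(\top, u\concat\pi')$ for the unique occurrence $\pi'$ of $x$ in the corresponding left-hand side, where uniqueness uses left-linearity. The targets of numerically labelled edges are in turn dictated by source and label via \itemRedexPathIa{} and \itemRedexPathIIa{}. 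Thus a path is completely reconstructible from its initial node $(\top,\emptyseq)$ together with the sequence of numerical edge labels it uses.

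The proof will then walk $\Pi_1$ and $\Pi_2$ in parallel: at each stage either both paths take the forced unlabelled edge available at the common current node, or they both take the numerically labelled edge whose label comes next in the shared position, so corresponding nodes and edges agree. The subtlety I expect to be the main obstacle is the termination behaviour, since paths in $\paths{t}{U}{\calR}$ are either finite with a labelled terminal node or infinite with a finite trace (i.e.\ ending in an infinite collapsing tower). Once the common position has been consumed, each path continues solely by forced edges, and from any fixed node this chain of forced transitions is uniquely determined: it either hits a labelled node, at which both paths terminate, or continues as an infinite collapsing tower, which is again unique by orthogonality. Hence $\Pi_1 = \Pi_2$, and therefore $\trace{t}{\Pi_1} = \trace{t}{\Pi_2}$, completing the proof.
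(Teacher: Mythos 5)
Your proposal is correct and takes essentially the same route as the paper: the paper likewise reduces the claim to the bijectivity of the composite $\postrace{\cdot}\circ\trace{t}{\cdot}$ from $\paths{t}{U}{\calR}$ to $\postraces{t}{U}{\calR}$, invoking Lemma~\ref{lem:traceBij} together with an ``argument similar'' to its proof. Your parallel-walk argument --- that all unlabelled transitions are forced by orthogonality and left-linearity, that the numerical edge labels are exactly what $\postrace{\cdot}$ records, and that termination behaviour is pinned down because paths in $\paths{t}{U}{\calR}$ may only end in labelled nodes --- is precisely the content of that omitted similar argument, so you have simply filled in the details the paper leaves implicit.
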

\begin{proof}
  An argument similar to the one for Lemma~\ref{lem:traceBij} can be
  given in order to show that the composition
  $\postrace{\cdot}\circ\trace{t}{\cdot}$ is a bijection. Together with
  the bijectivity of $\trace{s}{\cdot}$, according to
  Lemma~\ref{lem:traceBij}, this yields the bijectivity of
  $\postrace{\cdot}$.
\end{proof}

Having this lemma, the following proposition is an easy consequence of
the definition of matching terms. It shows that matching terms do
always exists and are uniquely determined:
\begin{prop}[unique matching term]
  Let $\calR$ be an orthogonal TRS, $t$ a partial term in $\calR$, and
  $U$ a set of redex occurrences in $t$. Then there is a unique term,
  denoted $\devTerm{t}{U}{\calR}$, that
  matches $\traces{t}{U}{\calR}$.
\end{prop}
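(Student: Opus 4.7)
The plan is to derive uniqueness and existence separately, leaning on Lemma~\ref{lem:postraceBij}. Uniqueness is immediate from the definition of matching: if $t_1$ and $t_2$ both match $\traces{t}{U}{\calR}$, then $\pos{t_1} = \postraces{t}{U}{\calR} = \pos{t_2}$, and for every $\pi$ in this common position set Lemma~\ref{lem:postraceBij} furnishes a unique $T$ with $\postrace{T} = \pi$, so $t_1(\pi) = \symtrace{t}{T} = t_2(\pi)$. Hence $t_1 = t_2$.

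For existence, I would define a candidate $\devTerm{t}{U}{\calR}$ directly from the bijection: its set of positions is $\postraces{t}{U}{\calR}$, and its symbol at position $\pi$ is $\symtrace{t}{\postrace{\cdot}^{-1}(\pi)}$, well-defined by Lemma~\ref{lem:postraceBij}. The content of the proof is then to verify that this data actually determines an element of $\ipterms$, i.e.\ that $\postraces{t}{U}{\calR}$ is closed under prefixes and that its cardinality at each position is consistent with the arity of the symbol placed there.

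For prefix closure, given $\pi \in \postraces{t}{U}{\calR}$ realised by a path $\Pi \in \paths{t}{U}{\calR}$ with $\postrace{\trace{t}{\Pi}} = \pi$, I would truncate $\Pi$ at the earliest point where the edge labels accumulated so far equal a chosen prefix $\pi'$ of $\pi$ \emph{and} a labelled node has been reached. Since $\Pi$ must pass through such a node on the way to producing the next edge label, this truncation is always possible, lies in $\paths{t}{U}{\calR}$ by Fact~\ref{fact:emptyEdge}, and has trace position $\pi'$.

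For arity consistency, I would split on the symbol assigned to $\pi$. If it is a function symbol $f$ of arity $k$, then $\Pi$ ends at a labelled node of the form $(\top,\pi'')$ with $\pi'' \notin U$ and $t(\pi'') = f$, or $(r,\pi''',u)$ with $r(\pi''') = f$; clauses \itemRedexPathIa{} and \itemRedexPathIIa{} of Definition~\ref{def:redexPath} provide exactly the edges labelled $0,\ldots,k-1$ out of this node, and following any such edge and then the (deterministic) sequence of empty-labelled jumps either terminates at the next labelled node or enters an infinite collapsing tower. The former case yields a path in $\paths{t}{U}{\calR}$ ending in a labelled node; the latter yields an infinite path with finite trace — also a member of $\paths{t}{U}{\calR}$. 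Either way, $\pi\concat\seq{i} \in \postraces{t}{U}{\calR}$ for every $i < k$. If the symbol is a variable, a constant, or $\bot$, then the ending node of $\Pi$ has no outgoing edges (the side conditions of \itemRedexPathIa{} and \itemRedexPathIIa{} fail), so no extension of $\pi$ is realisable. The main obstacle is precisely the bookkeeping around collapsing towers in the arity-consistency step: infinite towers must not spuriously contribute extensions beyond the position $\pi$ where the trace becomes stuck, while finite towers must be recognised as producing a labelled node downstream. Both behaviours are already built into the definition of $\paths{t}{U}{\calR}$, so the construction goes through.
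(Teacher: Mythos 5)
Your proof is correct and follows essentially the same route as the paper's: both define the candidate term from the bijection of Lemma~\ref{lem:postraceBij} (position $\mapsto$ symbol of the unique trace over it), reduce existence to prefix-closure and arity-consistency of $\postraces{t}{U}{\calR}$, and get uniqueness immediately from the definition of matching. The paper dismisses those two closure properties as ``easy to see from the definition of paths'', whereas you spell them out, including the finite/infinite collapsing-tower cases; your added detail is sound (the only cosmetic slip is speaking of the ``ending node'' of an infinite path in the $\bot$ case, which your closing remark about towers already covers).
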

\begin{proof}
  Define the mapping $\phi\fcolon \postraces{t}{U}{\calR} \funto
  \Sigma_\bot \cup \calV$ by setting $\phi(\postrace{T}) =
  \symtrace{t}{T}$ for each trace $T \in \traces{t}{U}{\calR}$. By
  Lemma~\ref{lem:postraceBij}, $\phi$ is well-defined. Moreover, it is
  easy to see from the definition of paths, that
  $\postraces{t}{U}{\calR}$ is closed under prefixes and that $\phi$
  respects the arity of the symbols, i.e.\ $\pi\concat i \in
  \postraces{t}{U}{\calR}$ iff $0 \le i < \srank{\phi(\pi)}$. Hence,
  $\phi$ uniquely determines a term $s$ with $s(\pi) = \phi(\pi)$ for
  all $\pi \in \postraces{t}{U}{\calR}$. By construction, $s$ matches
  $\traces{t}{U}{\calR}$. Moreover, any other term $s'$ matching
  $\traces{t}{U}{\calR}$ must satisfy $s'(\pi) = \phi(\pi)$ for all
  $\pi \in \postraces{t}{U}{\calR}$ and is therefore equal to $s$.
\end{proof}

It is also obvious that the matching term of a term $t$ w.r.t.\ an
empty set of redex occurrences is the term $t$ itself.
\begin{lem}[matching term w.r.t.\ empty redex set]
  \label{lem:matchEmpty}
  % matching term w.r.t.\ empty redex set %
  For any TRS $\calR$ and any partial term $t$ in $\calR$, it holds
  that $\devTerm{t}{\emptyset}{\calR} = t$.
\end{lem}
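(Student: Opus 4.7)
My plan is to analyse the structure of $\emptyset,\calR$-paths in $t$, which degenerates drastically when $U = \emptyset$. Since clauses \itemRedexPathIb{}, \itemRedexPathIIa{}, \itemRedexPathIIb{} all presuppose a redex occurrence $u \in U$ or the existence of a node $(r,\pi,u)$, they cannot be invoked when $U = \emptyset$. Consequently, every $\emptyset,\calR$-path arises from clause \itemRedexPathIa{} alone, and so has the form
\[
\node{(\top,\emptyseq)} \edge[i_0] \node{(\top,\seq{i_0})} \edge[i_1] \node{(\top,\seq{i_0,i_1})} \edge[i_2] \cdots
\]
That is, a path is simply a top-down walk along the edges of $t$ and is therefore completely described by a position $\pi = \seq{i_0, i_1, \dots} \in \pos t$ (of length equal to the number of edges of the path). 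Every edge label is an element of $\nat$, hence never ignored in the trace; therefore any infinite path would have an infinite trace and thus not belong to $\paths{t}{\emptyset}{\calR}$. All members of $\paths{t}{\emptyset}{\calR}$ are finite walks in $t$.

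Next I would verify the two conditions needed for $\devTerm{t}{\emptyset}{\calR} = t$. For each $\pi \in \pos t$, the unique walk $\Pi_\pi$ ending at $(\top, \pi)$ terminates in the labelled node $(\top,\pi)$ with label $t(\pi)$, since no position is in $U = \emptyset$, so by clause \itemRedexPathIII{} the node is labelled by $t(\pi)$. Its trace $T_\pi = \trace{t}{\Pi_\pi}$ thus has position $\postrace{T_\pi} = \pi$ (the projection to the edge labels recovers exactly $\pi$) and symbol $\symtrace{t}{T_\pi} = t(\pi)$: if $t(\pi) \neq \bot$ then $T_\pi$ ends in the node label $t(\pi)$, and if $t(\pi) = \bot$ then $T_\pi$ ends in the edge label $i_{n-1}$ (the $\bot$ being stripped by the trace) and $\symtrace{t}{T_\pi}$ is defined to be $\bot = t(\pi)$ in that case. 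Conversely, by Lemma~\ref{lem:traceBij} and Lemma~\ref{lem:postraceBij}, every trace in $\traces{t}{\emptyset}{\calR}$ arises from some such walk $\Pi_\pi$.

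Combining these observations, $\postraces{t}{\emptyset}{\calR} = \pos t$ and every trace $T$ with $\postrace T = \pi$ satisfies $\symtrace{t}{T} = t(\pi)$. By definition of matching, $\devTerm{t}{\emptyset}{\calR}$ is the unique term with these positions and these symbols, namely $t$ itself. No step here is deep; the only thing that requires minor care is noticing that the definition of $\symtrace{\cdot}{\cdot}$ yields the correct value $\bot$ precisely in the case when the walk terminates at a $\bot$-labelled leaf, so that partial and total terms are handled uniformly.
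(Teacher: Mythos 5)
Your argument is correct and is exactly the elaboration of what the paper dismisses as ``Straightforward'': with $U=\emptyset$ only clause \itemRedexPathIa{} can fire, so paths are precisely the walks from the root to the positions of $t$, their positions recover $\pos{t}$, and their symbols recover $t(\pi)$ (with the $\bot$-leaf case handled by the convention in the definition of $\symtrace{t}{\cdot}$). The only cosmetic remark is that your appeal to Lemmas~\ref{lem:traceBij} and~\ref{lem:postraceBij} is unnecessary (and those are stated for orthogonal systems), since surjectivity of $\trace{t}{\cdot}$ onto $\traces{t}{U}{\calR}$ holds by definition.
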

\begin{proof}
  Straightforward.
\end{proof}

\begin{rem}
  \label{rem:invDevTerm}
  Now it only remains to be shown that the matching term stays
  invariant during a development, i.e.\ that, for each development
  $S\fcolon t \pato t'$ of $U$, the matching terms
  $\devTerm{t}{U}{\calR}$ and $\devTerm{t'}{\dEsc{U}{S}}{\calR}$
  coincide. Since the matching term $\devTerm{t}{U}{\calR}$ only
  depends on the set $\traces{t}{U}{\calR}$ of traces, it is
  sufficient to show that $\traces{t}{U}{\calR}$ and
  $\traces{t'}{\dEsc{U}{S}}{\calR}$ coincide. The key observation is
  that in each step $s \to s'$ in a development the paths in $s'$
  differ from the paths in $s$ only in that they might omit some
  jumps. This can be seen in Figure~\ref{fig:path1}: In a step $s \to
  s'$ of a development, (some residual of) some redex occurrence in
  $U$ is contracted. In the picture this corresponds to removing the
  pattern, say $l_1$, of the redex and replacing it by the
  corresponding right-hand side $r_1$ of the rule. One can see that,
  except for the jump to and from the right-hand side $r_1$ the path
  remains the same.
\end{rem}

In order to establish the above observation formally, we need a means
to simulate reduction steps in a development directly as an operation
on paths. The following definition provides a tool for this.
\begin{defi}[position and prefix of a path]
  Let $\calR$ be a left-linear TRS, $t$ a partial term in $\calR$, $U$
  a set of pairwise non-conflicting redex occurrences in $t$, and $\Pi
  \in \paths{t}{U}{\calR}$.
  \begin{enumerate}[label=(\roman*)]
  \item $\Pi$ is said to \emph{contain} a position $\pi \in \pos{t}$
    if it contains the node $(\top,\pi)$.
  \item For each $u \in U$, the \emph{prefix} of $\Pi$ by $u$, denoted
    $\Pi^{(u)}$, is defined as $\Pi$ whenever $\Pi$ does not contain
    $u$ and otherwise as the unique prefix of $\Pi$ that ends in
    $(\top,u)$.
  \end{enumerate}
\end{defi}

\begin{rem}
  It is obvious from the definition that each prefix $\Pi^{(u)}$ of a
  path $\Pi\in \paths{t}{U}{\calR}$ by an occurrence $u$ is the
  maximal prefix of $\Pi$, that does not contain positions that are
  proper extensions of $u$. Hence, if $\Pi$ contains $u$, then
  $\Pi^{(u)}$ is the maximal prefix of $\Pi$ that only contains
  prefixes of $u$ (including $u$ itself).
\end{rem}

The following lemma is the key step towards proving the invariance of
matching terms in developments. It formalises the observation
described in Remark~\ref{rem:invDevTerm}.
\begin{lem}[preservation of traces]
  \label{lem:presPath}
  % preservation of paths %
  Let $\calR$ be an orthogonal TRS, $t$ a partial term in $\calR$, $U$
  a set of redex occurrences in $t$, and $S\fcolon t \pato t'$ a
  development of $U$ in $t$. There is a surjective mapping
  $\theta_S\fcolon \paths{t}{U}{\calR} \funto
  \paths{t'}{\dEsc{U}{S}}{\calR}$ such that $\trace{t}{\Pi} =
  \trace{t'}{\theta_S(\Pi)}$ for all $\Pi \in \paths{t}{U}{\calR}$.
\end{lem}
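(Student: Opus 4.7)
I would prove the lemma by transfinite induction on $\len{S}$. The base case, $\len{S} = 0$, is immediate: $t = t'$ and $U = \dEsc{U}{S}$, so $\theta_S$ is the identity on $\paths{t}{U}{\calR}$, which clearly preserves traces.

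For a successor-ordinal length, by composing the mappings produced in each step it suffices to construct a single-step mapping $\theta_\phi$ for $\phi\fcolon s \to[u,\rho] s'$, where $u$ is a residual of $U$ contracted by $\rho\fcolon l \to r$. I would define $\theta_\phi(\Pi)$ by cases on whether $\Pi$ visits the node $(\top,u)$. If it does not, then every position visited by $\Pi$ lies above or outside $u$ and therefore carries the same symbol in $s$ and $s'$, the residuals of $U\setminus\{u\}$ at these positions transfer unchanged, and $\theta_\phi(\Pi) = \Pi$ works with the same trace. If $\Pi$ does visit $(\top,u)$, then since $(\top,u)$ is unlabelled and $\Pi \in \paths{s}{U}{\calR}$ cannot end in an unlabelled node, $\Pi$ must continue via an unlabelled jump into $r$ by rule (1b), traverse part of $r$ by rule (2a), and possibly jump back to $s$ through an unlabelled edge at a variable occurrence in $r$ by rule (2b). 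I would construct $\theta_\phi(\Pi)$ by splicing out every such pair of unlabelled jumps into and out of a right-hand side, replacing the segment inside $r$ by a direct traversal of the contractum $r\sigma$ in $s'$ at the corresponding positions below $u$. Since all deleted nodes and edges are unlabelled, the trace is preserved.

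For the limit-ordinal case $\len{S} = \lambda$, by the induction hypothesis I have mappings $\theta_\beta = \theta_{\prefix{S}{\beta}}$ for all $\beta < \lambda$. The key observation is that strong $\prs$-convergence of $S$ forces, for each $\Pi$ and each finite prefix of $\theta_\beta(\Pi)$, the existence of some $\beta_0 < \lambda$ beyond which every reduction step takes place strictly below the positions visited by that prefix, so the prefix stabilizes: its node labels coincide with those in $t'$ by Lemma~\ref{lem:nonBotLimRed}, and the residual status of its visited positions stabilizes by Lemma~\ref{lem:descLimRed}. I would then define $\theta_S(\Pi)$ as the limit of this stabilizing sequence. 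Matching traces carry over to the limit automatically. Surjectivity is handled dually: given $\Pi' \in \paths{t'}{\dEsc{U}{S}}{\calR}$, a preimage is constructed by walking along $\Pi'$ in $t$ and inserting unlabelled jumps at every position whose ancestor under $S$ lies in $U$, identifiable via Proposition~\ref{prop:chaDesc}.

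The main obstacle will be the limit case, and specifically the infinite paths with finite trace. These correspond in $t$ to infinite collapsing towers within $U$ that are never fully contracted in any proper prefix of $S$, and in $t'$ to infinite collapsing towers persisting at outermost-volatile positions of $S$, which by Lemma~\ref{lem:botLimRed} mark $\bot$ symbols in $t'$. Showing that the limit construction yields an infinite path in $t'$ with finite trace in precisely these cases, and, symmetrically, that every such path in $t'$ has a preimage, requires a careful matching between volatile positions of $S$ and infinite sequences of collapsing jumps in paths, relying on the volatility machinery together with Lemma~\ref{lem:descLimRed}.
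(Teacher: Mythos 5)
Your overall architecture matches the paper's: transfinite induction on $\len{S}$, a single-step splicing construction for the successor case (the paper defers this to a citation, so your sketch there is fine), and a limit construction assembled from the mappings $\theta_{\prefix{S}{\beta}}$. The gap is in the limit case, and it sits exactly where you yourself locate ``the main obstacle.'' Your construction rests on the claim that strong $\prs$-convergence forces every finite prefix of $\theta_\beta(\Pi)$ to stabilise, i.e.\ that beyond some $\beta_0$ all steps occur strictly below the positions visited by that prefix. This is false precisely when the path contains a volatile position $\pi$ of $S$: strong $\prs$-convergence (unlike strong $\mrs$-convergence) does not push contraction depths to infinity, and by definition steps at $\pi$ itself recur cofinally, so the portion of $\Pi_\iota$ at and below $(\top,\pi)$ never stabilises. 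Since this is the only case in which anything interesting happens (it is where the $\bot$'s of $t'$ come from, by Lemma~\ref{lem:botLimRed}), the lemma cannot be proved from the stabilisation claim. The paper's device, which you do not supply, is to truncate each image path at the position currently being contracted before passing to the limit: it sets $\theta_S(\Pi) = \liminf_{\iota \limto \alpha} \Pi_\iota^{(\pi_\iota)}$, where $\Pi^{(u)}$ is the prefix of $\Pi$ cut off at the node $(\top,u)$. With this truncation, when an outermost-volatile $\pi$ is hit the limit collapses to the finite prefix $\Pi_\beta^{(\pi)}$ ending in $(\top,\pi)$, which is $\bot$-labelled in $t'$; trace preservation then follows because the discarded suffix runs down an infinite collapsing tower and is entirely unlabelled.

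A second, related inaccuracy: you identify the infinite-paths-with-finite-trace in $t'$ with outermost-volatile positions of $S$. These are distinct phenomena. An outermost-volatile position yields a \emph{finite} path in $t'$ ending in a node labelled $\bot$ (the label $\bot$ is ignored by traces, which is why the trace survives); the infinite paths with finite trace in $t'$ instead come from infinite collapsing towers that persist in $\dEsc{U}{S}$, and they require their own preimage construction in the surjectivity argument (the paper's case (c), which recovers the ancestors of the tower via Remark~\ref{rem:prsAncestor} and pins down a single common preimage across all prefixes using the trace bijection of Lemma~\ref{lem:traceBij}). Your surjectivity sketch of ``walking along $\Pi'$ in $t$ and inserting unlabelled jumps'' does not cover either the outermost-volatile case (where the preimage must continue past the endpoint of $\Pi'$ into a tower of collapsing redexes) or the infinite-tower case, so as it stands the surjectivity half is also incomplete.
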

\begin{proof}
  Let $S = (t_\iota \to[\pi_\iota,c_\iota] t_{\iota + 1})_{\iota
    < \alpha}$. We prove the statement by an induction on $\alpha$.

  If $\alpha = 0$, then the statement is trivially true.

  Suppose that $\alpha$ is a successor ordinal $\beta + 1$. Let
  $T\fcolon t_0 \pto{\beta} t_\beta$ be the prefix of $S$ of length
  $\beta$ and $\phi_\beta\fcolon t_\beta \to[\pi_\beta] t_\alpha$ the
  last step of $S$, i.e.\ $S = T \concat \seq{\phi_\beta}$. By the
  induction hypothesis, there is a surjective mapping $\theta_T\fcolon
  \paths{t}{U}{\calR} \funto \paths{t_\beta}{U'}{\calR}$, with $U' =
  \dEsc{U}{T}$ and $\trace{t}{\Pi} = \trace{t_\beta}{\theta_T(\Pi)}$
  for all $\Pi \in \paths{t}{U}{\calR}$. By a careful case analysis
  (as done in \cite{ketema11ic}), one can show that there is a
  surjective mapping $\theta\fcolon \paths{t_\beta}{U'}{\calR} \funto
  \paths{t_\alpha}{U''}{\calR}$, with $U'' =
  \dEsc{U'}{\seq{\phi_\beta}} = \dEsc{U}{S}$ and $\trace{t_\beta}{\Pi}
  = \trace{t_\alpha}{\theta(\Pi)}$ for all $\Pi \in
  \paths{t_\beta}{U'}{\calR}$. Hence, the composition $\theta_S =
  \theta \circ \theta_T$ is a surjective mapping from
  $\paths{t}{U}{\calR}$ to $\paths{t'}{\dEsc{U}{S}}{\calR}$ and
  satisfies $\trace{t}{\Pi} = \trace{t'}{\theta_S(\Pi)}$ for all $\Pi
  \in \paths{t}{U}{\calR}$.

  Let $\alpha$ be a limit ordinal. By induction hypothesis, there is a
  surjective mapping $\theta_{\prefix{S}{\iota}}$ for each proper
  prefix $\prefix{S}{\iota}$ of $S$ satisfying $\trace{t_0}{\Pi} =
  \trace{t_\iota}{\theta_{\prefix{s}{\iota}}(\Pi)}$ for all $\Pi \in
  \paths{t}{U}{\calR}$. Let $\Pi \in \paths{t}{U}{\calR}$ and
  $\Pi_\iota = \theta_{\prefix{S}{\iota}}(\Pi)$ for each $\iota <
  \alpha$. We define $\theta_S(\Pi)$ as follows:
  \[
  \theta_S(\Pi) = \liminf_{\iota \limto \alpha} \Pi_\iota^{(\pi_\iota)}
  \]

  At first we have to show that $\theta_S$ is well-defined, i.e.\ that
  $\liminf_{\iota \limto \alpha} \Pi_\iota^{(\pi_\iota)}$ is indeed a
  path in $\paths{t'}{\dEsc{U}{S}}{\calR}$, and that it preserves
  traces. There are two cases to be considered: If there is an
  outermost-volatile position $\pi$ in $S$ that is contained in
  $\Pi_\iota$ whenever $\pi_\iota = \pi$, then there is some $\beta <
  \alpha$ with $\pi_\iota \not< \pi$ for all $\beta \le \iota <
  \alpha$. Hence, $\theta_S(\Pi) = \Pi_\beta^{(\pi)}$. By
  Lemma~\ref{lem:nonBotLimRed} and Lemma~\ref{lem:botLimRed}, we have
  that $\Pi_\beta^{(\pi)} \in \paths{t'}{\dEsc{U}{S}}{\calR}$, in
  particular because $t'(\pi) = \bot$. Since the suffix $\Pi'$ with
  $\Pi_\beta = \Pi_\beta^{(\pi)} \concat \Pi'$ follows an infinite
  collapsing tower and is therefore entirely unlabelled, it cannot
  contribute to the trace of $\Pi_\beta$. Consequently,
  \[
  \trace{t}{\Pi} \stackrel{IH}{=} \trace{t_\beta}{\Pi_\beta} =
  \trace{t'}{\Pi_\beta^{(\pi)}} = \trace{t'}{\theta_{S}(\Pi)}.
  \]
  If, on the other hand, there is no such outermost-volatile position,
  then either the sequence $(\Pi_\iota^{(\pi_\iota)})_{\iota<\alpha}$
  becomes stable at some point or the sequence $(\Glb_{\iota<\gamma}
  \Pi_\iota^{(\pi_\iota)})_{\gamma<\alpha}$ grows monotonically
  towards the infinite path $\theta_S(\Pi)$. In both cases
  well-definedness and preservation of traces follows easily from the
  induction hypothesis.

  Lastly, we show the surjectivity of $\theta_S$. To this end, assume
  some $\Pi \in \paths{t'}{\dEsc{U}{S}}{\calR}$. We show the existence
  of a path $\ol \Pi \in \paths{t}{U}{\calR}$ with $\theta_S(\ol \Pi)
  = \Pi$ by distinguishing three cases:

  \begin{enumerate}[label=(\alph*)]
  \item%
  \label{item:presPathA}%
  $\Pi$ ends in a redex node $(r,\pi,u)$. Hence, $u \in
    \dEsc{U}{S}$. According to Lemma~\ref{lem:descLimRed}, this means
    that there is some $\beta < \alpha$ such that
    \begin{gather}
      \label{eq:presPathI}
      \text{$\pi_\iota \not\le u$ for all $\beta \le \iota < \alpha$.}
      \tag{1}
    \end{gather}
    Consequently, all terms in $\setcom{t_\iota}{\beta \le \iota <
      \alpha}$ coincide in all prefixes of $u$, and each $v \in
    \dEsc{U}{S}$ with $v \le u$ is in $\dEsc{U}{\prefix{S}{\iota}}$
    for all $\beta \le \iota < \alpha$.  Hence, for all $\beta \le
    \gamma < \alpha$ we have $\Pi \in
    \paths{t_\gamma}{\dEsc{U}{\prefix{S}{\gamma}}}{\calR}$ with
    $\trace{t'}{\Pi}=\trace{t_\gamma}{\Pi}$. By induction hypothesis
    there is for each $\beta \le \gamma < \alpha$ some $\Pi_\gamma \in
    \paths{t}{U}{\calR}$ that is mapped to $\Pi \in
    \paths{t_\gamma}{\dEsc{U}{\prefix{S}{\gamma}}}{\calR}$ by
    $\theta_{\prefix{S}{\gamma}}$ with $\trace{t}{\Pi_\gamma} =
    \trace{t_\gamma}{\Pi}$. Hence, $\trace{t}{\Pi_\gamma} =
    \trace{t'}{\Pi}$ which means that all paths $\Pi_\gamma$, with
    $\beta \le \gamma < \alpha$, have the same trace in $t$ and are
    therefore equal according to Lemma~\ref{lem:traceBij}. Let us call
    this path $\ol \Pi$. That is, $\theta_{\prefix{S}{\gamma}}(\ol
    \Pi) = \Pi$ for all $\beta \le \gamma < \alpha$. Since $\pi_\gamma
    \not\le u$, we also have $(\theta_{\prefix{S}{\gamma}}
    (\ol\Pi))^{(\pi_\gamma)} = \Pi$. Consequently, $\theta_S(\ol\Pi) =
    \Pi$.

  \item $\Pi$ ends in a term node $(\top, \pi)$. Let $f = t'(\pi)$. If
    $f \neq \bot$, then we can apply Lemma~\ref{lem:nonBotLimRed} to
    obtain some $\beta < \alpha$ such that $\pi_\iota \not\le \pi$ for
    all $\beta \le \iota < \alpha$. Then we can reason as in case
    (\ref{item:presPathA}) starting from \eqref{eq:presPathI}. If $f =
    \bot$, then we have to distinguish two cases according to
    Lemma~\ref{lem:botLimRed}: If there is some $\beta < \alpha$ with
    $t_\beta(\pi) = \bot$ and $\pi_\iota \not\le \pi$ for all $\beta
    \le \iota < \alpha$, then we can again employ the same argument as
    for case (\ref{item:presPathA}) starting from
    \eqref{eq:presPathI}. Otherwise, i.e.\ if $\pi$ is an
    outermost-volatile position in $S$, then we have some $\beta <
    \alpha$ such that $\pi_\iota \not< \pi$ for all $\beta \le \iota <
    \alpha$ and such that
    \begin{gather}
      \label{eq:presPathII}
      \text{for each $\beta \le \gamma < \alpha$ there is some $\gamma
        \le \gamma' < \alpha$ with $\pi_\gamma' = \pi$.}  \tag{2}
    \end{gather}
    Hence, we have for each $\beta \le \gamma < \alpha$ some
    $\Pi_\gamma \in
    \paths{t_\gamma}{\dEsc{U}{\prefix{S}{\gamma}}}{\calR}$ and an
    infinite collapsing tower $(u_i)_{i < \omega}$ in
    $\dEsc{U}{\prefix{S}{\gamma}}$ with $u_0 = \pi$ such that
    $\Pi_\gamma$ is of the form
    \[
    \Pi \concat \edge[\emptylab] \node{(r_0,\emptyseq,u_0)}[\emptylab]
    \edge[\emptylab] \node{(\top, u_1)}[\emptylab] \edge[\emptylab]
    \node{(r_1,\emptyseq,u_1)}[\emptylab] \edge[\emptylab] \dots
    \]
    Therefore, $\trace{t_\gamma}{\Pi_\gamma} = \trace{t'}{\Pi}$. By
    induction hypothesis there is some $\ol \Pi_\gamma \in
    \paths{t}{T}{\calR}$ with
    $\theta_{\prefix{S}{\gamma}}(\ol\Pi_\gamma)=\Pi_\gamma$ and
    $\trace{t}{\ol\Pi_\gamma} = \trace{t_\gamma}{\Pi_\gamma}$. Hence,
    $\trace{t}{\ol\Pi_\gamma} = \trace{t'}{\Pi}$, i.e.\ all
    $\ol\Pi_\gamma$ have the same trace in $t$ and are therefore equal
    according to Lemma~\ref{lem:traceBij}. Let us call this path $\ol
    \Pi$. Since
    $(\theta_{\prefix{S}{\gamma}}(\ol\Pi))^{(\pi)}=\Pi_\gamma^{(\pi)}=
    \Pi$ we can use \eqref{eq:presPathII} to obtain that
    $\theta_S(\ol\Pi) = \Pi$.

  \item $\Pi$ is infinite. Hence, $\Pi$ is of the form
    \[
    \Pi' \concat \node{(\top, u_0)}[\emptylab] \edge[\emptylab]
    \node{(r_0,\emptyseq,u_0)}[\emptylab] \edge[\emptylab]
    \node{(\top, u_1)}[\emptylab] \edge[\emptylab]
    \node{(r_1,\emptyseq,u_1)}[\emptylab] \edge[\emptylab] \dots
    \]
    with $(u_i)_{i<\omega}$ an infinite collapsing tower in
    $\dEsc{U}{S}$. Consequently, by Lemma~\ref{lem:descLimRed}, for
    each $u_i \in \dEsc{U}{S}$ there is some $\beta_i < \alpha$ such
    that
    \begin{gather}
      \label{eq:presPathIII}
      \text{$u_i \in \dEsc{U}{\prefix{S}{\gamma}}$ and $\pi_\gamma
        \not\le u_\gamma$ for all $\beta_i \le \gamma < \alpha$.}
      \tag{3}
    \end{gather}
    Since $(u_i)_{i<\omega}$ is a chain (w.r.t.\ the prefix order), we
    can assume w.l.o.g.\ that $(\beta_i)_{i<\omega}$ is a chain as
    well. Following Remark~\ref{rem:prsAncestor}, we obtain for each
    $u_i \in \dEsc{U}{S}$ its ancestor $v_i \in U$ with $\dEsc{v_i}{S}
    = u_i$. Let $\ol \Pi$ be the unique path in $\paths{t}{U}{\calR}$
    that contains each $v_i$ and for each $j < \omega$ let $\Pi_j$ be
    the unique path in
    $\paths{t_{\beta_j}}{\dEsc{U}{\prefix{S}{\beta_j}}}{\calR}$
    containing each $\dEsc{v_i}{\prefix{S}{\beta_j}}$. Clearly,
    $\theta_{\prefix{S}{\beta_j}}(\ol\Pi) = \Pi_j$. Note that we have
    for each $j < \omega$ that all paths
    $\theta_{\prefix{S}{\iota}}(\ol\Pi)$ with $\beta_j \le \iota <
    \alpha$ coincide in their prefix by $u_j$, which is a prefix of
    $\Pi$. Since additionally $(u_i)_{i<\omega}$ is a strict chain and
    because of \eqref{eq:presPathIII}, we can conclude that
    $\theta_S(\ol\Pi) = \Pi$.\qedhere
  \end{enumerate}

\end{proof}

\noindent The above lemma effectively establishes the invariance of matching
terms during a development. Together with Lemma~\ref{lem:matchEmpty}
this implies the uniqueness of final terms of complete developments of
the same redex occurrences. As a corollary from this, we obtain that
descendants are also unique among all complete developments:
\begin{prop}[final term and descendants of complete
  developments]
  \label{prop:finalCompDev}
  Let $\calR$ be an orthogonal TRS, $t$ a partial term in $\calR$, and
  $U$ a set of redex occurrences in $t$. Then the following holds:
  \begin{enumerate}[label=(\roman*)]
  \item Each complete development of $U$ in $t$ strongly
    $\prs$-converges to $\devTerm{t}{U}{\calR}$.
    \label{item:finalCompDev1}
  \item For each set $V \subseteq \posNonBot{t}$ and two complete
    developments $S$ and $T$ of $U$ in $t$, respectively, it holds
    that $\dEsc{V}{S} = \dEsc{V}{T}$.
    \label{item:finalCompDev2}
  \end{enumerate}
\end{prop}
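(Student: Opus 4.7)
The plan is to establish part~(i) using the trace-preservation mechanism of Lemma~\ref{lem:presPath} together with Lemma~\ref{lem:matchEmpty}, and then to reduce part~(ii) to part~(i) via the labelling characterisation of descendants given by Proposition~\ref{prop:chaDesc}.

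For part~(i), let $S\fcolon t \pato[U] t'$ be a complete development, so by completeness $\dEsc{U}{S} = \emptyset$. Lemma~\ref{lem:presPath} provides a surjective, trace-preserving map $\theta_S\fcolon \paths{t}{U}{\calR} \to \paths{t'}{\emptyset}{\calR}$. Trace preservation yields the inclusion $\traces{t}{U}{\calR} \subseteq \traces{t'}{\emptyset}{\calR}$, while the surjectivity of $\theta_S$ gives the converse inclusion. Hence the two trace sets coincide, and since the matching term is uniquely determined by its trace set, $\devTerm{t}{U}{\calR} = \devTerm{t'}{\emptyset}{\calR}$. The right-hand side equals $t'$ by Lemma~\ref{lem:matchEmpty}, which concludes part~(i).

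For part~(ii), I would use Proposition~\ref{prop:liftLabelled} to lift both complete developments $S$ and $T$ to the labelled system $\calR^\lab$, starting from the common labelled initial term $t^{(V)}$ (extending the labelling technique to any variable positions in $V$ in the obvious way). Since the labelling introduces no new overlaps and preserves left-linearity, $\calR^\lab$ is orthogonal, and the set $U$ of redex occurrences is unchanged when passing from $t$ in $\calR$ to $t^{(V)}$ in $\calR^\lab$. Hence both lifted reductions are complete developments of $U$ in $t^{(V)}$ in $\calR^\lab$, and part~(i) applied to $\calR^\lab$ forces them to strongly $\prs$-converge to the common term $\devTerm{t^{(V)}}{U}{\calR^\lab}$. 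The sets of labelled positions in this common final term therefore agree, and by Proposition~\ref{prop:chaDesc} these sets are precisely $\dEsc{V}{S}$ and $\dEsc{V}{T}$, which yields the claimed equality.

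The genuine conceptual work has already been absorbed into Lemma~\ref{lem:presPath}, whose limit-case construction handles both eventually stabilising sequences of paths and paths that traverse infinite collapsing towers. Once that lemma is in place, both parts of the proposition reduce to bookkeeping on trace sets and labelled positions, with the orthogonality of $\calR^\lab$ and the preservation of $U$ as a set of redex occurrences under lifting as the only nontrivial sanity checks.
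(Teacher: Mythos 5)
Your proposal is correct and follows essentially the same route as the paper: part~(i) is obtained from the surjective trace-preserving map of Lemma~\ref{lem:presPath} together with Lemma~\ref{lem:matchEmpty}, and part~(ii) is reduced to part~(i) by lifting both developments to the orthogonal labelled system starting from $t^{(V)}$ and reading off the descendants via Proposition~\ref{prop:chaDesc}. Your explicit remark about accommodating variable positions in $V$ (since the paper's labelling is only defined on $\posFun{t}$) is a reasonable sanity check that the paper itself glosses over, but it does not change the argument.
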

\begin{proof}
  \def\itema{(\ref{item:finalCompDev1})}
  \def\itemb{(\ref{item:finalCompDev2})}
  \itema{} Let $S\fcolon t \pato[U] t'$ be a complete development of
  $U$ in $t$ strongly $\prs$-converging to $t'$. By
  Lemma~\ref{lem:presPath}, there is a surjective mapping
  $\theta\fcolon \paths{t}{U}{\calR} \funto \paths{t'}{U'}{\calR}$
  with $\trace{t}{\Pi} = \trace{t'}{\theta(\Pi)}$ for all $\Pi \in
  \paths{t}{U}{\calR}$, where $U' = \dEsc{U}{S}$. Hence, it holds that
  $\traces{t}{U}{\calR} = \traces{t'}{U'}{\calR}$ and, consequently,
  $\devTerm{t}{U}{\calR} = \devTerm{t'}{U'}{\calR}$. Since $S$ is a
  complete development of $U$ in $t$, we have that $U' = \emptyset$
  which implies, according to Lemma~\ref{lem:matchEmpty}, that
  $\devTerm{t'}{U'}{\calR} = t'$. Therefore, $\devTerm{t}{U}{\calR} =
  t'$.

  \itemb{} Let $t' = t^{(V)}$. By Proposition~\ref{prop:chaDesc}, both
  reductions $S$ and $T$ can be uniquely lifted to reductions $S'$ and
  $T'$ in $\calR^\lab$, respectively, such that $\dEsc{V}{S}$ and
  $\dEsc{V}{T}$ are determined by the final term of $S'$ and $T'$,
  respectively. It is easy to see that also $\calR^\lab$ is an orthogonal
  TRS and that $S'$ and $T'$ are complete developments of $U$ in
  $t'$. Hence, we can invoke clause \itema{} of this proposition to
  conclude that the final terms of $S'$ and $T'$ coincide and that,
  therefore, also $\dEsc{V}{S}$ and $\dEsc{V}{T}$ coincide.
\end{proof}

By the above proposition, the descendants of a complete development of
a particular set of redex occurrences are unique. Therefore, we adopt
the notation $\dEsc{U}{V}$ for the descendants $\dEsc{U}{S}$ of $U$ by
some complete development $S$ of $V$. According to
Proposition~\ref{prop:exComplDev} and
Proposition~\ref{prop:finalCompDev}, $\dEsc{U}{V}$ is well-defined for
any orthogonal TRS.

Furthermore, Proposition~\ref{prop:finalCompDev} yields the following
corollary establishing the diamond property of complete developments:
\begin{cor}[diamond property of complete developments]
  \label{cor:prsCRCompDev}
  % generalisation of [kennaway95ic, Cor. 4.7]
  Let $\calR$ be an orthogonal TRS and $t \pato[U] t_1$ and $t
  \pato[V] t_2$ be two complete developments of $U$ respectively $V$
  in $t$. Then $t_1$ and $t_2$ are joinable by complete developments
  $t_1 \pato[\dEsc{V}{U}] t'$ and $t_2 \pato[\dEsc{U}{V}] t'$.
\end{cor}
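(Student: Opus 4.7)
The plan is to reduce the diamond property to the uniqueness of the final term of complete developments (Proposition~\ref{prop:finalCompDev}\,(i)) by exhibiting two complete developments of the combined redex set $U \cup V$ that factor through $t_1$ and $t_2$ respectively. Concretely, I would first note that since $\calR$ is orthogonal, any two redex occurrences in $t$ are non-conflicting, so $U \cup V$ is itself a set of pairwise non-conflicting redex occurrences in $t$. Similarly, by Proposition~\ref{prop:residual}, $\dEsc{V}{U}$ is a set of (pairwise non-conflicting) redex occurrences in $t_1$ and $\dEsc{U}{V}$ is the same in $t_2$, so by Proposition~\ref{prop:exComplDev} we obtain complete developments
\[
S_1\fcolon t_1 \pato[\dEsc{V}{U}] t_1' \quad \text{and} \quad S_2\fcolon t_2 \pato[\dEsc{U}{V}] t_2'.
\]
The task then reduces to showing $t_1' = t_2'$.

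For this, let $S\fcolon t \pato[U] t_1$ and $T\fcolon t \pato[V] t_2$ be the given complete developments. By Remark~\ref{rem:pcont}, the concatenations $S \concat S_1$ and $T \concat S_2$ are strongly $\prs$-converging reductions, and I would argue that each is a complete development of $U \cup V$ in $t$. To see that each step of $S \concat S_1$ contracts a residual of $U \cup V$, observe that by Proposition~\ref{prop:descPoint} the descendants of $U \cup V$ by any prefix are the union of the descendants of $U$ and $V$, so steps within $S$ contract residuals of $U \subseteq U \cup V$, while steps within $S_1$ contract residuals of $\dEsc{V}{S}$, which by Proposition~\ref{prop:descSeqRed} are themselves residuals of $V \subseteq U \cup V$ through the corresponding prefix of $S \concat S_1$. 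Completeness follows from the same calculation:
\[
\dEsc{U \cup V}{S \concat S_1} = \dEsc{\dEsc{U \cup V}{S}}{S_1} = \dEsc{\dEsc{V}{U}}{S_1} = \emptyset,
\]
using that $\dEsc{U}{S} = \emptyset$ by completeness of $S$ and that $S_1$ is a complete development of $\dEsc{V}{U}$. The symmetric argument shows that $T \concat S_2$ is a complete development of $U \cup V$ in $t$.

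Finally, Proposition~\ref{prop:finalCompDev}\,(i) asserts that any two complete developments of the same set of redex occurrences reach the same final term, hence $t_1' = t_2'$, as required. The only potentially delicate point is the residual bookkeeping in verifying that $S \concat S_1$ really is a complete development of $U \cup V$ (and likewise for $T \concat S_2$); this is where Propositions~\ref{prop:descPoint} and~\ref{prop:descSeqRed} do the essential work, and everything else is an application of results already established in this section.
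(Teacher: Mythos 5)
Your proposal is correct and follows essentially the same route as the paper's own proof: compute $\dEsc{(U\cup V)}{U} = \dEsc{V}{U}$ via Proposition~\ref{prop:descPoint}, show $S\concat S_1$ and $T\concat S_2$ are complete developments of $U\cup V$ using Proposition~\ref{prop:descSeqRed}, and conclude by the uniqueness of final terms from Proposition~\ref{prop:finalCompDev}. The extra care you take in justifying the existence of $S_1,S_2$ and in checking that every step contracts a residual of $U\cup V$ is a welcome elaboration of details the paper leaves implicit, but it does not change the argument.
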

\begin{proof}
  By Proposition~\ref{prop:descPoint}, it holds that 
  \[
  \dEsc{(U\cup
    V)}{U} = \dEsc{U}{U} \cup \dEsc{V}{U} = \dEsc{V}{U}.
  \]
  Let $S\fcolon t \pato[U] t_1$, $T\fcolon t \pato[V] t_2$, $S'\fcolon
  t_1 \pato[\dEsc{V}{U}] t'$ and $T'\fcolon t_2 \pato[\dEsc{U}{V}]
  t''$. By the equation above and Proposition~\ref{prop:descSeqRed},
  we have that $S\concat S'\fcolon t \pato[U] t_1 \pato[\dEsc{V}{U}]
  t'$ is a complete development of $U \cup V$. Analogously, we obtain
  that $T\concat T'\fcolon t \pato[V] t_2 \pato[\dEsc{U}{V}] t''$ is a
  complete development of $U\cup V$, too. According to
  Proposition~\ref{prop:finalCompDev}, this implies that both
  $S\concat S'$ and $T\concat T'$ strongly $\prs$-converge in the same
  term, i.e.\ $t' = t''$.
\end{proof}

In the next section we shall make use of complete developments in
order to obtain the Infinitary Strip Lemma for $\prs$-converging
reductions and a limited form of infinitary confluence for orthogonal
systems.

\subsection{The Infinitary Strip Lemma}
\label{sec:results}
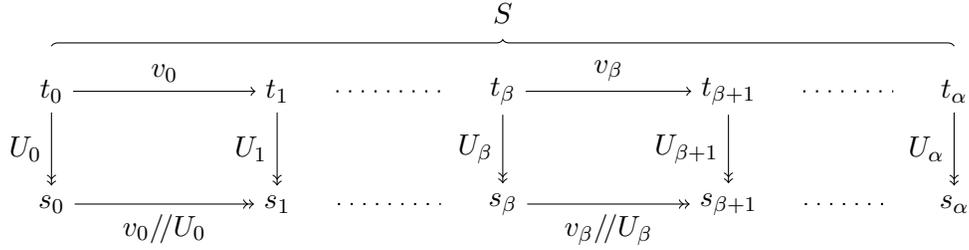
\begin{figure}
  \centering
  \begin{tikzpicture}[on grid,dots/.style={shorten
      >=.5cm,shorten <=.5cm,loosely dotted,thick}]
    \begin{scope}[node distance=3cm]
      \node (t0) {$t_0$}%
      node[right=of t0] (t1) {$t_1$}%
      node[right=of t1] (tb) {$t_\beta$}%
      node[right=of tb] (tb1) {$t_{\beta+1}$}%
      node[right=of tb1] (ta) {$t_\alpha$};
    \end{scope}

    \begin{scope}[node distance=1.5cm]
      \node[below=of t0] (s0) {$s_0$}%
      node[below=of t1] (s1) {$s_1$}%
      node[below=of tb] (sb) {$s_\beta$}%
      node[below=of tb1] (sb1) {$s_{\beta+1}$}%
      node[below=of ta] (sa) {$s_\alpha$};
    \end{scope}

    \draw%
    (t0)%
    edge[single step] node[midway,above] {$v_0$} (t1)%
    edge[strongly] node[midway,left] {$U_0$} (s0)%
    (t1)%
    edge[dots] (tb)%
    edge[strongly] node[midway,left] {$U_1$} (s1)%
    (tb)%
    edge[single step] node[midway,above] {$v_\beta$} (tb1)%
    edge[strongly] node[midway,left] {$U_\beta$} (sb)%
    (tb1)%
    edge[dots] (ta)%
    edge[strongly] node[midway,left] {$U_{\beta + 1}$} (sb1)%
    (ta) edge[strongly] node[midway,left] {$U_\alpha$} (sa)%

    (s0) edge[strongly] node[midway,below] {$\dEsc{v_0}{U_0}$} (s1)%
    (s1) edge[dots] (sb)%
    (sb) edge[strongly] node[midway,below] {$\dEsc{v_\beta}{U_\beta}$}
    (sb1)%
    (sb1) edge[dots] (sa)%
    ;

    \draw ($(t0)+(0,.6)$) edge[decorate,decoration=brace]
    node[midway,above=5pt] {$S$}
      ($(ta)+(0,.6)$);%
  \end{tikzpicture}
\caption{The Infinitary Strip Lemma.}
\label{fig:stripLem}
\end{figure}

In this section we use the results we have obtained for complete
developments in the previous two sections in order to establish that a
complete development of a set of pairwise disjoint redex occurrences
commutes with any strongly $\prs$-convergent reduction:
\begin{prop}[Infinitary Strip Lemma]
  \label{prop:prsStripLem}
  % generalisation of [kennaway95ic, Lem. 4.8]
  Let $\calR$ be an orthogonal TRS, $S\fcolon t_0 \pto{\alpha}
  t_\alpha$ a strongly $\prs$-convergent reduction, and $t_0 \pato[U]
  s_0$ a complete development of a set $U$ of pairwise disjoint redex
  occurrences in $t_0$. Then $t_\alpha$ and $s_0$ are joinable by a
  reduction $\proj{S}{T}\fcolon s_0 \pato s_\alpha$ and a complete
  development $\proj{T}{S}\fcolon t_\alpha \pato[\dEsc{U}{S}]
  s_\alpha$.
\end{prop}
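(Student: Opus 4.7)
The plan is to proceed by transfinite induction on the length $\alpha$ of $S$, building ``strips'' of the shape shown in Figure~\ref{fig:stripLem}. More precisely, for each $\beta \le \alpha$ I would construct a reduction $S'_\beta\fcolon s_0 \pato s_\beta$ and a complete development $T'_\beta\fcolon t_\beta \pato[U_\beta] s_\beta$ with $U_\beta = \dEsc{U}{\prefix{S}{\beta}}$, arranged coherently so that $S'_\gamma$ extends $S'_\beta$ whenever $\beta < \gamma$. Taking $\beta = \alpha$ then yields the desired $\proj{S}{T} = S'_\alpha$ and $\proj{T}{S} = T'_\alpha$, using Proposition~\ref{prop:descSeqRed} to identify $U_\alpha$ with $\dEsc{U}{S}$.

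The base case $\beta = 0$ is immediate. For the successor case $\beta = \gamma + 1$, I would apply the induction hypothesis at $\gamma$ and then view the single step $\phi_\gamma\fcolon t_\gamma \to[v_\gamma] t_{\gamma + 1}$ as a (trivial) complete development of the singleton $\set{v_\gamma}$. Since $\calR$ is orthogonal, any pair of distinct redex occurrences is non-conflicting, so the Diamond Property of complete developments (Corollary~\ref{cor:prsCRCompDev}) produces a joining term $s_{\gamma + 1}$ together with complete developments $\Delta_\gamma\fcolon s_\gamma \pato[\dEsc{v_\gamma}{U_\gamma}] s_{\gamma+1}$ and $t_{\gamma + 1} \pato[\dEsc{U_\gamma}{v_\gamma}] s_{\gamma + 1}$. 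Composability of descendants (Proposition~\ref{prop:descSeqRed}) gives $\dEsc{U_\gamma}{v_\gamma} = U_{\gamma + 1}$, and setting $S'_{\gamma + 1} = S'_\gamma \concat \Delta_\gamma$ extends the strip.

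For a limit ordinal $\beta$, I would define $S'_\beta = \Concat_{\gamma < \beta} \Delta_\gamma$ using the increments from the successor case; prefix-coherence of the previously built strips ensures this is well-defined. The partial order setting is decisive here: strong $\prs$-continuity is equivalent to strong $\prs$-convergence, so once $S'_\beta$ is shown to be strongly $\prs$-continuous it automatically converges to a uniquely determined term $s_\beta$. The natural candidate for $s_\beta$ is the matching term $\devTerm{t_\beta}{U_\beta}{\calR}$ of a complete development from $t_\beta$, which exists by Proposition~\ref{prop:exComplDev} and is unique by Proposition~\ref{prop:finalCompDev}; this complete development then serves as $T'_\beta$.

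The principal obstacle will be the limit case, specifically verifying (i) that $\Concat_{\gamma < \beta} \Delta_\gamma$ is strongly $\prs$-continuous and (ii) that its limit coincides with the final term of $T'_\beta$. For (i), I would track positions $\pi$ through the strip: whenever Lemma~\ref{lem:nonBotLimRed} guarantees that some symbol above $\pi$ becomes eventually persistent in $S$, the corresponding residual positions in the $\Delta_\gamma$'s cease to be contracted beyond some threshold, and the contexts of the concatenated bottom reduction stabilise at $\pi$; volatile positions in $S$ translate via Lemma~\ref{lem:botLimRed} into exactly the $\bot$-positions of $s_\beta$. For (ii), the cleanest route is to invoke the trace-preservation Lemma~\ref{lem:presPath}: both the complete development $T'_\beta$ from $t_\beta$ and the strip construction from $s_0$ respect traces, so their final terms agree with the matching term $\devTerm{t_\beta}{U_\beta}{\calR}$, giving the common endpoint $s_\beta$ required to close the strip.
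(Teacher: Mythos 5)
Your base and successor cases coincide with the paper's proof (diamond property via Corollary~\ref{cor:prsCRCompDev}, composability of descendants via Proposition~\ref{prop:descSeqRed}), and your treatment of well-definedness in the limit case (your point~(i)) is essentially what the paper does: the bottom reductions for the proper prefixes form a directed set, their lub exists, and in the partial order model it automatically strongly $\prs$-converges to some term $s_\beta$. The gap is in your point~(ii). Lemma~\ref{lem:presPath} is a statement about \emph{developments of a fixed set $U$ of redex occurrences}: it yields a trace-preserving surjection $\paths{t}{U}{\calR}\funto\paths{t'}{\dEsc{U}{S}}{\calR}$ only when every step of the reduction contracts a residual of $U$. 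The bottom reduction $S'_\beta=\Concat_{\gamma<\beta}\Delta_\gamma$ is not a development of any fixed set of redexes in $s_0$ --- each $\Delta_\gamma$ contracts the residuals $\dEsc{v_\gamma}{U_\gamma}$ of the \emph{arbitrary} redex $v_\gamma$ contracted in the $\gamma$-th step of $S$, and $s_0$ contains no residuals of $U$ at all, since $T$ was a complete development of $U$. So there is no set relative to which ``the strip construction from $s_0$ respects traces'', and the assertion that its limit agrees with $\devTerm{t_\beta}{U_\beta}{\calR}$ is precisely what remains to be proved; nothing in the trace machinery delivers it.

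The paper closes this gap by a direct position-by-position comparison of $s_\beta$ with the final term $s'_\beta$ of a complete development of $U_\beta$ in $t_\beta$, exploiting that $U_\beta$ is a set of pairwise disjoint occurrences (Proposition~\ref{prop:disjDesc}) --- this is where the disjointness hypothesis on $U$, which your argument never uses, becomes essential. Positions are split into three classes: proper prefixes of some $\pi\in U_\beta$ (handled via Lemma~\ref{lem:descLimRed} and Lemma~\ref{lem:nonBotLimRed}, showing all reductions eventually act at or below $\pi$); extensions of some $\pi\in U_\beta$ (handled by explicitly decomposing the redex and its contractum into contexts and argument terms and tracking them through the diagram); and positions disjoint from all of $U_\beta$ (where $t_\beta$, $s_\beta$ and $s'_\beta$ are shown to coincide, using Lemma~\ref{lem:nonBotLimRed} for non-$\bot$ symbols and Lemma~\ref{lem:botLimRed} together with the observation that outermost-volatile positions of $S$ are also outermost-volatile in the bottom reduction). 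Some argument of this kind --- or a genuine replacement for it --- is needed where you currently cite Lemma~\ref{lem:presPath}.
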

\begin{proof}
  We prove this statement by constructing the diagram shown in
  Figure~\ref{fig:stripLem}. The '$U_\iota$'s in the diagram are
  sets of redex occurrences: $U_\iota = \dEsc{U}{\prefix{S}{\iota}}$
  for all $0 \le \iota \le \alpha$. In particular, $U_0 = U$. All
  arrows in the diagram represent complete developments of the
  indicated sets of redex occurrences. Particularly, in each
  $\iota$-th step of $S$ the redex at $v_\iota$ is contracted. We will
  construct the diagram by an induction on $\alpha$.

  If $\alpha=0$, then the diagram is trivial. If $\alpha$ is a
  successor ordinal $\beta + 1$, then we can take the diagram for the
  prefix $\prefix{S}{\beta}$, which exists by induction hypothesis,
  and extend it to a diagram for $S$. The existence of the additional
  square that completes the diagram for $S$ is affirmed by
  Corollary~\ref{cor:prsCRCompDev} since $U_{\beta + 1} =
  \dEsc{U_\beta}{v_\beta}$.

  Let $\alpha$ be a limit ordinal. Moreover, let $s_\alpha'$ be the
  uniquely determined final term of a complete development of
  $U_\alpha$ in $t_\alpha$. By induction hypothesis, the diagram
  exists for each proper prefix of $S$. Let $T_\iota\fcolon s_0 \pato
  s_\iota$ denote the reduction at the bottom of the diagram for the
  reduction $\prefix{S}{\iota}$ for each $\iota < \alpha$. The set of
  all $T_\iota$ is directed. Hence, $T = \Lub_{\iota < \alpha}
  T_\iota$ exists. Since $T_\iota < T$ for each $\iota < \alpha$, the
  diagram for $S$ with $T\fcolon s_0 \pato s_\alpha$ at the bottom
  satisfies almost all required properties. Only the equality of
  $s_\alpha$ and $s_\alpha'$ remains to be shown.

  Note that, by Proposition~\ref{prop:disjDesc}, the redex occurrences
  in $U_\alpha$ are pairwise disjoint. Let $\pi \in U_\alpha$. By
  Lemma~\ref{lem:descLimRed} and the definition of descendants, there
  is some $\beta < \alpha$ such that $\pi \in U_\iota$ and $v_\iota
  \not\le \pi$ for all $\beta \le \iota < \alpha$. Hence, for all
  $\pi' \in \dEsc{v_\iota}{U_\iota}$ with $\beta \le \iota <\alpha$,
  we also have $\pi' \not\le \pi$. That is, in the remaining
  reductions $t_\beta \pato t_\alpha$ and $t_\beta \pato[U_\beta]
  s_\beta \pato s_\alpha$, no reduction takes place at a proper prefix
  of $\pi$. Hence, by Lemma~\ref{lem:nonBotLimRed}, $t_\beta$
  coincides with $t_\alpha$ and $s_\alpha$ in all proper prefixes of
  $\pi$. Since in the reduction $t_\alpha \pato[U_\alpha] s_\alpha'$
  also no reduction takes place at a proper prefix of $\pi$, we obtain
  that $t_\alpha$ and $s_\alpha'$ and, thus, also $s_\alpha$ and
  $s_\alpha'$ coincide in all proper prefixes of $\pi$.
  
  Let $\rho\fcolon l \to r$ be the rule for the redex
  $\atPos{t_\beta}{\pi}$ and $\Cxt{a}{,\dots,}, \Cxt[D]{a}{,\dots,}$
  ground contexts such that $l = \Cxt{a}{x_1,\dots,x_k}$ and $r =
  \Cxt[D]{a}{x_{p(1)},\dots,x_{p(m)}}$ for some pairwise distinct
  variables $x_1,\dots,x_k$ and an appropriate mapping $p\fcolon
  \set{1,\dots,m} \funto \set{1,\dots, k}$. Moreover, let
  $t^\iota_1,\dots,t^\iota_k$ be terms such that $t_\iota =
  \substAtPos{t_\iota}{\pi}{\Cxt[C]{a}{t^\iota_1,\dots,t^\iota_k}}$
  and $s_\iota =
  \substAtPos{s_\iota}{\pi}{\Cxt[D]{a}{t^\iota_{p(1)},\dots,t^\iota_{p(m)}}}$
  for all $\beta \le \iota \le \alpha$. The argument in the previous
  paragraph justifies the assumption of these elements. From $\beta$
  onward, all horizontal reduction steps in the diagram take place
  within the contexts $\substAtPos{t_\iota}{\pi}{\cdot}$ and
  $\substAtPos{s_\iota}{\pi}{\cdot}$, respectively, or inside the
  terms $t^\iota_i$, and all vertical reductions take place within the
  contexts $\substAtPos{t_\iota}{\pi}{\Cxt[C]{a}{,\dots,}}$ and
  $\substAtPos{s_\iota}{\pi}{\Cxt[D]{a}{,\dots,}}$, respectively. In
  particular, we have $t_\alpha =
  \substAtPos{t_\alpha}{\pi}{\Cxt[C]{a}{t^\alpha_1,\dots,t^\alpha_k}}$
  and $s_\alpha =
  \substAtPos{s_\alpha}{\pi}{\Cxt[D]{a}{t^\alpha_{p(1)},\dots,t^\alpha_{p(m)}}}$. Let
  $t_\alpha \to[\pi] t'_\alpha$. This reduction contracts the redex
  $\Cxt[C]{a}{t^\alpha_1,\dots,t^\alpha_k}$ to the term
  $\Cxt[D]{a}{t^\alpha_{p(1)},\dots,t^\alpha_{p(m)}}$ using rule
  $\rho$. Note that a complete development $t_\alpha \pato[U_\alpha]
  s_\alpha'$ contracts, besides $\pi$, only redex occurrences disjoint
  with $\pi$. Hence, $t'_\alpha$ and $s'_\alpha$ coincide in all
  extensions of $\pi$. Since $t'_\alpha =
  \substAtPos{t_\alpha}{\pi}{\Cxt[D]{a}{t^\alpha_{p(1)},\dots,t^\alpha_{p(k)}}}$
  (and $s_\alpha =
  \substAtPos{s_\alpha}{\pi}{\Cxt[D]{a}{t^\alpha_{p(1)},\dots,t^\alpha_{p(m)}}}$),
  we can conclude that $s_\alpha$ and $s_\alpha'$ coincide in all
  extensions of $\pi$.

  Since the residual $\pi \in U_\alpha$ was chosen arbitrarily, the
  above holds for all elements in $U_\alpha$. That is, $s_\alpha$ and
  $s'_\alpha$ coincide in all prefixes and all extensions of elements
  in $U_\alpha$. It remains to be shown, that they also coincide in
  positions that are disjoint to all positions in $U_\alpha$. To this
  end, we only need to show that $t_\alpha$ and $s_\alpha$ coincide in
  these positions since the complete development $t_\alpha
  \pato[U_\alpha] s_\alpha'$ keeps positions disjoint with all
  positions in $U_\alpha$ unchanged. Let $\pi$ be such a position.

  Suppose $t_\alpha(\pi) = f \neq \bot$. By
  Lemma~\ref{lem:nonBotLimRed}, there is some $\beta < \alpha$ such
  that $t_\beta(\pi) = f$ and $v_\iota \not\le \pi$ for all $\beta \le
  \iota < \alpha$. Note that no prefix $\pi'$ of $\pi$ is in $U_\beta$
  since otherwise $\pi' \in U_\alpha$, by Lemma~\ref{lem:descLimRed},
  which contradicts the assumption that $\pi$ is disjoint to all
  positions in $U_\alpha$. Hence, $s_\beta(\pi) = f$ and $\pi' \not\le
  \pi$ for all $\pi' \in \dEsc{v_\iota}{U_\iota}$ and $\beta \le \iota
  < \alpha$, which means that no reduction step in $s_\beta \pato
  s_\alpha$ takes place at some prefix of $\pi$. Thus, we can
  conclude, according to Lemma~\ref{lem:nonBotLimRed}, that
  $s_\alpha(\pi) = f$. Similarly, one can show that $s_\alpha(\pi) = f
  \neq \bot$ implies $t_\alpha(\pi) = f$.

  Suppose $t_\alpha(\pi) = \bot$. Hence, according to
  Lemma~\ref{lem:botLimRed}, $\pi$ is outermost-volatile in $S$ or
  there is some $\beta < \alpha$ such that $t_\beta(\pi) = \bot$ and
  $v_\iota \not\le \pi$ for all $\beta \le \iota < \alpha$. For the
  latter case, we can argue as in the case for $t_\alpha(\pi) \neq
  \bot$ above. In the former case, $\pi$ is outermost-volatile in $T$
  as well. Thus, by applying Lemma~\ref{lem:botLimRed}, we obtain that
  $s_\alpha(\pi) = \bot$. A similar argument can be employed for the
  reverse direction. 
\end{proof}
The reduction $\proj{S}{T}$ constructed in the proof above is called
the \emph{projection} of $S$ by $T$. Likewise, the reduction
$\proj{T}{S}$ is called the \emph{projection} of $T$ by $S$. As a
corollary we obtain the following semi-infinitary confluence result:
\begin{cor}[semi-infinitary confluence]
  \label{cor:prsSemiConf}
  % semi-infinitary confluence %
  In every orthogonal TRS, two reductions $t \pato t_2$ and $t
  \fto* t_1$ can be joined by two reductions $t_2 \pato t_3$ and
  $t_1 \pato t_3$.
\end{cor}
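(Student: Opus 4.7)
\medskip

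\noindent\textbf{Proof plan.} The plan is to reduce this corollary to the Infinitary Strip Lemma (Proposition~\ref{prop:prsStripLem}) by induction on the length $n$ of the finite reduction $t \fto* t_1$. The key observation is that a single reduction step $u \to[\pi,\rho] u'$ is trivially a complete development of the singleton redex set $\{\pi\}$ in $u$, and singletons are vacuously pairwise disjoint, so the Strip Lemma applies with $U = \{\pi\}$.

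\medskip

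\noindent For the base case $n = 0$ we have $t = t_1$, so we can take $t_3 = t_2$ with both joining reductions empty. For the inductive step, decompose $t \fto* t_1$ as $t \to[\pi,\rho] u \fto{*} t_1$ of length $n+1$. First apply the Infinitary Strip Lemma to the strongly $\prs$-convergent reduction $S\fcolon t \pato t_2$ and the single step $t \to u$, viewed as a complete development of $\{\pi\}$ in $t$. This yields a strongly $\prs$-convergent reduction $u \pato s$ (the projection of $S$ by the single step) together with a complete development $t_2 \pato s$ of the residuals $\dEsc{\{\pi\}}{S}$. Now apply the induction hypothesis to the strongly $\prs$-convergent reduction $u \pato s$ and the finite reduction $u \fto* t_1$ of length $n$, obtaining a term $t_3$ with $s \pato t_3$ and $t_1 \pato t_3$. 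Finally, concatenate $t_2 \pato s$ with $s \pato t_3$; by the closure of strongly $\prs$-converging reductions under concatenation (Remark~\ref{rem:pcont}), this yields $t_2 \pato t_3$, completing the induction.

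\medskip

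\noindent There is no real obstacle, since all the machinery has been developed earlier: the Strip Lemma handles the one-step/infinite-reduction interaction, and closure of $\pato$ under concatenation handles the gluing. The only minor point to verify explicitly is that a single ordinary rewrite step $u \to[\pi] u'$ qualifies as a complete development $u \pato[\{\pi\}] u'$ in the sense of Definition~\ref{def:devel}, which is immediate: the step contracts a redex at $\pi \in \dEsc{\{\pi\}}{\emptyseq}$, and after contraction the residual set $\dEsc{\{\pi\}}{\seq{u \to u'}}$ is empty by clause~(\ref{item:descB}) of Definition~\ref{def:desc}.
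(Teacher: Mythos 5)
Your proof is correct and follows exactly the paper's argument: the paper likewise proves this by induction on the length of the finite reduction, with the trivial empty base case and the inductive step supplied by the Infinitary Strip Lemma. You merely spell out the details the paper leaves implicit (in particular, that a single step is a complete development of a singleton, vacuously disjoint, redex set, and that the two halves are glued by closure of $\pato$ under concatenation), all of which check out.
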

\begin{proof}
  This can be shown by an induction on the length of the reduction $t
  \fto* t_1$. If it is empty, the statement trivially holds. The
  induction step follows from Proposition~\ref{prop:prsStripLem}.
\end{proof}

In the next section we shall, based on the Infinitary Strip Lemma,
show that strong $\prs$-reachability coincides with Böhm-reachability,
which then yields, amongst other things, full infinitary confluence of
orthogonal systems.

\section{Comparing Strong \texorpdfstring{$\prs$}{p}-Convergence and Böhm-Convergence}
\label{sec:relation-bohm-trees}

In this section we shall show the core result of this paper: For
orthogonal, left-finite TRSs, strong $\prs$-reachability and
Böhm-reachability w.r.t.\ the set $\rAct$ of root-active terms
coincide. As corollaries of that, leveraging the properties of
Böhm-convergence, we obtain both infinitary normalisation and
infinitary confluence of orthogonal systems in the partial order
model. Moreover, we will show that strong $\prs$-convergence also
satisfies the compression property.

The central step of the proof of the equivalence of both models of
infinitary rewriting is an alternative characterisation of root-active
terms which is captured by the following definition:
\begin{defi}[destructiveness, fragility]
  Let $\calR$ be a TRS.
  \begin{enumerate}[label=(\roman*)]
  \item A reduction $S\fcolon t \pato s$ is called \emph{destructive}
    if $\emptyseq$ is a volatile position in $S$.
  \item A partial term $t$ in $\calR$ is called \emph{fragile} if a
    destructive reduction starts in $t$.
  \end{enumerate}
\end{defi}

\noindent Looking at the definition, fragility seems to be a more general
concept than root-activeness: A term is fragile iff it admits a
reduction in which infinitely often a redex at the root is
contracted. For orthogonal TRSs, root-active terms are characterised
in almost the same way. The difference is that only total terms are
considered and that the stipulated reduction contracting infinitely
many root redexes has to be of length $\omega$. However, we shall show
the set of total fragile terms to be equal to the set of root-active
terms by establishing a compression lemma for destructive reductions.

Using Lemma~\ref{lem:botLimRed} we can immediately derive the
following alternative characterisations:
\begin{fact}[destructiveness, fragility]
  Let $\calR$ be a TRS.
  \begin{enumerate}[label=(\roman*)]
  \item A reduction $S\fcolon s \pato t$ is destructive iff
    $S$ is open and $t = \bot$
  \item A partial term $t$ in $\calR$ is fragile iff there is an open
    strongly $\prs$-convergent reduction $t \pato \bot$.
  \end{enumerate}
\end{fact}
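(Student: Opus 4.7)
The plan is to derive both clauses from Lemma~\ref{lem:botLimRed} applied at the root position $\emptyseq$, with (ii) being an immediate consequence of (i).

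For item (i), first observe that by Definition~\ref{def:volatile}, volatile positions are only defined for open reductions, so destructiveness of $S$ already entails that $S$ is open. For the forward direction, assume $S$ is destructive, i.e.\ $\emptyseq$ is volatile in $S$. Since $\emptyseq$ has no proper prefix, it is automatically outermost-volatile, so Lemma~\ref{lem:botLimRed}\ref{item:botLimRed1} (or equivalently \ref{item:botLimRed2a}) yields $t(\emptyseq) = \bot$, and hence $t = \bot$.

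For the backward direction, suppose $S$ is open with $t = \bot$, so $t(\emptyseq) = \bot$. By Lemma~\ref{lem:botLimRed}\ref{item:botLimRed2}, one of two cases holds: either \ref{item:botLimRed2a} $\emptyseq$ is outermost-volatile in $S$ (in which case $S$ is destructive and we are done), or \ref{item:botLimRed2b} there is some $\beta < \len{S}$ with $t_\beta(\emptyseq) = \bot$ and $\pi_\iota \not\le \emptyseq$ for all $\beta \le \iota < \len{S}$. The key observation that rules out case \ref{item:botLimRed2b} is that $t_\beta(\emptyseq) = \bot$ forces $t_\beta = \bot$ (since $\emptyseq$ is the root), and $\bot$ admits no reduction steps in $\calR_\bot$; this contradicts the openness of $S$, which requires reduction steps past $\beta$. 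Thus case \ref{item:botLimRed2a} must hold, and $S$ is destructive.

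Item (ii) then follows at once: by definition $t$ is fragile iff some destructive reduction $S\fcolon t \pato s$ exists, which by (i) is equivalent to the existence of an open (and, since every $\prs$-continuous reduction $\prs$-converges, strongly $\prs$-convergent) reduction $t \pato \bot$. No step here is delicate; the only place where care is needed is the exclusion of case \ref{item:botLimRed2b}, but this is resolved by the trivial fact that $\bot$ is a normal form.
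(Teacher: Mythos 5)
Your proof is correct and follows exactly the route the paper intends: the paper states this Fact without an explicit proof, remarking only that it is derived from Lemma~\ref{lem:botLimRed}, and your argument fills in precisely those details, including the one point that genuinely needs saying (ruling out case (b) of Lemma~\ref{lem:botLimRed} because $\bot$ is a normal form in $\calR_\bot$, which would contradict openness).
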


\noindent One has to keep in mind, however, that a closed reduction to $\bot$ is
not destructive. Such a notion of destructiveness would include the
empty reduction from $\bot$ to $\bot$, and reductions that end with
the contraction of a collapsing redex as, for example, in the single
step reduction $f(\bot) \to \bot$ induced by the rule $f(x) \to
x$. Such reductions do not ``produce'' the term $\bot$. They are
merely capable of ``moving'' an already existent subterm $\bot$ by a
collapsing rule. In this sense, fragile terms are, according to
Lemma~\ref{lem:totalRed}, the only terms which can produce the term
$\bot$. This is the key observation for studying the relation between
strong $\prs$-convergence and Böhm-convergence.

In order to show that strong $\prs$-reachability and Böhm-reachability
w.r.t.\ $\rAct$ coincide we will proceed as follows: At first we will
show that strong $\prs$-reachability implies Böhm-reachability w.r.t.\
the set of total fragile terms, i.e. the fragile terms in
$\iterms$. From this we will derive a compression lemma for
destructive reductions. We will then use this to show that the set
$\rAct$ of root-active terms coincides with the set of total fragile
terms. From this we conclude that strong $\prs$-reachability implies
Böhm-reachability w.r.t.\ $\rAct$. Finally, we then show the other
direction of the equality.

\subsection{From Strong \texorpdfstring{$\prs$}{p}-Convergence to Böhm-Convergence}
\label{sec:from-strong-prs}

For the first step we have to transform a strongly $\prs$-converging
reduction in to a Böhm-converging reduction w.r.t.\ the set of total
fragile terms, i.e.\ a strongly $\mrs$-converging reduction w.r.t.\
the corresponding Böhm extension $\calB$. Recall that, by
Theorem~\ref{thr:strongExt}, the only difference between strongly
$\prs$-converging reductions and strongly $\mrs$-converging reductions
is the ability of the former to produce $\bot$ subterms. This happens,
according to Lemma~\ref{lem:botLimRed}, precisely at volatile
positions.

We can, therefore, proceed as follows: Given a strongly
$\prs$-converging reduction we construct a Böhm-converging reduction
by removing reduction steps which cause the volatility of a position
in some open prefix of the reduction and then replacing them by a
\emph{single} $\to[\bot]$-step.

The intuition of this construction is illustrated in
Figure~\ref{fig:bohmDestrFig}. It shows a strongly $\prs$-converging
reduction of length $\omega\mult4$ from $s$ to $t$. In order to
maintain readability, we restrict the attention to a particular branch
of the term (tree) as indicated in Figure~\ref{fig:bohmDestrFig1}. The
picture shows five positions which are volatile in some open prefix of
the reduction. We assume that they are the only volatile positions at
least in the considered branch. Note that the positions do not need to
occur in all of the terms in the reduction. They might disappear and
reappear repeatedly. Each of them, however, appears in infinitely many
terms in the reduction, as, by definition of volatility, infinitely
many steps take place at each of these positions. In
Figure~\ref{fig:bohmDestrFig2}, the prefixes of the reduction that
contain a volatile position are indicated by a waved rewrite arrow
pointing to a $\bot$. The level of an arrow indicates the position
which is volatile. A prefix might have multiple volatile
positions. For example, both $\pi_2$ and $\pi_4$ are volatile in the
prefix of length $\omega$. But a position might also be volatile for
several prefixes.  For instance, $\pi_3$ is volatile in the prefix of
length $\omega\mult2$ and the prefix of length $\omega\mult4$.

\begin{figure}
  \centering%
  \subfloat[Nested volatile positions.] {%
    \begin{tikzpicture}[%
      tree triangle/.style={%
        isosceles triangle,%
        isosceles triangle apex angle=45,%
        inner sep=1pt,%
        shape border rotate=90,%
        minimum size=4cm%
      }%
      ]
      \coordinate (a) at (0,0);%
      \coordinate (b) at (-2,-5);%
      \coordinate (c) at (2,-5);%
      \path[fill=termback,decoration={random
        steps,segment length=2mm}] (a) -- (b) decorate{-- (c)} --
      cycle;% 
      \path[draw=termfringe] (b) -- (a) -- (c);

      \coordinate (z) at ($(b)!.3!(c)$);
      
      \draw (a) .. controls ($(a)!.3!(z)+(.3,0)$) and
      ($(z)!.3!(a)+(-.3,0)$) .. (z)%
      \foreach \p/\r in%
      {1/.1,2/.3,3/.5,4/.7,5/.9} {%
        coordinate[pos=\r] (p\p)%
      };%
      \foreach \p/\l in%
      {1/$\pi_1$,2/$\pi_2$,3/$\pi_3$,4/$\pi_4$,5/$\pi_5$} {%
        \draw (p\p) -- ($(p\p)+(3pt,0)$);%
        \node at ($(p\p)+(10pt,0)$) {\l};%
      }%
      \draw[dotted] (z) -- ($(z)-(0,.3)$);%
    \end{tikzpicture}
    \label{fig:bohmDestrFig1}
  }\quad \subfloat[Replacing nested destructive reductions with
    $\oldTo_\bot$ steps.] {%
    \begin{tikzpicture}[%
      omega/.style={%
        shading = axis,%
        right color=black!30,%
        left color=white%,%
        % middle color=black!10%
      },%
      nested red/.style={%
        strongly,%
        decorate,%
        decoration={%
          snake,%
          segment length=20pt,%
          amplitude=2pt%
        }%
      }%
      ]
      \node (s) at (0,0) {$s$};%
      \node (t) at (8,0) {$t$};%
      \path[semithick, strongly] (s) edge (t);%

      \coordinate (a0) at ($(s)-(0,1)$);%
      \coordinate (z0) at ($(a0)-(0,5)$);%
      \coordinate (ae) at ($(a0)+(8,0)$);%
      \coordinate (ze) at ($(z0)+(8,0)$);%
      \foreach \d/\r in {b/.1,c/.3,d/.5,e/.7,f/.9} {%
        \foreach \n in {0,e} {%
          \coordinate (\d \n) at ($(a\n)!\r!(z\n)$);%
        }%
      }%
      \foreach \n / \r in {1/.25,2/.5,3/.75} {%
        \foreach \d in {a,b,c,d,e,f,z} {%
          \coordinate (\d\n) at ($(\d 0)!\r!(\d e)$);%
        }%
      }%
      \foreach \na / \nb in {0/1,1/2,2/3,3/e} {%
        \path[omega,decoration={random steps,segment length=2mm}]
        (a\na) -- (a\nb) -- (z\nb) decorate{-- (z\na)} -- cycle;%
      }%
      \draw[-latex] (a0) -- ($(z0)-(0,.4)$);%
      \draw[-latex] (a0) -- ($(ae)+(.2,0)$);%

      \foreach \d/\l in {%
        a/$\emptyseq$,b/$\pi_1$,c/$\pi_2$,%
        d/$\pi_3$,e/$\pi_4$,f/$\pi_5$} {%
        \draw (\d 0) -- ($(\d 0)-(2pt,0)$);%
        \node at ($(\d 0)-(10pt,0)$) {\l};%
      }%

      \foreach \n/\l in%
      {0/$0$,1/$\omega$,2/$\omega\mult 2$,3/$\omega\mult 3$,e/$\omega
        \mult 4$} {%
        \draw (a\n) -- ($(a\n)+(0,2pt)$);%
        \node at ($(a\n)+(0,10pt)$) {\l};%
      }%

      \foreach \n / \m / \r in %
      {b3/b1/.8,c1/c0/.3,d2/d1/.4,de/d3/.5} {%
        \coordinate (l) at ($(\n)!\r!(\m)$);
        \coordinate (k) at ($(\n)-(.05,0)$);
        \draw[semithick,dashed] ($(l)+(0,.2)$) rectangle
        ($(k)-(0,.2)$);%
        \draw[semithick,dotted]%
        (l) -- ($(z0)!(l)!(ze)$)%
        (k) -- ($(z0)!(k)!(ze)$);%
      }%

      \foreach \n in {b3, c1, d2, de, e1, f2} {%
        \node (\n p) at ($(\n)+(.1,0)$) {$\bot$};%
      }%

      \foreach \n in {b,c,d,e,f} {%
        \coordinate (\n 0p) at ($(\n 0)+(.1,0)$);%
      }%

      \draw[every edge/.style={nested red,draw}]%
      (b0p) edge (b3p)%
      (c0p) edge (c1p)%
      (d0p) edge (d2p)%
      (d2p) edge (dep)%
      (e0p) edge (e1p)%
      (f0p) edge (f2p);%
    
    \end{tikzpicture}
    \label{fig:bohmDestrFig2}
  }
  \caption{Turning a $\prs$-converging reduction into a Böhm-converging reduction.}
  \label{fig:bohmDestrFig}
\end{figure}
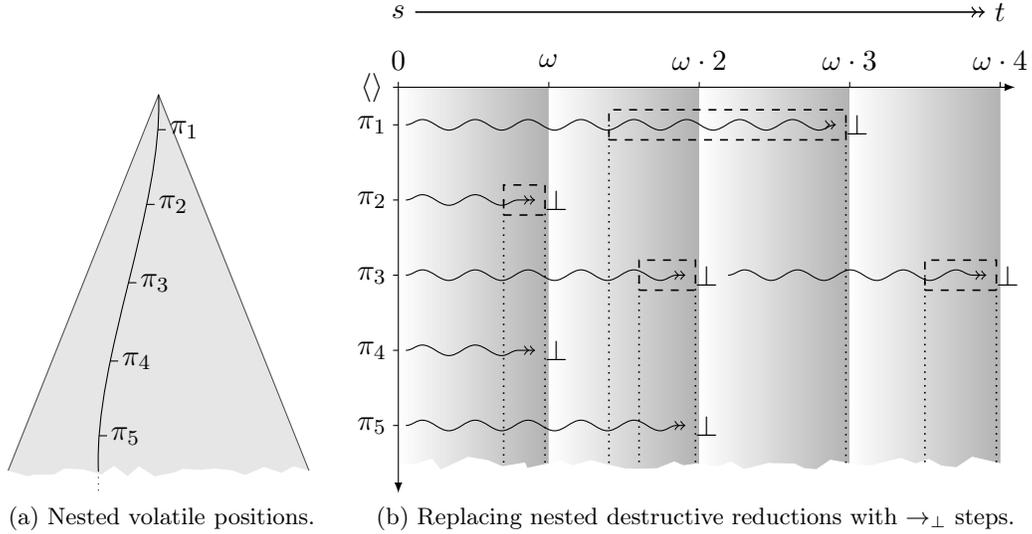

%%% Local Variables: 
%%% mode: latex
%%% TeX-master: "paper"
%%% End: 

By Lemma~\ref{lem:botLimRed}, outermost-volatile positions are
responsible for the generation of $\bot$ subterms. By their nature, at
some point there are no reductions taking place above
outermost-volatile positions. The suffix where this is the case is a
\emph{nested} destructive reduction. The subterm where this suffix
starts is, therefore, a fragile term and we can replace this suffix
with a \emph{single} $\to[\bot]$-step. The segments which are replaced
in this way are highlighted by dashed boxes in
Figure~\ref{fig:bohmDestrFig2}. As indicated by the dotted lines, this
then also includes reduction steps which occur below the
outermost-volatile positions. Therefore, also volatile positions which
are not outermost are removed as well. Eventually, we obtain a
reduction without volatile positions, which is, by
Lemma~\ref{lem:totalRed}, a strongly $\mrs$-converging reduction in
the Böhm extension, i.e.\ a Böhm-converging reduction in the original
system:
\begin{prop}[strong $\prs$-reachability implies Böhm-reachability]
  \label{prop:prsBohm}
  % PRS reductions are Böhm reductions %
  Let $\calR$ be a TRS, $\calU$ the set of fragile terms in $\iterms$,
  and $\calB$ the Böhm extension of $\calR$ w.r.t.\ $\calU$. Then, for
  each strongly $\prs$-convergent reduction $s \pato[\calR] t$, there is a
  Böhm-convergent reduction $s \mato[\calB] t$.
\end{prop}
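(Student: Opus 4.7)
The proof proceeds by transfinite induction on the length $\alpha$ of the reduction $S$. The base case $\alpha = 0$ is trivial, and the successor case is routine: if $\alpha = \beta + 1$, apply the induction hypothesis to $\prefix{S}{\beta}$ to obtain a Böhm-convergent reduction $s \mato[\calB] t_\beta$ and append the final step, which is also a step in $\calB \supseteq \calR$; this preserves strong $\mrs$-convergence by Remark~\ref{rem:mcont}.

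For the main case, $\alpha$ a limit ordinal, the key idea---illustrated in Figure~\ref{fig:bohmDestrFig}---is to replace each destructive sub-reduction by a single Böhm step contracting the corresponding fragile subterm to $\bot$. For each outermost-volatile position $\pi$ of $S$, choose $\beta_\pi < \alpha$ large enough so that $\pi \in \pos{t_{\beta_\pi}}$ and no contraction step $\pi_\iota$ with $\iota \ge \beta_\pi$ is a proper prefix of $\pi$; such a $\beta_\pi$ exists because no proper prefix of $\pi$ is volatile in $S$. The subsequence of $\segm{S}{\beta_\pi}{\alpha}$ consisting of steps at or below $\pi$ is then an open, strongly $\prs$-convergent reduction starting from $\atPos{t_{\beta_\pi}}{\pi}$ with $\emptyseq$ volatile, and hence $\prs$-converging to $\bot$. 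Thus $\atPos{t_{\beta_\pi}}{\pi}$ is fragile, and a filling argument---substituting a fixed total fragile term for any remaining $\bot$-subterms and then concatenating the replayed reduction with a destructive reduction from that substituted term---shows that $\atPos{t_{\beta_\pi}}{\pi}$ lies in $\calU_\bot\setminus\set\bot$. The one-step Böhm contraction $t_{\beta_\pi} \to[\bot] \substAtPos{t_{\beta_\pi}}{\pi}{\bot}$ is therefore a legitimate step in $\calB$.

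The Böhm-convergent reduction $s \mato[\calB] t$ is then assembled by weaving the inductively produced Böhm reductions on prefixes of $S$ together with the $\to[\bot]$-steps that collapse each outermost-volatile subterm, as in Figure~\ref{fig:bohmDestrFig2}. Because the outermost-volatile positions of $S$ form an antichain under the prefix order, only finitely many of them can lie at any bounded depth, so the $\to[\bot]$-steps can be scheduled at strictly increasing depths. Combined with Lemma~\ref{lem:botLimRed}, which characterises exactly where $\bot$-symbols appear in $t$, this guarantees that the assembled reduction terminates in $t$ and that its contraction depths tend to infinity.

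The principal obstacle lies in the bookkeeping for the limit case: the inductively produced Böhm reductions on prefixes of $S$ must be interleaved with the $\to[\bot]$-steps so as to yield a single reduction of valid transfinite length, $\mrs$-continuous at every intermediate limit and with an overall depth sequence tending to infinity. This requires a careful cofinality argument, guided by Lemma~\ref{lem:nonBotLimRed}, to transfer position-wise stability in the original reduction $S$ into position-wise stability of the assembled reduction in $\calB$, thereby ensuring that its metric limit matches $t$.
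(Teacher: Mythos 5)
Your overall strategy---surgically replacing each destructive sub-reduction rooted at an outermost-volatile position by a single $\to[\bot]$-step, and using the ``filling'' argument (substituting total fragile terms for residual $\bot$'s) to certify that the contracted subterm lies in $\calU_\bot\setminus\set{\bot}$---is exactly the paper's, and that part of your argument is sound. The gap is in how you certify that the assembled reduction is strongly $\mrs$-convergent. You propose to verify the metric conditions directly, by ``scheduling the $\to[\bot]$-steps at strictly increasing depths'' and a ``careful cofinality argument'' that you do not carry out. As described this does not go through: the position of each $\to[\bot]$-step is forced (it must occur at the outermost-volatile position it collapses, at the point in $S$ from which the remaining steps above that position stay away), so there is no freedom to reschedule it at a convenient depth; the depth condition must moreover be checked for all surviving steps of $S$ at every limit ordinal of the assembled reduction, not just for the inserted $\to[\bot]$-steps; and your induction on $\len{S}$ produces Böhm reductions for the proper prefixes that need not cohere into a single chain---which is precisely the ``bookkeeping obstacle'' you name but do not resolve.

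The paper closes this gap without ever touching the metric directly. After performing the replacements for all prefixes of $S$ and all their outermost-volatile positions, one observes that the result $T$ is still a strongly $\prs$-convergent reduction (now in $\calB$) from $t_0$ to $t_\alpha$---the endpoints agree by Lemma~\ref{lem:botLimRed}, since both put $\bot$ exactly at the outermost-volatile positions---and that no prefix of $T$ has a volatile position. Lemma~\ref{lem:totalRed} then gives that $T$ is total (over the signature $\Sigma_\bot$ of $\calB$, where $\bot$ counts as an ordinary constant), and Theorem~\ref{thr:strongExt} converts total strong $\prs$-convergence into strong $\mrs$-convergence, delivering the depth condition and the metric limits for free. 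Your proof needs this (or an equivalent) final step; supplying it would also let you drop the transfinite induction on $\len{S}$ altogether, since the surgery is a single global operation on $S$.
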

\begin{proof}
  Assume that there is a reduction $S =(t_\iota \to[\pi_\iota]
  t_{\iota + 1})_{\iota < \alpha}$ in $\calR$ that strongly
  $\prs$-converges to $t_\alpha$. We will construct a strongly
  $\mrs$-convergent reduction $T\fcolon t_0 \mato[\calB] t_\alpha$ in
  $\calB$ by removing reduction steps in $S$ that take place at or
  below outermost-volatile positions of some prefix of $S$ and replace
  them by $\to[\bot]$-steps.

  Let $\pi$ be an outermost-volatile position of some prefix
  $\prefix{S}{\lambda}$. Then there is some ordinal $\beta < \lambda$
  such that no reduction step between $\beta$ and $\lambda$ in $S$
  takes place strictly above $\pi$, i.e.\ $\pi_\iota \not < \pi$ for
  all $\beta \le \iota < \lambda$. Such an ordinal $\beta$ must exist
  since otherwise $\pi$ would not be an outermost-volatile position in
  $\prefix{S}{\lambda}$. Hence, we can construct a destructive
  reduction $S'\fcolon \atPos{t_\beta}{\pi} \pato \bot$ by taking the
  subsequence of the segment $\segm{S}{\beta}{\lambda}$ that contains
  the reduction steps at $\pi$ or below. Note that
  $\atPos{t_\beta}{\pi}$ might still contain the symbol $\bot$. Since
  $\bot$ is not relevant for the applicability of rules in $\calR$,
  each of the $\bot$ symbols in $\atPos{t_\beta}{\pi}$ can be safely
  replaced by arbitrary total terms, in particular by terms in
  $\calU$. Let $r$ be a term that is obtained in this way. Then there
  is a destructive reduction $S''\fcolon r \pato \bot$ that applies
  the same rules at the same positions as in $S'$. Hence, $r \in
  \calU$. By construction, $r$ is a $\bot,\calU$-instance of
  $\atPos{t_\beta}{\pi}$ which means that $\atPos{t_\beta}{\pi} \in
  \calU_\bot$. Additionally, $\atPos{t_\beta}{\pi} \neq \bot$ since
  there is a non-empty reduction $S'\fcolon \atPos{t_\beta}{\pi} \pato
  \bot$ starting in $\atPos{t_\beta}{\pi}$. Consequently, there is a
  rule $\atPos{t_\beta}{\pi} \to \bot$ in $\calB$. Let $T'$ be the
  reduction that is obtained from $\prefix{S}{\lambda}$ by replacing
  the $\beta$-th step, which we can assume w.l.o.g.\ to take place at
  $\pi$, by a step with the rule $\atPos{t_\beta}{\pi} \to \bot$ at
  the same position $\pi$ and removing all reduction steps
  $\phi_\iota$ taking place at $\pi$ or below for all $\beta < \iota <
  \lambda$. Let $t'$ be the term that the reduction $T'$
  strongly $\prs$-converges to. $t_\lambda$ and $t'$ can only differ at
  position $\pi$ or below. However, by construction, we have $t'(\pi)
  = \bot$ and, by Lemma~\ref{lem:botLimRed}, $t_\lambda(\pi) =
  \bot$. Consequently, $t' = t_\lambda$.

  This construction can be performed for all prefixes of $S$ and their
  respective outermost-volatile positions. Thereby, we obtain a
  strongly $\prs$-converging reduction $T\fcolon t_0 \pato[\calB] t_\alpha$ for
  which no prefix has a volatile position. By
  Lemma~\ref{lem:totalRed}, $T$ is a total reduction. Note that
  $\calB$ is a TRS over the extended signature $\Sigma' = \Sigma
  \uplus \set{\bot}$, i.e.\ terms containing $\bot$ are considered
  total. Hence, by Theorem~\ref{thr:strongExt}, $T\fcolon t_0
  \mato[\calB] t_\alpha$.
\end{proof}

\subsection{From Böhm-convergence to Strong \texorpdfstring{$\prs$}{p}-Convergence}
\label{sec:from-bohm-conv}

Next, we establish a compression lemma for destructive reductions,
i.e.\ that each destructive reduction can be compressed to length
$\omega$. Before we continue with this, we need to mention the
following lemma from Kennaway et al.\ \cite{kennaway99jflp}:
\begin{lem}[postponement of $\protect{\to[\bot]}$-steps]
  \label{lem:procBot}
  % postponement of $\protect{\to[\bot]}$-steps %
  Let $\calR$ be a left-linear, left-finite TRS and $\calB$ some Böhm
  extension of $\calR$. Then $s \mato[\calB] t$
  implies $s \mato[\calR] s' \mato[\bot] t$ for some term
  $s'$.\footnote{Strictly speaking, if $s$ is not a total term,
    i.e.\ it contains $\bot$, then we have to consider the system that
    is obtained from $\calR$ by extending its signature to
    $\Sigma_\bot$.}
  % TODO: is the restriction to left-finite TRS necessary?
\end{lem}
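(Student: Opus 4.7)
The plan is to reduce the statement to a local postponement lemma and then iterate it transfinitely. The local claim I would prove first is: whenever $s \to[\bot,\pi_1] s' \to[\calR,\pi_2,\rho] s''$, there is a term $u$ with $s \to[\calR,\pi_2,\rho] u$ and a finite reduction $u \fto{*}[\bot] s''$. A case analysis on the relative positions of $\pi_1$ and $\pi_2$ shows that three of the four cases are either trivial or impossible: disjoint positions swap without interaction; $\pi_1 = \pi_2$ and $\pi_1 < \pi_2$ are excluded because $s'(\pi_1) = \bot$ is neither a function symbol (so it heads no $\calR$-redex) nor has proper subterms. The only substantive case is $\pi_2 < \pi_1$. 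Writing $\pi_1 = \pi_2 \concat \pi$, the fact that $\bot \notin \Sigma$ forces $\pi$ to lie below a variable position $\pi''$ of the left-hand side $l$ of $\rho$ (otherwise $l$ would need to carry $\bot$ at some pattern position). By left-linearity of $\calR$, the variable $x = l(\pi'')$ occurs uniquely in $l$, so $\atPos{s}{\pi_2}$ matches $l$ at every pattern position and is itself a $\rho$-redex. Contracting it produces a term $u$ that differs from $s'$ exactly at the finitely many positions where $x$ gets substituted into $r$. At each such position, $u$ carries the original meaningless subterm, which lies in $\calU_\bot\setminus\set{\bot}$ by closure under substitution (Remark~\ref{rem:cloSub}); finitely many $\to[\bot]$-steps then transform $u$ into $s''$.

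Iterating the local swap on finite reductions is routine and gives the finite version: every finite $s \fto{*}[\calB] t$ decomposes as $s \fto{*}[\calR] s' \fto{*}[\bot] t$. For the transfinite extension I would proceed by transfinite induction on $\alpha = \len{S}$ for $S\fcolon s \mto{\alpha}[\calB] t$, using the induction hypothesis on every proper prefix to obtain decompositions $\prefix{S}{\beta}\fcolon s \mato[\calR] u_\beta \mato[\bot] s_\beta$. At successor stages, one swap with the last step (and, in the case of a trailing $\to[\calR]$-step, repeated applications of the local swap pushing it through the $\bot$-segment) yields the decomposition of $\prefix{S}{\beta+1}$; the Compression Lemma (Theorem~\ref{thr:mrsCompr}) keeps the $\bot$-segment of length at most $\omega$, which is essential because the local swap can duplicate $\to[\bot]$-steps. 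At a limit ordinal $\lambda$, one assembles the decompositions by taking the limit $T_1 = \Lub_{\beta < \lambda} T_\beta^1$ of the coherent chain of $\calR$-components (coherence uses left-linearity to identify residuals across prefixes), yielding $T_1\fcolon s \mato[\calR] s'$ for some $s'$; the trailing $\mato[\bot]$-segment is then recovered from the chain of $\bot$-components and, by the construction, strongly $\mrs$-converges to $t$.

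The main obstacle is verifying preservation of strong $\mrs$-convergence throughout. The key observation is a depth inequality: the $\calR$-step introduced by each local swap occurs at exactly the depth of the original $\calR$-step, and each duplicated $\bot$-step sits strictly below the variable position of the corresponding right-hand side, hence at depth at least that of the $\calR$-step plus one. Consequently, contraction depths in the rearranged reduction are bounded below by those in the original, so if depths tended to infinity in $S$ they do so both in the extracted $\calR$-segment and in the trailing $\bot$-segment. The final term of the rearrangement remains $t$ because, at every position, whether it is eventually touched by $\calR$-steps, by $\bot$-steps, or by neither is invariant under each local swap, and this classification determines the limit symbol at that position.
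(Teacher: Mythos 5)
The paper does not prove this lemma; it is imported from Kennaway et al.\ \cite{kennaway99jflp} without proof, so there is no in-paper argument to compare yours against. Your local swap lemma and its case analysis are correct and constitute the standard first step. Two small slips there: since right-hand sides may be infinite, a variable can occur infinitely often in $r$, so the residual of a single $\to[\bot]$-step after the swap may be an $\omega$-long strongly $\mrs$-convergent $\to[\bot]$-reduction rather than a finite one; and the appeal to Remark~\ref{rem:cloSub} is unnecessary (and unavailable for arbitrary $\calU$), since the relocated subterms are verbatim copies of the originally contracted term and hence already lie in $\calU_\bot\setminus\set{\bot}$.

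The transfinite assembly, however, has genuine gaps. First, you cannot invoke Theorem~\ref{thr:mrsCompr} to keep the $\bot$-segment at length at most $\omega$: the rules $t \to \bot$ for $t \in \calU_\bot\setminus\set{\bot}$ have infinite left-hand sides in general, so the system generating that segment is not left-finite and the Compression Lemma does not apply to it. (The paper's own compression result for $\to[\bot]$-segments, Lemma~\ref{lem:comprBot}, holds only for orthogonal systems and $\rAct$, and is obtained by a different route.) Without compression, pushing a single $\calR$-step leftward through the accumulated $\bot$-segment is itself a transfinite process whose strong convergence and endpoint must be established separately. Second, the $\bot$-components of your decompositions do not form a chain --- each push-through relocates and duplicates $\to[\bot]$-steps --- so at limit stages the trailing segment cannot be ``recovered from the chain of $\bot$-components''; you need an explicit construction of that segment (say, as a reduction contracting the set of positions that are eventually turned into $\bot$), and none is given. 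Third, the depth inequality is false for collapsing rules: with $l \to x$ and the $\bot$-step at the variable position of $l$, the relocated $\to[\bot]$-step lands exactly at the position of the $\calR$-redex, which can be strictly shallower than the original $\bot$-position. The correct lower bound is the depth of the $\calR$-step being swapped past, not the depth of the original $\bot$-step, and deriving strong $\mrs$-convergence of the assembled trailing segment from that weaker bound still requires an argument you have not supplied.
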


In the next proposition we show that, excluding $\bot$ subterms, the
final term of a strongly $\prs$-converging reduction can be
approximated arbitrarily well by a finite reduction. This corresponds
to Corollary~\ref{cor:mrsFinApprox} which establishes finite
approximations for strongly $\mrs$-convergent reductions.
\begin{prop}[finite approximation]
  \label{prop:prsFinApprox}
  % finite approximation of non-$\bot$ occurrences %
  Let $\calR$ be a left-linear, left-finite TRS and $s \pato[\calR]
  t$. Then, for each finite set $P \subseteq \posNonBot{t}$, there is
  a reduction $s \fto{*}[\calR] t'$ such that $t$ and $t'$ coincide in
  $P$.
\end{prop}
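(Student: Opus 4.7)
The plan is to reduce the claim to the corresponding finite approximation property already established for strongly $\mrs$-convergent reductions (Corollary~\ref{cor:mrsFinApprox}), by routing the given strongly $\prs$-convergent reduction through the Böhm extension.

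First I would apply Proposition~\ref{prop:prsBohm} to turn $s \pato[\calR] t$ into a Böhm-convergent reduction $s \mato[\calB] t$, where $\calB$ denotes the Böhm extension of $\calR$ with respect to the fragile terms in $\iterms$. Since $\calR$ is left-linear and left-finite, Lemma~\ref{lem:procBot} then lets me postpone the $\bot$-steps, yielding a factorisation $s \mato[\calR] s' \mato[\bot] t$ for some intermediate term $s'$.

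The key observation is that every single $\bot$-step is $\lebot$-decreasing: a step $u \to[\bot] v$ replaces a subterm of $u$ by $\bot$, which is the least element of $(\ipterms,\lebot)$, so $v \lebot u$. I would argue by transfinite induction that this monotonicity is preserved at limit stages too, since a descending chain in the complete semilattice has its greatest lower bound as limit inferior and, by Proposition~\ref{prop:poMetric}, this limit inferior coincides with the metric limit that underlies strong $\mrs$-convergence. Consequently the whole reduction $s' \mato[\bot] t$ yields $t \lebot s'$, so that $s'(\pi) = t(\pi)$ for every $\pi \in \posNonBot{t}$, and in particular on the finite set $P$.

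To conclude, I would choose $d \in \nat$ strictly larger than $\max_{\pi \in P}\len{\pi}$ and apply Corollary~\ref{cor:mrsFinApprox} to $s \mato[\calR] s'$, obtaining a finite reduction $s \fto{*}[\calR] t'$ whose end term $t'$ coincides with $s'$ up to depth $d$, and hence on $P$. Combining this with the previous paragraph gives $t'(\pi) = s'(\pi) = t(\pi)$ for all $\pi \in P$, as required. The only step I expect to demand any care is the transfinite monotonicity of $\bot$-reductions with respect to $\lebot$, but once Proposition~\ref{prop:poMetric} is invoked this becomes a routine induction on the length of the $\bot$-reduction.
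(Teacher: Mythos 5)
Your proposal is correct and follows essentially the same route as the paper's own proof: convert to a Böhm-convergent reduction via Proposition~\ref{prop:prsBohm}, postpone the $\bot$-steps with Lemma~\ref{lem:procBot}, observe that $s'$ and $t$ coincide on $\posNonBot{t}$, and then invoke Corollary~\ref{cor:mrsFinApprox} at a depth bounding $P$. The only difference is that you spell out the monotonicity argument for $s' \mato[\bot] t$ (which the paper dismisses with ``clearly''), and your transfinite-induction justification of $t \lebot s'$ is sound.
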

\begin{proof}
  Assume that $s \pato[\calR] t$. Then, by
  Proposition~\ref{prop:prsBohm}, there is a reduction $s \mato[\calB]
  t$, where $\calB$ is the Böhm extension of $\calR$ w.r.t.\ the set
  of total, fragile terms of $\calR$. By Lemma~\ref{lem:procBot}, there
  is a reduction $s \mato[\calR] s' \mato[\bot] t$. Clearly, $s'$ and
  $t$ coincide in $\posNonBot{t}$. Let $d = \max\setcom{\len{\pi}}{\pi
    \in P}$. Since $P$ is finite, $d$ is well-defined. By
  Corollary~\ref{cor:mrsFinApprox}, there is a reduction $s
  \fto{*}[\calR] t'$ such that $t'$ and $s'$ coincide up to depth $d$
  and, thus, in particular they coincide in $P$. Consequently, since $s'$ and
  $t$ coincide in $\posNonBot{t} \supseteq P$, $t$ and
  $t'$ coincide in $P$, too.
\end{proof}
% left-finiteness is necessary: Consider a -> f(a), f^\omega -> b and
% reduction a -> f(a) -> ... -> f^\omega -> b

In order to establish a compression lemma for destructive reductions
we need that fragile terms are preserved by finite reductions. We can
obtain this from the following more general lemma showing that
destructive reductions are preserved by forming projections as
constructed in the Infinitary Strip Lemma:
\begin{lem}[preservation of destructive reductions by projections]
  \label{lem:presPerpRed}
  % preservation of destructive reductions %
  Let $\calR$ be an orthogonal TRS, $S\fcolon t_0 \pato t_\alpha$ a
  destructive reduction, and $T\fcolon t_0 \pato[U] s_0$ a complete
  development of a set $U$ of pairwise disjoint redex
  occurrences. Then the projection $\proj{S}{T}\fcolon s_0 \pato
  s_\alpha$ is also destructive.
\end{lem}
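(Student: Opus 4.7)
The plan is to invoke the Infinitary Strip Lemma (Proposition~\ref{prop:prsStripLem}) to construct $\proj{S}{T}$ together with a companion complete development $t_\alpha \pato[\dEsc{U}{S}] s_\alpha$, and then exploit the diagrammatic structure of that construction. Since $S$ is destructive, $t_\alpha = \bot$, so $\dEsc{U}{S} \subseteq \posNonBot{\bot} = \emptyset$ and the companion development must be empty, giving $s_\alpha = \bot$. It will therefore remain to show that $\proj{S}{T}$ is open and has $\emptyseq$ as a volatile position.

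Recall that the strip lemma writes $\proj{S}{T}$ as the concatenation of complete developments $C_\iota\fcolon s_\iota \pato s_{\iota+1}$ of $\dEsc{v_\iota}{U_\iota}$, one for each step $\phi_\iota\fcolon t_\iota \to[v_\iota] t_{\iota+1}$ of $S$. Let $I = \setcom{\iota < \alpha}{v_\iota = \emptyseq}$; this set is cofinal in $\alpha$ because $\emptyseq$ is volatile in $S$, and $\alpha$ is a limit ordinal. I will distinguish two cases for $\iota \in I$. If $\emptyseq \in U_\iota$, then pairwise disjointness of the residuals (Proposition~\ref{prop:disjDesc}) forces $U_\iota = \set{\emptyseq}$, so $\dEsc{\emptyseq}{U_\iota} = \emptyset$ and $C_\iota$ is empty; moreover $U_{\iota+1} = \emptyset$ and hence $U_\gamma = \emptyset$ for all $\gamma > \iota$, so this case can arise for at most one index $\iota_A \in I$. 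If $\emptyseq \nin U_\iota$, I shall prove by transfinite induction on the stages of the complete development of $U_\iota$ in $t_\iota$ that no stage is a step at $\emptyseq$ and no residual lies at $\emptyseq$; inspection of Definition~\ref{def:desc} shows that a residual can migrate to $\emptyseq$ only from an earlier residual at $\emptyseq$ (excluded by the inductive hypothesis) or from a preceding step at $\emptyseq$ (excluded for the same reason), and the limit case is handled by Lemma~\ref{lem:descLimRed}. This yields $\dEsc{\emptyseq}{U_\iota} = \set{\emptyseq}$, and because the development leaves the root untouched, the root of $s_\iota$ carries the same redex as the root of $t_\iota$ (using left-linearity of $\calR$), so $C_\iota$ consists of exactly one step at $\emptyseq$.

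Since $I$ is cofinal in $\alpha$ but at most one of its members lies in the first case, cofinally many $\iota \in I$ yield a genuine root step in $C_\iota$. These single-step blocks occur at offsets $\sum_{\kappa < \iota} \len{C_\kappa}$ in the concatenation $\proj{S}{T}$. For any ordinal position $\delta < \len{\proj{S}{T}}$ lying inside some $C_{\iota_0}$, any Case~B index $\iota \in I$ with $\iota > \max(\iota_0,\iota_A)$ contributes a root step of $\proj{S}{T}$ strictly past $\delta$; hence the length of $\proj{S}{T}$ is a limit ordinal and $\emptyseq$ is volatile in $\proj{S}{T}$. Combined with $s_\alpha = \bot$, this shows that $\proj{S}{T}$ is destructive. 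The main obstacle will be the transfinite induction in the second case, whose role is to rule out any subtle way for a residual to migrate to the root during a complete development of pairwise disjoint non-root redexes.
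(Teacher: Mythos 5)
Your proposal is correct and follows essentially the same route as the paper's proof: both work through the strip-lemma diagram, observe that each root step $v_\gamma=\emptyseq$ of $S$ projects to a root step in $\proj{S}{T}$ via $\emptyseq\in\dEsc{v_\gamma}{U_\gamma}$, and use pairwise disjointness of $U_\gamma$ to show the exceptional case $\emptyseq\in U_\gamma$ occurs at most once. You merely spell out details the paper leaves implicit (the induction showing the root survives a development of disjoint non-root redexes, and the cofinality/openness bookkeeping for the concatenated blocks), which is fine.
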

\begin{proof}
  We consider the situation depicted in
  Figure~\ref{fig:stripLem}. Since $S\fcolon t_0 \pato t_\alpha$ is
  destructive, we have, for each $\beta < \alpha$, some $\beta\le
  \gamma < \alpha$ such that $v_\gamma = \emptyseq$. If $v_\gamma =
  \emptyseq$, then also $\emptyseq \in \dEsc{v_\gamma}{U_\gamma}$ unless
  $\emptyseq \in U_\gamma$. As by Proposition~\ref{prop:disjDesc},
  $U_\gamma$ is a set of pairwise disjoint positions, $\emptyseq \in
  U_\gamma$ implies $U_\gamma = \set{\emptyseq}$. This means that if
  $v_\gamma = \emptyseq$ and $\emptyseq \in U_\gamma$, then $U_\iota =
  \emptyset$ for all $\gamma < \iota < \alpha$. Thus, there is only at
  most one $\gamma < \alpha$ with $\emptyseq \in U_\gamma$. Therefore,
  we have, for each $\beta < \alpha$, some $\beta\le \gamma < \alpha$
  such that $\emptyseq \in \dEsc{v_\gamma}{U_\gamma}$. Hence, $T$ is
  destructive.
\end{proof}

As a consequence of this preservation of destructiveness by forming
projections, we obtain that the set of fragile terms is closed under
finite reductions:
\begin{lem}[closure of fragile terms under finite reductions]
  \label{lem:perpFinRed}
  % closure of fragile terms under finite reductions %
  In each orthogonal TRS, the set of fragile terms is closed under
  finite reductions.
\end{lem}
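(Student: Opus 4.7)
The plan is a straightforward induction on the length of the finite reduction, leveraging Lemma~\ref{lem:presPerpRed} as the engine that propagates destructiveness. Let $\calR$ be an orthogonal TRS, $t$ a fragile term, and $t \fto{*} t'$ a finite reduction. I want to show $t'$ is fragile, that is, to exhibit a destructive reduction starting from $t'$.

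The base case (the empty reduction) is trivial, since $t' = t$ is fragile by assumption. For the inductive step, I factor the reduction as $t \fto{*} t'' \to t'$ and apply the induction hypothesis to obtain a destructive reduction $S\fcolon t'' \pato \bot$ starting from $t''$. The single step $T\fcolon t'' \to t'$ is a complete development of the singleton set $\set{u}$, where $u$ is the position of the contracted redex; since $\set{u}$ is trivially a set of pairwise disjoint redex occurrences, the hypotheses of the Infinitary Strip Lemma (Proposition~\ref{prop:prsStripLem}) are met. It yields a projection $\proj{S}{T}\fcolon t' \pato s_\alpha$. By Lemma~\ref{lem:presPerpRed}, this projection is itself destructive, which, together with the equivalent characterisation of destructive reductions, implies $s_\alpha = \bot$. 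Thus a destructive reduction starts from $t'$, establishing that $t'$ is fragile.

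No subtle obstacle is expected: the entire work of tracking how the volatility of $\emptyseq$ is preserved when one projects across a disjoint complete development has already been done in Lemma~\ref{lem:presPerpRed}. The only thing to verify in the induction step is that a single reduction step indeed qualifies as a complete development of a singleton set of pairwise disjoint redex occurrences, which is immediate from Definition~\ref{def:devel}.
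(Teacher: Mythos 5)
Your proof is correct and follows exactly the route the paper takes: the paper's own proof is precisely "a straightforward induction proof on the length of $T$, using Lemma~\ref{lem:presPerpRed}", and you have filled in the same induction, viewing each single step as a complete development of a singleton (hence trivially pairwise disjoint) set of redex occurrences and projecting the destructive reduction across it. No gaps.
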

\begin{proof}
  Let $t$ be a fragile term and $T\fcolon t \fto{*} t'$ a finite
  reduction. Hence, there is a destructive reduction starting in
  $t$. A straightforward induction proof on the length of $T$, using
  Lemma~\ref{lem:presPerpRed}, shows that there is a destructive
  reduction starting in $t'$. Thus, $t'$ is fragile.
\end{proof}

Now we can show that destructiveness does not need more that $\omega$
steps in orthogonal, left-finite TRSs. This property will be useful
for proving the equivalence of root-activeness and fragility of total
terms as well as the Compression Lemma for strongly $\prs$-convergent
reductions.
\begin{prop}[Compression Lemma for destructive reductions]
  \label{prop:comprPerpRed}
  % compression of destructive reductions %
  Let $\calR$ be an orthogonal, left-finite TRS and $t$ a partial term in
  $\calR$. If there is a destructive reduction starting in $t$, then
  there is a destructive reduction of length $\omega$ starting in $t$.
\end{prop}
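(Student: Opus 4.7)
My plan is to build a destructive reduction of length exactly $\omega$ from $t$ by alternating two moves: a finite approximation move that brings us to a root redex, and a root contraction. The two tools are Proposition~\ref{prop:prsFinApprox} (finite approximation of strongly $\prs$-convergent reductions) and Lemma~\ref{lem:perpFinRed} (closure of fragile terms under finite reductions). Together these allow an $\omega$-indexed induction in which fragility is maintained as an invariant.

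More concretely, I would set $t_0 := t$ and inductively construct fragile terms $t_0, t_1, t_2, \dots$ together with finite non-empty reductions $R_i \fcolon t_i \fto{*} t_{i+1}$ whose last step is a root contraction. Given a fragile $t_i$, pick a witnessing destructive reduction $D_i \fcolon t_i \pato \bot$. Since $D_i$ is destructive, the root $\emptyseq$ is volatile in $D_i$, so there is a least ordinal $\gamma_i$ at which $D_i$ contracts the root. The prefix $\prefix{D_i}{\gamma_i} \fcolon t_i \pato u_i$ is then a strongly $\prs$-convergent reduction whose final term $u_i$ is a redex of some rule $\rho_i \fcolon l_i \to r_i$. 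Because $\calR$ is left-finite, the set $P_i$ of pattern positions of $\rho_i$ is a finite subset of $\posNonBot{u_i}$. Applying Proposition~\ref{prop:prsFinApprox} to $\prefix{D_i}{\gamma_i}$ with this $P_i$ yields a finite reduction $t_i \fto{*} u_i'$ in which $u_i'$ agrees with $u_i$ on $P_i$; hence $u_i'$ is a $\rho_i$-redex at the root. Contract it to obtain $t_{i+1}$, and set $R_i$ to be the finite reduction $t_i \fto{*} u_i' \to t_{i+1}$. By Lemma~\ref{lem:perpFinRed}, fragility of $t_i$ transfers to $t_{i+1}$ along the finite reduction $R_i$, closing the induction.

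The concatenation $R := \Concat_{i<\omega} R_i$ then has length at most $\omega$ (countable concatenation of finite sequences) and at least $\omega$ (each $R_i$ contributes at least the root contraction), so exactly $\omega$. By construction, root contractions occur cofinally in $\omega$, so the position $\emptyseq$ is volatile in $R$; in particular $R$ is destructive. Since the partial order model always admits a limit inferior, $R$ is automatically strongly $\prs$-convergent, and Lemma~\ref{lem:botLimRed} then forces the final term to be $\bot$.

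The step that needs the most care is verifying the bookkeeping around the concatenation: one must check that the $\omega$-concatenation of the $R_i$'s really is a single reduction in the sense of Definition~\ref{def:red}, that its length is precisely $\omega$ rather than a proper ordinal smaller than $\omega$, and that the root contractions indexed by the last step of each $R_i$ form a cofinal set of indices in $\omega$, so that $\emptyseq$ is volatile. Each of these is essentially routine once the inductive construction is in place; the genuine work lies in the single-segment construction, which is handled cleanly by Proposition~\ref{prop:prsFinApprox} and Lemma~\ref{lem:perpFinRed}.
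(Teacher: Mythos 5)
Your proposal is correct and follows essentially the same route as the paper: approximate a prefix of a witnessing destructive reduction up to a root redex via Proposition~\ref{prop:prsFinApprox}, contract at the root, carry fragility forward with Lemma~\ref{lem:perpFinRed}, and concatenate the resulting finite segments into a length-$\omega$ destructive reduction. The only cosmetic difference is that you make explicit some bookkeeping (choice of the least root-contraction index, the exact-$\omega$ length check) that the paper leaves implicit.
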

\begin{proof}
  Let $S\fcolon t_0 \pto{\lambda} \bot$ be a destructive reduction
  starting in $t_0$. Hence, there is some $\alpha < \lambda$ such that
  $\prefix{S}{\alpha}\fcolon t_0 \pato s_1$, where $s_1$ is a
  $\rho$-redex for some $\rho\fcolon l \to r \in R$. Let $P$ be the
  set of pattern positions of the $\rho$-redex $s_1$, i.e.\ $P =
  \posFun{l}$. Due to the left-finiteness of $\calR$, $P$ is
  finite. Hence, by Proposition~\ref{prop:prsFinApprox}, there is a
  finite reduction $t_0 \fto{*} s'_1$ such that $s_1$ and $s'_1$
  coincide in $P$. Hence, because $\calR$ is left-linear, also $s'_1$
  is a $\rho$-redex. Now consider the reduction $T_0\fcolon t_0
  \fto{*} s'_1 \to[\rho,\emptyseq] t_1$ ending with a contraction at
  the root. $T_0$ is of finite length and, according to
  Lemma~\ref{lem:perpFinRed}, $t_1$ is fragile.

  Since $t_1$ is again fragile, the above argument can be iterated
  arbitrarily often which yields for each $i < \omega$ a finite
  non-empty reduction $T_i\fcolon t_i \fto{*} t_{i + 1}$ whose last
  step is a contraction at the root. Then the concatenation $T =
  \Concat_{i < \omega} T_i$ of these reductions is a destructive
  reduction of length $\omega$ starting in $t_0$.
\end{proof}
% left-finiteness is necessary: Consider a -> f(a), f^\omega -> b, b
% -> b and reduction a -> f(a) -> ... -> f^\omega -> b -> b ...

The above proposition bridges the gap between fragility and
root-activeness. Whereas the former concept is defined in terms of
transfinite reductions, the latter is defined in terms of finite
reductions. By Proposition~\ref{prop:comprPerpRed}, however, a fragile
term is always finitely reducible to a redex. This is the key to the
observation that fragility is not only quite similar to
root-activeness but is, in fact, essentially the same concept.
\begin{prop}[root-activeness = fragility]
  \label{prop:rootAct}
  % root-active terms %
  Let $\calR$ be an orthogonal, left-finite TRS and $t$ a total term
  in $\calR$. Then $t$ is root-active iff $t$ is fragile.
\end{prop}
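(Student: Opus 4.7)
The plan is to prove the two directions separately, leveraging the Compression Lemma for destructive reductions (Proposition~\ref{prop:comprPerpRed}) and the closure of fragile terms under finite reductions (Lemma~\ref{lem:perpFinRed}) for one direction, and an iterative construction for the other.

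For the direction from fragility to root-activeness, I would argue as follows. Assume $t$ is fragile and let $t \fto{*} t'$ be any finite reduction. By Lemma~\ref{lem:perpFinRed}, $t'$ is fragile as well. Applying Proposition~\ref{prop:comprPerpRed} to $t'$, there is a destructive reduction of length $\omega$ starting in $t'$, say $T = (u_i \to[\pi_i] u_{i+1})_{i<\omega}$ with $u_0 = t'$. Since $T$ is destructive, $\emptyseq$ is volatile in $T$, so in particular there is some least $n < \omega$ with $\pi_n = \emptyseq$. Then $\prefix{T}{n+1}$ is a finite reduction $t' \fto{*} s \to u_{n+1}$ where $s$ is a redex. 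Hence $t' \fto* s$ witnesses root-activeness of $t$.

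For the converse, assume $t$ is root-active. I would construct a reduction of length $\omega$ that contracts redexes at the root cofinally often. First observe that root-activeness is preserved under finite reductions: if $t$ is root-active and $t \fto{*} t_1$, then any reduction $t_1 \fto{*} t'$ extends to $t \fto{*} t'$, which by root-activeness of $t$ can be extended to a redex, showing that $t_1$ is root-active. With this observation, I set $t_0 = t$ and inductively, given $t_i$ root-active, use root-activeness to obtain a finite reduction $T_i\fcolon t_i \fto{*} s_i \to[\emptyseq] t_{i+1}$ whose last step contracts a redex at the root (appending a root contraction to the reduction $t_i \fto* s_i$ to a redex $s_i$ guaranteed by root-activeness). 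Each $t_{i+1}$ is again root-active, so the construction continues for all $i<\omega$.

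The concatenation $S = \Concat_{i<\omega} T_i$ is a reduction of length $\omega$ (after reindexing) starting in $t$, in which the root position $\emptyseq$ is contracted cofinally often and therefore is volatile in $S$. By Lemma~\ref{lem:botLimRed}\ref{item:botLimRed3} (applied to the total reduction $S$ prior to the limit), $S$ strongly $\prs$-converges to a term with $\bot$ at $\emptyseq$, that is, to $\bot$ itself. Hence $S$ is destructive and $t$ is fragile.

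The potential technical obstacle lies in making the second direction fully rigorous: one must verify strong $\prs$-continuity of the concatenation at the sole limit ordinal, but this is immediate since the contexts $c_i$ of the cofinally many root-steps satisfy $c_i = \bot$, so $\liminf_{i\limto\omega} c_i = \bot$ as required. Everything else is bookkeeping with the previously established facts.
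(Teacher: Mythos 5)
Your proof is correct and follows essentially the same route as the paper: the fragile-to-root-active direction uses Lemma~\ref{lem:perpFinRed} and Proposition~\ref{prop:comprPerpRed} exactly as the paper does, and the converse iterates root-activeness (preserved under finite reductions) to build a length-$\omega$ reduction with cofinally many root contractions, which the paper asserts more tersely. Your extra care in verifying that the concatenation is destructive (via the contexts $c_i = \bot$) is sound but not a departure from the paper's argument.
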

\begin{proof}
  The ``only if'' direction is easy: If $t$ is root-active, then there
  is a reduction $S$ of length $\omega$ starting in $t$ with
  infinitely many steps taking place at the root. Hence, $S\fcolon t
  \pto{\omega} \bot$ is a destructive reduction, which
  makes $t$ a fragile term.

  For the converse direction we assume that $t$ is fragile and show
  that, for each reduction $t \fto{*} s$, there is a reduction $s
  \fto{*} t'$ to a redex $t'$. By Lemma~\ref{lem:perpFinRed}, also $s$
  is fragile. Hence, there is a destructive reduction $S\fcolon s
  \pato \bot$ starting in $s$. According to
  Proposition~\ref{prop:comprPerpRed}, we can assume that $S$ has
  length $\omega$. Therefore, there is some $n < \omega$ such that
  $\prefix{S}{n}\fcolon s \fto{*} t'$ for a redex $t'$.
\end{proof}
% left-finiteness is necessary: Consider a -> f(a), f^\omega -> b, b
% -> b and reduction a -> f(a) -> ... -> f^\omega -> b -> b ...; a is
% fragile but not root-active

To prove the other direction of the equality of strong
$\prs$-reachability and Böhm-reachability we need the property that
strongly $\mrs$-convergent reductions consisting only of
$\to[\bot]$-steps, i.e.\ contractions of $\rAct_\bot$-terms to $\bot$,
can be compressed to length at most $\omega$ as well. In order to show
this, we will make use of the following lemma from Kennaway et al.\
\cite{kennaway99jflp}:
\begin{lem}[$\bot,\rAct$-instances]
  \label{lem:botInst}
  % $\bot,\calU$-instances %
  Let $\rAct$ be the root-active terms of an orthogonal, left-finite
  TRS and $t\in \ipterms$. If some $\bot,\rAct$-instance of $t$ is in
  $\rAct$, then every $\bot,\rAct$-instance of $t$ is.
  % TODO: is the restriction to left-finite TRS necessary?
\end{lem}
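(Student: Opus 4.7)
The result is attributed to Kennaway et al., so my plan is to outline the essential structure of the argument that would be reconstructed here. The plan rests on two closure properties of $\rAct$ for orthogonal, left-finite TRSs: first, $\rAct$ is closed under finite reductions (if $u \in \rAct$ and $u \fto{*} u'$, then $u' \in \rAct$, since any reduction from $u'$ extends a reduction from $u$); second, $\rAct$ is closed under finite expansion, i.e.\ $u \fto{*} u'$ and $u' \in \rAct$ implies $u \in \rAct$, which follows from finitary confluence of the orthogonal system: any $u \fto{*} u''$ has a common reduct with $u'$ by confluence, and the redex-reachability from $u'$ transfers.

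Given $s_1, s_2$ two $\bot,\rAct$-instances of $t$ with $s_1 \in \rAct$, let $P$ be the set of $\bot$-occurrences of $t$. The terms $s_1$ and $s_2$ coincide outside $P$, and at each $p \in P$ they carry (possibly distinct) root-active subterms $u_1^p, u_2^p \in \rAct$. To show $s_2 \in \rAct$, I would take an arbitrary finite reduction $S\fcolon s_2 \fto{*} s_2'$ and construct a reduction $s_2' \fto{*} r$ to a redex. The key device is to maintain a companion reduction $s_1 \fto{*} s_1'$ in lockstep with $S$, so that $s_1'$ and $s_2'$ again coincide outside the residuals $Q$ of $P$ under $S$, with root-active subterms at positions in $Q$ (using closure of $\rAct$ under finite reduction). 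Since $s_1 \in \rAct$, there is a reduction $s_1' \fto{*} r_1$ to some redex $r_1$; this reduction can then be mirrored from $s_2'$, with any rule pattern that intrudes into a root-active subterm matched by first reducing the subterm of $s_2'$ using its root-activeness.

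The main obstacle is the faithful transfer of reduction steps whose pattern covers a substitution position $q \in Q$: root-activeness of $u_2^q$ guarantees that $u_2^q$ reduces to \emph{some} redex, but not a priori to a term matching the specific finite subpattern demanded by the companion step in the $s_1'$-reduction. The way around this is to exploit left-linearity and left-finiteness so that only finitely many subpattern requirements arise along any finite reduction; one then uses finitary confluence to argue that whenever $u_1^q$ has been finitely reduced to match some subpattern, the companion $u_2^q$ (being root-active, hence reaching a redex after any finite reduction) can likewise be finitely reduced to the same subpattern, closing the matching square. Iterating this through the finitely many pattern-interaction steps of $s_1' \fto{*} r_1$ delivers the desired reduction $s_2' \fto{*} r_2$ to a redex, establishing $s_2 \in \rAct$.
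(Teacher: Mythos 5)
First, a remark on scope: the paper does not prove this lemma at all --- it is imported from Kennaway et al.\ \cite{kennaway99jflp} --- so there is no in-paper proof to compare against, and I am judging your reconstruction on its own merits. Your two opening closure properties of $\rAct$ are correct, and your overall architecture (mirror an arbitrary finite reduction $s_2 \fto{*} s_2'$ by a companion reduction from $s_1$, use root-activeness of $s_1'$ to reach a root redex, mirror back) is the right one. The gap is exactly at the step you flag as the main obstacle, and the patch you offer there is unsound. Finitary confluence only relates reducts of a \emph{common} term, but $u_1^q$ and $u_2^q$ are two unrelated root-active terms (think of $f^\omega$ and $g^\omega$ from Example~\ref{ex:mconfl}): there is no square to close, and nothing forces $u_2^q$ to reduce to a term matching a subpattern that some reduct of $u_1^q$ happens to match. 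Root-activeness only promises reducibility to \emph{some} root redex, with no control over its shape, so the ``subpattern matching'' you rely on cannot be established this way.

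The missing idea is that in an orthogonal system the problematic case is vacuous: a root-active term can never be an instance of a proper non-variable subterm of a left-hand side, so no redex pattern ever extends into (a descendant of) a $\bot$-position of $t$. Indeed, suppose $u \in \rAct$ were an instance of $m = \atPos{l}{q}$ with $q \in \posFun{l}$ and $q \neq \emptyseq$. Any reduction step of $u$ at a position $w \in \posFun{m}$ would make $\atPos{l}{q\concat w}$ and the left-hand side of the applied rule have a common instance, hence unify (disjoint variables), contradicting non-overlappingness; any step at or below a variable position of $m$ preserves being an instance of $m$ by left-linearity. So every reduct of $u$ is an instance of $m$; but $u$ reduces to some root redex $l'\sigma'$, whence $m$ and $l'$ unify --- again an overlap at $q \neq \emptyseq$. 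Since descendants of the $\bot$-positions always carry root-active terms (closure under reduction), every step in either reduction is therefore either entirely inside such a subterm (replacing one root-active term by another) or has all those descendants at or below variable positions of its pattern, and is thus transferable between $s_1$ and $s_2$ by left-linearity alone. With this observation your lockstep construction goes through with no subpattern matching needed; without it, the proof as written does not close.
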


\begin{lem}[compression of $\protect{\to[\bot]}$-steps]
  \label{lem:comprBot}
  % compression of $\protect{\to[\bot]}$ steps %
  Consider the Böhm extension of an orthogonal TRS w.r.t.\ its
  root-active terms and $S\fcolon s \mato[\bot] t$ with $s \in
  \iterms$, $t \in \ipterms$. Then there is a strongly
  $\mrs$-converging reduction $T\fcolon s \mato[\bot] t$ of length at
  most $\omega$ that is a complete development of a set of disjoint
  occurrences of root-active terms in $s$.
\end{lem}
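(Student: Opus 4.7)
The plan is to extract from $S$ a set $U$ of pairwise disjoint, root-active redex occurrences in $s$ and then contract them in a single sweep of length at most $\omega$. Call a position $\pi$ in $s$ a \emph{$\bot$-site} of $S$ if some step of $S$ contracts at $\pi$. Because a $\to[\bot]$-step replaces its subterm by $\bot$, which admits neither further contraction nor any descendant positions, every $\bot$-site is contracted \emph{exactly once} in $S$; write $i(\pi)$ for the unique ordinal index of that step. Let $U$ denote the set of outermost $\bot$-sites: by construction its elements are pairwise disjoint, and every other $\bot$-site lies strictly below some element of $U$.

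The core claim is that $\atPos{s}{\pi'}$ is root-active for every $\bot$-site $\pi'$ — in particular for every $\pi \in U$. I would prove this by transfinite induction on $i(\pi')$. At the step with index $i(\pi')$, the subterm $\atPos{s_{i(\pi')}}{\pi'}$ lies in $\rAct_\bot \setminus \set{\bot}$. Since $s$ is total, every $\bot$ inside $\atPos{s_{i(\pi')}}{\pi'}$, located at some relative position $\sigma$, must have been produced by an earlier $\to[\bot]$-step at absolute position $\pi' \concat \sigma$; hence $\pi'\concat\sigma$ is a $\bot$-site with $i(\pi'\concat\sigma) < i(\pi')$, and by induction $\atPos{s}{\pi'\concat\sigma}$ is root-active. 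Replacing each such $\bot$ in $\atPos{s_{i(\pi')}}{\pi'}$ by the corresponding subterm of $s$ reconstructs $\atPos{s}{\pi'}$ exactly, so $\atPos{s}{\pi'}$ is a $\bot,\rAct$-instance of a term known to be in $\rAct_\bot$; Lemma~\ref{lem:botInst} then upgrades this to $\atPos{s}{\pi'} \in \rAct$.

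With the claim in hand, enumerate $U$ by non-decreasing depth. Because $s$ is total and the signature has finite arities, $s$ has only finitely many positions at any fixed depth, so $U$ has only finitely many elements per depth; this yields an enumeration $(\pi_n)_{n<\alpha}$ with $\alpha \le \omega$ along which $\len{\pi_n}$ tends to infinity whenever $\alpha = \omega$. Define $T$ by successively contracting the root-active redex at $\pi_n$ via the Böhm rule $\atPos{s}{\pi_n}\to\bot$. The disjointness of $U$ ensures that the redex at $\pi_n$ is preserved by the previous contractions, and the tending of depths to infinity delivers strong $\mrs$-convergence. A direct inspection of the final term of $T$ — $\bot$ at each element of $U$, and the original symbol of $s$ at every position incomparable with $U$ — together with the analogous inspection of $t$ afforded by the one-contraction-per-position observation above, shows that $T$ ends in $t$. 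Finally, since each step of $T$ contracts the unique residual of an element of $U$ and no residual remains at the end, $T$ is a complete development of $U$.

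The principal technical obstacle is the transfinite induction establishing root-activeness of $\atPos{s}{\pi'}$. The bookkeeping that traces each $\bot$ in an intermediate term back to the earlier step that created it is conceptually straightforward but needs care, and Lemma~\ref{lem:botInst} plays the essential role of promoting the set-theoretic assertion that \emph{some} $\bot,\rAct$-instance of $\atPos{s_{i(\pi')}}{\pi'}$ is root-active into the concrete statement that the \emph{specific} instance $\atPos{s}{\pi'}$ is.
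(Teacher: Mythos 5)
Your proposal is correct and follows essentially the same route as the paper's proof: the same transfinite induction on the index of the contracting step, using the totality of $s$ to trace each $\bot$ in an intermediate subterm back to an earlier $\to[\bot]$-step and invoking Lemma~\ref{lem:botInst} to conclude that the corresponding subterm of $s$ is root-active, followed by a complete development of the disjoint outermost contracted positions ordered by depth to get length at most $\omega$. The only differences are cosmetic (you index by "$\bot$-sites" where the paper uses $\posBot{t_\alpha}$, and you spell out the depth-ordered enumeration slightly more explicitly).
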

\begin{proof}
  The proof is essentially the same as that of Lemma 7.2.4 from Ketema
  \cite{ketema06phd}.

    Let $S = (t_\iota \to[\pi_\iota] t_{\iota +1})_{\iota < \alpha}$ be
  the mentioned reduction strongly $\mrs$-converging to $t_\alpha$, and let
  $\pi$ be a position at which some reduction step in $S$ takes
  place. That is, there is some $\beta$ such that $\pi_\beta =
  \pi$. We will prove by induction on $\beta$ that $\atPos{t_0}{\pi}
  \in \rAct$.

  Consider the term $\atPos{t_\beta}{\pi}$. Since a $\to[\bot]$-rule
  is applied here, we have, according to Remark~\ref{rem:cloSub}, that
  $\atPos{t_\beta}{\pi} \in \rAct_\bot$. Let $V =
  \posBot{\atPos{t_\beta}{\pi}}$. Hence, for each $v \in V$, there is
  some $\gamma < \beta$ such that $\pi_\gamma = \pi \concat
  v$. Therefore, we can apply the induction hypothesis and get that
  $\atPos{t_0}{\pi\concat v} \in \rAct$ for all $v \in V$. It is clear
  that we can obtain $\atPos{t_0}{\pi}$ from $\atPos{t_\beta}{\pi}$ by
  replacing each $\bot$-occurrence at $v \in V$ with the corresponding
  term $\atPos{t_0}{\pi\concat v}$. That is, $\atPos{t_0}{\pi}$ is a
  $\bot,\rAct$-instance of $\atPos{t_\beta}{\pi}$. Because
  $\atPos{t_\beta}{\pi} \in \rAct_\bot$, there is some
  $\bot,\rAct$-instance of $\atPos{t_\beta}{\pi}$ in $\rAct$. Thus, by
  Lemma~\ref{lem:botInst}, also $\atPos{t_0}{\pi}$ is in $\rAct$. This
  closes the proof of the claim.

  Now let $V = \posBot{t_\alpha}$. Clearly, all positions in $V$ are
  pairwise disjoint. Moreover, for each $v \in V$, there is a step in
  $S$ that takes place at $v$. Hence, by the claim shown above, $V$ is
  a set of occurrences in $t_0$ of terms in $\rAct$. A complete
  development of $V$ in $t_0$ leads to $t_\alpha$ and can be performed
  in at most $\omega$ steps by an outermost reduction strategy.
\end{proof}

The important part of the above lemma is the statement that only terms
in $\rAct$ are contracted instead of the general case where a
$\to[\bot]$ -step contracts a term in $\rAct_\bot\supset \rAct$.

Finally, we have gathered all tools necessary in order to prove the
converse direction of the equivalence of strong $\prs$-reachability and
Böhm-reachability w.r.t.\ root-active terms.
\begin{thm}[strong $\prs$-reachability = Böhm-reachability
  w.r.t.\ $\rAct$]
  \label{thr:prsEqBohm}
  % PRS reductions = Böhm reductions %
  Let $\calR$ be an orthogonal, left-finite TRS and $\calB$ the Böhm extension of
  $\calR$ w.r.t.\ its root-active terms. Then $s \pato[\calR] t$ iff
  $s \mato[\calB] t$.
\end{thm}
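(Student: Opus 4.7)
The plan is to prove the two directions separately. For $(\Rightarrow)$, I would combine Proposition~\ref{prop:prsBohm} with Proposition~\ref{prop:rootAct}: the former turns any strongly $\prs$-convergent reduction $s \pato[\calR] t$ into a Böhm-convergent reduction $s \mato[\calB'] t$, where $\calB'$ is the Böhm extension w.r.t.\ the set of fragile terms in $\iterms$; the latter identifies this set of total fragile terms with $\rAct$, so $\calB' = \calB$ and hence $s \mato[\calB] t$.

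For $(\Leftarrow)$, I would first use Lemma~\ref{lem:procBot} to split $s \mato[\calB] t$ as $s \mato[\calR] s' \mato[\bot] t$. The $\calR$-part uses no Böhm rules and hence leaves any $\bot$-occurrences of $s$ untouched; Theorem~\ref{thr:strongExt}, possibly combined with a direct application of Lemmas~\ref{lem:nonBotLimRed} and~\ref{lem:botLimRed} to handle untouched $\bot$-positions, then gives $s \pato[\calR] s'$. To handle the $\to[\bot]$-part, I would invoke Lemma~\ref{lem:comprBot} to assume w.l.o.g.\ that $s' \mato[\bot] t$ is a complete development, of length at most $\omega$, of some set $U$ of pairwise disjoint occurrences of root-active terms in $s'$, with $t$ obtained from $s'$ by replacing each $\atPos{s'}{u}$ (for $u \in U$) by $\bot$.

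By Proposition~\ref{prop:rootAct}, every such $\atPos{s'}{u}$ is fragile, and by Proposition~\ref{prop:comprPerpRed} it admits a destructive reduction $R_u\fcolon \atPos{s'}{u} \pato[\calR] \bot$ of length $\omega$. Enumerating $U = \set{u_0, u_1, \dots}$, I would dovetail the reductions $R_{u_i}$ (each applied at position $u_i$ in $s'$) via a standard pairing function into a single reduction $T$ of length $\omega$ starting from $s'$. Pairwise disjointness of $U$ ensures that steps from different $R_{u_i}$ do not interfere. Concatenation of $s \pato[\calR] s'$ with $T$, justified by Remark~\ref{rem:pcont}, then yields $s \pato[\calR] t$.

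The main obstacle is to verify that $T\fcolon s' \pato[\calR] t$. Each step of $T$ occurs at or below some $u_i$, so no step is strictly above any $u_j$; and since $R_{u_i}$ is destructive, the position $u_i$ is itself contracted infinitely often in $T$. Together with the pairwise disjointness of $U$, this makes every $u_i$ outermost-volatile in $T$. Applying Lemma~\ref{lem:botLimRed} gives $\bot$ at each $u_i$ in the limit, while Lemma~\ref{lem:nonBotLimRed} (together with Lemma~\ref{lem:botLimRed} for positions of $s'$ that were already $\bot$) shows that positions not at or below any $u_i$ keep their $s'$-symbol. Hence the limit of $T$ equals $t$, as required.
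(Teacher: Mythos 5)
Your proposal is correct and follows essentially the same route as the paper: the ``only if'' direction via Propositions~\ref{prop:prsBohm} and~\ref{prop:rootAct}, and the converse via Lemma~\ref{lem:procBot}, Theorem~\ref{thr:strongExt}, Lemma~\ref{lem:comprBot}, and the replacement of each $\to[\bot]$-step by a destructive reduction supplied by Proposition~\ref{prop:rootAct}. The only deviation is that you dovetail the destructive reductions into a single length-$\omega$ tail and verify the limit explicitly via Lemmas~\ref{lem:nonBotLimRed} and~\ref{lem:botLimRed}, whereas the paper simply concatenates them transfinitely (reserving the interleaving argument for the later Compression Lemma); both assemblies are sound.
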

\begin{proof}
  The ``only if'' direction follows immediately from
  Proposition~\ref{prop:rootAct} and Proposition~\ref{prop:prsBohm}.

  Now consider the converse direction: Let $s \mato[\calB] t$ be a strongly
  $\mrs$-convergent reduction in $\calB$. W.l.o.g.\ we assume $s$ to
  be total. Due to Lemma~\ref{lem:procBot}, there is a term $s' \in
  \iterms$ such that there are strongly $\mrs$-convergent reductions $S\fcolon
  s \mato[\calR] s'$ and $T\fcolon s' \mato[\bot] t$. By
  Lemma~\ref{lem:comprBot}, we can assume that in $s' \mato[\bot] t$
  only pairwise disjoint occurrences of root-active terms are
  contracted. By Proposition~\ref{prop:rootAct}, each root-active term
  $r\in \rAct$ is fragile, i.e.\ we have a destructive reduction $r
  \pato[\calR] \bot$ starting in $r$. Thus, following
  Remark~\ref{rem:cloSub}, we can construct a strongly $\prs$-converging
  reduction $T'\fcolon s' \pato[\calR] t$ by replacing each step
  $\Cxt{b}{r}\to[\bot] \Cxt{b}{\bot}$ in $T$ with the corresponding
  reduction $\Cxt{b}{r} \pato[\calR] \Cxt{b}{\bot}$. By combining $T'$
  with the strongly $\mrs$-converging reduction $S$, which, according to
  Theorem~\ref{thr:strongExt}, is also strongly $\prs$-converging, we obtain
  the strongly $\prs$-converging reduction $S\concat T'\fcolon s \pato[\calR]
  t$.
\end{proof}
% (TODO:) is the restriction to left-finiteness necessary?

\subsection{Corollaries}
\label{sec:corollaries}

With the equivalence of strong $\prs$-reachability and
Böhm-reachability established in the previous section, strongly
$\prs$-convergent reductions inherit a number of important properties
that are enjoyed by Böhm-convergent reductions:
\begin{thm}[infinitary confluence]
  \label{thr:prsCR}
  % infinitary confluence %
  Every orthogonal, left-finite TRS is infinitarily confluent. That
  is, for each orthogonal, left-finite TRS, $s_1 \pafrom t \pato s_2$
  implies $s_1 \pato t' \pafrom s_2$.
\end{thm}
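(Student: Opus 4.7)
The proof will be a short reduction to the already established infinitary confluence of Böhm-converging reductions. The plan is to lift the given strongly $\prs$-convergent reductions $s_1 \pafrom[\calR] t \pato[\calR] s_2$ into the Böhm extension $\calB$ of $\calR$ with respect to $\rAct$, apply Theorem~\ref{thr:bohmCR} there, and then lift the joining reductions back to $\calR$.

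More concretely, let $\calR$ be an orthogonal, left-finite TRS and let $\calB$ denote its Böhm extension with respect to $\rAct$. Assume $S_1\fcolon t\pato[\calR] s_1$ and $S_2\fcolon t \pato[\calR] s_2$. By Theorem~\ref{thr:prsEqBohm}, these give rise to Böhm-converging reductions $T_1\fcolon t \mato[\calB] s_1$ and $T_2\fcolon t \mato[\calB] s_2$. Since $\calR$ is orthogonal and left-finite, so that Theorem~\ref{thr:bohmCR} applies, we obtain a common term $t'$ together with Böhm-converging reductions $s_1 \mato[\calB] t' \mafrom[\calB] s_2$. A second application of Theorem~\ref{thr:prsEqBohm} (in the opposite direction) turns these into strongly $\prs$-convergent reductions $s_1 \pato[\calR] t' \pafrom[\calR] s_2$, which is precisely the joining required.

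There is essentially no obstacle; all the work has already been done in Theorem~\ref{thr:prsEqBohm}, which establishes the two-way correspondence between strong $\prs$-reachability in $\calR$ and Böhm-reachability (i.e.\ strong $\mrs$-reachability in $\calB$). In particular, note that Theorem~\ref{thr:prsEqBohm} does not require either $s$ or $t$ to be total, so it applies uniformly to the terms $t, s_1, s_2$ occurring here, even though they may contain $\bot$ after prior $\prs$-convergent reductions. Hence the whole proof fits in three lines: translate to $\calB$, invoke Theorem~\ref{thr:bohmCR}, translate back.
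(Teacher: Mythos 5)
Your proposal is correct and is exactly the paper's own argument: the paper proves this theorem in one line by combining Theorem~\ref{thr:prsEqBohm} (strong $\prs$-reachability equals Böhm-reachability w.r.t.\ $\rAct$) with Theorem~\ref{thr:bohmCR} (infinitary confluence of the Böhm extension). Your version just spells out the translate--join--translate-back steps in more detail.
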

\begin{proof}
  Leveraging Theorem~\ref{thr:prsEqBohm}, this theorem follows from
  Theorem~\ref{thr:bohmCR}.
\end{proof}

Returning to Example~\ref{ex:mconfl} again, we can see that, in the
setting of strongly $\prs$-converging reduction, the terms $g^\omega$
and $f^\omega$ can now be joined by repeatedly contracting the redex
at the root which yields two destructive reductions $g^\omega \pato
\bot$ and $f^\omega \pato \bot$, respectively.

\begin{thm}[infinitary normalisation]
  \label{thr:prsWN}
  % infinitary normalisation %
  Every orthogonal, left-finite TRS is infinitarily normalising. That
  is, for each orthogonal, left-finite TRS $\calR$ and a partial term
  $t$ in $\calR$, there is an $\calR$-normal form strongly
  $\prs$-reachable from $t$.
\end{thm}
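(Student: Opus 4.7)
The plan is to leverage the correspondence established in Theorem~\ref{thr:prsEqBohm} between strong $\prs$-reachability in $\calR$ and Böhm-reachability (i.e.\ strong $\mrs$-reachability in the Böhm extension $\calB$ of $\calR$ w.r.t.\ $\rAct$), together with the infinitary normalisation result for Böhm-convergent reductions stated in Theorem~\ref{thr:bohmWn}. Both results are applicable since $\calR$ is assumed to be orthogonal and left-finite.

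Concretely, given a partial term $t$, I would first apply Theorem~\ref{thr:bohmWn} to obtain a Böhm-converging reduction $t \mato[\calB] t'$ where $t'$ is a $\calB$-normal form. Then Theorem~\ref{thr:prsEqBohm} yields $t \pato[\calR] t'$, so $t'$ is strongly $\prs$-reachable from $t$ in $\calR$. It remains only to observe that $t'$ is also an $\calR$-normal form: since the rule set of $\calB$ contains all rules of $\calR$ (plus the additional $\to[\bot]$ rules), any term without $\calB$-redexes is in particular without $\calR$-redexes.

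The proof is essentially a direct transfer of Theorem~\ref{thr:bohmWn} via Theorem~\ref{thr:prsEqBohm}, so there is no substantial obstacle; the only point requiring a brief justification is the passage from $\calB$-normal form to $\calR$-normal form, which follows immediately from the inclusion of rule sets.
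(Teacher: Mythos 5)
Your proposal is correct and follows exactly the paper's route: the paper's proof is the one-line observation that the theorem ``follows immediately from Theorem~\ref{thr:prsEqBohm} and Theorem~\ref{thr:bohmWn}''. Your additional remark that a $\calB$-normal form is automatically an $\calR$-normal form (because the rule set of $\calB$ contains that of $\calR$) is a small detail the paper leaves implicit, and it is justified correctly.
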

\begin{proof}
  This follows immediately from Theorem~\ref{thr:prsEqBohm} and
  Theorem~\ref{thr:bohmWn}.
\end{proof}

Combining Theorem~\ref{thr:prsCR} and Theorem~\ref{thr:prsWN}, we
obtain that each term in an orthogonal TRS has a unique normal form
w.r.t.\ strong $\prs$-convergence. Due to Theorem~\ref{thr:prsEqBohm},
this unique normal form is the Böhm tree w.r.t.\ root-active terms.

Since strongly $\prs$-converging reductions in orthogonal TRS can
always be transformed such that they consist of a prefix which is a
strongly $\mrs$-convergent reduction and a suffix consisting of nested
destructive reductions, we can employ the Compression Lemma for strongly
$\mrs$-convergent reductions (Theorem~\ref{thr:mrsCompr}) and the
Compression Lemma for destructive reductions
(Proposition~\ref{prop:comprPerpRed}) to obtain the Compression Lemma
for strongly $\prs$-convergent reductions:
\begin{thm}[Compression Lemma for strongly $\prs$-convergent reductions]
  \label{thr:prsCompr}
  For each orthogonal, left-finite TRS, $s \pato t$ implies $s
  \pto{\le \omega} t$.
\end{thm}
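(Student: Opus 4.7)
The plan is to reduce the claim to two compression results already established in the paper: the metric Compression Lemma (Theorem~\ref{thr:mrsCompr}) and the Compression Lemma for destructive reductions (Proposition~\ref{prop:comprPerpRed}). Given $S\fcolon s \pato[\calR] t$, I would first apply Theorem~\ref{thr:prsEqBohm} to translate $S$ into a Böhm-convergent reduction $s \mato[\calB] t$ in the Böhm extension $\calB$ of $\calR$ w.r.t.\ $\rAct$. Lemma~\ref{lem:procBot} decomposes this as $s \mato[\calR] s' \mato[\bot] t$ for some intermediate term $s'$. Theorem~\ref{thr:mrsCompr} compresses the first part to $s \mto{\le\omega}[\calR] s'$, which by Theorem~\ref{thr:strongExt} is also a strongly $\prs$-convergent reduction. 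Lemma~\ref{lem:comprBot} ensures the second part can be taken as a complete development of length at most $\omega$ of a set $V$ of pairwise disjoint occurrences of root-active subterms in $s'$.

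For each $v \in V$ the subterm $\atPos{s'}{v}$ is root-active, hence fragile by Proposition~\ref{prop:rootAct}, and Proposition~\ref{prop:comprPerpRed} then provides a destructive reduction of length $\omega$ contracting it to $\bot$ in $\calR$. Substituting each $\to[\bot]$-step at position $v$ in the complete development by such a destructive reduction produces a strongly $\prs$-convergent reduction $s' \pato[\calR] t$ in $\calR$. Concatenated with the compressed prefix, this yields a reduction $s \pato[\calR] t$ in $\calR$ of length at most $\omega \cdot 2$ when $V$ is finite, or up to $\omega^2$ when $V$ is infinite.

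The main obstacle is compressing this further down to length at most $\omega$. The approach will be a diagonal interleaving: the destructive reductions at distinct $v \in V$ act at pairwise disjoint positions and so commute with one another, while the prefix $s \mato[\calR] s'$ is strongly $\mrs$-convergent and therefore has contraction depths tending to infinity. Scheduling steps of the prefix together with steps of the various destructive reductions according to a diagonal enumeration will produce a single $\omega$-length sequence of reduction steps in $\calR$. The applicability of each destructive step at each stage can be ensured because the subterm at $v$ remains fragile throughout the interleaving: at any stage it reduces via a finite prefix of the first reduction to a fragile term and hence is itself fragile (by concatenation of the finite prefix with a destructive reduction), so by Proposition~\ref{prop:comprPerpRed} a fresh length-$\omega$ destructive reduction at $v$ is always available to supply the next step.

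Correctness of the interleaved reduction will be verified via Lemma~\ref{lem:nonBotLimRed} and Lemma~\ref{lem:botLimRed}. Strong $\prs$-continuity is automatic in the complete semilattice $(\ipterms,\lebot)$, so strong $\prs$-convergence holds. The non-$\bot$ positions of the limit coincide with those of $t$ because the $\mrs$-convergent prefix forces each non-$\bot$ symbol of $t$ to become eventually stable; the $\bot$-positions of the limit are exactly $V$ because each $v \in V$ is outermost-volatile in the interleaved reduction -- the destructive component at $v$ contributes infinitely many steps at $v$, while no proper ancestor of $v$ can be volatile since the prefix's contraction depths tend to infinity.
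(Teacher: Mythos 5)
Your proposal follows essentially the same route as the paper's proof: translate to Böhm-convergence via Theorem~\ref{thr:prsEqBohm}, split off the $\to[\bot]$-suffix with Lemma~\ref{lem:procBot}, compress the prefix with Theorem~\ref{thr:mrsCompr} and the destructive replacements with Proposition~\ref{prop:comprPerpRed}, and finally interleave the pairwise disjoint destructive reductions into the length-$\omega$ prefix. The only difference is that where the paper defers the final interleaving argument to Kennaway et al., you sketch it explicitly (diagonal scheduling, fragility preserved under finite reductions, limit verified via Lemmas~\ref{lem:nonBotLimRed} and~\ref{lem:botLimRed}), which is a reasonable elaboration of the same idea.
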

\begin{proof}
  Let $s \pato[\calR] t$. According to Theorem~\ref{thr:prsEqBohm}, we
  have $s \mato[\calB] t$ for the Böhm extension $\calB$ of $\calR$
  w.r.t.\ $\rAct$ and, therefore, by Lemma~\ref{lem:procBot}, we have
  reductions $S\fcolon s \mato[\calR] s'$ and $T\fcolon s'\mato[\bot]
  t$. Due to Theorem~\ref{thr:mrsCompr}, we can assume $S$ to be of
  length at most $\omega$ and, due to Theorem~\ref{thr:strongExt}, to
  be strongly $\prs$-convergent, i.e $S\fcolon s \pto{\le
    \omega}[\calR] s'$. If $T$ is the empty reduction, then we are
  done. If not, then $T$ is a complete development of pairwise
  disjoint occurrences of root-active terms according to
  Lemma~\ref{lem:comprBot}. Hence, each step is of the form
  $\Cxt{b}{r}\to[\bot] \Cxt{b}{\bot}$ for some root-active term
  $r$. By Proposition~\ref{prop:rootAct}, for each such term $r$,
  there is a destructive reduction $r \pato[\calR] \bot$ which we can
  assume, in accordance with Proposition~\ref{prop:comprPerpRed}, to
  be of length $\omega$. Hence, each step $\Cxt{b}{r}\to[\bot]
  \Cxt{b}{\bot}$ can be replaced by the reduction $\Cxt{b}{r}
  \pto{\omega}[\calR] \Cxt{b}{\bot}$. Concatenating these reductions
  results in a reduction $T'\fcolon s'\pato[\calR] t$ of length at
  most $\omega\cdot\omega$. If $S\fcolon s \pto{\le\omega}[\calR] s'$
  is of finite length, we can interleave the reduction steps in $T'$
  such that we obtain a reduction $T''\fcolon s'\pto{\omega}[\calR] t$
  of length $\omega$. Then we have $S\concat T''\fcolon s
  \pto{\omega}[\calR] t$. If $S\fcolon s \pto{\le\omega}[\calR] s'$
  has length $\omega$, we construct a reduction $s \pato[\calR] t$ as
  follows: As illustrated above, $T'$ consists of destructive
  reductions taking place at some pairwise disjoint positions. These
  steps can be interleaved into the reduction $S$ resulting into a
  reduction $s \pato[\calR] t$ of length $\omega$. The argument for
  that is similar to that employed in the successor case of the
  induction proof of the Compression Lemma of Kennaway et al.\
  \cite{kennaway95ic}.

  % \todo{How is it exactly similar to the original Compression Lemma
  % proof?}
\end{proof}

We do not know whether full orthogonality is essential for the
Compression Lemma. However, as for strongly $\mrs$-convergent
reductions, the left-linearity part of it is:
\begin{exa}[\cite{kennaway95ic}]
  Consider the TRS consisting of the rules $f(x,x) \to c, a \to g(a),
  b \to g(b)$. Then there is a strongly $\prs$-converging reduction
  \[
  f(a,b) \to f(g(a),b) \to f(g(a),g(b)) \to f(g(g(a)),g(b)) \to \dots
  \; f(g^\omega,g^\omega) \to c
  \]
  of length $\omega+1$.  However, there is no strongly $\prs$-converging
  reduction $f(a,b) \pto{\le\omega} c$ (since there is no such
  strongly $\mrs$-converging reduction).
\end{exa}

We can use the Compression Lemma for strongly $\prs$-convergent reductions to
obtain a stronger variant of Theorem~\ref{thr:strongExt} for
orthogonal TRSs:
\begin{cor}[strong $\mrs$-reachability = strong
  $\prs$-reachability of total terms]
  \label{cor:prsMrsEq}
  Let $\calR$ be an orthogonal, left-finite TRS and $s,t \in \iterms$. Then
  $s\mato t$ iff $s\pato t$.
\end{cor}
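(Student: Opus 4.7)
My plan is to prove the two implications separately, with the ``only if'' direction being an immediate appeal to Theorem~\ref{thr:strongExt} and the ``if'' direction leveraging the equivalence of strong $\prs$-reachability and B\"ohm-reachability established in Theorem~\ref{thr:prsEqBohm}.

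For the ``only if'' direction, assume $s \mato t$ via some reduction $S$. Since $s, t \in \iterms$, Theorem~\ref{thr:strongExt}~(\ref{item:strongExtII}) gives $S\fcolon s \pato t$ directly (the same reduction, now viewed in the partial order model), so $s \pato t$.

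For the ``if'' direction, assume $s \pato[\calR] t$ with $s, t \in \iterms$. I would first invoke Theorem~\ref{thr:prsEqBohm} to obtain a B\"ohm-converging reduction $s \mato[\calB] t$ in the B\"ohm extension $\calB$ of $\calR$ w.r.t.\ $\rAct$. Applying the postponement Lemma~\ref{lem:procBot} splits this into $s \mato[\calR] s' \mato[\bot] t$ for some intermediate term $s'$; since $s$ is total and the reduction $s \mato[\calR] s'$ uses only rules from $\calR$ (whose left- and right-hand sides do not mention $\bot$), strong $\mrs$-convergence of a sequence of total terms in the complete metric space $(\iterms,\dd)$ forces $s' \in \iterms$. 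The crucial step is then to show that the $\to[\bot]$-part must be empty: by Lemma~\ref{lem:comprBot}, the reduction $s' \mato[\bot] t$ can be replaced by a complete development of a set $V$ of pairwise disjoint occurrences of root-active subterms of $s'$. In such a development, each occurrence in $V$ is rewritten exactly once to $\bot$ and, being disjoint from the others, this $\bot$ persists in the final term. Hence $t$ has $\bot$ at every position in $V$, and the assumed totality of $t$ forces $V = \emptyset$. Consequently the development is empty, $s' = t$, and we obtain $s \mato[\calR] t$.

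The main obstacle is the argument in the ``if'' direction that $t$ being total forces all $\to[\bot]$-steps to vanish. A direct inspection of an arbitrary strongly $\mrs$-convergent $\to[\bot]$-reduction would need to track how $\bot$-symbols introduced at shallow positions might be absorbed by subsequent still-shallower contractions, and would have to rule out these absorption chains at the limit. Lemma~\ref{lem:comprBot} neatly bypasses this bookkeeping by packaging the $\to[\bot]$-part as a single complete development of disjoint root-active occurrences, whose contribution to the endpoint is plainly a set of fresh, undisturbed $\bot$'s.
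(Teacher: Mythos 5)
Your proof is correct, but it takes a genuinely different route from the paper's. The paper's ``if'' direction first invokes the Compression Lemma for strongly $\prs$-convergent reductions (Theorem~\ref{thr:prsCompr}) to replace $S\fcolon s \pato t$ by a reduction of length at most $\omega$, observes that such a reduction is total because $s$ is total and single steps preserve totality (with $t$ total by hypothesis), and then applies Theorem~\ref{thr:strongExt} to read it off as a strongly $\mrs$-convergent reduction. You instead bypass the Compression Lemma and argue directly through the B\"ohm correspondence: Theorem~\ref{thr:prsEqBohm}, postponement of $\to[\bot]$-steps (Lemma~\ref{lem:procBot}), and the normalisation of the $\to[\bot]$-suffix into a complete development of pairwise disjoint root-active occurrences (Lemma~\ref{lem:comprBot}), whose disjointness guarantees that each introduced $\bot$ survives into the final term, so totality of $t$ forces that development to be empty. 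Both arguments rest on the same underlying machinery -- indeed the paper's proof of Theorem~\ref{thr:prsCompr} itself runs through Theorem~\ref{thr:prsEqBohm}, Lemma~\ref{lem:procBot} and Lemma~\ref{lem:comprBot} -- but yours makes explicit exactly where the totality of $t$ is used (to kill the $\bot$-introducing steps), whereas the paper's is shorter given that the Compression Lemma is already available and additionally yields a witnessing reduction of length at most $\omega$. Your auxiliary observations (that $s'$ is total because $\calR$-rules never introduce $\bot$ and metric limits of $\bot$-free terms are $\bot$-free, and that disjointness makes the $\bot$'s persistent) are both sound.
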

\begin{proof}
  The ``only if'' direction follows immediately from
  Theorem~\ref{thr:strongExt}. For the ``if'' direction assume a
  reduction $S\fcolon s \pato t$. According to
  Theorem~\ref{thr:prsCompr}, there is a reduction $T\fcolon s
  \pto{\le\omega} t$. Hence, since $s$ is total and totality is
  preserved by single reduction steps, $T\fcolon s \pto{\le\omega} t$
  is total. Applying Theorem~\ref{thr:strongExt}, yields that
  $T\fcolon s \mto{\le\omega} t$.
\end{proof}

Notice the similarity of the above corollary with the Compression
Lemma. The Compression Lemma states that the reachability relation
$\mato$ (as well as $\pato$) is the same whether we consider ordinals
beyond $\omega$ or not. Analogously, Corollary~\ref{cor:prsMrsEq}
states that the reachability relation $\pato$ on total terms is the
same whether we allow partial convergence in between or not. More apt,
however, is the comparison to the following corollary of the
Compression Lemma (cf.\ Corollary~\ref{cor:mrsFinApprox}): $s \mato t$
implies $s \fto* t$ whenever $t$ is a finite term. In other words, if
the final term is finite, we only need finite reductions. Analogously,
Corollary~\ref{cor:prsMrsEq} states that if the initial term and the
final term are total, we only need metric convergence.

\section{Conclusions}
\label{sec:conclusions}

Infinitary term rewriting in the partial order model provides a more
fine-grained notion of convergence. Formally, every meaningful, i.e.\
$\prs$-continuous, reduction is also $\prs$-converging. However,
$\prs$-converging reductions can end in a term containing '$\bot$'s
indicating positions of local divergence. Theorem~\ref{thr:weakExt},
Theorem~\ref{thr:strongExt} and Corollary~\ref{cor:prsMrsEq} show that
the partial order model coincides with the metric model but
additionally allows a more detailed inspection of
non-$\mrs$-converging reductions. Instead of the coarse discrimination
between convergence and divergence provided by the metric model, the
partial order model allows different levels between full convergence
(a total term as result) and full divergence ($\bot$ as result).

The equivalence of strong $\prs$-reachability and Böhm-reachability
shows that the differences between the metric and the partial order
model can be compensated by simply adding rules that allow to
replicate destructive reductions by $\to[\bot]$-steps. By this
equivalence, we additionally obtain infinitary normalisation and
infinitary confluence for orthogonal systems -- a considerable
improvement over strong $\mrs$-convergence. Both strong
$\prs$-convergence and Böhm-convergence are defined quite differently
and have independently justified intentions, yet they still induce the
same notion of transfinite reachability. This suggests that this
notion of transfinite reachability can be considered a ``natural''
choice -- also because of its properties that admit unique normal
forms.  Nevertheless, while achieving the same goals as
Böhm-extensions, the partial order approach provides a more intuitive
and more elegant model for transfinite reductions as it does not need
the cumbersomely defined ``shortcuts'' provided by $\to[\bot]$-steps,
which depend on allowing infinite left-hand sides in rewrite
rules. Vice versa destructive reductions in the partial order model
provide a justification for admitting these shortcuts.

\subsubsection*{Related Work}
\label{sec:related-work}

This study of partial order convergence is inspired by Blom
\cite{blom04rta} who investigated strong partial order convergence in
lambda calculus and compared it to strong metric
convergence. Similarly to our findings for orthogonal term rewriting
systems, Blom has shown for lambda calculus that reachability in the
metric model coincides with reachability in the partial order model
modulo equating so-called $0$-undefined terms.

Also Corradini \cite{corradini93tapsoft} studied a partial order
model. However, he uses it to develop a theory of parallel reductions
which allows simultaneous contraction of a set of mutually independent
redexes of left-linear rules. To this end, Corradini defines the
semantics of redex contraction in a non-standard way by allowing a
partial matching of left-hand sides. Our definition of complete
developments also provides, at least for orthogonal systems, a notion
of parallel reductions but does so using the standard semantics of
redex contraction.

\subsubsection*{Future Work}
\label{sec:future-work}

While we have studied both weak and strong $\prs$-convergence and have
compared it to the respective metric counterparts, we have put the
focus on strong $\prs$-convergence. It would be interesting to find
out whether the shift to the partial order model has similar benefits
for weak convergence, which is known to be rather unruly in the metric
model \cite{simonsen04ipl}. A starting point in this direction would
be to find correspondences between weak and strong
$\prs$-convergence. For example, in the metric setting we have that $s
\mwato[\calR] t$ implies that there is some $t'$ with $s \mato[\calB]
t'$ and $t \mato[\calB] t'$ \cite[Theorem~12.9.14]{kennaway03book}. If
we had the analogous correspondence for $\prs$-convergence, we would
immediately obtain infinitary normalisation and confluence for weak
$\prs$-convergence.

Moreover, we have focused on orthogonal systems in this paper. It
should be easy to generalise our results to almost orthogonal
systems. The only difficulty is to deal with the ambiguity of paths
when rules are allowed to overlay. This could be resolved by
considering equivalence classes of paths instead. The move to weakly
orthogonal systems is much more complicated: For strong
$\mrs$-convergence Endrullis et al.\ \cite{endrullis10rta} have shown
that weakly orthogonal systems do not even satisfy the infinitary
unique normal form property (\iUN{}), a property that orthogonal
systems do enjoy \cite{kennaway95ic}. Due to
Theorem~\ref{thr:strongExt}, this means that also in the setting of
strong $\prs$-convergence, weakly orthogonal systems do not satisfy
\iUN{} and are therefore not infinitarily confluent either! Endrullis
et al.\ \cite{endrullis10rta} have shown that this can be resolved in
the metric setting by prohibiting collapsing rules. However, it is not
clear whether this result can be transferred to the partial order
setting.

Another interesting direction to follow is the ability to finitely
simulate transfinite reductions by term graph rewriting. For strong
$\mrs$-convergence this is possible, at least to some extent
\cite{kennaway94toplas}. We think that a different approach to term
graph rewriting, viz.\ the \emph{double-pushout approach}
\cite{ehrig73swat} or the \emph{equational approach}
\cite{ariola96fi}, is more appropriate for the present setting of
$\prs$-convergence \cite{corradini97rep,bahr09master}.

\section*{Acknowledgements}
\label{sec:acknowledgements}
I am indebted to Bernhard Gramlich for his constant support during the
work on my master's thesis which made this work possible.

\bibliographystyle{plain}
\bibliography{po-inf-rew}

\begin{thebibliography}{10}

\bibitem{ariola96fi}
Zena~M Ariola and Jan~Willem Klop.
\newblock Equational term graph rewriting.
\newblock {\em Fundam. Inf.}, 26(3-4):207--240, 1996.

\bibitem{arnold80fi}
André Arnold and Maurice Nivat.
\newblock The metric space of infinite trees. algebraic and topological
  properties.
\newblock {\em Fundam. Inf.}, 3(4):445--476, 1980.

\bibitem{bahr09master}
Patrick Bahr.
\newblock Infinitary rewriting - theory and applications.
\newblock Master's thesis, Vienna University of Technology, Vienna, September
  2009.

\bibitem{bahr10rta}
Patrick Bahr.
\newblock Abstract models of transfinite reductions.
\newblock In Christopher Lynch, editor, {\em Proceedings of the 21st
  International Conference on Rewriting Techniques and Applications}, volume~6
  of {\em Leibniz International Proceedings in Informatics (LIPIcs)}, pages
  49--66, Dagstuhl, Germany, 2010. Schloss Dagstuhl-Leibniz-Zentrum fuer
  Informatik.

\bibitem{bahr10rta2}
Patrick Bahr.
\newblock Partial order infinitary term rewriting and böhm trees.
\newblock In Christopher Lynch, editor, {\em Proceedings of the 21st
  International Conference on Rewriting Techniques and Applications}, volume~6
  of {\em Leibniz International Proceedings in Informatics (LIPIcs)}, pages
  67--84, Dagstuhl, Germany, 2010. Schloss Dagstuhl-Leibniz-Zentrum fuer
  Informatik.

\bibitem{blom04rta}
Stefan Blom.
\newblock An approximation based approach to infinitary lambda calculi.
\newblock In Vincent van Oostrom, editor, {\em Rewriting Techniques and
  Applications}, volume 3091 of {\em Lecture Notes in Computer Science}, pages
  221--232. Springer Berlin / Heidelberg, 2004.

\bibitem{corradini93tapsoft}
Andrea Corradini.
\newblock Term rewriting in {$CT_\Sigma$}.
\newblock In Marie-Claude Gaudel and Jean-Pierre Jouannaud, editors, {\em
  {TAPSOFT'93}: Theory and Practice of Software Development}, volume 668 of
  {\em Lecture Notes in Computer Science}, pages 468--484. Springer Berlin /
  Heidelberg, 1993.

\bibitem{corradini97rep}
Andrea Corradini and Frank Drewes.
\newblock ({C}yclic) term graph rewriting is adequate for rational parallel
  term rewriting.
\newblock Technical Report TR-14-97, Universita di Pisa, Dipartimento di
  Informatica, 1997.

\bibitem{dershowitz91tcs}
Nachum Dershowitz, Stéphane Kaplan, and David~A Plaisted.
\newblock Rewrite, rewrite, rewrite, rewrite, rewrite, ...
\newblock {\em Theor. Comput. Sci.}, 83(1):71--96, 1991.

\bibitem{ehrig73swat}
Hartmut Ehrig, Michael Pfender, and Hans~Jürgen Schneider.
\newblock Graph-grammars: An algebraic approach.
\newblock In {\em 14th Annual Symposium on Switching and Automata Theory},
  pages 167--180, Washington, DC, USA, 1973. IEEE Computer Society.

\bibitem{endrullis10rta}
Jörg Endrullis, Clemens Grabmayer, Dimitri Hendriks, Jan~Willem Klop, and
  Vincent van Oostrom.
\newblock Unique normal forms in infinitary weakly orthogonal rewriting.
\newblock In Christopher Lynch, editor, {\em Proceedings of the 21st
  International Conference on Rewriting Techniques and Applications}, volume~6
  of {\em Leibniz International Proceedings in Informatics (LIPIcs)}, pages
  85--102, Dagstuhl, Germany, 2010. Schloss Dagstuhl--Leibniz-Zentrum fuer
  Informatik.

\bibitem{goguen77jacm}
Joseph~A Goguen, James~W Thatcher, Eric~G Wagner, and Jesse~B Wright.
\newblock Initial algebra semantics and continuous algebras.
\newblock {\em J. ACM}, 24(1):68--95, 1977.

\bibitem{kelley55book}
J~L Kelley.
\newblock {\em General Topology}, volume~27 of {\em Graduate Texts in
  Mathematics}.
\newblock Springer-Verlag, 1955.

\bibitem{kennaway03book}
Richard Kennaway and Fer-Jan de~Vries.
\newblock Infinitary rewriting.
\newblock In Terese \cite{terese03book}, chapter~12, pages 668--711.

\bibitem{kennaway94toplas}
Richard Kennaway, Jan~Willem Klop, M~Ronan Sleep, and Fer-Jan de~Vries.
\newblock On the adequacy of graph rewriting for simulating term rewriting.
\newblock {\em ACM Trans. Program. Lang. Syst.}, 16(3):493--523, 1994.

\bibitem{kennaway95ic}
Richard Kennaway, Jan~Willem Klop, M~Ronan Sleep, and Fer-Jan de~Vries.
\newblock Transfinite reductions in orthogonal term rewriting systems.
\newblock {\em Inf. Comput.}, 119(1):18--38, 1995.

\bibitem{kennaway99jflp}
Richard Kennaway, Vincent van Oostrom, and Fer-Jan de~Vries.
\newblock Meaningless terms in rewriting.
\newblock {\em J. Funct. Logic Programming}, 1999(1):1--35, February 1999.

\bibitem{ketema06phd}
Jeroen Ketema.
\newblock {\em Böhm-Like Trees for Rewriting}.
\newblock PhD thesis, Vrije Universiteit Amsterdam, 2006.

\bibitem{ketema05lpar}
Jeroen Ketema and Jakob~Grue Simonsen.
\newblock On confluence of infinitary combinatory reduction systems.
\newblock In Geoff Sutcliffe and Andrei Voronkov, editors, {\em Logic for
  Programming, Artificial Intelligence, and Reasoning}, volume 3835 of {\em
  Lecture Notes in Computer Science}, pages 199--214. Springer Berlin /
  Heidelberg, 2005.

\bibitem{ketema10lmcs}
Jeroen Ketema and Jakob~Grue Simonsen.
\newblock Infinitary combinatory reduction systems: Normalising reduction
  strategies.
\newblock {\em Log. Methods Comput. Sci.}, 6(1):1--35, 2010.

\bibitem{ketema11ic}
Jeroen Ketema and Jakob~Grue Simonsen.
\newblock Infinitary combinatory reduction systems.
\newblock {\em Inf. Comput.}, 209(6):893--926, 2011.

\bibitem{rodenburg98jsyml}
Pieter~Hendrik Rodenburg.
\newblock Termination and confluence in infinitary term rewriting.
\newblock {\em J. Symbolic Logic}, 63(4):1286--1296, 1998.

\bibitem{simonsen04ipl}
Jakob~Grue Simonsen.
\newblock On confluence and residuals in {C}auchy convergent transfinite
  rewriting.
\newblock {\em Inf. Process. Lett.}, 91(3):141--146, 2004.

\bibitem{terese03book}
Terese.
\newblock {\em Term Rewriting Systems}.
\newblock Cambridge University Press, 1st edition, 2003.

\end{thebibliography}

\end{document}